\documentclass[letterpaper,USenglish,cleveref, pdfa]{lipics-v2021}

\pdfoutput=1 
\hideLIPIcs  


\bibliographystyle{plainurl}

\title{The Primal Pathwidth SETH} 


\author{Michael Lampis}{Universit\'{e} Paris-Dauphine, PSL University, CNRS UMR7243, LAMSADE, Paris, France}{michail.lampis@dauphine.fr}{https://orcid.org/0000-0002-5791-0887}{}

\authorrunning{M. Lampis} 

\Copyright{Michael Lampis}

\ccsdesc[500]{Theory of computation~Parameterized complexity and exact algorithms} 

\keywords{SETH, Pathwidth, Parameterized Complexity, Fine-grained Complexity} 

\category{} 

\relatedversion{} 


\funding{This work was supported by ANR project ANR-21-CE48-0022 (S-EX-AP-PE-AL).}


\nolinenumbers 

\EventEditors{John Q. Open and Joan R. Access}
\EventNoEds{2}
\EventLongTitle{42nd Conference on Very Important Topics (CVIT 2016)}
\EventShortTitle{CVIT 2016}
\EventAcronym{CVIT}
\EventYear{2016}
\EventDate{December 24--27, 2016}
\EventLocation{Little Whinging, United Kingdom}
\EventLogo{}
\SeriesVolume{42}
\ArticleNo{23}


\usepackage{tikz}

\newcommand{\tsat}{3-\textsc{SAT}} 
\newcommand{\kova}{$k$-\textsc{OVA}}
\newcommand{\scc}{\textsc{SCC}}
\newcommand{\seth}{\textsc{SETH}}

\newcommand{\ppseth}{$\pw$\textsc{-SETH}}
\newcommand{\pw}{\textrm{pw}}
\newcommand{\tw}{\textrm{tw}}
\newcommand{\lcw}{\textrm{lcw}}

\newcommand{\eps}{\varepsilon}

\begin{document}

\maketitle

\begin{abstract}

Motivated by the importance of dynamic programming (DP) in parameterized
complexity, we consider several fundamental fine-grained questions, such as the
following representative examples: (i) can \textsc{Dominating Set} be solved in
time $(3-\eps)^{\pw}n^{O(1)}$?  (where $\pw$ is the pathwidth of the input
graph) (ii) can \textsc{Coloring} be solved in time
$\pw^{(1-\eps)\pw}n^{O(1)}$?  (iii) can a short reconfiguration between two
size-$k$ independent sets be found in time $n^{(1-\eps)k}$? Such questions are
well-studied: in some cases the answer is No under the SETH, while in others
coarse-grained lower bounds are known under the ETH.  Even though questions
such as the above seem ``morally equivalent'' as they all ask if a simple DP
can be improved, the problems concerned have wildly varying time complexities,
ranging from single-exponential FPT to XNLP-complete. 

This paper's main contribution is to show that, despite their varying
complexities, these questions are not just morally equivalent, but in fact they
are the same question in disguise.  We achieve this by putting forth a natural
complexity assumption which we call the Primal Pathwidth-Strong Exponential
Time Hypothesis (\ppseth) and which states that \tsat\ cannot be solved in time
$(2-\eps)^{\pw}n^{O(1)}$, for any $\eps>0$, where $\pw$ is the pathwidth of the
primal graph of the input CNF formula. We then show that numerous fine-grained
questions in parameterized complexity, including the ones above, are
\emph{equivalent} to the \ppseth, and hence to each other. This allows us to
obtain sharp fine-grained lower bounds for problems for which previous lower
bounds left a constant in the exponent undetermined, but also to increase our
confidence in bounds which were previously known under the SETH, because we
show that breaking any one such bound requires breaking \emph{all} (old and
new) bounds; and because we show that the \ppseth\ is more plausible than the
SETH.  More broadly, our results indicate that \ppseth-equivalence is a
meta-complexity property that cuts across traditional complexity classes,
because it exactly captures the barrier impeding progress not only for problems
solvable in time $c^kn^{O(1)}$ (such as \tsat\ itself for parameter $\pw$), but
also for problems which are considerably harder.

In more detail, we show that all of the following are equivalent to falsifying
the \ppseth:

\begin{enumerate}

\item Single-exponential FPT problems: solving $k$-\textsc{Coloring} in time
$(k-\eps)^{\pw}n^{O(1)}$ or $(2^k-2)^{\lcw}n^{O(1)}$, for any $k\ge 3$, where
$\lcw$ is the linear clique-width; solving
\textsc{Distance}-$d$-\textsc{Independent Set} in time
$(d-\eps)^{\pw}n^{O(1)}$, for any $d\ge 2$; solving
\textsc{Distance}-$r$-\textsc{Dominating Set} in time
$(2r+1-\eps)^{\pw}n^{O(1)}$ for any $r\ge 1$; solving \textsc{Set Cover} in
time $(2-\eps)^{\pw}n^{O(1)}$.

\item Super-exponential FPT problems: solving \textsc{Coloring} in time
$\pw^{(1-\eps)\pw}n^{O(1)}$; solving $C_4$-\textsc{Hitting Set} in time
$(\sqrt{2}-\eps)^{\pw^2}n^{O(1)}$. For the latter problem we give an algorithm
tightly matching this complexity.

\item XNLP-complete problems: solving \textsc{List Coloring} in time
$n^{(1-\eps)\pw}$; finding a short word accepted by $k$ $n$-state DFAs in time
$n^{(1-\eps)k}$; finding a short reconfiguration sequence between two size-$k$
independent sets of an $n$-vertex graph in time $n^{(1-\eps)k}$.

\end{enumerate}

We argue that the \ppseth\ is a plausible assumption by showing that it is
implied not only by the SETH, but also by a maximization version of the
$k$-Orthogonal Vectors Assumption (itself a consequence of the SETH) and by the
Set Cover Conjecture, whose relation to the SETH is a major open problem.
Hence, lower bounds which were previously known under the SETH, such as those
of point 1 above, are now shown to also be implied by these conjectures.

\end{abstract}

\newpage

\tableofcontents

\newpage

\section{Introduction}

\subsection{Background and Motivation}

Our goal in this paper is to provide a better answer to the following question:
``Are current parameterized algorithms based on dynamic programming over linear
structures optimal?''. Before we explain how we attack this question, let us
briefly review why this question is important and summarize some known results.

Dynamic Programming (DP) is a standard algorithm design technique and one of
the most important tools in the parameterized algorithms toolbox. Its central
role in this setting stems mainly from the fact that when we are dealing with
an NP-hard graph problem and the parameter is a graph width, such as treewidth,
pathwidth, or clique-width, we can very often obtain an FPT algorithm with
running time $f(k)n^{O(1)}$ by a routine application of DP (see the
corresponding chapter of \cite{CyganFKLMPPS15}). Indeed, for many such problems
we even obtain $c^kn^{O(1)}$ time algorithms, where $c$ is a constant.  Because
of the wide applicability (and relative simplicity) of dynamic programming, it
is a question of central interest to parameterized complexity whether the
algorithms we obtain in this way are best possible. Indeed, this question has
been extensively studied, so in order to summarize what is currently known, let
us focus on a specific concrete problem, \textsc{Coloring} (we give a more
extensive review of known results further below). 

On the algorithmic side, \textsc{Coloring} is often used as a first example of
the DP technique for tree decompositions. The standard algorithm needs to
remember the color of each vertex of a bag. This means that, for fixed $k\ge
3$, $k$-\textsc{Coloring} can be solved in time $k^{\tw}n^{O(1)}$; when $k$ is
part of the input \textsc{Coloring} can be solved in time $\tw^{\tw}n^{O(1)}$
(because $k\le \tw+1$ without loss of generality); and when each vertex has an
arbitrary list of available colors, a variant known as \textsc{List Coloring},
the algorithm runs in time $n^{\tw+O(1)}$.  Despite the different running
times, these results are just three manifestations of the same algorithm.

Is the above algorithm optimal? It is widely believed that the answer is yes in
all three cases, even if we replace treewidth by the more restrictive parameter
pathwidth, but this is known with varying degrees of precision. When $k\ge 3$
is a fixed constant, Lokshtanov, Marx, and Saurabh \cite{LokshtanovMS18}
established that no $(k-\eps)^{\pw}n^{O(1)}$ algorithm exists, assuming the
Strong Exponential Time Hypothesis (SETH) -- note that their lower bound
applies to pathwidth, and hence also to treewidth, which is a more general
parameter.  For the other two cases, it is known that under the ETH, we cannot
obtain algorithms with running times $\pw^{o(\pw)}n^{O(1)}$ and $n^{o(\pw)}$
respectively \cite{LokshtanovMS18b,FellowsFLRSST11}. The problem is thus only
approximately resolved in the latter two cases (because the lower bounds leave
an unspecified constant in the exponent) but very well-understood for constant
$k$, albeit by relying on a strong complexity assumption.

The concrete example of \textsc{Coloring} exemplifies the state of the art in
this area, where much work has been expended in investigating whether
``simple'' (and not so simple) DP algorithms are optimal. By now much has been
clarified, but at least three specific weaknesses in the state of the art can
be identified, which we summarize in three points as follows:

\begin{enumerate}

\item There is a lack of fine-grained lower bounds for harder problems. 

\item The SETH is too strong.

\item All reductions are one-way.

\end{enumerate}

Let us explain what we mean by these points. Regarding the first point, the
example of \textsc{Coloring} is typical of the state of the art in the
following sense: for problems whose complexity is $c^wn^{O(1)}$, where $w$ is a
graph width and $c$ a fixed constant, the pioneering work of
\cite{LokshtanovMS18} has spawned a long line of research which has managed to
obtain lower bounds under the SETH which exactly match the constant $c$ in the
best algorithm.  Unfortunately, a similar fine-grained undertaking is currently
lacking in the literature both for FPT problems where the parameter dependence
is more than single-exponential and for XP problems. Instead, what is typically
known for such problems, as we saw for the example of \textsc{Coloring}, are
ETH-based lower bounds which at best match the complexity of the best algorithm
only up to an unspecified constant in the exponent. This is clearly not as
satisfactory as a fine-grained tight lower bound.

Regarding the second point, it is worth mentioning that the SETH is a
hypothesis that is quite strong and currently not universally believed to hold,
so it is certainly part of our motivation to base our lower bounds on a more
plausible hypothesis. However, another way to see that the SETH is too strong
is that it implies lower bounds for parameters so restricted that the matching
algorithm is not DP, but a much simpler branching algorithm (this was recently
investigated thoroughly by Esmer, Focke, Marx, and Rzazewski
\cite{abs-2402-07331}). This means that if the question we care about is
whether DP algorithms are optimal, the SETH does not provide the ``right''
answer (or, as \cite{abs-2402-07331} put it, the ``real source of hardness''). We
would therefore prefer to replace the SETH with a better hypothesis.

Finally, the third point is a larger weakness of modern (fine-grained)
complexity theory, where unlike classical complexity theory, we rarely manage
to classify problems in complexity classes, but rather usually give one-way
reductions from a handful of accepted hypotheses. In our setting this is
particularly disappointing, because simple DP algorithms on tree and path
decompositions can be seen as facets of one global algorithm, as exemplified by
the example of \textsc{Coloring} given above.  One would naturally expect that
improving upon this global algorithm for one problem should mean something for
the others.  However, in the current state of the art, obtaining, say a
$(3-\eps)^{\pw}n^{O(1)}$ algorithm for $3$-\textsc{Coloring} would falsify the
SETH, but is not known to imply anything about any other problem -- in
particular, even for other cases of \textsc{Coloring}, such an algorithm is not
known to imply a faster algorithm when $k$ is part of the input, or even when
$k=4$.

To summarize, the claim that current DP algorithms are optimal could become
more convincing if (i) we were able to extend the strong precise lower bounds
which are currently known for the narrow case of problems solvable in
single-exponential FPT time to wider domains where DP is applied (ii) we were
able to replace the SETH with a more plausible and more appropriate source of
hardness and (iii) if we were able to show that the questions we care about
form an equivalence class, as this would tie all lower bounds together and
lessen the importance of the complexity assumption on which hardness is based.

\subsection{Summary of results}

In this paper we focus on parameterized DP algorithms that work on linear
structures, notably (but not exclusively) path decompositions. We revisit the
question of whether such algorithms are optimal. Our goal is to confirm that
the answer is yes and provide fine-grained lower bounds tightly matching the
performance of the best algorithms while addressing the three weak points we
mentioned above. For this reason we put forth the following complexity
hypothesis:

\begin{conjecture}[\ppseth] 

For all $\eps>0$ we have the following: there exists no algorithm which  takes
as input a \tsat\ instance $\phi$ on $n$ variables and a path decomposition of
its primal graph of width $\pw$ and correctly decides if $\phi$ is satisfiable
in time $(2-\eps)^{\pw}n^{O(1)}$.

\end{conjecture}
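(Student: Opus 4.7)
The statement in question is a conjecture, not a theorem, so it cannot be proved unconditionally. What we can do, and what the abstract promises, is argue for its plausibility by deriving the \ppseth\ from the classical \seth. The plan is to establish the contrapositive: a hypothetical algorithm refuting the \ppseth\ is converted, uniformly in $k$, into algorithms refuting the \seth.

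Suppose for contradiction that for some $\eps>0$ there is an algorithm $A$ deciding \tsat\ in time $(2-\eps)^{\pw} n^{O(1)}$ when given as input also a path decomposition of width $\pw$ of the primal graph. Fix any constant $k\ge 4$ and take an arbitrary $k$-SAT instance $\phi$ on $n$ variables and $m$ clauses. First, apply the textbook reduction that turns each clause $(\ell_1\vee\cdots\vee\ell_k)$ into $k-2$ clauses $(\ell_1\vee\ell_2\vee y_1),(\neg y_1\vee\ell_3\vee y_2),\ldots,(\neg y_{k-3}\vee\ell_{k-1}\vee\ell_k)$, using $k-3$ fresh auxiliary variables per clause; call the resulting equisatisfiable \tsat\ formula $\phi'$, which has $N=n+(k-3)m$ variables. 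Second, construct a path decomposition of the primal graph of $\phi'$ explicitly: start with a single bag containing all $n$ original variables, and then, processing the clauses of $\phi$ one by one, introduce the auxiliary variables of the current clause into successive bags while retaining all original variables, and forget them before moving on. Edges among original variables are covered by the initial bag; every edge incident to an auxiliary variable is witnessed in a bag where that variable is present. The maximum bag size is $n+(k-3)$, so the pathwidth of $\phi'$ is $n+O(k)$.

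Running $A$ on $(\phi',\text{decomposition})$ decides $\phi$ in time $(2-\eps)^{n+O(k)}\cdot N^{O(1)}$. For each fixed $k$, both $(2-\eps)^{O(k)}$ and $N^{O(1)}=\mathrm{poly}(n,m)$ are polynomially bounded in the input size of $\phi$, so for any $\eps'<\eps$ and $n$ sufficiently large this yields a $k$-SAT algorithm running in time $(2-\eps')^{n}$. Since the same $\eps'$ works for every $k$, the \seth\ is refuted.

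The step I expect to be most delicate is the pathwidth accounting. The rest of the construction is textbook, but one must verify carefully that the proposed decomposition really has width $n+O(k)$: that auxiliary variables of distinct clauses never need to coexist in any bag (they do not, because each auxiliary variable appears only within its own gadget), that the chain of auxiliary variables within a single gadget can be introduced and forgotten sequentially while covering all incident edges, and that the decomposition is producible in polynomial time so that $A$'s input hypothesis is genuinely met. A secondary subtlety is quantifier order: the $\eps$ coming from the negation of the \ppseth\ is a single positive constant, and the whole argument must exhibit \emph{one} such $\eps'$ working uniformly over $k$, which is automatic from the construction since the additive $O(k)$ in the exponent is absorbed into the polynomial factor for each fixed $k$.
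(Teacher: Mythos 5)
You are right that the statement is a conjecture and cannot be proved; the appropriate task is to support it, and your derivation of the \ppseth\ from the \seth\ is correct. Your route differs from the paper's, though. The paper obtains ``\seth\ $\Rightarrow$ \ppseth'' as a one-line corollary of its robustness theorem (\cref{thm:robust}): that theorem shows the \ppseth\ is equivalent to the analogous hypothesis for unbounded-arity \textsc{SAT} (via \cref{lem:psi3}, a pathwidth-preserving arity reduction), and then the implication follows from the trivial bound $\pw(\phi)\le n$. You instead give a direct, self-contained reduction from $k$-\textsc{SAT}: apply the textbook arity reduction and exhibit an explicit path decomposition of width $n+O(k)$ by keeping all original variables in every bag and sweeping the per-clause auxiliary variables through. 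Your pathwidth accounting is sound (auxiliary variables of distinct gadgets never co-occur, each gadget's chain is covered within its own window, and the decomposition is computable in polynomial time), and your handling of the quantifiers --- one $\eps'$ working uniformly over all $k$, with the $(2-\eps)^{O(k)}$ and $N^{O(1)}$ factors absorbed for each fixed $k$ --- is exactly what refuting the \seth\ requires. What your approach buys is elementarity: it needs none of the machinery of \cref{sec:robust}, because you do not care about producing a \emph{good} decomposition, only one of width $n+O(k)$. What the paper's approach buys is strictly more: \cref{thm:robust} shows the hypothesis is insensitive to the choice of arity, occurrence bounds, optimization variants, and primal versus incidence pathwidth, and \cref{sec:evidence} additionally derives the \ppseth\ from the Max-$k$-OVA and the Set Cover Conjecture, which your argument does not address.
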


To give a brief idea of what we hope to achieve using the \ppseth, let us take
another look at our running example of \textsc{Coloring}. Among the results we
present, we will prove that the following are \emph{equivalent} to falsifying
the \ppseth: (i) obtaining, for some $k\ge 3$, a $(k-\eps)^{\pw}n^{O(1)}$
algorithm for $k$-\textsc{Coloring} (ii) obtaining a
$\pw^{(1-\eps)\pw}n^{O(1)}$ algorithm for \textsc{Coloring} (iii) obtaining a
$n^{(1-\eps)\pw}$ algorithm for \textsc{List Coloring}. Hence the \ppseth\ will
allow us to obtain new sharp bounds for cases where only ETH-based bounds were
known; and to strengthen an existing SETH-based lower bound, by basing it on a
more plausible hypothesis and showing that it is equivalent to the other lower
bounds.

Our results are a clear improvement to the state of the art because (i) even
though our hypothesis involves a single-exponential FPT problem, we are able to
obtain fine-grained bounds for FPT problems with super-exponential parameter
dependence and even for non-FPT (XNLP-complete) problems (ii) the \ppseth\ is
exactly as strong as it needs to be to obtain our bounds, and we show that it
is probably much more plausible than the SETH (iii) all (new and old) lower
bounds form an equivalence class, which makes them more convincing because
disproving one is as hard as disproving all of them.

Importantly, a message of this work is that \ppseth-equivalence cuts across
traditional complexity classes, in the sense that (as we show) improving upon
the best algorithms for some XNLP-complete problem, such as \textsc{List
Coloring}, is both necessary and sufficient to improve upon the best algorithms
for some FPT problems, such as \textsc{Coloring} or $k$-\textsc{Coloring} (for
parameter pathwidth).  Thus, the \ppseth\ appears to be a natural complexity
hypothesis that deserves significant further study because it exactly captures
the fundamental question ``Can simple DP algorithms over linear structures be
improved even by a small amount?'', above and beyond what is captured by
traditional complexity classes.  

We choose to focus on linear structures (e.g.  pathwidth rather than treewidth)
for several reasons: First, linear structures are where ``simple'' DP
algorithms are often expected to be optimal; for tree-like structures, even for
basic problems (e.g.  \textsc{Dominating Set} parameterized by treewidth), much
more sophisticated techniques such as Fast Subset Convolution are needed to
obtain the best algorithm (see the corresponding chapter of
\cite{CyganFKLMPPS15}). Therefore, we expect the study of these questions for
linear parameters to be more fruitful.  Second, DP over linear structures
arises naturally in contexts unrelated to graph width parameters -- indeed in
\cref{sec:xnlp} we identify two natural \ppseth-equivalent questions where the
parameter has nothing to do with graph decompositions -- and we find it
interesting that the \ppseth\ is flexible enough to capture such cases.  Third,
since we care chiefly about proving lower bounds, sharpening or strengthening
lower bounds for linear parameters, such as pathwidth, as we do in this paper,
has the immediate benefit that the same improvements carry over to the more
general case of tree-like parameters, such as treewidth and clique-width.

\medskip

Let us now outline the results presented in this paper in more detail.

\subparagraph*{Robustness.} Since the \ppseth\ will be the main tool we use to
unify our lower bounds, we begin our investigation by looking more closely at
our formulation of the hypothesis.  For this, it is important to recall that
one particularly annoying aspect of the SETH is that the hypothesis appears to
be very sensitive to small perturbations of its definition. For instance, the
standard formulation is that for all $\eps>0$ there exists an arity $q$ such
that $q$-\textsc{SAT} cannot be solved in time $(2-\eps)^n$
\cite{ImpagliazzoP01,ImpagliazzoPZ01}; making the same assumption for
\textsc{SAT} for unbounded arity is, as far as is known, a  more believable but
possibly not equivalent assumption; while changing the order of quantification
(stating ``there exists $q$ such that for all $\eps>0$\ldots'') makes the
hypothesis false.  Similarly, it is not known whether formulating the same
hypothesis for \textsc{Max-SAT} (instead of \textsc{SAT}) renders it strictly
more plausible or whether the statement remains equivalent. Even worse, the
hypothesis that \textsc{Max-3-SAT} is impossible to solve in time $(2-\eps)^n$
is itself also consistent with the state of the art and sometimes used as an
alternative starting point for reductions (e.g. in \cite{abs-2402-07331}),
without having a clear relation to the SETH.

A first contribution of this paper is to show that, thankfully, the \ppseth\
does not suffer from the same drawbacks, because even non-trivial perturbations
of its formulation do not actually change the hypothesis. More precisely, we
consider (i) allowing clauses of unbounded arity (\textsc{SAT} vs \tsat) (ii)
bounding the number of occurrences of each variable (iii) optimization versions
that seek to maximize the number of satisfied clauses or the weight of a
satisfying assignment or both (iv) parameterizing by the pathwidth of the
\emph{incidence} graph, instead of the primal graph. No matter which
combination of these variations one selects, we show that the \ppseth\ remains
equivalent to the original version.  As we explain below, this is particularly
important for the parameterization by the incidence pathwidth, because this
will allow us to place domination and covering type problems in the same
complexity class.  Furthermore, we give equivalent formulations of the \ppseth\
pertaining to Constraint Satisfaction Problems (\textsc{CSP}s) with non-binary
alphabets.  These prove invaluable in simplifying the reductions we will need
to prove that various problems are harder than the \ppseth. These results are
presented in \cref{sec:robust}.

\subparagraph*{Supporting Evidence.} The above immediately show that the
\ppseth\ is implied by the SETH, despite the fact that we have formulated it
for \tsat\ (and not \textsc{SAT}).  We strengthen this observation by showing
that the \ppseth\ is also implied by two other well-known complexity
hypotheses: the $k$-Orthogonal Vectors Assumption (\kova) and the Set Cover
Conjecture (\scc).  The \kova\ is a known implication of the SETH and is
generally considered a slightly more believable assumption, while the Set Cover
Conjecture does not currently have any known relation to the SETH and indeed
proving an implication in either direction is a major open problem.  These
results, which we present in \cref{sec:evidence}, indicate that the \ppseth\ is
perhaps quite a bit more believable than the SETH.

\subparagraph*{Single-Exponential FPT Problems.} Our results up to this point
set up the \ppseth\ as a solid foundation for establishing lower bounds. The
obvious first target is then to consider problems with complexity of the same
flavor as \tsat\ parameterized by pathwidth, that is, single-exponential in the
parameter. These are problems for which tight SETH-based lower bounds are
typically known. Our results, presented in \cref{sec:fpt}, are that \emph{each}
of the following would be \emph{equivalent} to falsifying the \ppseth:

\begin{enumerate}

\item Obtaining, for some $k\ge3, \eps>0$, an algorithm for
$k$-\textsc{Coloring} running in $(k-\eps)^{\pw}n^{O(1)}$.

\item Obtaining, for some $k\ge3, \eps>0$, an algorithm for
$k$-\textsc{Coloring} running in time $(2^k-2-\eps)^{\lcw}n^{O(1)}$, where
$\lcw$ is the input graph's linear clique-width.

\item Obtaining, for some $d\ge2, \eps>0$, an algorithm for
$d$-\textsc{Scattered Set} running in time $(d-\eps)^{\pw}n^{O(1)}$.
$d$-\textsc{Scattered Set} is the problem of selecting a maximum cardinality
set of vertices at pairwise distance at least $d$. (For $d=2$, this is just
\textsc{Independent Set}.)

\item Obtaining, for some $r\ge1, \eps>0$, an algorithm for
$r$-\textsc{Dominating Set} running in time $(2r+1-\eps)^{\pw}n^{O(1)}$. This
is the version of \textsc{Dominating Set} where a vertex is dominated if it is
at distance at most $r$ from a selected vertex; for $r=1$ this is just
\textsc{Dominating Set}.

\item Obtaining, for some $\eps>0$, an algorithm for \textsc{Set Cover} running
in $(2-\eps)^{\pw}(n+m)^{O(1)}$. Here $n$ is the number of elements, $m$ the
number of sets, and $\pw$ is the pathwidth of the incidence graph of the given
instance.

\end{enumerate}

All the above lower bounds were previously known as implications of the SETH
(and in the case of \textsc{Set Cover} also the \scc).  Our contribution here
is not merely to reprove these lower bounds under a more believable conjecture,
but also to give bi-directional reductions to and from the \ppseth, proving
that in fact all the above are equivalent. For example, our results show that a
$(6-\eps)^{\lcw}n^{O(1)}$ algorithm for $3$-\textsc{Coloring} is possible if
and only if \textsc{Dominating Set} can be solved in time
$(3-\eps)^{\pw}n^{O(1)}$. Such an implication was not previously known, even
though if one believed the SETH, the two bounds were both known to be optimal.

The list of problems we give here is not meant to be exhaustive. In fact, we
have merely selected some of the most prominent problems to sample what it
means for a lower bound to be equivalent to the \ppseth. Our wider thesis is
the following: \emph{whenever the best algorithm is ``simple DP over a linear
structure'', then improving this algorithm is typically \ppseth-equivalent}. We
stress here that, if we seek lower bounds which are equivalent to and not just
implied by our hypothesis, it is crucial here that we focus on \emph{linear
structures} and not tree-like structures, such as tree decompositions. Indeed,
if we had performed the same investigation for the treewidth equivalent of the
\ppseth, we would have arrived at a much shorter list, as we would \emph{not}
have been able to obtain the above results for \textsc{Dominating Set} and
\textsc{Set Cover} (we discuss this in more detail in \cref{sec:previous}).

\subparagraph*{Super-Exponential FPT Problems.} Even though the results we have
obtained up to this point improve our confidence that several basic DP
algorithms are optimal, one could argue that they are not too surprising:
\tsat\ itself has parameter dependence $2^{\pw}$, so it is natural that it
would be interreducible with other problems with dependence $2^{\pw}$; somewhat
less obviously, one may also expect it to be interreducible with problems with
dependence $c^{\pw}$ for other constants $c$. An interesting aspect of the
\ppseth, however, is that its reach goes further than the class of problems
that admit such algorithms. In particular, we prove that the following are
equivalent to falsifying the \ppseth: 

\begin{enumerate}

\item Obtaining, for some $\eps>0$, an algorithm for \textsc{Coloring} running
in time $\pw^{(1-\eps)\pw}n^{O(1)}$.

\item Obtaining, for some $\eps>0$, an algorithm for $C_4$-\textsc{Hitting Set}
running in time $(\sqrt{2}-\eps)^{\pw^2}n^{O(1)}$.

\end{enumerate}

Here, $C_4$-\textsc{Hitting Set} is the problem of deleting the minimum number
of vertices to destroy all cycles of length $4$ in a graph. Both bounds are
tight, in the sense that we can obtain algorithms with running times $\pw^\pw
n^{O(1)}$ and $2^{\frac{\pw^2}{2}+O(\pw)}n^{O(1)}$ respectively.  However,
prior to this work, only coarse-grained ETH-based lower bounds were known,
implying that no algorithms exist with complexities $\pw^{o(\pw)}n^{O(1)}$ and
$2^{o(\pw^2)}n^{O(1)}$ respectively \cite{LokshtanovMS18b,SauS21}. We stress
here that it is not possible to obtain the new bounds simply by taking the
existing reductions and plugging in a stronger hypothesis (e.g. the SETH in
place of the ETH). Indeed, even if one believes the SETH, it was previously an
open problem whether, for instance, \textsc{Coloring} can be solved in time
$\pw^{0.99\pw}n^{O(1)}$ (we explain this in more detail in
\cref{sec:previous}).  We are able to rule out such improvements by presenting
new reductions, which need to be significantly more efficient than existing
ones and, especially in the case of the latter problem, become as a result much
more technically involved.

\subparagraph*{XNLP-complete Problems.} Encouraged by the previous results, we
take a high-level view of the \ppseth\ as a conjecture that states that simple
DP cannot be improved upon, independent of the time complexity of the
algorithm. Put this way, it becomes tempting to apply this conjecture even to
problems which are not FPT (under standard assumptions).  We therefore move on
to study three problems which are complete for the class XNLP and hence
strongly suspected not to be FPT: \textsc{List Coloring} parameterized by
pathwidth, \textsc{Bounded-}$k$\textsc{-DFA Intersection}, and
\textsc{Short}-$k$-\textsc{Independent Set Reconfiguration}, parameterized by
$k$.  Our results, presented in \cref{sec:xnlp}, show that each of the
following are equivalent to falsifying the \ppseth:

\begin{enumerate}

\item Obtaining, for some $\eps>0$, an algorithm which solves \textsc{List
Coloring} in time $O(n^{(1-\eps)\pw})$, for sufficiently large $\pw$.

\item Obtaining, for some $\eps>0,p>4$, an algorithm which solves \textsc{List
Coloring} on instances of pathwidth $p$ in time $O(n^{p-4-\eps})$.

\item Obtaining, for some $\eps>0$, an algorithm which solves
\textsc{Bounded-}$k$\textsc{-DFA Intersection} in time
$n^{(1-\eps)k}\cdot\ell^{O(1)}$, for sufficiently large $k$, where $\ell$ is the
length of the desired string.

\item Obtaining, for some $\eps>0,k>1$, an algorithm which solves
\textsc{Bounded-}$k$\textsc{-DFA Intersection} in time
$n^{k-\eps}\cdot\ell^{O(1)}$.

\item Obtaining, for some $\eps>0$, an algorithm which solves
\textsc{Short}-$k$-\textsc{Independent Set Reconfiguration} in time
$n^{(1-\eps)k}\cdot\ell^{O(1)}$, for sufficiently large $k$,  where $\ell$ is the
length of the reconfiguration sequence.

\item Obtaining, for some $\eps>0, k>3$, an algorithm which solves
\textsc{Short}-$k$-\textsc{Independent Set Reconfiguration} in time
$n^{k-1-\eps}\cdot\ell^{O(1)}$.

\end{enumerate}

Here, \textsc{Bounded-}$k$\textsc{-DFA Intersection} is the problem where we
are given $k$ DFAs with $n$ states and an integer $\ell$ and are asked whether
there exists a string of length at most $\ell$ accepted by all the DFAs; while
\textsc{Short}-$k$-\textsc{Independent Set Reconfiguration} is the problem
where we are given two size-$k$ independent sets of an $n$-vertex graph and an
integer $\ell$ and are asked if there is a sequence of length at most $\ell$
that transforms one to the other, exchanging at each step a vertex in the set
with a vertex outside, while keeping the set independent at all times.

Let us first explain the (somewhat less fine-grained) forms of the lower bounds
given in points 1,3, and 5. All three problems admit algorithms of the form
$n^{k+O(1)}$ via some kind of DP: for \textsc{List Coloring} this is the DP
algorithm over path decompositions we have already mentioned, while for the
other two problems the algorithm constructs a graph of $n^k$ vertices (the
intersection DFA or the configuration graph) and solves reachability in that
graph. The lower bounds therefore state that if we assume the \ppseth\ the
coefficient of $k$ in the exponent of these algorithms is exactly correct.  In
fact, obtaining an algorithm with complexity of the form $n^{0.99k}$ is both
necessary and sufficient for disproving all lower bound results we have
discussed so far.  This is perhaps striking, because we tightly link the
complexities of problems which are FPT on one hand and XNLP-complete on the
other and, to the best of our knowledge, it is uncommon for such bi-directional
relationships to be known for problems that belong in distinct complexity
classes.

Going further, it is worth highlighting that we are able to obtain even more
fine-grained bounds in this case, as given in points 2, 4, and 6. To see what
we mean, consider the less ambitious question of whether for one of the
problems we can obtain an algorithm running in, say, time $n^{k-\eps}$, that
is, we seek to improve the exponent by an additive, rather than a
multiplicative amount. Surprisingly, the answer can be yes: we show that for
the independent set reconfiguration problem, there is an algorithm which uses
fast matrix multiplication and runs in time $O(n^{k-3+\omega}\cdot\ell)$, where
$\omega$ is the matrix multiplication constant. Hence, when we are looking for
\emph{short} reconfigurations (say, $\ell=n^{o(1)}$), we can save a factor of
at least $n^{0.7}$ over the trivial $n^k$ algorithm -- and indeed if $\omega=2$
we even arrive at $n^{k-1}$. Can the \ppseth\ say something interesting about
modest improvements of this type?

We respond to this question affirmatively by showing that under the \ppseth\ it
is not just impossible to improve the above algorithm for independent set
reconfiguration and the trivial algorithm for DFA intersection by a
multiplicative constant; in fact even an additive constant shaved off the
exponent is \emph{equivalent} to falsifying the \ppseth. As a consequence, our
lower bounds are extremely sharp for these two problems. Furthermore, they
imply that improving the basic algorithm by either a multiplicative or an
additive constant in the exponent are actually the same question, which a
priori does not seem obvious (why should an $n^{k-1.01}\ell^{O(1)}$
algorithm necessarily imply an $n^{0.999k}\ell^{O(1)}$ algorithm?). We are able
to obtain a similar equivalence for \textsc{List Coloring}, albeit without so
precisely determining the minimum  additive constant improvement which triggers
a falsification of the \ppseth\ (we can, however, state that under the \ppseth,
the best algorithm for this problem has complexity somewhere between $n^{p+1}$
and $n^{p-4}$).

\subsection{Overview of techniques}\label{sec:techniques}

Let us now briefly review the main tools we use to establish our results.

\subparagraph*{Robustness.} For the results of \cref{sec:robust}, establishing
that different variants of \textsc{SAT} are equivalent to each other, most
reductions are straightforward and the main challenge is to organize the
(numerous) variations considered. Similarly, to establish that the \ppseth\ is
equivalent to its natural version for \textsc{CSP}s with non-binary alphabets,
it is sufficient to use techniques extending those for the corresponding
version of the SETH given in \cite{Lampis20}. However, things become more
complicated in \cref{sec:cspmulti} where we consider a promise \textsc{CSP}
that will be invaluable later. Roughly speaking, we consider here \textsc{CSP}
instances where we are also given a path decomposition of the primal graph and
an assignment may be allowed to cheat by flipping the value of a variable
between bags. This is of course too powerful, so we focus on cheating
assignments which must be monotone (that is, they can only increase values as
we move along the decomposition) and consistent for some small set of
variables.  What we need is to establish that it is hard to distinguish two
cases: there is a normal satisfying assignment; no satisfying assignment
exists, even if we cheat in the way described. Though this type of problem may
seem artificial, these are exactly the kind of properties we need to be able to
adapt SETH-based reductions to \ppseth-based reductions in \cref{sec:fpt}.
Thus, the main challenge of \cref{sec:robust} is establishing that this promise
problem is hard under the \ppseth\ (\cref{cor:weird}).

\subparagraph*{Supporting evidence.} We present two reductions to \textsc{SAT},
showing that if the \ppseth\ were false (that is, we had a fast \textsc{SAT}
algorithm) we would obtain algorithms falsifying the other conjectures. For the
reduction from the $k$-\textsc{Orthogonal Vectors} problem, we construct a
\textsc{SAT} instance with $k\log n$ main variables, which are meant to contain
the $k$ indices of the selected vectors.  We then need to add some machinery to
encode the instance, but the pathwidth of the produced formula is still roughly
$k\log n$, meaning that an algorithm that falsifies the \ppseth\ would solve
$k$-\textsc{Orthogonal Vectors} in time $(2-\eps)^{k\log n}=n^{(1-\delta)k}$.
For the reduction from the \scc\ we actually go through \textsc{Subset Sum}. It
is known that solving \textsc{Subset Sum} with target value $t$ and $n$ items
in time $t^{1-\eps}n^{O(1)}$ would falsify the conjecture
\cite{CyganDLMNOPSW16}. We encode such an instance with a \textsc{SAT} formula,
where a series of $tn$ variables encodes the value of the sum after we have
considered each item. Since each sum only depends on the previous sum, the
pathwidth is (essentially) $t$, and a fast \textsc{SAT} algorithm falsifies the
\scc.

\subparagraph*{Single-Exponential FPT Problems.} For each considered problem we
need to supply reductions to and from the \ppseth.  Showing that if the
\ppseth\ is false we obtain faster algorithms is straightforward in most cases,
though we must stress here once again that the fact that the \ppseth\ remains
equivalent if defined for the incidence graph is crucial (otherwise, it would
be hard to encode \textsc{Set Cover} and related problems). For the converse
direction, we need to reduce \textsc{SAT}, or more commonly a \textsc{CSP} with
an appropriate alphabet size, to our corresponding problem.  It now becomes
critical that we have shown that the promise \textsc{CSP} version discussed
above is a valid starting point, because this makes it vastly easier to prove
correctness for our gadgets (we explain this in more detail in
\cref{sec:cspmulti} and \cref{sec:fpt}). Although we often reuse the basic
constructions from existing SETH-based reductions, several of our reductions
are considerably simplified, compared to the original version, thanks to the
preparatory work of \cref{sec:robust}.

\subparagraph*{Super-Exponential FPT Problems.} Although much technical
preparation was needed to build up to the reductions of \cref{sec:fpt}, we had
the good fortune of being able to recycle numerous problem-specific gadgets
from previous reductions, because SETH-based lower bounds were already designed
with a special eye for efficiency. This is no longer the case when we move to
classes of problems for which only rough ETH-based lower bounds are known, such
as those dealt with in \cref{sec:fpt2}. Unfortunately, adapting the ETH-based
reductions is not good enough to obtain our results in this case. This is true
both on a high level (as we explain in \cref{sec:previous}), but also even when
one considers gadget-level constructions. The reason for this is that previous
bounds can afford to be a little wasteful, for instance reductions for
$C_4$-\textsc{Hitting Set} and related problems typically construct a vertex
for each \emph{literal} of a given \textsc{SAT} formula, rather than a vertex
for each \emph{variable} \cite{Pilipczuk11,SauS21}. This extra factor of $2$ is
acceptable if one does not care about the precise multiplicative constant in
the exponent, but is not acceptable in our setting. We are therefore forced to
redesign existing lower bounds, sometimes starting from scratch, with a number
of problem-specific gadgets.  Thankfully, the reductions in the other direction
(from our problems to \textsc{SAT}) are straightforward. It is worth noting
that the fact that we are dealing with linear structures is again crucial: the
natural DP for $C_4$-\textsc{Hitting Set} would require Fast Subset Convolution
techniques to obtain an optimal running time for tree decompositions, hence
even though for pathwidth this problem is \ppseth-equivalent, for treewidth
improving the best algorithm for this problem appears harder than improving the
best \textsc{SAT} algorithm (we discuss this in more detail in
\cref{sec:previous}).

\subparagraph*{XNLP-complete Problems.} Again, we need to supply reductions in
both directions and for two of the three problems considered (\textsc{List
Coloring} and \textsc{Independent Set Reconfiguration}) only rough ETH-based
lower bounds are known, so we need to invest some technical work to design
problem-specific gadgets. The intuition of the reductions from \textsc{SAT} is
that we want to reduce to instances where the new parameter has value $k$ and
the size of the instance is roughly $2^{\pw/k}$, where $\pw$ is the pathwidth
of the input CNF formula. Here, we have to be extremely careful: if the new
parameter value is not $k$ but say $k+1$, we cannot hope to obtain a lower
bound that matches the best algorithm up to the correct additive constant.
Intuitively, this is why the only problem for which we fail to obtain such a
super-precise bound is the one where we parameterize by pathwidth -- here minor
details such as the fact that the pathwidth is not equal to the size of the
largest bag, but equal to that value minus one actually become important.
Similarly, we have to be very careful so that the size of the graph is of the
order $2^{\pw/k}$ and not simply $2^{O(\pw/k)}$; in particular, standard
\textsc{List Coloring} gadgets, such as those used in \cite{BodlaenderGNS21}
fail here, because they are inherently quadratic, and we are forced to design
new ones. Interestingly, reductions in the other direction (to \textsc{SAT}) do
not need to be as stingy: ideally we would like to reduce an instance with
parameter value $k$ to a \textsc{SAT} instance with pathwidth $p=k\log n$, but
achieving only $p=(k+O(1))\log n$ or even $p=(1+o(1))k\log n$ is good enough,
because a \textsc{SAT} algorithm with dependence $(2-\eps)^p$ is still fast
enough to give an $n^{(1-\eps)k}$ algorithm for our problems (for sufficiently
large $k$). Intuitively, this asymmetry in the rigidity of the reductions is
the reason why we obtain two seemingly different versions of our lower bounds,
which turn out to be equivalent.

\subsection{Previous work}\label{sec:previous}

The results we present in this paper fall within the larger FPT and XP
optimality research program, which lies at the intersection of parameterized
complexity and fine-grained complexity. The goal of this program is to take
problems for which the best algorithm has complexity of the form $f(k)n^{O(1)}$
(FPT) or $n^{f(k)}$ (XP) and determine precise bounds on the best function $f$
possible.  In this context, a large number of results is known assuming the
Exponential Time Hypothesis (ETH) \cite{ImpagliazzoPZ01}. We cite as a few
characteristic examples the results of
\cite{BergougnouxBBK20,BonamyKNPSW18,ChitnisFHM20,CyganPP16,FominGLSZ19,abs-2307-08149,HarutyunyanLM21,LampisMV23,LampisM17,LokshtanovMS18b,MarxM16,MarxP14}.
Our results are different from this line of work, because we seek more
fine-grained lower bounds, that is, we do not accept to leave an unspecified
constant in the exponent.

\subparagraph*{Why ETH-based bounds do not translate to our setting.} An astute
reader may rightfully complain here that the comparison we make between our
results and the line of work on ETH-based lower bounds is unfair.  Even though
it is true that, say the $\pw^{o(\pw)}n^{O(1)}$ lower bound on
\textsc{Coloring} given in \cite{LokshtanovMS18b} is less precise than the
$\pw^{(1-\eps)\pw}n^{O(1)}$ bound we give here, this is because
\cite{LokshtanovMS18b} is relying on a weaker, less fine-grained assumption, so
it is natural that the obtained result is less fine-grained. One may hope that
surely if we take the proof of \cite{LokshtanovMS18b}  and plug in a stronger
assumption (perhaps the SETH), this would give a bound as good as the one we
get here.

Unfortunately, this line of reasoning is false, because this reduction (and
typical reductions in this area) starts from a variant of $k$-\textsc{Clique}.
This is problematic not only for the practical reasons we have already
discussed (e.g. the reduction must preserve $k$ exactly and cannot afford to
modify it by a multiplicative constant), nor even for the reason that the SETH
is not known to imply any concrete lower bound for the complexity of
$k$-\textsc{Clique}; but for the simple reason that $k$-\textsc{Clique} does in
fact admit a non-trivial faster than brute-force algorithm with complexity
$n^{\frac{\omega k}{3}}$, where $\omega$ is the matrix multiplication constant.
What this means concretely is that even if we take previous reductions for
\textsc{Coloring} \cite{LokshtanovMS18b} and \textsc{List Coloring}
\cite{FellowsFLRSST11} and optimize them so that the parameter is preserved
exactly, the reductions will never be able to rule out a \textsc{Coloring}
algorithm running in $\pw^{\frac{\omega\cdot\pw}{3}}n^{O(1)}$ or a \textsc{List
Coloring} algorithm running in $n^{\frac{\omega\cdot\pw}{3}}$, because such an
algorithm would at best imply a $k$-\textsc{Clique} algorithm that is already
known to exist. 

\medskip

\subparagraph*{SETH-based lower bounds.} Closer to what we do in this paper is
a line of research that tackles FPT optimality questions assuming the SETH.  As
mentioned, the question to determine the best base $c$ for which various
problems admit $c^kn^{O(1)}$ algorithms, for various parameters, started with
\cite{LokshtanovMS18}. Since then, numerous such results have been found, with
some examples being given in the following works:
\cite{BojikianCHK23,CurticapeanM16,FockeMINSSW23,FockeMR22,GanianHKOS22,HegerfeldK23,HegerfeldK23b,JaffkeJ23,MarxSS21,OkrasaR21}.
Relatively fewer results are known based on the SETH for determining the best
complexity of an XP algorithm, with a notable exception being the classical
result of P{\u{a}}tra{\c{s}}cu and Williams stating that if the SETH is true
then $k$-\textsc{Dominating Set} cannot be solved in time $n^{k-\eps}$
\cite{PatrascuW10}. A similar result of this type for (a version of) the DFA
intersection problem we consider in this paper was given by Oliveira and Wehar
\cite{OliveiraW20}, who showed that if the SETH is true then this problem
cannot be solved in time $n^{k-\eps}$.

\subparagraph*{Why focus on pathwidth and not treewidth.} Among works in this
area, it is worthwhile to particularly discuss the work of Iwata and Yoshida
\cite{IwataY15}, which is a major antecedent of our work.  The main animating
idea of \cite{IwataY15} is to observe, exactly as we do here, that a basic
weakness of SETH-based lower bounds is that reductions are one-way. Iwata and
Yoshida then go on to consider the treewidth variant of the \ppseth\ as a
complexity hypothesis, with the goal of showing that improving the basic DP
algorithm for various problems parameterized by treewidth (or clique-width) is
equivalent to this hypothesis.  Despite the fact that the broad thrust of their
approach is exactly the same as what we attempt in this paper, the scope of the
results they present is significantly more restricted: Iwata and Yoshida are
able to establish equivalence to their hypothesis only for problems for which
the complexity has the form $2^\tw n^{O(1)}$ and indeed, the only such problem
in their list that is not a \textsc{SAT} variant is \textsc{Maximum Independent
Set}.  Nevertheless, their work is a major inspiration for what we do here and
some of their ideas are extended and reused in the reductions of
\cref{sec:fpt}.  In addition to their results on treewidth, Iwata and Yoshida
also consider the \ppseth\ (under a different name) and give a Cook-Levin type
theorem proving that falsifying the \ppseth\ is equivalent to the problem of
deciding whether a non-deterministic Turing machine with space $k+O(\log n)$
will accept a given input, using time $(2-\eps)^kn^{O(1)}$. They call EPNL the
class of FPT problems solvable by such a machine. This result fits nicely in
the picture we present, as it gives further evidence that the \ppseth\ is a
natural complexity hypothesis.

At this point an interested reader may be wondering, given that the main idea
we build upon in this work was already present in \cite{IwataY15}, why are the
results of \cite{IwataY15} so much more limited than the results we present in
this paper? In particular, why does \cite{IwataY15} cover only problems with
complexity $c^{\tw}n^{O(1)}$, only for $c=2$, while we are able to extend the
reach of the \ppseth\ all the way to XNLP-complete problems? A partial answer
is that we rely on numerous (sometimes problem-specific) technical ingredients,
some of them introduced here (see \cref{sec:techniques}) and some of them
existing in the literature but introduced after \cite{IwataY15}.  This
considerable machinery allows us to deal with bases $c>2$ and with problems of
higher complexity.  However, we do not want to leave the answer at that,
because there is an important point to be made here regarding our choice to
focus on pathwidth.

Recall that one of the results we present in \cref{sec:fpt} is that solving
\textsc{Set Cover} in time $(2-\eps)^{\pw}n^{O(1)}$ is equivalent to falsifying
the \ppseth. This problem has a nice round complexity which superficially looks
like it would be a good target for \cite{IwataY15}, but is still conspicuously
absent from their results. We claim that the reason for this goes much deeper
than missing technical machinery: rather, what is relevant here is that the
best DP algorithm for \textsc{Set Cover} parameterized by treewidth is
\emph{not} ``simple'', but relies on Fast Subset Convolution techniques.  It is
far from clear whether a faster than brute-force \textsc{SAT} algorithm could
be useful to speed up such an algorithm, because such techniques rely on
solving a counting problem, indicating that to improve upon this complexity we
would need a fast algorithm for \#\textsc{SAT} or perhaps $\oplus$\textsc{SAT}.
This phenomenon is actually wide-spread and covers other famous and not so
famous problems parameterized by treewidth (e.g. \textsc{Dominating Set}  and
$C_4$-\textsc{Hitting Set}).  As a result, if we were working with a treewidth
variant of the \ppseth, as Iwata and Yoshida did, it would have been impossible
to obtain equivalence for these problems (barring some algorithmic
breakthrough), because for treewidth even these basic problems do not fit in
the paradigm of ``simple DP''. We therefore now see why it is important that in
\cref{sec:robust} we are able to prove that the \ppseth\ remains equivalent if
the parameter is the incidence pathwidth -- the same does not seem likely to
happen for treewidth, because again the best algorithm for \textsc{SAT}
parameterized by incidence treewidth needs Fast Subset Convolution. This
discussion indicates that the \ppseth\ is in a sense the more fruitful
hypothesis,  explains in part the wide reach of our results, and justifies our
choice to focus on linear structures.

\medskip

Finally, moving to less fine-grained previous work, we mention a recent series
of works, starting with \cite{BodlaenderGNS21}, that studies the class XNLP,
previously studied (under a different name) in \cite{ElberfeldST15} and in a
slight variation in \cite{Guillemot11}. XNLP is the set of problems decidable
by a non-deterministic Turing machine using FPT time and space $f(k)\log n$.
The class XNLP was recently shown to be the natural home of many W[1]-hard
problems for linear structural parameters, such as pathwidth and linear
clique-width.  Although the problems considered are directly relevant to what
we do, the class XNLP has so far been studied only from a coarse-grained
perspective and, since reductions inside this class are allowed to manipulate
the parameter as standard FPT reductions, XNLP-hardness for a problem does not
imply any concrete fine-grained running time bounds (other than fixed-parameter
intractability). However, there is a close connection between the spirit of
this work and this paper, because the underlying goal is to investigate the
complexity of problems parameterized by a linear structure on which one
normally performs DP and then give evidence that this DP algorithm is optimal.
In a similar spirit, we also mention the work of Drucker, Nederlof, and
Santhanam \cite{DruckerNS16} which, among other results, also considered the
question of optimality for DP algorithms for a subclass of algorithms they
called ``disjunctive dynamic programming''. They showed that
\textsc{Independent Set} parameterized by pathwidth is complete for this class,
which they model via a succinct CNF reachability problem, albeit with
reductions which are not fine-grained.

\subsection{Discussion and Directions for Further Work}

A natural question that arises from this work is what happens to versions of
the \ppseth\ for other widths, and in particular for treewidth. Indeed, we
believe that this is a very promising direction that merits further
investigation, in the same way that XNLP more recently gave rise to related
classes XALP and XSLP (for treewidth \cite{BodlaenderGJPP22} and tree-depth
respectively \cite{BodlaenderGP23}). However, it is important to be aware of
the fact that one of the obstacles likely to be faced in this direction is
that, as we discussed in \cref{sec:previous}, the primal and incidence graph
versions of the conjecture are not (easily) equivalent for treewidth. The
intuitive reason for this is that the ``obvious'' DP algorithm for \textsc{SAT}
parameterized by treewidth has parameter dependence $3^{\tw}$, while obtaining
an algorithm with dependence $2^{\tw}$ requires sophisticated techniques, such
as fast subset convolution.  We therefore expect the treewidth version of the
\ppseth\ to be equivalent to fewer problems than the \ppseth. 

The discussion above brings us back to the \ppseth\ and the question of which
natural problems parameterized by pathwidth are \emph{not} equivalent to this
hypothesis. We expect a typical example to be problems associated with the
Cut\&Count technique \cite{CyganNPPRW22}, because again the corresponding DP
algorithms are not ``simple'' and rely (in addition to randomization) on
solving a counting problem.  As a result, the blocking point for applying the
\ppseth\ here is not reducing \textsc{SAT} to the corresponding problem in a
way that preserves pathwidth, but rather reducing the corresponding
connectivity problem back to \textsc{SAT}. It therefore seems natural to
investigate the complexity of both counting and parity versions of the \ppseth\
and see if these hypotheses can be shown equivalent to (counting or parity
versions of) connectivity problems.  This type of hypothesis is likely to be of
interest also for handling techniques such as fast subset convolution,
mentioned in the previous paragraph. This of course leads to the natural
question: what is the relation between the \ppseth\ and the parity version
thereof? More concretely, suppose we have an algorithm falsifying the \ppseth.
Does this imply any improvement in the complexity of the best algorithm for
determining the parity of the number of satisfying assignments of a CNF
formula, parameterized by its primal pathwidth? This seems like a challenging
problem.

\section{Preliminaries}

We assume that the reader is familiar with the basics of parameterized
complexity, as given for example in \cite{CyganFKLMPPS15}. We will consider
problems parameterized by structural parameters, mainly pathwidth and linear
clique-width. In all cases, we are chiefly concerned with the complexity of
solving the problem, not that of computing the corresponding decomposition, so
we will generally assume that appropriate decompositions are supplied in the
input.

\paragraph*{Graph widths}

We recall the definitions of pathwidth and linear clique-width. A path
decomposition of a graph $G=(V,E)$ is an ordered sequence of bags $B_1,\ldots,
B_t$, where each $B_j$ for $j\in[t]$ is a subset of $V$, for each $xy\in E$
there exists a bag $B_j$ with $x,y\in B_j$, and for all $x\in V$ and
$j_1,j_2\in[t]$ we have that if $x\in B_{j_1}\cap B_{j_2}$, then $x\in B_{j'}$
for all $j'\in[j_1,j_2]$. The width of a path decomposition of a graph is the
maximum size of any bag minus one. The pathwidth of a graph $G$, denoted
$\pw(G)$, is the minimum width of any path decomposition. A path decomposition
is \emph{nice} if the symmetric difference of any two consecutive bags contains
at most one vertex.  Any path decomposition can be transformed in polynomial
time into a nice path decomposition without increasing the width. We also
recall the standard fact that if $G$ contains a clique, in any decomposition
there must be a bag that contains all vertices of the clique.

For an integer $k\ge1$, a \emph{linear clique-width expression} with $k$ labels
is a string made up of the following symbols:

\begin{enumerate}

\item (Introduce): $I(i)$, where $i\in[k]$

\item (Rename): $R(i\to j)$, where $i,j\in[k], i\neq j$

\item (Join): $J(i,j)$, where $i,j\in[k], i\neq j$

\end{enumerate}

We can associate with each such string a graph $G$ where each vertex is
assigned a label from $[k]$ inductively as follows: (i) the empty string is
associated with the empty graph (ii) for each non-empty string $\chi$ let
$\chi'$ be its prefix of length $|\chi|-1$ and $G'$ the graph associated with
$\chi'$. Then the graph associated with $\chi$ is defined as follows: (i) if
the last character of $\chi$ is $I(i)$, introduce in $G'$ an isolated vertex
with label $i$ (ii) if the last character of $\chi$ is $R(i\to j)$, modify the
labeling of $G'$ so that all vertices with label $i$ now have label $j$ (iii)
if the last character is $J(i,j)$, add in $G'$ all possible edges with one
endpoint with label $i$ and the other with label $j$. The \emph{linear
clique-width} of a graph $G$, denoted $\lcw(G)$, is the minimum $k$ such that
there exists a linear clique-width expression with $k$ labels associated with
some labeling of $G$.

\paragraph*{Primal and Incidence Graphs}

We will consider both satisfiability problems and more general constraint
satisfaction problems (\textsc{CSP}s) over larger alphabets. The \emph{primal
graph} of such an instance is a graph that has one vertex for each variable,
and an edge $xy$ if the variables $x,y$ appear together in some clause or
constraint.  The \emph{incidence graph} is the bipartite graph that has one
vertex for each variable and each clause (or constraint) and an edge $xc$
whenever variable $x$ appears in constraint $c$.

For a \textsc{SAT} or \textsc{CSP} instance $\phi$, we will denote by
$\pw(\phi)$ and $\pw^I(\phi)$ the pathwidth of its primal and of its incidence
graph respectively. We recall the standard fact that for all $\phi$ we have
$\pw^I(\phi)\le \pw(\phi)+1$. This property is not hard to see, because in the
primal graph every clause $c$ of $\phi$ is a clique, therefore a path
decomposition of the primal graph must contain a bag with all the variables of
$c$. We can therefore obtain a path decomposition of the incidence graph by
inserting after this bag, a copy where we also add $c$.

Note that the above imply that parameterizing by incidence pathwidth always
produces a \emph{harder} problem: if there is an algorithm deciding $\phi$ in
$c^{\pw^I}|\phi|^{O(1)}$, then there is an algorithm in $c^{\pw}|\phi|^{O(1)}$,
because $\pw^I(\phi)\le \pw(\phi)+1$.

The following lemma states that a path decomposition of the primal graph of any
CNF formula can efficiently be made nice and we can injectively assign for each
clause a bag that contains all its variables. The same lemma clearly holds for
CSPs, because the only property we need is that for each clause (or constraint)
the primal graph contains a clique on its variables.

\begin{lemma}\label{lem:nice} There is a linear-time algorithm that takes as
input a CNF formula $\psi$ with $n$ variables and $m$ clauses and a path
decomposition of its primal graph of width $p$ and outputs a nice path
decomposition $B_1,\ldots,B_t$ of $\psi$ containing at most $t=O(pm)$ bags, as
well as an injective function $b$ from the set of clauses of $\psi$ to $[t]$
such that for each clause $c$, $B_{b(c)}$ contains all the variables of $c$.
\end{lemma}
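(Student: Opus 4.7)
The plan has two stages: first, convert the input path decomposition into a nice path decomposition with $O(pm)$ bags, and second, inject duplicate copies of chosen bags to realize an injective $b$.

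For the first stage, I would preprocess the input decomposition to have at most $O(n)$ bags by the standard reduction: any path decomposition can be reduced so that every bag either introduces or forgets a vertex, giving at most $2n+1$ meaningful bags. Then I convert to a nice decomposition in the usual way, inserting between each consecutive pair $B_j, B_{j+1}$ a chain of intermediate bags that first forgets the vertices in $B_j \setminus B_{j+1}$ one at a time, then introduces those in $B_{j+1} \setminus B_j$ one at a time. Since $|B_j|, |B_{j+1}| \le p+1$, each transition adds at most $2p+2$ bags, so the nice decomposition has $O(pn)$ bags. Assuming WLOG that every variable appears in some clause (isolated variables can be discarded without affecting satisfiability), we have $n \le (p+1)m$, giving the bound $O(pn) = O(pm)$. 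All of this is a single linear-time sweep.

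For the second stage, for each clause $c$ the variables of $c$ form a clique in the primal graph, so by the standard clique-bag property some bag of the nice decomposition contains all of them; pick the leftmost such bag and call its index $j(c)$. To make $b$ injective, I process clauses in order of increasing $j(c)$: whenever $k$ clauses share the same value $j$, I insert $k$ duplicate copies of $B_j$ immediately after $B_j$ and assign each clause to a distinct copy. Duplicating a bag preserves niceness (consecutive identical bags have empty symmetric difference) and preserves the path decomposition property, and it adds at most $m$ bags overall, keeping the final count at $t = O(pm)$.

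The main thing to be careful about is the bag-count bookkeeping: a naive application of the niceness conversion could give $O(p \cdot (\text{input bags}))$, so the $O(n)$ preprocessing is essential to land on $O(pm)$. Beyond that, the construction is a single linear-time left-to-right sweep, and the clause-to-bag assignment can be computed on the fly by tracking, for each clause, the first bag in which all its variables are present.
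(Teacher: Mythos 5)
Your overall plan is sound and the second stage (leftmost clique-bag per clause, then duplicating bags to force injectivity) is fine, but the bag-count bookkeeping in the first stage -- the very point you flag as delicate -- contains an arithmetic error. You bound the nice decomposition by $O(pn)$ bags and then invoke $n\le (p+1)m$ to conclude $O(pn)=O(pm)$; substituting actually gives $O(pn)\le O(p(p+1)m)=O(p^2m)$, which is weaker than the lemma's claim. The bound is salvageable, but not by the per-transition count you use: you should instead count the inserted bags in an amortized way. By the interval property of path decompositions, each variable is introduced exactly once and forgotten exactly once over the entire decomposition, so $\sum_j |B_j\triangle B_{j+1}|\le 2n$ and the niceness conversion inserts only $O(n)$ bags in total, not $O(pn)$. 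Combined with $n\le(p+1)m$ (each clause is a clique, hence fits in a bag of size $p+1$), this yields $O(n)=O(pm)$ bags, and your stage two adds at most $m$ more, recovering the stated bound.

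It is worth noting that the paper avoids this issue by reversing the order of operations: it first fixes an injective clause-to-bag assignment (duplicating bags as needed), then \emph{deletes} every bag to which no clause is assigned -- this is safe because every primal edge comes from some clause and the interval property survives deletions -- leaving exactly $m$ bags, and only then makes the decomposition nice, paying at most $2p$ inserted bags per gap for $O(pm)$ total. That route gets the bound with a crude per-gap count and no amortization; your route needs the amortized count but has the mild advantage of not having to argue that pruning unassigned bags preserves a valid decomposition.
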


\begin{proof} 

Assume that the initial path decomposition we have contains $t'$ bags, numbered
from left to right $B_1,\ldots, B_{t'}$. For each clause $c$ of $\psi$ we
associate a distinct bag with index $b(c)$ that contains all variables of $c$
(note that we can easily make this association injective by repeating bags in
the path decomposition).  Such a bag always exists, because the variables of
$c$ form a clique in the primal graph, and it is a standard property of path
decompositions that every clique must appear as a subset of a bag.  It is not
hard to see that a bag $B_i$ such that there is no clause $c$ for which
$b(c)=i$ is redundant and can be removed from the decomposition. We can
therefore assume that $t'$ is equal to the number of clauses of $\psi$. We then
make the path decomposition nice by inserting between each $B_i$ and $B_{i+1}$
at most $2p$ bags, so that the decomposition remains valid and we have the
property that the symmetric difference of any two consecutive bags contains at
most $1$ variable.  We renumber the bags $B_1,\ldots, B_{t}$ and observe that
we now have at most $t'=O(pm)$ bags.  \end{proof}

The following lemma allows us to efficiently partition the variables of any CNF
formula in $k$ sets in a way that each bag is more-or-less equitably
partitioned.

\begin{lemma}\label{lem:pwcolor}

There is a linear-time algorithm that takes as input a graph $G=(V,E)$, a nice
path decomposition of $G$ of width $p$, and an integer $k$, and outputs a
partition of $V$ into $k$ sets $V_1,\ldots, V_k$ such that for all bags $B$ of
the decomposition and $i\in[k]$ we have $|V_i\cap
B|\le\lceil\frac{p+1}{k}\rceil$.

\end{lemma}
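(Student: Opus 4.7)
The plan is to use a greedy online procedure that processes the nice path decomposition from left to right and assigns each newly-introduced vertex to the color class that currently has the fewest members in the live bag.

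More concretely, let the nice path decomposition be $B_1,B_2,\ldots,B_t$. We walk through these bags in order, maintaining partial sets $V_1,\ldots,V_k$, all initially empty. When the transition from $B_j$ to $B_{j+1}$ introduces a new vertex $v$ (i.e.\ $B_{j+1}=B_j\cup\{v\}$), we pick an index $i^*\in[k]$ minimizing $|V_{i^*}\cap B_j|$ (ties broken arbitrarily) and set $V_{i^*}\leftarrow V_{i^*}\cup\{v\}$. Forget-transitions require no action, since a vertex's color is fixed the moment it is introduced and path decompositions never re-introduce forgotten vertices. This runs in linear time using simple bookkeeping of the counters $|V_i\cap B_j|$.

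The main task is to verify the balance bound. Fix any bag $B$ appearing in the decomposition and any color $i$, and let $m=|V_i\cap B|$. Among the $m$ vertices of $V_i\cap B$, let $v$ be the one introduced last, in the transition from some bag $B^-$ to $B^-\cup\{v\}$; in particular $|B^-|\le p$. By the interval property of path decompositions, all other $m-1$ vertices of $V_i\cap B$ were already in play at that moment and had not yet been forgotten (otherwise they could not re-appear in $B$), so they all lie in $B^-$, giving $|V_i\cap B^-|\ge m-1$. The greedy rule chose color $i$ for $v$ because $i$ minimized $|V_{i'}\cap B^-|$ over all $i'$, so $|V_{i'}\cap B^-|\ge m-1$ for every $i'\in[k]$. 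Summing, $p\ge |B^-|\ge k(m-1)$, whence $m\le \lfloor p/k\rfloor+1=\lceil(p+1)/k\rceil$, which is the desired bound.

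The only subtle point is the last arithmetic identity $\lfloor p/k\rfloor+1=\lceil (p+1)/k\rceil$, which one verifies by a short case analysis on whether $k$ divides $p+1$. Apart from that, the argument is routine: the greedy decision is purely local, and the interval property of path decompositions transfers local information into a global lower bound on every color's count in the predecessor bag of the last-introduced witness. I expect no significant obstacle; the only thing worth stressing in the write-up is why we are justified in claiming that the other $m-1$ vertices of $V_i\cap B$ all reside in $B^-$, as this is precisely where the fact that path decompositions satisfy the interval condition (rather than, say, only the vertex-coverage condition) is used.
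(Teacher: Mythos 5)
Your proposal is correct and matches the paper's proof: both use the same greedy rule (assign each newly introduced vertex to the color class with fewest members in the current bag), with only a cosmetic difference in how the bound is verified (you bound $|V_i\cap B|$ via the last-introduced witness and the interval property, while the paper argues by contradiction at the first moment a class would exceed $\lceil\frac{p+1}{k}\rceil$). The one detail to add is how the vertices of the initial bag get colored, since they are not introduced by any transition; the paper partitions the first bag equitably at the start, and your argument goes through unchanged if you instead pretend its vertices are introduced one at a time under the same greedy rule.
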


\begin{proof} 

We can achieve this via a simple greedy procedure: suppose that the bags of the
decomposition are numbered $B_0,B_1,\ldots,B_t$. Begin at $B_0$ and arbitrarily
partition its (at most $p+1$) vertices into the $k$ sets in an equitable way;
consider now a bag $B_j$ such that we have already partitioned all variables
which appear in a bag before $B_j$. Because we have a nice path decomposition,
we either have $B_j\subseteq B_{j-1}$, in which case there is nothing to do, or
$B_j=B_{j-1}\cup\{v\}$. We place $v$ into the set $V_i$ such that $|V_i\cap
B_{j-1}|$ is minimum.  It is not hard to see that $V_i$ cannot now contain more
than $\lceil\frac{p+1}{k}\rceil$ variables of $B_j$, because otherwise
$|B_{j-1}\cap V_i|\ge\lceil\frac{p+1}{k}\rceil$ for all $i\in [k]$, hence
$|B_{j-1}|\ge p+1$, which would imply that $|B_j|\ge p+2$, contradiction.
\end{proof}

\section{Robustness}\label{sec:robust}

The main topic of this paper is to investigate the implications of the \ppseth.
However, before we proceed it is natural to ask if we have formulated the
hypothesis correctly. In particular, there are several other natural ways in
which one may posit a hypothesis that seems ``morally'' the same as the
\ppseth. Our goal in this section is to explore several such variations and
show that the precise choices that we make in the definition of the \ppseth\
matter little. We consider this robustness as positive evidence that the
\ppseth\ captures a natural notion of computation and is worthy of further
study.

The variations one may consider are the following: instead of \tsat\ we could
formulate the \ppseth\ for $k$-\textsc{SAT} for $k$ sufficiently large, or even
for \textsc{SAT}; we could consider maximization variants where we seek to
maximize the number of satisfied clauses, or the number of True variables, or
both; we could consider restrictions on the number of variable occurrences; or
we could formulate any of the previous hypotheses using the pathwidth of the
incidence (rather than the primal) graph as a parameter.

It is worth noting that the type of modifications we discuss are not a priori
expected to be completely innocuous. Indeed, if one asks similar questions
regarding the \textsc{SETH}, one easily arrives either at formulations of the
hypothesis which are clearly false (e.g.~\textsc{3-SAT} can be solved
significantly faster than in $2^n$ time) or, even worse, where it is a wide
open problem whether the new version is equivalent to the original one -- we
state as examples that the standard formulation of the \textsc{SETH} is for
$k$-SAT for $k$ sufficiently large, but it it not known to be equivalent to the
formulation for \textsc{SAT}, and that the corresponding assumption for
\textsc{Max-SAT} is considered more plausible than the SETH itself (see
e.g.~\cite{AbboudBW15}).  Given this state of the art, we consider it quite
positive that the \ppseth\ does not present the same drawbacks. 

The first part of this section (\cref{sec:satrobust}) is devoted to proving
that the various formulations mentioned above are equivalent to the stated form
of the \ppseth, with the results summarized in \cref{thm:robust}. Then, in
\cref{sec:csprobust} we consider a different reformulation choice: given that
\textsc{SAT} can be seen as a constraint satisfaction problem over a binary
alphabet, we investigate the complexity of CSPs over a larger alphabet,
parameterized by the primal pathwidth. Here our motivation is more practical
than philosophical: in some of the reductions we perform later we will aim to
prove that the \ppseth\ is equivalent to solving a problem in time
$(c-\eps)^kn^{O(1)}$, for $c$ some constant and $k$ the parameter. This type of
reduction becomes quite tedious when $c$ is not a power of $2$, and indeed for
many natural problems parameterized by standard parameters (pathwidth, linear
clique-width, etc.) the base $c$ often takes such values. It therefore becomes
handy to have an equivalent formulation of the \ppseth\ for a version of the
problem where the natural running time is of the order $c^{\pw}$, for any
desired constant $c$. We achieve this by showing that solving CSPs of alphabet
size $c$ faster than $(c-\eps)^{\pw}$ is equivalent to the \ppseth, for all
$c\ge 3$.  This is similar to the $q$-\textsc{CSP}-$B$ problem, used in
\cite{Lampis20} with the same motivation.

We conclude this part in \cref{sec:cspmulti} where we extend our discussion of
CSPs to cover a seemingly artificial promise problem, with our motivation again
being strictly practical.  In this setting, we consider CSP instances with two
classes of variables, $V_1, V_2$, where we are also supplied with a path
decomposition of the primal graph in the input. The path decomposition is made
up mostly of variables of $V_1$ -- say each bag contains $O(\log n)$ variables
of $V_2$, so performing DP over them is not a problem. We want to decide the
following promise problem: in the Yes case, the \textsc{CSP} instance is
satisfiable as usual; while in the No case, it is not possible to satisfy it
even if we ``cheat'' by assigning to the variables of $V_1$ distinct values in
different bags, but with the constraint that values must increase as we move
along in the decomposition, and that variables of $V_2$ must be consistently
assigned.  The motivation for this strange version is that, as we explain in
detail, for most of the reductions of \cref{sec:fpt} (and more generally for
most SETH-based reductions for FPT problems), the state-of-the-art gadgets are
only able to guarantee monotonicity, rather than consistency, in the way that
CSP variables are represented. Hence, proving that this promise version of the
problem is still hard (that is, solving it faster than $(B-\eps)^{\pw}$
falsifies the \ppseth) will considerably simplify the reductions of
\cref{sec:fpt}, because in the converse direction we will not have to prove
that our gadgets force a consistent satisfying assignment, but only a monotone
one.  We provide such a proof in \cref{lem:weird} and \cref{cor:weird} in
\cref{sec:cspmulti}.

\subsection{Robustness for SAT}\label{sec:satrobust}

We consider several variations of \textsc{SAT}, including versions where we
seek an assignment that maximizes the number of clauses satisfied, or the
weight (number of True variables) of the satisfying assignment, or both. The
most general version of the problem we consider is the following:

\begin{definition} In the \textsc{Max-}$(t,w)$-$k$-\textsc{SAT} problem we are
given as input a $k$-CNF formula and two integers $t,w$ and are asked if the
formula admits an assignment that sets at least $w$ variables to True and
satisfies at least $t$ clauses.  \textsc{Max-}$k$-\textsc{SAT} is the
restriction of the problem when $w=0$, while \textsc{MaxW-}$k$-\textsc{SAT} is
the restriction of the problem where $t=m$ is the number of clauses. If we drop
$k$ from the name of a problem, this indicates that the input is a CNF formula
of unbounded arity. \end{definition}

The main theorem of this section is now the following:

\begin{theorem}\label{thm:robust}

All of the following statements are equivalent to the \ppseth\ (where in all
cases $\psi$ denotes the input formula).

\begin{enumerate}

\item For all $\eps>0$, \tsat\ restricted to instances where all variables
appear at most $4$ times cannot be solved in time
$O((2-\eps)^{\pw(\psi)}|\psi|^{O(1)})$.

\item For all $\eps>0$, \textsc{Max-2-SAT} restricted to instances where all
variables appear at most $16$ times cannot be solved in time
$O((2-\eps)^{\pw(\psi)}|\psi|^{O(1)})$.

\item For all $\eps>0$, \textsc{MaxW-3-SAT} restricted to instances where all
variables appear at most $4$ times cannot be solved in time
$O((2-\eps)^{\pw(\psi)}|\psi|^{O(1)})$.

\item For all $\eps>0$, \textsc{Max-}$(t,w)$-2-\textsc{SAT} restricted to
instances where all variables appear at most $16$ times cannot be solved in
time $((2-\eps)^{\pw(\psi)}|\psi|^{O(1)})$.

\end{enumerate} 

Furthermore, the same holds even if we modify any of the statements by making
any of the following changes: we replace $\pw(\psi)$ by $\pw^I(\psi)$; we drop
the restriction on the number of variable appearances; or we consider $k$-CNF
formulas, for any $k\ge3$, or CNF formulas of unbounded size.

\end{theorem}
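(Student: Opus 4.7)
The plan is to establish each claimed equivalence via reductions that preserve pathwidth (primal or incidence) up to an additive constant, so running times of the form $(2-\eps)^{\pw}|\psi|^{O(1)}$ transfer between problems with only a slight decrease in $\eps$. Since the base \ppseth\ concerns \tsat\ with primal pathwidth, it suffices to provide, for each target variant, pathwidth-preserving reductions in both directions; the ``any combination'' clause then follows by composing these.

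For the arity axis, the standard reduction from unbounded SAT to \tsat\ replaces a clause $(x_1\lor\cdots\lor x_k)$ by the chain $(x_1\lor x_2\lor y_2),(\lnot y_2\lor x_3\lor y_3),\dots,(\lnot y_{k-2}\lor x_{k-1}\lor x_k)$. In a primal path decomposition, the clique on $\{x_1,\ldots,x_k\}$ forces a bag $B_C$ containing all of these variables; we replace $B_C$ by a short sequence of bags that introduces and forgets the $y_i$ one at a time, never keeping more than two of them simultaneously alive, so the width grows by at most $2$. The opposite direction is trivial because \tsat\ is a special case of SAT. Combining this with the inequality $\pw^I(\psi)\le \pw(\psi)+1$ (recorded in the preliminaries) handles the primal vs.\ incidence axis: an $(2-\eps)^{\pw^I}$ algorithm implies an $O((2-\eps)^{\pw})$ algorithm by the inequality, while conversely a $(2-\eps)^{\pw}$ algorithm for \tsat\ yields an algorithm parameterized by incidence pathwidth after first applying the arity reduction above to push large-arity SAT instances (with small $\pw^I$) into \tsat\ instances of small primal pathwidth.

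For the bounded-occurrence variant, I use the classical copy gadget: for every variable $x$ appearing in $m$ clauses, introduce copies $x^1,\dots,x^m$, each occurring in one of the original clauses, and enforce $x^i\Leftrightarrow x^{i+1}$ via the 2-clauses $(\lnot x^i\lor x^{i+1}),(x^i\lor\lnot x^{i+1})$ arranged cyclically; ordering the copies in the order their clauses are processed in the path decomposition keeps consecutive copies in adjacent bags, so the width increase is $O(1)$. Each copy then appears in one original clause plus two equality clauses, giving occurrence count $\le 4$ (and $\le 16$ after further conversion to 2-CNF below). For the maximization axis, decision SAT is the special case $t=m$ of Max-SAT, so hardness transfers immediately. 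The converse uses the local Garey--Johnson--Stockmeyer gadget, which replaces each 3-clause by $O(1)$ 2-clauses over a constant number of fresh per-clause variables, constructed so that the maximum number of satisfied 2-clauses in the gadget corresponds exactly to whether the original 3-clause is satisfied; the fresh variables fit into the bag already containing the clause's original variables, so the width grows by $O(1)$. A weight target $w$ is enforced by attaching to each original variable a ``mirror'' variable forced to carry the negation via a pair of 2-clauses, ensuring a predictable True count, and padding by dummy variables to hit $w$.

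The main obstacle is the bookkeeping needed to ensure that composing the above reductions lands exactly at the stated occurrence bounds ($4$ for \tsat, $16$ for 2-SAT); the occurrence-bounding gadget must be applied \emph{after} the arity reduction and the Max-SAT gadget, which inflate variable degrees by constant factors, so one must verify that the resulting counts actually match the advertised constants. The other delicate point is the bag-stretching used in the arity reduction: during the local sequence of bags replacing $B_C$, the original $x_j$'s must remain in place because they typically appear in other clauses handled elsewhere in the decomposition, so only the $y_i$'s may be introduced or forgotten, and the construction above keeps at most two of them simultaneously alive. Once these two accounting points are verified, the remaining steps are routine pathwidth-aware adaptations of textbook constructions.
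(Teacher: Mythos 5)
Your overall architecture (reduce everything to/from \tsat\ via pathwidth-preserving reductions and compose) matches the paper's, but three of your concrete steps have genuine problems. First, the incidence-to-primal direction. You propose to handle an instance with small $\pw^I$ by ``applying the arity reduction,'' but your arity reduction is explicitly built on a bag $B_C$ of a \emph{primal} decomposition containing all variables of the clause --- and when only the incidence pathwidth is small, no such small bag exists (a single wide clause has incidence pathwidth $1$ but primal pathwidth equal to its arity). The paper's Lemma~\ref{lem:psi1} instead threads a chain of indicator variables $c_{j,s}$ \emph{along the incidence decomposition}, placing $c_{j,s-1}$ in the bags between where the edges $c_jx_{j_{s-1}}$ and $c_jx_{j_s}$ are covered; this is the construction you are missing. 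Second, the maximization-to-decision direction. To prove \emph{equivalence} (not just one-way hardness) you must reduce \textsc{Max-}$(t,w)$\textsc{-SAT} back to decision \tsat\ with only an additive-constant (or $O(\log n)$) increase in primal pathwidth, i.e.\ you must count satisfied clauses and true variables inside a decision instance. Your proposal only gives the easy containment ($t=m$) and the GJS gadget, which goes the \emph{other} way; the paper needs the weight-block trick of Lemma~\ref{lem:psi1} plus the $n\log n$ running-counter variables of Lemma~\ref{lem:psi2} (costing $O(\log n)$ extra width) precisely for this step, and nothing in your write-up replaces it.

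Third, your occurrence-bounding gadget is the cyclic chain of biconditionals $x^1\Leftrightarrow\cdots\Leftrightarrow x^m\Leftrightarrow x^1$, which is exactly the trap the paper warns about: covering the wrap-around edge forces (roughly) two copies of the cycle into every bag where $x$ used to live, doubling the pathwidth rather than adding a constant. Dropping the wrap-around fixes the width but gives each internal copy $4$ equality-clause occurrences plus $1$ original occurrence, i.e.\ $5$, overshooting the advertised bound of $4$; the paper's Lemma~\ref{lem:psi4} uses a different one-directional implication chain with auxiliary $y_{i,s}$ variables to hit $4$. You flag the constant-accounting as the ``main obstacle,'' but the real issue with this step is the pathwidth blow-up of the cyclic gadget, which is qualitative, not a matter of bookkeeping.
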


\begin{proof}

Because we allow any combination of the mentioned modifications, this statement
refers to numerous different variations of \textsc{SAT}. In order to organize
things, we start with the four base problems given in the statement: \tsat,
\textsc{Max-2-SAT}, \textsc{MaxW-3-SAT}, and
\textsc{Max-}$(t,w)$-2-\textsc{SAT}. We observe that any of the allowed
modifications can only make the considered problem harder: dropping the
restriction on the number of variable appearances or the arity of clauses
trivially generalizes the problem, while the inequality $\pw^I(\phi)\le
\pw(\phi)+1$, which holds for any $\phi$, ensures that parameterizing by
incidence pathwidth is always at least as hard as by primal pathwidth.

Furthermore, we observe that if we drop the arity constraint,
\textsc{Max-}$(t,w)$-\textsc{SAT} is a problem that generalizes the other three
problems. Furthermore, \textsc{MaxW-3-SAT} generalizes \tsat\ and
\textsc{Max-}$(t,w)$-2-\textsc{SAT} generalizes \textsc{Max-2-SAT}. We have
therefore arrived at a single problem which has maximal complexity among all
problems considered (\textsc{Max-}$(t,w)$-\textsc{SAT} without restrictions on
variable occurrences, parameterized by $\pw^I$) and two problems which have
minimal complexity (\tsat\ and \textsc{Max-2-SAT} with the restrictions stated
in the theorem). We will reduce the hardest case of
\textsc{Max-}$(t,w)$-\textsc{SAT} to the easiest case of \tsat, and then 
\tsat\ to the easiest case of \textsc{Max-2-SAT}, using fine-grained reductions
that preserve the type of algorithm we are interested in, and thus we will
establish that all problems in our group have equivalent complexity and are
therefore equivalent to the \ppseth.

To ease presentation we give the claimed reductions as a collection of
self-contained lemmas in \cref{sec:lemrobust}. Composing \cref{lem:psi1},
\cref{lem:psi2}, \cref{lem:psi3}, and \cref{lem:psi4} we obtain a
polynomial-time algorithm that takes as input an instance $\phi$ of
\textsc{Max-}$(t,w)$-\textsc{SAT} and a path decomposition of its incidence
graph of width $\pw^I(\phi)$ and produces an equivalent instance $\psi$ of
\tsat\ where each variable appears at most 4 times and a path decomposition of
its primal graph of width $\pw^I(\phi)+O(\log |\phi|)$. As a result, solving
the resulting \tsat\ instance in $((2-\eps)^{\pw(\psi)}|\psi|^{O(1)})$ would
produce an algorithm running in $((2-\eps)^{\pw^I(\phi)}|\phi|^{O(1)})$ for the
original problem. The reduction from \tsat\ to \textsc{Max-2-SAT} given in
\cref{lem:garey} is just the classical reduction from \cite{GareyJS76}, where
we also observe that primal pathwidth is preserved.  \end{proof}

\cref{thm:robust} shows that the \ppseth\ is quite robust. Let us conclude this
discussion by adding two relevant remarks. First, regarding the minimum number
of variable occurrences needed to obtain hardness, we have not made an effort
to optimize the constant $16$ for \textsc{Max-2-SAT}; the constant we obtain is
a consequence of using the classical reduction of Garey, Johnson, and
Stockmeyer \cite{GareyJS76}. We have, however, made an effort to optimize this
constant for \tsat. There appears to be a natural reason why it is not easy to
get this constant down to $3$, which would be the expected barrier: the
standard polynomial time reduction which achieves this replaces every variable
$x$ with a cycle of implications $x_1\to x_2\to\ldots\to x_k\to x_1$.
Performing this reduction then risks doubling the pathwidth, instead of
increasing it by an additive constant, because in every bag where $x$ used to
appear we are likely to be forced to place two of the new variables, in order
to separate the cycle.  We consider it an intriguing open problem whether
\tsat\ can be solved more efficiently when each variable appears at most 3
times.  Second, we have only established hardness for \textsc{MaxW-SAT} for
arity at least $3$.  This is because when the arity is $2$ the problem is
essentially \textsc{Max Independent Set}, which is handled in a different
section, so the problem for arity $2$ is in fact still equivalent to the
\ppseth.

\subsubsection{Lemmas Needed for \cref{thm:robust}}\label{sec:lemrobust}

We present here a sequence of reductions needed to prove \cref{thm:robust}.
These work as follows:

\begin{enumerate}

\item \cref{lem:psi1} takes as input an instance of the problem where we
parameterize by $\pw^I$ and seek to optimize both the weight of the selected
assignment and the number of satisfied clauses. It produces an equivalent
instance where we parameterize by the pathwidth of the primal graph $\pw$ and
we only need to optimize the weight of the assignment (while satisfying all
clauses).  This is achieved by replacing clauses in the decomposition by new
variables that indicate whether a clause was satisfied by a literal.  The
pathwidth increases by an additive constant.

\item \cref{lem:psi2} takes the previous instance and produces an equivalent
instance where we are simply looking for a satisfying assignment. This is
achieved by adding $n\log n$ new variables which act as counters. Pathwidth
increases by $O(\log n)$.

\item \cref{lem:psi3} is just the standard reduction from \textsc{SAT} to
\tsat, performed in a way that increases pathwidth by at most an additive
constant.

\item \cref{lem:psi4} is a variation of the standard reduction that reduces the
number of occurrences of each variable. As explained after \cref{thm:robust} we
cannot use the standard trick, as we then risk doubling the pathwidth. We use a
slightly more complicated construction where each variable appears at most $4$
times and pathwidth increases by at most an additive constant.

\item \cref{lem:garey} is the standard reduction from \tsat\ to
\textsc{Max-2-SAT} from \cite{GareyJS76}, with the observation that pathwidth
increases by at most an additive constant.

\end{enumerate}

\begin{lemma}\label{lem:psi1}

There exists a polynomial-time algorithm which takes as input a CNF formula
$\phi$ and a path decomposition of its incidence graph of width $p$, as well as
two integers $t,w$, and produces a CNF formula $\psi$, a path decomposition of
the primal graph of $\psi$, and two integers $w_1,w_2$ such that we have the
following:

\begin{enumerate}

\item There exists an assignment setting at least $w$ variables of $\phi$ to
True and satisfying at least $t$ clauses if and only if $\psi$ admits a
satisfying assignment whose number of True variables is in the interval
$[w_1,w_2]$.

\item The path decomposition of the primal graph of $\psi$ has width $p+O(1)$.

\end{enumerate}

\end{lemma}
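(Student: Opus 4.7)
The plan is to replace each clause of $\phi$ by a short chain of small-arity constraints built from fresh auxiliary variables, chosen so that the resulting primal graph no longer contains large cliques and so that the weight of any satisfying assignment of $\psi$ encodes both thresholds $w$ and $t$ simultaneously. For a clause $c = \ell_1 \vee \cdots \vee \ell_{k_c}$, with literals ordered by their first appearance in the given incidence path decomposition, I would introduce an indicator $s_c$ (meant to claim ``clause $c$ is satisfied'') together with chain helpers $h_{c,0},\ldots,h_{c,k_c}$ enforcing $h_{c,0} \leftrightarrow s_c$ and $h_{c,j} \leftrightarrow h_{c,j-1} \wedge \neg \ell_j$, plus the unit clause $\neg h_{c,k_c}$. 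Each biconditional is implemented by three CNF clauses whose scope is only $\{h_{c,j-1}, h_{c,j}, \ell_j\}$, so the constraint network is path-like along the literals of $c$; consequently $s_c$ can be set True only when some $\ell_j$ is True, while it may freely be set False on any satisfied clause.

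To convert the conjunction ``$W_\phi \geq w$ and $S_\phi \geq t$'' into a single interval on the total True count, I would carefully normalise the per-clause contribution. First, replace each $s_c$ by $n+1$ identical copies linked by a chain of equivalences, so a True $s_c$ contributes exactly $n+1$ to the weight. Second, for each chain helper $h_{c,j}$ introduce a complementary helper $\bar h_{c,j}$ forced by two binary clauses to be its negation, so each such pair contributes exactly $1$ regardless of the assignment. Letting $C = \sum_c (k_c+1)$, the weight of any satisfying $\psi$-assignment becomes $T = W_\phi + (n+1)k + C$, where $k$ is the number of chosen $s_c$'s. Setting $w_1 = w + t(n+1) + C$ and $w_2 = n + t(n+1) + C$, the bound $W_\phi \le n < n+1$ forces $k = t$ and $W_\phi \geq w$, which is realisable precisely when $\phi$ has an assignment with $W_\phi \geq w$ and $S_\phi \geq t$ (one simply activates any $t$ of the satisfied clauses).

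For the primal path decomposition I would walk through the given nice incidence decomposition and maintain the invariant that each new bag contains the original variables of the corresponding incidence bag plus, for every clause $c$ it contains, the single ``current'' helper $h_{c,j(c,i)}$, where $j(c,i)$ is the number of literals of $c$ already processed up to bag $i$. Because each clause is replaced by exactly one active helper, the steady-state bag size equals the incidence bag size. Local transitions are handled with $O(1)$ additional slots: when a literal of $c$ is introduced I temporarily hold $h_{c,j-1}$, $h_{c,j}$, and $\bar h_{c,j}$ long enough to install the biconditional and the complementary pair, then forget $h_{c,j-1}$ and $\bar h_{c,j}$; when $c$ first enters a bag I insert auxiliary bags holding two consecutive copies of $s_c$ at a time to build the equivalence chain of $n+1$ copies and then the biconditional $h_{c,0} \leftrightarrow s_c$ before forgetting $s_c$; when $c$ leaves, I add the unit $\neg h_{c,k_c}$ and forget the last helper. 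This yields a valid primal path decomposition of width $p + O(1)$.

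The main technical obstacle is the bookkeeping: we must verify that every constraint of $\psi$ has all its variables co-located in some bag, that the length-$(n+1)$ equivalence chain of copies of $s_c$ is threaded in before any literal of $c$ is seen, that helpers and complements are forgotten immediately after use so that the $O(1)$ overhead does not accumulate when several clauses share a newly introduced variable, and that the weight arithmetic remains correct at the edge cases (e.g.\ $w=0$, $t=0$, or clauses of length one). Once this bookkeeping is carried out, both directions of the equivalence follow by directly extracting or assembling a witness along the lines already sketched.
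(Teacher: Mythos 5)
Your proposal is correct and follows essentially the same route as the paper: each clause vertex of the incidence decomposition is replaced by a chain of helper variables, one ``active'' per bag, recording whether the clause has been witnessed by a prefix of its literals, and the two thresholds $t$ and $w$ are packed into a single weight interval $[w_1,w_2]$ by amplifying the clause indicators so that their contribution dominates that of the original variables. The one genuine (and pleasant) difference is your weight accounting: pairing each helper $h_{c,j}$ with a forced complement $\bar h_{c,j}$ makes the helpers contribute a fixed constant $C$ exactly, so the interval argument is exact with amplification factor only $n+1$, whereas the paper uses one-directional implications with $(mn)^5$-fold amplification and argues via slack inequalities; both are sound.
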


\begin{proof}

Suppose that $\phi$ has $n$ variables and $m$ clauses. We will construct $\psi$
by taking the variables of $\phi$ and adding some new variables; we will then
construct some clauses to represent the clauses of $\phi$.  Along the way, we
will edit the given path decomposition, removing the clauses of $\phi$ and
inserting into it the new variables, in a way that preserves the width and
covers all new clauses.

We begin by editing the given path decomposition of the incidence graph of
$\phi$ so that it becomes nice: if the symmetric difference of two consecutive
bags $B_j, B_{j+1}$ contains more than one element, we insert between them a
sequence of bags that removes one by one the elements of $B_j\setminus B_{j+1}$
and then adds one by one the elements of $B_{j+1}\setminus B_j$. Once we do
this, we number the $n$ variables of $\phi$ as $x_1,\ldots, x_n$, in such a way
that for all $i_1<i_2$ the first bag that contains $x_{i_1}$ is before the
first bag that contains $x_{i_2}$.  Observe that this is always possible, as
each bag introduces at most one new vertex.  Similarly, we renumber the $m$
clauses $c_1,\ldots, c_m$, such that clauses with lower indices appear before
clauses with higher indices in the decomposition. We now construct an injective
function $b$ from the edges of the incidence graph of $\phi$ to the indices of
the bags of our decomposition which satisfies the following: for all $j\in[m],
i_1,i_2\in[n]$ with $i_1<i_2$ such that $c_j$ contains $x_{i_1}, x_{i_2}$, we
have $b(c_jx_{i_1})<b(c_jx_{i_2})$, that is, edges incident on variables with
lower indices are mapped earlier in the decomposition. It is not hard to
achieve this, because if a bag contains $\{c_j,x_{i_1},x_{i_2}\}$ then we can
make two copies of this bag and map $c_jx_{i_1}$ to the first and $c_jx_{i_2}$
to the second, while otherwise every bag that contains $\{c_j,x_{i_1}\}$ (and
such a bag must exist) is before every bag containing $\{c_j,x_{i_2}\}$ by the
numbering of the variables. In the remainder, suppose that the bags of the path
decomposition are numbered $B_1,B_2,\ldots$.

We now do the following for each clause $c_j$, for $j\in[m]$. Suppose that
$c_j$ contains $r$ variables, $x_{j_1}, x_{j_2}, \ldots, x_{j_r}$. We introduce
$(mn)^5$ new variables $c_{j,s}$ for $s\in\{0,\ldots,(nm)^5-1\}$.  Intuitively,
the meaning of $c_{j,s}$ for $s\le r$ is that $c_j$ has been satisfied by one
of the $s$ variables $x_{j_1}, x_{j_2},\ldots, x_{j_s}$.  We add for $s\in[r]$
the clauses $(\neg c_{j,s} \lor (\neg)x_{j,s} \lor c_{j,s-1})$, where we use
the literal $x_{j,s}$ or $\neg x_{j,s}$ that appears in $c_j$; the clause
$(\neg c_{j,0})$; and for $s\in\{r+1,\ldots,2n^5-1\}$ we add the clauses $(\neg
c_{j,s} \lor c_{j,s-1})$ and $(\neg c_{j,s-1} \lor c_{j,s})$.  

Observe that the clauses we have added have the following properties: fix an
assignment to $x_1,\ldots,x_n$ and consider a clause $c_j$. If the assignment
does not satisfy $c_j$ in $\phi$, then all $(mn)^5$ variables we have added
must be set to False in any satisfying assignment of $\psi$, because all the
literals involving variables $x_{j_1},\ldots,x_{j_r}$ are falsified.  If the
assignment does satisfy $c_j$, then there is always a way to extend it to an
assignment that sets at least $(mn)^5-r>(mn)^5-n$ of the new variables to True.
Furthermore, if a $c_{j,s}$ with $s>n$ is set to True, then indeed at least
$(mn)^5-n$ of the new variables must be set to True.

To complete the construction, for each variable $x_i$, we introduce $(mn)^2$
new variables $x_{i,s},\ldots, x_{i,(mn)^2}$ and add the clauses $(\neg x_{i,s}
\lor x_{i,s-1})$ for all $s\in\{2,\ldots,n^2\}$, as well as the clause $(\neg
x_{i,1} \lor x_i)$.  These variables have the property that if $x_i$ is False,
all of them must be set to False.

Finally, we set $w_1 = t((mn)^5-n)+w((mn)^2+1)$ and $w_2 =
t(mn)^5+n((mn)^2+1)$. 

We have now described $\psi$. To describe how the path decomposition of its
primal graph is constructed, begin with the path decomposition of the incidence
graph of $\phi$ and remove all the clauses from the bags. We now need to
include the new variables and cover all the newly constructed clauses. To do
this, first, for each $j\in[m]$ consider clause $c_j$ of $\phi$  which contains
variables $x_{j_1},\ldots, x_{j_r}$. For each $s\in\{2,\ldots,r\}$ we place
variable $c_{j,s-1}$ in all bags that contained $c_j$ in the original
decomposition and have indices in the interval $[b(c_jx_{j_{s-1}}),
b(c_jx_{j_s})]$. Furthermore, we place $c_{j,0}$ in bag $B_{b(c_jx_{j_1})}$ and
$c_{j,r}$ in bag $B_{b(c_jx_{j_r})}$.  Observe that this covers the clauses of
size $3$ we have constructed for $c_j$. To cover the remaining clauses, take
the bag $B_{b(c_jx_{j_r})}$ (which contains $c_{j,r}$ and $c_{j,r-1}$), and
insert after it $(mn)^5-r$ copies, removing $c_{j,r}$ and $c_{j,r-1}$ from
each. For $s\in\{r,\ldots,(mn)^5-2\}$ we add in each bag the variables
$c_{j,s}, c_{j,s+1}$. This covers all clauses added for clauses of $\phi$ and
we observe that in all modified bags of the original decomposition we have
removed one element ($c_j$) and added one or two elements. Furthermore, the
bags where the size has increased are bags where $b$ maps an edge incident on
$c_j$, or newly constructed bags.  Since $b$ is an injective function, this
ensures that performing this process over all $j\in [m]$ will never increase
the width of the decomposition by more than $1$.

To cover the variables and clauses added for variables of $\phi$ in the path
decomposition, for each $i\in[n]$ we locate a bag that contained $x_i$ in the
decomposition of the primal graph of $\phi$ and insert immediately after it a
sequence of $(mn)^2$ copies, adding to each a consecutive pair $x_{i,s},
x_{i,s+1}$. Again, this does not increase the width of any old bag, while new
bags have at most two extra elements each. We conclude that the width of the
path decomposition of the primal graph of $\psi$ is $p+O(1)$.

To argue for correctness, for the forward direction, if we have a good
assignment for $\phi$, start with the same assignment for $\psi$. Select $t$ of
the satisfied clauses of $\phi$ and set the corresponding $c_{j,s}$ variables
so that at least $(mn)^5-n$ variables are set to True for each clause; for the
remaining clauses set all $c_j$ variables to False. Finally, for each $x_{i,s}$
give it the same value as $x_i$. It is not hard to see that this assignment is
satisfying for $\psi$ and has weight in $[w_1,w_2]$.  

For the converse direction, we claim that if there is a satisfying assignment
of $\psi$ whose number of True variables is in $[w_1,w_2]$, there are exactly
$t$ clauses for which some $c_{j,s}$ variables with $s>n$ have been set to
True. This is because $(t+1)((mn)^5-n)>w_2$ and $(t-1)(mn)^5<w_1$, for
sufficiently large $n,m$. We now observe that if $\psi$ has a satisfying
assignment with the claimed weight, restricting the assignment to the variables
of $\phi$ must set at least $w$ variables to True, because
$t(mn)^5+(w-1)((mn)^2+1) < w_1$.  \end{proof}

\begin{lemma}\label{lem:psi2}

There exists a polynomial-time algorithm that takes as input a CNF formula
$\phi$ and a path decomposition of its primal graph of width $p$, as well as
two integers $w_1,w_2$ and outputs a CNF formula $\psi$ and a path
decomposition of its primal graph of width $p+O(\log(|\phi|)$, such that $\psi$
is satisfiable if and only if $\phi$ has a satisfying assignment whose number
of True variables is in the interval $[w_1,w_2]$.

\end{lemma}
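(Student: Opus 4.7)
The plan is to convert the weight constraint into standard CNF clauses by attaching an $O(\log|\phi|)$-bit counter to the nice path decomposition of $\phi$ which tracks the number of True variables seen so far. We first preprocess the given decomposition into a nice one without increasing the width, so that consecutive bags differ by at most one variable (an introduce or forget step). Let $\ell = \lceil \log_2(n+1) \rceil$ where $n$ is the number of variables of $\phi$. We will maintain an $\ell$-bit counter $C_j = (c_j^0, \ldots, c_j^{\ell-1})$ that lives in (the extension of) bag $B_j$ of the modified decomposition and encodes the number of variables of $\phi$ set to True among those already forgotten from the decomposition.

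For each introduce or ``equal'' step, the new counter is taken to be literally the same tuple of variables as the old counter, so the counter ``flows'' along the decomposition with no extra variables and no new clauses. For each forget step at bag $B_j$, where variable $x_i$ is forgotten, we allocate a fresh counter $C'_j$ and insert clauses expressing the binary addition $C'_j = C_j + x_i$ (with $x_i$ interpreted as $0/1$). Such an adder is standard in CNF: it uses $O(\ell)$ auxiliary carry variables $R_j$ and $O(\ell)$ constant-width clauses. We initialize the leftmost counter by unit clauses setting all bits to $0$, and finally append a gadget on the last counter $C_{\mathrm{final}}$ consisting of $\mathrm{poly}(\ell)$ clauses on its $O(\ell)$ bits that enforces $w_1 \le C_{\mathrm{final}} \le w_2$ (standard magnitude comparison in CNF).

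To obtain the new path decomposition of the primal graph of $\psi$, we extend each bag $B_j$ of the nice decomposition into $B_j \cup C_j$, giving width at most $p + 1 + \ell$. At each forget step we insert one additional bag containing $B_j \cup C_j \cup C'_j \cup R_j$, of size at most $p + 1 + 3\ell$, which is wide enough to cover every clause of the adder gadget (those clauses only mention $x_i \in B_j$, the old counter bits, the new counter bits, and the local carries). The range-check clauses only involve $C_{\mathrm{final}}$, so a bag of size $\ell$ suffices for them. Consequently every bag has size $p + 1 + O(\log|\phi|)$, so the new pathwidth is $p + O(\log|\phi|)$ as required.

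Correctness is immediate once the gadgets are set up: for any assignment of the variables of $\phi$ there is exactly one consistent extension to the counter and carry variables, and that extension satisfies all new clauses iff the counter propagates correctly and its final value lies in $[w_1,w_2]$, which holds iff the assignment of $\phi$ has weight in $[w_1,w_2]$. The main obstacle is the bookkeeping around the decomposition: we must ensure that the counter variables appear contiguously (the connectivity condition of a path decomposition), that every new clause is covered by some bag, and that no bag grows by more than the claimed $O(\log|\phi|)$ term. Identifying counter vectors across consecutive non-forget transitions, and only allocating fresh counter and carry bits at forget steps, is what keeps the construction tight.
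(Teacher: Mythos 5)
Your proposal is correct and follows essentially the same approach as the paper: thread an $O(\log|\phi|)$-bit running counter of True variables through the nice path decomposition, increment it once per variable of $\phi$, and constrain the final value to lie in $[w_1,w_2]$, so that each bag grows by only $O(\log|\phi|)$ counter (and carry) bits. The only differences are cosmetic — the paper triggers the increment at the introduction of each variable and encodes it by enumerating assignment pairs to consecutive counters, whereas you trigger it at the forget step and use a ripple-carry adder with explicit carry variables — and both variants yield the same width bound and a polynomial-size formula.
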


\begin{proof}

Let $n$ be the number of variables of $\phi$.  Assume without loss of
generality that $n$ is a power of $2$ as this can be achieved by adding dummy
variables and clauses that require them to be False. Suppose that the bags of
the given decomposition are numbered $B_1,B_2,\ldots$, the decomposition is
nice, and that the variables are numbered $x_1,\ldots, x_n$ in a way that
variables with higher indices are introduced in later bags (this is the same as
in the proof of \cref{lem:psi1} and can also be achieved using \cref{lem:nice}
and renumbering the variables in the order in which they were introduced).  We
will construct $\psi$ by adding some variables and clauses to $\phi$.

We now introduce $n\log n$ variables $w_{i,s}$ for $i\in[n], s\in[\log n]$.
These should be thought of as counters: for a fixed $i$ the variables $w_{i,s}$
tell us how many among the variables $x_1,\ldots, x_i$ are True in the current
assignment.  For each $i\in\{2,\ldots,n\}$ we add clauses which for each two
assignments $\sigma,\sigma'$ to variables $w_{i-1,s}$ and $w_{i,s}$
respectively, ensure that $\sigma,\sigma'$ cannot be selected unless $\sigma$
encodes a binary number $s_1$, $\sigma'$ encodes a binary number $s_2$, and
$s_2=s_1+x_i$, where we interpret the boolean value of $x_i$ as 1 if True and 0
if False. These are at most $O(n^2)$ clauses per variable, so $O(n^3)$ clauses
overall. We also add clauses ensuring that $w_{1,s}$ encodes an assignment that
has binary value $0$ or $1$, representing the value of $x_1$; and clauses that
ensure that $w_{n,s}$ encodes a binary number in $[w_1,w_2]$.  

This completes the construction of $\psi$ and it is not hard to see that $\psi$
is satisfiable if and only if $\phi$ had a satisfying assignment with weight in
$[w_1,w_2]$.  

What remains is to show that we can edit the decomposition so that we have
width $p + O(\log n)$ while covering all added clauses.  Let $b$ be an
injective function which for each $x_i$ returns the index of the bag in which
$x_i$ is introduced.  For $i\in\{2,\ldots,n\}$ We add the variables $w_{i,s}$
in all bags with indices in the interval $[b(x_{i-1}),b(x_i)]$. We also add all
variables $w_{i,1}$ in $B_{b(x_1)}$.  Observe that this adds at most $2\log n$
variables to each bag and covers all added clauses.  \end{proof}

\begin{lemma}\label{lem:psi3}

There exists a polynomial time algorithm which takes as input a CNF formula
$\phi$ and a path decomposition of its primal graph of width $p$ and outputs a
3-CNF formula $\psi$ and a path decomposition of its primal graph of width
$p+O(1)$, such that $\phi$ is satisfiable if and only if $\psi$ is.

\end{lemma}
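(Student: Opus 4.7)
The plan is to apply the textbook chain-style rewriting that replaces each long clause by a sequence of $3$-clauses linked by fresh auxiliary variables, while inserting new bags into the decomposition so that the width increases only by an additive constant.

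First I would invoke \cref{lem:nice} to obtain a nice path decomposition $B_1,\ldots,B_t$ of the primal graph of $\phi$ together with an injective function $b$ that maps each clause $c$ of $\phi$ to a bag $B_{b(c)}$ containing all variables of $c$. For each clause $c=(l_1\vee\ldots\vee l_k)$ of $\phi$ with $k>3$, I would introduce $k-3$ fresh auxiliary variables $y^c_1,\ldots,y^c_{k-3}$, used nowhere else, and replace $c$ in the formula by the chain $(l_1\vee l_2\vee y^c_1)$, $(\neg y^c_{j-1}\vee l_{j+1}\vee y^c_j)$ for $j=2,\ldots,k-3$, and $(\neg y^c_{k-3}\vee l_{k-1}\vee l_k)$. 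A standard case analysis shows that, for any fixed truth assignment to $l_1,\ldots,l_k$, the chain is satisfiable by some assignment to the auxiliaries if and only if at least one $l_i$ is true, so the resulting $3$-CNF formula $\psi$ is equisatisfiable with $\phi$. Clauses of arity at most $3$ are left unchanged.

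For the decomposition I would process each rewritten clause $c$ independently. Immediately after the bag $B_{b(c)}$, I insert a sequence of $k-2$ new bags $N^c_1,\ldots,N^c_{k-2}$, where $N^c_i=B_{b(c)}\cup\{y^c_{i-1},y^c_i\}$, with the convention that $y^c_0$ and $y^c_{k-2}$ do not exist. Since $B_{b(c)}$ already contains every $l_j$, the bag $N^c_i$ contains all three literals of the $i$-th new clause, so every new clause is covered by some bag. Each auxiliary variable $y^c_j$ appears in exactly the two consecutive bags $N^c_j, N^c_{j+1}$, so its occurrence set is an interval, and every original variable of $\phi$ keeps its original interval in the decomposition, merely extended across the consecutive inserted bags where it is explicitly retained. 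Hence the result is a valid path decomposition of the primal graph of $\psi$, and since each inserted bag has size at most $|B_{b(c)}|+2$ while original bags are unchanged, the width is at most $p+2=p+O(1)$.

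The main pitfall to avoid is letting the width grow with $k$: a naive encoding that stores all auxiliary variables of a single long clause in one bag would add $\Omega(k)$ to the width, which is unacceptable. The sliding-window placement described above keeps only two auxiliary variables of any given clause active at any time, and because $b$ is injective the blocks of inserted bags corresponding to distinct clauses occupy disjoint ranges of the decomposition, so their $O(1)$ overheads cannot accumulate inside any single bag. Consequently the pathwidth grows by at most an additive constant, as required.
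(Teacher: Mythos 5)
Your proposal is correct and follows essentially the same route as the paper: the standard chain rewriting of long clauses, with the auxiliary variables placed in a sliding window of two across fresh copies of the bag $B_{b(c)}$ inserted right after it, using the injectivity of $b$ from \cref{lem:nice} to keep the width increase at an additive constant. The only (immaterial) difference is that you use $k-3$ auxiliaries so every new clause has arity exactly $3$, while the paper uses $k-2$ and lets the final clause have arity $2$.
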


\begin{proof}

As in the proof of \cref{lem:psi1} and \cref{lem:psi2} we can assume that the
given path decomposition is nice, its bags are numbered $B_1, B_2,\ldots$ and
that we number the variables of $\phi$ as $x_1,x_2,\ldots, x_n$ so that
variables with higher indices are introduced in later bags (again, this can be
achieved using \cref{lem:nice} and renumbering variables in the order in which
they are introduced).  Furthermore, suppose we have an injective function $b$
which takes as input a clause $c_j$ and returns the index of a bag that
contains all the literals of $c_j$ (again, guaranteed by \cref{lem:nice}).
Suppose that clauses are numbered $c_1,\ldots, c_m$ with $b$ mapping clauses
with higher indices to later bags.

We will apply the standard reduction from \textsc{SAT} to \tsat, taking care
not to increase the width of the path decomposition. For each $j\in[m]$, if the
arity of $c_j$ is $k>3$ we introduce $k-2$ new variables $z_{j,1},\ldots,
z_{j,k-2}$. Suppose that $c_j=(\ell_1 \lor \ell_2\lor \ldots \lor \ell_k)$,
where the $\ell_j$ are literals. We replace $c_j$ with the clauses $(\ell_1\lor
\ell_2\lor z_{j,1}) \land (\neg z_{j,1}\lor \ell_3\lor z_{j,2}) \land \ldots
\land (\neg z_{j,k-2} \lor  \ell_k)$. It is standard that this preserves
satisfiability.

What remains is to argue that we have not increased the width of the
decomposition significantly. For each $j\in[m]$ for which we performed the
above, if $c_j$ has arity $k>3$ we insert in the decomposition immediately
after $B_{b(c_j)}$ another $k-3$ bags, each containing the vertices of
$B_{b(c_j)}$. We place into the first such bag $z_{i,1}, z_{i,2}$, the
following bag $z_{i,2}, z_{i,3}$, and so on, covering all new clauses. Because
$b$ is injective, this will increase the size of the largest bag by at most
$2$.  \end{proof}

\begin{lemma}\label{lem:psi4}

There exists a polynomial time algorithm which takes as input a 3-CNF formula
$\phi$ and a path decomposition of its primal graph of width $p$ and outputs a
3-CNF formula $\psi$ and a path decomposition of its primal graph of width
$p+O(1)$, such that each variable of $\psi$ appears at most $4$ times and
$\phi$ is satisfiable if and only if $\psi$ is.

\end{lemma}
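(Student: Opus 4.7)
My plan is to replace every variable $x$ of $\phi$ by one fresh copy per occurrence and enforce equality of the copies, embedding the resulting gadgetry into the path decomposition so that the width grows only by an additive constant. I first apply \cref{lem:nice} to obtain a nice decomposition with an injective bag assignment $b$; for each variable $x$ I number its occurrences along the decomposition at positions $p_1<p_2<\dots<p_k$, introduce copies $x_1,\dots,x_k$, and let $x_s$ replace the $s$-th occurrence (preserving polarity). In each original bag I replace $x$ by the copy $x_s$ ``active'' for that bag's position (the unique $s$ with the bag lying in the sub-interval delimited by $p_{s-1}$ and $p_s$), so original bags retain their width.

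To enforce $x_1=\dots=x_k$, I add the forward chain $(\neg x_s\vee x_{s+1})$ for $s\in[k-1]$, each covered by a new handoff bag inserted between $B_{p_s}$ and $B_{p_s+1}$ that repeats the content of $B_{p_s}$ and adds $x_{s+1}$, widening only that inserted bag by one. A forward chain alone is insufficient: the UNSAT formula $(\neg x\vee y)\wedge(x\vee y)\wedge(\neg y)$ becomes satisfiable via the threshold assignment $x_1=F, x_2=T, y=F$. So I must also enforce the back-implication $x_k\to x_1$. The naive clause $(\neg x_k\vee x_1)$ would put $x_1$ in the bag of $x_k$, forcing $x_1$ to persist across the whole span of $x$; summed over all variables this yields the multiplicative blowup the author warns against. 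I therefore use the one extra occurrence the 4-bound permits to route the back-implication through auxiliaries: introduce $a_1,\dots,a_{k-1}$ together with clauses $(\neg x_k\vee a_{k-1})$, $(\neg a_s\vee a_{s-1})$ for $s=k-1,\dots,2$, and $(\neg a_1\vee x_1)$. Each $a_s$ then appears in only two clauses, and $x_1,x_k$ each gain a single extra occurrence (reaching $4$) while interior copies remain at $3$; satisfiability of the extended chain equivalent to $x_1=\dots=x_k$ is immediate.

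The main obstacle is embedding this back-chain at additive cost, since a naive implementation keeps each $a_s$ present in many bags of $x$'s span and, summed over all $n$ variables, would add $\Theta(p)$ extra vertices to every original bag. My plan is to interleave the back-chain into the forward-handoff block for $x$: for each $s\in[k-1]$ I insert between $B_{p_s}$ and $B_{p_s+1}$ a constant-size group of new bags that carries both the forward handoff $x_s\to x_{s+1}$ and one link of the back-chain, so that each auxiliary is introduced, used, and forgotten within $O(1)$ consecutive inserted bags and never intrudes on an original bag. Checking that this yields a valid path decomposition of width $p+O(1)$ is the technical heart of the argument. Equisatisfiability is then routine: the combined forward and back chains force $x_1=\dots=x_k$, so any satisfying assignment of $\psi$ projects to one of $\phi$; conversely any satisfying assignment of $\phi$ lifts by setting each $x_s$ to the value of $x$ and propagating the $a_s$ accordingly. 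The occurrence bound of $4$ is then verified locally per copy.
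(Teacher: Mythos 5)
Your high-level plan (one fresh copy per occurrence, an equality gadget, and a careful embedding into the decomposition) is the same as the paper's, and you correctly identify the crux: a forward implication chain alone is unsound, and the naive back-clause $(\neg x_k\vee x_1)$ would force $x_1$ to persist across the whole span of $x$. The gap is in your claimed remedy. Your gadget consists of \emph{two vertex-disjoint paths} in the primal graph: the forward chain $x_1-x_2-\cdots-x_k$ and the backward chain $x_k-a_{k-1}-\cdots-a_1-x_1$. Each of these paths is anchored at both ends: $x_1$ is pinned to a bag near $p_1$ (its clause occurrence) and $x_k$ to a bag near $p_k$, and the $a$-chain's endpoints are adjacent to $x_1$ and $x_k$ respectively. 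By the interval (connectivity) property of path decompositions, a connected subgraph with vertices forced into bags at positions $p_1$ and $p_k$ must have at least one vertex present in \emph{every} bag between $p_1$ and $p_k$: otherwise some bag $B_t$ contains no vertex of the chain, every chain vertex's interval lies entirely to one side of $t$, adjacent chain vertices must lie on the same side, and the two anchored endpoints lie on opposite sides --- a contradiction. Applying this to both chains, every bag in $[p_1,p_k]$ must contain one $x$-copy \emph{and} one $a$-variable, i.e.\ two new variables where the original decomposition had only the single variable $x$. Summed over all variables of a bag this gives width up to $2p+O(1)$, which is exactly the multiplicative blowup you set out to avoid. Your stated intention that each $a_s$ be ``introduced, used, and forgotten within $O(1)$ consecutive inserted bags'' is incompatible with covering the edges $a_sa_{s\pm1}$, whose endpoints would then live in handoff blocks separated by all the original bags in between.

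The paper's gadget avoids this by using a \emph{single} auxiliary spine $y_{i,1}-y_{i,2}-\cdots-y_{i,k-1}$ that carries implications in both directions (via the clauses $y_{i,s}\to y_{i,s+1}$, $y_{i,s}\to x_{i,s}$, $x_{i,s}\to y_{i,s-1}$, plus the two boundary clauses), while the occurrence-copies $x_{i,s}$ are attached only to the two spine vertices $y_{i,s-1},y_{i,s}$ and hence need to appear in only the single bag $B_{b(c_{i_s})}$. The spine variable $y_{i,s}$ then replaces $x_i$ one-for-one in every interior bag between consecutive occurrences, and only the (injectively assigned) clause bags grow by a constant. To repair your proof you would need to merge your two chains into one such bidirectional spine; as written, the embedding step --- which you yourself identify as ``the technical heart of the argument'' --- cannot succeed.
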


\begin{proof}

We begin by invoking \cref{lem:nice}. As in the proof of \cref{lem:psi3}, the
bags of the decomposition are numbered $B_1, B_2,\ldots$, the variables are
numbered in the order in which they are introduced, there exists an an
injective function $b$ mapping clauses to bags that contain all their
variables, and clauses are numbered so that $b$ maps clauses with higher
indices to later bags.

We will construct $\psi$ by editing $\phi$ and adding some new variables and
clauses.  For each $i\in [n]$, if $x_i$ appears $k>4$ times in $\phi$, suppose
that $x_i$ appears in clauses $c_{i_1}, c_{i_2},\ldots, c_{i_k}$, with
$i_1<i_2<\ldots<i_k$. We introduce $k$ new variables $x_{i,k}$ and replace
$x_i$ in $c_{i_j}$ with $x_{i,j}$ for all $j\in [k]$. We also add $k-1$ new
variables $y_{i,s}$, for $s\in[k-1]$. For each $s\in [k-2]$ we add the clause
$(\neg y_{i,s}\lor y_{i,s+1})$. For each $s\in[k-1]$ we add the clause $(\neg
y_{i,s}\lor x_{i,s})$. For each $s\in \{2,\ldots,k\}$ we add the clause $(\neg
x_{i,s}\lor y_{i,s-1})$. We also add the clauses $(\neg x_{i,1} \lor y_{1,1})$,
$(\neg y_{i,k-1} \lor x_{i,k})$. The intuitive way to think about this
construction is that we have a sequence of variables $x_{i,1}, y_{i,1},
x_{i,2}, y_{i,2},\ldots, y_{i,k-1}, x_{i,k}$, where each variable implies the
previous one in the sequence, while each $y$ variable implies the next $y$
variable (and we also add the missing implications to ensure consistency for
the first and last $x$ variable). It is not hard to see that each variable
appears at most $4$ times in $\psi$ if we perform this transformation
exhaustively. Furthermore, since the only way to satisfy the new clauses is to
give the same assignment to all $x_{i,s}$, $\psi$ is equisatisfiable with
$\phi$.

What remains is to edit the path decomposition so that we cover the new
clauses. For each $i\in[n]$ for which we applied the above transformation to
$x_i$, which appeared in $k$ clauses $c_{i_1},\ldots, c_{i_k}$, we do the
following: remove $x_i$ from the decomposition, since it no longer appears in
the formula; for each $j\in[k]$ place $x_{i,j}$ in $B_{b(c_{i_j})}$ so that
$B_{b(c_{i_j})}$ still contains all variables of $c_{i_j}$; for each $j\in
[k-1]$ place $y_{i,j}$ in all bags with indices in the interval
$[b(c_{i_j}),b(c_{i_{j+1}})]$. Note that this transformation may increase the
size only of the bags on which $b$ maps a clause, because the bags with indices
in the interval $(b(c_{i_j}), b(c_{i_{j+1}}))$ used to contain $x_i$ but now
contain $y_{i,j}$ in its place.  Since $b$ is injective, the size of a bag
$B_{b(c_j)}$ may increase at most 3 times, once for each variable of $c_j$.
\end{proof}

\begin{lemma}\label{lem:garey}

There exists a polynomial-time algorithm that takes as input a 3-CNF formula
$\phi$ where each variable appears at most $r$ times and output a 2-CNF formula
$\psi$ where each variable appears at most $4r$ times and an integer $t$ such
that $\phi$ is satisfiable if and only if $\psi$ has an assignment satisfying
at least $t$ variables. Furthermore, $\pw(\psi) \le \pw(\phi)+1$.

\end{lemma}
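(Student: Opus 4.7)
The plan is to apply the classical reduction of Garey, Johnson, and Stockmeyer \cite{GareyJS76} clause by clause, and then verify that a path decomposition of the primal graph can be adapted with only an additive constant increase in width. First I would invoke \cref{lem:nice} on $\phi$ to obtain a nice path decomposition of its primal graph together with an injective map $b$ from the clauses to bags such that $B_{b(c)}$ contains all (at most three) variables of each clause $c$.

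For each clause $c = (\ell_1 \lor \ell_2 \lor \ell_3)$ of $\phi$, I would introduce a fresh variable $w_c$ and replace $c$ in $\psi$ by the $10$ two-clauses $\ell_1,\ell_2,\ell_3,w_c,\neg\ell_1\lor\neg\ell_2,\neg\ell_2\lor\neg\ell_3,\neg\ell_1\lor\neg\ell_3,\ell_1\lor\neg w_c,\ell_2\lor\neg w_c,\ell_3\lor\neg w_c$ (and analogously, using fewer clauses, for clauses of arity $1$ or $2$). The standard analysis of this gadget shows that any assignment satisfies at most $7$ of the $10$ clauses, and the maximum $7$ is attainable if and only if $c$ is satisfied by the assignment restricted to $\ell_1,\ell_2,\ell_3$. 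Setting $t := 7m$, where $m$ is the number of clauses of $\phi$, then gives the claimed equivalence.

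To count occurrences, observe that each original occurrence of a literal $\ell$ contributes exactly $4$ occurrences of its variable to $\psi$: one in the unit clause $\ell$, one in a binary clause $\ell \lor \neg w_c$, and two in the binary clauses of the form $\neg\ell\lor\neg\ell'$. Hence if every variable of $\phi$ appears at most $r$ times, every original variable appears at most $4r$ times in $\psi$; each auxiliary variable $w_c$ appears exactly $4$ times (once in the unit clause $w_c$ and three times in the clauses $\ell_i\lor\neg w_c$), which is bounded by $4r$ for any $r\ge1$.

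For the path decomposition, I would simply insert $w_c$ into the single bag $B_{b(c)}$ for each clause $c$. Because $w_c$ occurs only in clauses all of whose other variables lie inside $B_{b(c)}$, the resulting decomposition is valid and covers every new clause. Since $b$ is injective, each bag receives at most one new variable, so the width of the primal path decomposition of $\psi$ is at most $\pw(\phi)+1$. The only point requiring any care is to check that clauses of arity less than $3$ in $\phi$ (and the unit/binary clauses produced by the gadget) are handled without increasing the width further, but this follows from the same observation: all literals of the gadget for $c$ lie in $B_{b(c)}\cup\{w_c\}$. No step here is genuinely an obstacle; the only mild subtlety is verifying the occurrence count of $4r$, and that the added variable $w_c$ is local to one bag, which is why the reduction preserves pathwidth up to $+1$.
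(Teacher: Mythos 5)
Your proposal is correct and follows essentially the same route as the paper: the classical Garey--Johnson--Stockmeyer gadget with one fresh variable per clause, the count of $4$ occurrences per original literal occurrence, and the observation that placing the new variable in the (injectively assigned) bag $B_{b(c)}$ from \cref{lem:nice} raises the width by at most $1$. The only cosmetic difference is that you fix $t=7m$ explicitly (which requires the minor care you note for clauses of arity below $3$), whereas the paper just states that a suitable target $t$ can be computed.
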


\begin{proof} 

We use the classical reduction of Garey, Johnson, and Stockmeyer
\cite{GareyJS76}.  Suppose we have a 3-CNF instance $\phi$ where each variable
appears $r$ times. For each clause $c=(\ell_1\lor \ell_2\lor \ell_3)$ we
produce 10 clauses of size at most 2, using a new variable $d_c$: $\{ (\ell_1),
(\ell_2), (\ell_3), (d_c), (\neg \ell_1\lor \neg \ell_2), (\neg\ell_2\lor \neg
\ell_3), (\neg \ell_1\lor \neg\ell_3), (\ell_1\lor\neg d_c), (\ell_2\lor \neg
d_c), (\ell_3\lor\neg d_c)\}$. An assignment that satisfies the original clause
can be extended to satisfy 7 of the new clauses, while other assignments can
satisfy at most 6, so we can compute a target number of clauses $t$ to satisfy
that is achievable if and only if the original instance is satisfiable. It is
not hard to see that now each variable appears at most $4r$ times.

To bound the pathwidth of $\psi$ recall that, given any path decomposition of
$\phi$ we can construct an injective function that maps clauses to bags that
contain all their literals (\cref{lem:nice}). For each clause $c$ we add the
variable $d_c$ to the bag $B_{b(c)}$, and this increases the width by at most
$1$.  \end{proof}

\subsection{Equivalence for Larger Alphabets}\label{sec:csprobust}

In this section we consider constraint satisfaction problems (CSPs) with
alphabets of size larger than $2$. A \textsc{CSP} instance over an alphabet of
some fixed size $B\ge 2$ is made up of $n$ variables which, without loss of
generality, take values in $[B]$. We are given as input a collection of $m$
constraints.  Each constraint is described by an ordered tuple of the variables
involved in the constraint plus a list of all the acceptable assignments to the
variables that satisfy this constraint. The arity of a constraint is the number
of variables involved in it. We will in general only focus on CSPs where the
arity $r$ is a fixed constant, so the space required to describe a constraint
will be $O(B^r) = O(1)$. We denote by $r$-\textsc{CSP} the restriction of the
\textsc{CSP} problem to constraints of maximum arity $r$. 

We will also consider a slight generalization of the same problem. Suppose we
are supplied with a fixed weight function $w:[B]\to \mathbb{N}$. The
\textsc{MaxW-CSP} problem is the problem of determining whether a given
\textsc{CSP} instance has a satisfying assignment whose total weight is at
least a given target $w_t$. Here, when a variable is assigned a value $v\in B$,
its weight is $w(v)$, and the weight of an assignment is just the sum of the
weights of all the variables.

In the remainder we are chiefly concerned with the case $B\ge 3$, as for $B=2$,
\tsat\ is already a \textsc{CSP} with alphabet of size $2$. We prove that the
\ppseth\ works as expected in the case of larger alphabets, even for CSPs where
all constraints involve at most $2$ variables.

\begin{theorem}\label{thm:csp}

For each $B\ge 3$ the following statements are equivalent:

\begin{enumerate}

\item The \ppseth\ is false.

\item There exist $\eps>0, c>0$ and an algorithm that takes as input a
\textsc{2-CSP} instance $\psi$ on alphabet $[B]$ and decides if $\phi$ is
satisfiable in time $O((B-\eps)^{\pw}|\psi|^c)$.

\item For each weight function $w:[B]\to\mathbb{N}$ and arity $r\ge2$, there
exist $\eps>0, c>0$ and an algorithm that takes as input a
\textsc{MaxW-}$r${-CSP} instance $\psi$ on alphabet $[B]$ and weight function
$w$ and decides $\psi$ in time $O((B-\eps)^{\pw}|\psi|^c)$.

\end{enumerate}

\end{theorem}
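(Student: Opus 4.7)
The plan is the cycle (3)~$\Rightarrow$~(2)~$\Rightarrow$~(1)~$\Rightarrow$~(3). The implication (3)~$\Rightarrow$~(2) is immediate: specialize (3) to $r=2$ and the identically zero weight function with target $w_t=0$, so that MaxW-2-CSP on $[B]$ reduces to 2-CSP satisfiability.

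For (2)~$\Rightarrow$~(1), assume a $(B-\eps)^{\pw}|\psi|^{O(1)}$ algorithm for 2-CSP on $[B]$ and reduce \tsat\ to it, compressing the pathwidth by (almost) a factor of $\log_2 B$. Starting from a \tsat\ instance $\phi$ of primal pathwidth $p$ with a nice decomposition supplied by \cref{lem:nice}, the core construction coalesces about $\log_2 B$ binary variables into one super-variable over $[B]$; each 3-clause then constrains at most three super-variables. To bring the arity down to $2$, adopt an incidence-style representation: for each original clause introduce a clause-super-variable over $[B]$ whose value records a satisfying literal, linked by pair constraints to its at most three variable-super-variables. When $B=2^t$, a greedy placement of super-variables along the decomposition (filling one before opening the next) yields a 2-CSP of pathwidth $\lceil p/t\rceil+O(1)$, so the assumed algorithm decides $\phi$ in time $(B-\eps)^{\lceil p/\log_2 B\rceil+O(1)}|\phi|^{O(1)}=(2-\delta)^p|\phi|^{O(1)}$ for some $\delta=\delta(\eps,B)>0$, contradicting the \ppseth.

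For (1)~$\Rightarrow$~(3), fix $w:[B]\to\mathbb{N}$ and $r\ge 2$ and assume the \ppseth\ fails, so that \cref{thm:robust} yields a $(2-\delta)^{\pw}|\phi|^{O(1)}$ algorithm for MaxW-3-SAT. Reduce MaxW-$r$-CSP on $[B]$ to MaxW-3-SAT by encoding each CSP variable on $\lceil\log_2 B\rceil$ bits (adding forbidden-value clauses when $B\neq 2^t$), expanding each $r$-ary constraint into its $O(B^r)$-sized CNF description (then put in 3-CNF via \cref{lem:psi3}), and transferring the weight $w$ to the binary variables through weighted dummy chains in the spirit of \cref{lem:psi1}. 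When $B=2^t$ the resulting primal pathwidth is $pt+O(1)=p\log_2 B+O(1)$, so the 3-SAT algorithm gives a $(B-\eps)^p|\psi|^{O(1)}$ algorithm for MaxW-$r$-CSP.

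The main obstacle, present symmetrically in both nontrivial directions, is the case where $B$ is not a power of two: a direct binary encoding wastes a factor $\lceil\log_2 B\rceil/\log_2 B>1$ in the exponent of the pathwidth, which can exactly cancel the $\eps$ savings of the assumed faster algorithm. The remedy is to pass through a macro alphabet $[B^k]$ for large $k$: jointly encoding $k$ CSP variables with $\lceil k\log_2 B\rceil$ bits drives the amortized overhead per CSP variable towards $\log_2 B$, and a careful greedy placement of the super-variables on the path decomposition (keeping the number of active groups per bag bounded by $\lceil(p+1)/k\rceil+O(1)$) matches the target pathwidth. This adapts the $q$-\textsc{CSP}-$B$ strategy of \cite{Lampis20}.
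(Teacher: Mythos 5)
Your overall architecture matches the paper's: the cycle $(3)\Rightarrow(2)\Rightarrow(1)\Rightarrow(3)$, with $(3)\Rightarrow(2)$ trivial, and the non-power-of-two issue handled by amortizing the rounding loss over blocks (the paper's choice of constants $\gamma,\delta$ with $(B-\eps)^\gamma<(2-\eps')^\delta<2^{\delta+1}<B^\gamma$ is exactly your macro-alphabet idea). Your arity-reduction gadget (a clause-variable recording a satisfying literal) is a legitimate alternative to the paper's odd-cycle gadget of \cref{lem:csparity}.

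However, there is a genuine gap at the heart of both nontrivial directions: the claim that ``a greedy placement of super-variables along the decomposition (filling one before opening the next) yields pathwidth $\lceil p/t\rceil+O(1)$.'' If each super-variable is a fixed group of $t$ boolean variables packed in introduction order, its footprint in the decomposition is the convex hull of its members' intervals, and a single bag can meet $\Theta(p)$ such hulls rather than $\Theta(p/t)$. Concretely, take a nice decomposition that alternately introduces a long-lived variable (never forgotten) and a short-lived one (forgotten immediately): the greedy pairs each long-lived variable with a dead short-lived one, so the final bag lies in the hull of \emph{every} group, the super-variable pathwidth is $\approx p$ instead of $\approx p/t$, and the assumed $(B-\eps)^{\pw}$ algorithm yields only $(B-\eps)^{p}\gg 2^{p}$ --- the reduction gains nothing. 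The same failure afflicts your $(1)\Rightarrow(3)$ macro-grouping of CSP variables. The paper avoids this not by a cleverer static partition but by a structurally different construction: \cref{lem:pwcolor} partitions variables into $k$ groups with few variables \emph{per bag} (each group may contain arbitrarily many variables overall), and then for each group and \emph{each bag} a fresh block of super-variables encodes the assignment to (bag $\cap$ group), tied to the next bag's block by explicit consistency constraints. These per-bag copies and consistency constraints are the essential technical content of \cref{lem:csp1} and \cref{lem:csp2}, and they are absent from your proposal; without them (or a proof that a hull-respecting partition with the required per-bag group count always exists, which your greedy does not provide), the pathwidth bounds you assert do not follow.
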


\begin{proof}

The implication 3$\Rightarrow$2 is trivial, because restricting the arity and
lifting the weight objective can only make the problem easier.  We prove
2$\Rightarrow$1 as \cref{lem:csp1} and 1$\Rightarrow$3 as \cref{lem:csp2}.
Observe that in the statements of the two lemmas we use different formulations
of the \ppseth, which are, however, known to be equivalent by
\cref{thm:robust}.  \end{proof}

Before we proceed, let us state a simple lemma that will be useful in various
places and allows us to decrease the arity of any \textsc{CSP} with non-binary
alphabet to $2$.

\begin{lemma}\label{lem:csparity} There is a polynomial-time algorithm that
takes as input a \textsc{CSP} instance $\psi$ over an alphabet of size $B\ge 3$
and outputs an equivalent 2-\textsc{CSP} instance $\psi'$ over the same
alphabet.  Furthermore, there is a way to transform any path decomposition of
the primal graph of $\psi$ of width $p$ to one for $\psi'$ with width at most
$p+3$.  \end{lemma}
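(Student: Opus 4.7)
My plan is to replace each constraint $c$ of arity $r > 2$ in $\psi$ by a local binary-CSP gadget over the same alphabet $[B]$, using only $O(1)$ auxiliary variables per constraint, and then splice these gadgets into the given path decomposition with a bounded additive increase in bag size. The key structural observation is that the variables of any constraint $c$ form a clique in the primal graph, so any path decomposition of $\psi$ must contain some bag $B^*_c$ that holds all variables of $c$ simultaneously; this bag will serve as the natural anchor for the gadget for $c$.

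For a single constraint $c$ on $x_1,\ldots,x_r$ with satisfying set $S(c)\subseteq[B]^r$, note that since $r$ and $B$ are fixed constants, $|S(c)|\le B^r=O(1)$. I would introduce a constant number of auxiliary variables $y_1,\ldots,y_k$ over $[B]$ arranged in a chain: their joint values are meant to encode a selection of some tuple $\tau\in S(c)$, binary constraints between consecutive $y_i$ and $y_{i+1}$ enforce consistency of this selection along the chain, and binary constraints $(y_i,x_{j(i)})$ force the appropriate coordinate of $\tau$ to match the corresponding $x_{j(i)}$. The hypothesis $B\ge 3$ provides enough slack in the alphabet for a single $[B]$-valued variable to carry both ``selection'' and ``progress marker'' information; the simplest instance is when $|S(c)|\le B$, where a single $y\in[B]$ suffices, and the general case is handled by a short chain that collectively indexes $S(c)$. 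In parallel, by invoking \cref{lem:nice} (or an analogous construction) I may assume each constraint $c$ has been assigned an anchor bag $B^*_c$ containing all of its variables; immediately after $B^*_c$ I insert a short sequence of new bags, each obtained from $B^*_c$ by adding a sliding window of at most three consecutive chain variables $y_{i-1},y_i,y_{i+1}$ of the gadget for $c$, so that every binary constraint of the gadget is contained in some bag. Original bags remain unchanged, newly inserted bags contain at most $|B^*_c|+3$ variables, and different constraints use disjoint sets of auxiliary variables placed in disjoint inserted blocks, giving a total pathwidth increase of at most $3$.

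The main technical obstacle is designing the gadget in the first step so that it faithfully represents the $r$-ary relation $c$ using only binary constraints over $[B]$, while keeping the ``chain width'' of the auxiliary-variable graph at most $3$. Since $r$ and $B$ are fixed constants, this can be carried out by an explicit construction, most cleanly by viewing the chain as a finite automaton that scans $x_1,\ldots,x_r$ in order and accepts iff the tuple lies in $S(c)$; the role of the assumption $B\ge 3$ is precisely to ensure that such constant-state automata are expressive enough to be encoded using only binary $[B]$-valued constraints. A secondary concern is bookkeeping in the path decomposition---ensuring that auxiliary variables from different constraints do not conflict and that the anchor bags $B^*_c$ can be chosen injectively---but this follows the same pattern used repeatedly in \cref{sec:lemrobust}.
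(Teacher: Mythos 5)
Your scaffolding is right and matches the paper: anchor each high-arity constraint $c$ at a bag containing all its variables (which exists because the variables of $c$ form a clique in the primal graph), replace $c$ by a local gadget of binary constraints over $[B]$, and splice a bounded-width decomposition of the gadget in after the anchor bag to get width $p+3$. The gap is in the gadget itself, which is the actual content of the lemma. Your two proposed mechanisms do not work as described. First, the automaton view: the transition ``$y_i=\delta(y_{i-1},x_i)$'' is inherently a \emph{ternary} constraint on $(y_{i-1},x_i,y_i)$, so encoding $c$ as a scanning automaton does not binarize anything --- it just reproduces the problem you are trying to solve. Second, the ``chain that collectively indexes $\mathcal{S}_c$'': if the identity of the selected tuple $\tau$ is distributed across several $[B]$-valued variables (as it must be when $|\mathcal{S}_c|>B$), then no single binary constraint $(y_i,x_j)$ can enforce $x_j=\tau_j$, because no single $y_i$ determines $\tau$. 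You also never specify how binary constraints force the chain to make a selection at all.

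The paper's gadget resolves both issues at once, and it is worth seeing how. It introduces one variable $t_{c,\sigma}$ for \emph{each} satisfying assignment $\sigma\in\mathcal{S}_c$ (so $\ell_c=|\mathcal{S}_c|$ of them), so that the \emph{identity} of a selected variable, not a distributed encoding, determines the whole tuple; then a binary constraint between $t_{c,\sigma}$ and each variable $x$ of $c$ is vacuous when $t_{c,\sigma}\in\{1,2\}$ and pins $x$ to $\sigma(x)$ when $t_{c,\sigma}=3$. To force at least one selection, the $t_{c,\sigma}$ are interleaved with $\ell_c+1$ helper variables restricted to $\{1,2\}$ on an \emph{odd cycle} with ``adjacent values differ'' constraints: since an odd cycle is not $2$-colorable, some $t_{c,\sigma}$ must take value $3$. (This is exactly where $B\ge 3$ is used, and it is also why your ``selection plus progress marker in one variable'' intuition is salvageable --- but only once each auxiliary variable corresponds to a full tuple of $\mathcal{S}_c$; a path with anchored endpoints propagating a ``selection has occurred'' flag would work equally well.) The cycle has a width-$3$ path decomposition, which recovers your $p+3$ bound. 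One last minor point: the gadget has $\Theta(\ell_c)$ variables, and this is polynomial because $\mathcal{S}_c$ is given explicitly in the input --- you do not actually need the assumption that the arity is constant.
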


\begin{proof}

We will construct $\psi'$ by considering each constraint $c$ of $\psi$ of arity
greater than $2$ in some order, and at each step replacing $c$ by a gadget over
the variables involved in $c$, some new variables, and some new constraints of
arity $2$, such that there is a way to extend an assignment of the original
variables to the new variables satisfying new constraints if and only if $c$
was satisfied by this assignment. Doing this exhaustively will drop the arity
to $2$.

For each constraint $c$ of $\psi$, let $\mathcal{S}_c$ be the set of all
satisfying assignments to this constraint, with $\ell_c=|\mathcal{S}_c|$. We
construct $\ell_c$ new variables $t_{c,\sigma}, \sigma\in\mathcal{S}_c$, and
another $\ell_c+1$ helper variables. We place binary constraints among the
$2\ell_c+1$ new variables, so that the primal graph forms an odd cycle where no
$t_{c,\sigma}$ are adjacent.  We adjust these constraints to ensure that: (i)
any two adjacent variables must receive distinct values (ii) $t_{c,\sigma}$
variables must receive a value in $\{1,2,3\}$ (iii) helper variables must
receive a value in $\{1,2\}$. For each variable $x$ implicated in the
constraint $c$ and each $\sigma\in\mathcal{S}_c$ we add a constraint involving
$x$ and $t_{c,\sigma}$ which is always satisfied if $t_{c,\sigma}$ has value
$1$ or $2$; while if $t_{c,\sigma}$ has value $3$ this constraint is only
satisfied if $x$ receives the value dictated by $\sigma$. After performing this
procedure, we remove all the old constraints and only keep the new constraints
of arity $2$. Note that exhaustively applying this procedure takes time
polynomial in $\psi$, as for each constraint $c$ we are given in the input
$\mathcal{S}_c$.

Equivalence between $\psi$ and $\psi'$ is easy to see: if we have an assignment
to $\psi$ that satisfies a constraint $c$, it must agree with some
$\sigma\in\mathcal{S}_c$; we assign value $3$ to $t_{c,\sigma}$ and values
$1,2$ to the other variables of the odd cycle, satisfying all new constraints.
In the converse direction, since the $2\ell_c+1$ new variables form an odd
cycle, at least one variable $t_{c,\sigma}$  must have value $3$. But then, the
original variables involved in $c$ must take assignment $\sigma$, which
satisfies $c$.

To see that the pathwidth of $\psi$ does not increase much, suppose that we
start with some path decomposition of its primal graph. For each constraint $c$
some bag must contain all variables of $c$. We insert in the decomposition
immediately after this bag a sequence of $2\ell_c$ bags containing all the
vertices of the bag plus a decomposition of the odd cycle formed by the
$2\ell_c+1$ new vertices. This process does not increase pathwidth by more than
$3$, because there is a path decomposition of a cycle of any length that
contains $3$ vertices in each bag, and because for each constraint $c$ we can
select a bag of the original decomposition containing the variables of $c$ to
perform this process.  \end{proof}

\begin{lemma}\label{lem:csp1}

For each $B\ge 3$, if there exist $\eps>0, c>0$ and an algorithm that takes as
input a \textsc{2-CSP} instance $\psi$ on alphabet $B$ and decides if $\psi$ is
satisfiable in time $O((B-\eps)^{\pw}|\psi|^c)$, then there exist $\eps'>0,
c'>0$ such that \tsat\ on $n$ variables can be solved in time
$O((2-\eps')^{\pw}n^{c'})$.

\end{lemma}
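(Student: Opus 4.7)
The plan is to reduce a \tsat\ instance $\phi$ on $n$ variables with primal pathwidth $\pw$ to a 2-CSP instance $\psi'$ over alphabet $[B]$ whose primal pathwidth $\pw'$ satisfies $\pw' \le (1+\mu)(\pw+1)/\log_2 B + O(1)$ for an arbitrarily small constant $\mu>0$. Combined with the hypothesized $(B-\eps)^{\pw'}|\psi'|^c$ algorithm for 2-CSP, this yields an algorithm for \tsat: since $(B-\eps)^{1/\log_2 B} = 2^{\log_2(B-\eps)/\log_2 B}$ is strictly less than $2$ whenever $\eps>0$, choosing $\mu$ small enough gives total running time $O((2-\eps')^{\pw}n^{O(1)})$ for some $\eps'>0$, which contradicts the \ppseth\ via \cref{thm:robust}.

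To construct $\psi'$, fix a large integer $b$ (depending on $B$ and $\eps'$) and let $a=\lceil b/\log_2 B\rceil$, so that $B^a\ge 2^b$; fix an injection $\iota:\{0,1\}^b\to[B]^a$ that packs $b$ bits into $a$ letters of the alphabet. Use \cref{lem:pwcolor} with parameter $k=\lceil(\pw+1)/b\rceil$ to partition the variables of $\phi$ into color classes $V_1,\ldots,V_k$ such that each bag contains at most $b$ variables of each color (assuming via \cref{lem:nice} that the decomposition is nice). For each color $i\in[k]$, follow the path decomposition from left to right and introduce a fresh group of $a$ CSP variables over $[B]$ each time the set of color-$i$ variables in the current bag changes (i.e., at an introduce or forget step for a vertex of $V_i$). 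Each such group is intended to encode, via $\iota$, a boolean assignment to the at most $b$ color-$i$ variables currently in the bag, padded with dummies up to length $b$.

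Two types of constraints are added, both of arity $O(1)$. For consistency: each time the group for color $i$ is replaced, we add a constraint of arity $2a$ on the outgoing and incoming groups forcing agreement on the shared boolean variables. For each clause $C$ of $\phi$, we use \cref{lem:nice} to fix a bag containing all its variables; since $C$ spans at most three color classes, we add a single constraint of arity at most $3a$ on the corresponding active groups that enforces $C$. We then invoke \cref{lem:csparity} to reduce this $O(1)$-arity CSP to a 2-CSP at the cost of an additive $O(1)$ in pathwidth. A path decomposition of $\psi'$ follows the original one: each original bag $B_j$ becomes a bag containing the $k$ currently active groups (so $ak$ CSP variables), with short transition segments that additionally carry outgoing and incoming groups when a color changes. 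This gives $\pw'\le ak+O(a)$.

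The main obstacle is ensuring the arithmetic works out: writing $\rho=\log_2(B-\eps)<\log_2 B$, the exponent of the resulting running time is bounded by $\rho\cdot ak+O(1)\le \rho\cdot (\pw+1)\cdot (a/b)+O(1)$, and since $a/b\to 1/\log_2 B$ as $b\to\infty$, we can choose $b$ large enough so that $\rho\cdot a/b\le 1-\eps''$ for some $\eps''>0$, making the overall running time $O((2-\eps')^{\pw}n^{O(1)})$. A secondary subtlety is that $b$ and $a$ are constants independent of the instance size, so all $O(1)$ overheads (pathwidth slack, arity reduction, transition bags) absorb into polynomial factors without affecting the exponent. With this in place, the reduction correctly shows that a $(B-\eps)^{\pw}|\psi|^c$ algorithm for 2-CSP over $[B]$ yields a $(2-\eps')^{\pw}n^{c'}$ algorithm for \tsat, as required.
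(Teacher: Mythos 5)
Your proposal is correct and follows essentially the same route as the paper's proof: partition each bag's variables into $k$ color classes via \cref{lem:pwcolor}, encode each class's boolean assignment in a constant-size block of $[B]$-valued CSP variables via an injection $\{0,1\}^b\to[B]^a$, add consistency and clause constraints of constant arity, and reduce to arity $2$ with \cref{lem:csparity}; the constants $(b,a)$ play the roles of the paper's $(\delta,\gamma)$ and the arithmetic $\rho a/b<1$ matches the paper's choice of $\eps'$. The only cosmetic difference is that you should state explicitly (as the paper does with its ``legal assignment'' constraints) that blocks taking values outside the range of $\iota$ are rejected, but this is implicit in how your consistency and clause constraints are defined.
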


\begin{proof}

Fix the $\eps,c$ for which we have a fast \textsc{2-CSP} algorithm. We are
given a 3-CNF formula $\psi$ and a path decompostion of its primal graph of
width $p$, on which we invoke \cref{lem:nice} to obtain a numbering of the bags
$B_1,\ldots, B_t$, and an injective function $b$ mapping each clause to the
index of a bag that contains its variables.

We will first define positive integers $\gamma, \delta, k$ and a real number
$\eps'>0$ such that we have the following:

$$ (B-\eps)^\gamma < (2-\eps')^{\delta} < 2^{\delta+1} < B^{\gamma}$$

and

$$ k \le \frac{p}{\delta}+2 $$

We will construct a \textsc{2-CSP} instance $\phi$ that satisfies the
following:

\begin{enumerate}

\item $\phi$ is satisfiable if and only if $\psi$ is.

\item The new instance can be constructed in polynomial time.

\item We can construct a path decomposition of the new instance of width
$k\gamma+O(1)$.

\end{enumerate}

Before we proceed, let us explain why we obtain the lemma if we achieve the
above. We will decide the \tsat\ instance by constructing the equivalent CSP
instance and feeding it to the supposed algorithm. The running time will be at
most $O((B-\eps)^{k\gamma}|\psi|^{O(1)})$, however, $(B-\eps)^{k\gamma} <
(2-\eps')^{\delta k } < (2-\eps')^{p+O(1)}$, so we get the promised running
time by selecting an appropriate $c'$.

To see that the numbers $\gamma, \delta,k, \eps'$ exist, observe that we can
always select $\gamma$ large enough so that $B^\gamma>4(B-\eps)^\gamma$, in
particular, set $\gamma = \lceil \log_{B/(B-\eps)} 4 \rceil+1$. As a result,
there exists integer $\delta$ such that $(B-\eps)^\gamma <
2^\delta<2^{\delta+1} < B^\gamma$. We now define an $\eps''>0$ so that
$2^{(1-\eps'')\delta} = (B-\eps)^{\gamma/2}2^{\delta/2}$, in particular set
$\eps'' = \frac{1}{2} - \frac{\gamma}{2\delta}\log(B-\eps)$.  Note that we do
indeed have $\eps''>0$ as $\delta>\gamma{\log (B-\eps)} \Leftrightarrow
(B-\eps)^\gamma < 2^\delta$, which we know to hold.  By the definition of
$\eps''$ we have $(B-\eps)^\gamma<2^{(1-\eps'')\delta} <2^{\delta}$. Now set
$\eps'>0$ so that $2-\eps' = 2^{(1-\eps'')}$, that is, $\eps' =
2-2^{1-\eps''}>0$. Note that $\gamma, \delta, \eps'$ are absolute constants
depending only on $B,\eps$.

We now define $k$ to be the smallest positive integer so that $\delta+1 \ge
\lceil\frac{p+1}{k}\rceil$. In particular, we have
$\lceil\frac{p+1}{k-1}\rceil> \delta+1$. We can assume without loss of
generality that $p$ is sufficiently large for $k$ to be well-defined, as this
is the interesting case. The last inequality gives $\frac{p+1}{k-1}+1>\delta+1
\Leftrightarrow k<\frac{p+1}{\delta}+1$ so $k<\frac{p}{\delta}+2$ as desired.

Our plan is now to use \cref{lem:pwcolor} to partition the variables of $\psi$
into $k$ sets $V_1,\ldots, V_k$, such that for all bags $B_j$ of the
decomposition we have $|B_j\cap V_i|\le \lceil\frac{p+1}{k}\rceil \le
\delta+1$.  We will use $\gamma$ CSP variables to represent the value of each
group $B_j\cap V_i$. Since $2^{\delta+1}<B^{\gamma}$, this gives us
sufficiently many CSP assignments to represent all the boolean assignments to
the at most $\delta+1$ variables of the group. We will present the reduction in
two steps, where we first describe a CSP instance of arity $3\gamma$, and the
use \cref{lem:csparity} to reduce this to 2-\textsc{CSP} while keeping the
pathwidth essentially unchanged.  Let us give some more details.

We begin by fixing an injective function $\tau$ from the set $\{False,
True\}^{\delta+1}$ to the set $[B]^{\gamma}$. Such an injective function always
exists, as $2^{\delta+1}<B^{\gamma}$.  Suppose that the bags of the
decomposition are numbered $B_1,\ldots, B_t$, the decomposition is nice, and
the first bag is empty. The boolean variables are numbered $y_1,\ldots, y_n$
according to the order of their introduction in the decomposition.

\begin{enumerate}

\item (Variable groups): For each $i\in[k]$ and $j\in[t]$ we define $\gamma$
CSP variables $x_{i,j,s}$ for $s\in[\gamma]$, meant to represent the assignment
of the at most $\delta+1$ boolean variables of the group $B_j\cap V_i$.

\item (Legal assignment): For each $i\in [k]$ and $j\in[t]$ consider every
assignment $\sigma$ to the $\gamma$ variables $x_{i,j,s}$. If there does not
exist $\sigma'$ such that $\tau(\sigma')=\sigma$, then add a constraint
falsified by this assignment.

\item (Consistency): For each $i\in [k]$ and $j\in[t-1]$, consider every
assignment $\sigma_1$ to the $\gamma$ variables $x_{i,j,s}$ and every
assignment $\sigma_2$ to the $\gamma$ variables $x_{i,j+1,s}$. If there exist
$\sigma_1', \sigma_2'$ such that $\tau(\sigma_1')=\sigma_1$,
$\tau(\sigma_2')=\sigma_2$, but the assignments $\sigma_1', \sigma_2'$ disagree
on a variable of $B_j\cap B_{j+1}\cap V_i$, add a constraint falsified by
$\sigma_1,\sigma_2$.

\item (Satisfaction): For each clause $c$ of the \tsat\ instance, suppose that
$c$ uses variables from three groups $V_{i_1}, V_{i_2}, V_{i_3}$. All three
variables appear in $B_{b(c)}$. For each three assignments $\sigma_1, \sigma_2,
\sigma_3$ to the variables $x_{i_1,b(c),s}$, $x_{i_2,b(c),s}$, $x_{i_3,b(c),s}$
such that there exists $\sigma_{\alpha}'$ for which $\tau(\sigma_{\alpha}') =
\sigma_\alpha$, for $\alpha\in[3]$, and such that $\sigma_1', \sigma_2',
\sigma_3'$ falsifies the clause $c$, add a constraint falsified by $\sigma_1,
\sigma_2, \sigma_3$.

\end{enumerate}

We are now almost done, except that the constraints we have added have arity at
most $3\gamma$ and we would like to have constraints of arity $2$. Before we
deal with this, let us argue that our reduction is correct up to this point.
First, it is not hard to see that $\psi$ is satisfiable if and only if $\phi$
is satisfiable: because of the constraints of step 2 we only consider CSP
assignments which are in one-to-one correspondence with boolean assignments,
and because of step 3 such assignments are consistent across bags. Because of
the constraints of step 3 it is possible to satisfy all constraints if and only
if it is possible to satisfy all clauses.

We argue that the CSP instance we have constructed so far has pathwidth
$k\gamma+O(1)$. To see this, start with the same path decomposition we had for
the \tsat\ instance, but place in each bag $B_j$ the $k\gamma$ variables
$x_{i,j,s}$, for $i\in [k]$ and $s\in [\gamma]$. Note that this covers
constraints of steps 2 and 4. To cover constraints of step 3, consider for each
$j\in[t-1]$ two consecutive bags $B_j, B_{j+1}$ and insert between them a
sequence of $k$ bags, where at each step we first add the group of variables
$x_{i,j+1,s}$ to the bag and at the subsequent bag we remove the group
$x_{i,j,s}$ (that is, we are gradually transforming $B_j$ to $B_{j+1}$). The
largest bag has width at most $(k+1)\gamma = k\gamma+O(1)$.

Finally, to transform this to an equivalent \textsc{CSP} of arity $2$ we invoke
\cref{lem:csparity}. \end{proof}

\begin{lemma}\label{lem:csp2}

For each $B\ge 3$, and for each weight function $w:[B]\to\mathbb{N}$, if there
exist $\eps>0, c>0$ and an algorithm that solves \textsc{MaxW-SAT} on an
instance $\phi$ in time $O((2-\eps)^{\pw}|\phi|^{c})$, then there exist
$\eps'>0, c'>0$ and an algorithm that takes as input an $r$\textsc{-CSP}
instance $\psi$ on alphabet $B$ and a target $w_t$ and decides if $\psi$ has a
satisfying assignment of weight at least $w_t$ in time
$O((B-\eps')^{\pw}|\psi|^{c'})$.

\end{lemma}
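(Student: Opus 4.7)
The plan is to reduce a \textsc{MaxW-}$r$-\textsc{CSP} instance over alphabet $[B]$ to a \textsc{MaxW-SAT} instance via a group-encoding dual to Lemma~\ref{lem:csp1}. First, I will invoke Lemma~\ref{lem:csparity} to reduce to arity $2$ at a cost of $O(1)$ in the pathwidth, and then pick integers $\gamma,\delta$ satisfying $B^\delta\le 2^\gamma$ and $(2-\eps)^{\gamma/\delta}\le B-\eps'$ for some $\eps'>0$. Such integers exist because $(2-\eps)^{\log_2 B}=B^{\log_2(2-\eps)}<B$ (since $\log_2(2-\eps)<1$), so for $\delta$ large and $\gamma=\lceil\delta\log_2 B\rceil$ the ratio $\gamma/\delta$ is arbitrarily close to $\log_2 B$ from above. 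Applying Lemma~\ref{lem:nice} and Lemma~\ref{lem:pwcolor} then yields a nice path decomposition $B_1,\dots,B_t$ together with a partition of the CSP variables into $k=\lceil(p+1)/\delta\rceil$ groups $V_1,\dots,V_k$ with $|V_i\cap B_j|\le\delta$ everywhere.

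The core construction will introduce, for each $(i,j)\in[k]\times[t]$, $\gamma$ Boolean variables that encode the values of the (at most $\delta$) CSP variables of $V_i\cap B_j$ via a fixed injection $\tau\colon[B]^{\delta}\hookrightarrow\{0,1\}^\gamma$. I will add constant-arity clauses enforcing (i) legality (encodings must lie in the image of $\tau$), (ii) consistency between $(i,j)$ and $(i,j+1)$ on the shared CSP variables in $V_i\cap B_j\cap B_{j+1}$, and (iii) satisfaction of each CSP constraint, which at arity $2$ only involves the encodings of at most two group-bag pairs at $B_{b(c)}$.

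To translate the weight function $w$ into the MaxW-SAT objective of counting True variables, I will add two families of gadgets. First, I pair every encoding Boolean $x$ with a fresh dummy $\bar x$ and include the clauses $(x\lor\bar x)$ and $(\neg x\lor\neg\bar x)$; each pair then contributes exactly one True variable in any satisfying assignment, giving a fixed bonus of $k\gamma t$ independent of the CSP assignment encoded. Second, for each CSP variable $x_\ell$ and each $s\in[W]$, where $W=\max_v w(v)$, I introduce a weight variable $y_{\ell,s}$ and, for every $v\in[B]$ and every $s>w(v)$, a clause of arity $\gamma+1$ of the form ``if the encoding of the group-bag first containing $x_\ell$ indicates $x_\ell=v$ then $y_{\ell,s}$ is False''. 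A max-weight satisfying assignment then sets each $y_{\ell,s}$ to True whenever permitted, so these variables contribute exactly $\sum_\ell w(v_\ell)$ True. Setting the SAT target to $w_t+k\gamma t$ makes the two decision problems equivalent.

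The resulting formula has pathwidth $k\gamma+O(1)$: each bag $B_j$ is replaced by a bag carrying the $k\gamma$ encoding variables for column $j$, while the pairing dummies and the $y_{\ell,s}$ live in short temporary bags inserted beside the relevant encoding variables. Running the assumed algorithm on this instance takes $O((2-\eps)^{k\gamma+O(1)}|\psi|^{c''})$, and since $k\gamma\le p\gamma/\delta+O(1)$ with $(2-\eps)^{\gamma/\delta}\le B-\eps'$, this is $O((B-\eps')^p|\psi|^{c'''})$ as required. The main delicacy will be in the weight translation: because \textsc{MaxW-SAT} can only count True variables, arbitrary weights $w(v)$ must be realized by bounded-arity clauses, \emph{and} the contribution of the encoding Booleans themselves to the True count must be made constant across satisfying assignments so that only the $y_{\ell,s}$ contribute variably to the score---this is the point of the pairing trick, and getting both mechanisms to coexist with the small-pathwidth decomposition is the one step requiring genuine care.
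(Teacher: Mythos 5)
Your proposal is essentially correct and its core matches the paper's proof: partition the CSP variables into $k$ groups via \cref{lem:pwcolor}, encode each group--bag pair by a constant number of Boolean variables through a fixed injection, and add legality, consistency, and satisfaction clauses, yielding primal pathwidth $k\gamma+O(1)$ with the exponents calibrated so that $(2-\eps)^{k\gamma}=O((B-\eps')^{p})$. The genuine divergence is in the weight translation. The paper amplifies: for each CSP variable it introduces a block of $(nt)^5 w_{\max}$ forced indicator variables whose True-count equals $(nt)^5$ times the variable's weight, so that the uncontrolled contribution of the $O(nt)$ encoding Booleans is swamped by the granularity $(nt)^5$ and the CSP weight can be read off by division. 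You instead make the encoding Booleans' contribution \emph{exactly} constant via the complementation pairs $(x\lor\bar x)\land(\neg x\lor\neg\bar x)$ and realize $w$ exactly with the unit variables $y_{\ell,s}$; both directions of the threshold equivalence then go through cleanly, and your version is arguably tidier since it needs no amplification or rounding argument. Two cautions. First, the detour through \cref{lem:csparity} is unnecessary (the paper's satisfaction clauses handle arity $r$ directly, since \textsc{MaxW-SAT} permits unbounded clause arity) and it is slightly dangerous for the weighted problem: the auxiliary CSP variables it introduces are not uniquely determined by the original assignment, so their $w$-weights are not controlled; you must make sure the $y_{\ell,s}$ gadgets are attached only to the \emph{original} variables of $\psi$, which your phrasing leaves ambiguous. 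Second, the consistency clauses tie together the encodings of columns $j$ and $j+1$, so a single bag per column does not suffice; you need the standard interleaving (swap the $k$ groups one at a time between consecutive columns), which keeps the width at $(k+1)\gamma=k\gamma+O(1)$ as you claim but should be stated.
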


\begin{proof}

Fix the $\eps,c$ for which we have a fast algorithm for \textsc{MaxW-SAT}. We
are given as input an $r$\textsc{-CSP} instance and a path decomposition of its
primal graph of width $p=\pw(\psi)$. Apply \cref{lem:nice} to the decomposition
to make it nice and obtain a function $b$ that maps constraints of $\psi$ to
bags that contain all their variables.

We will first define positive integers $\gamma, \delta, k$ and a real number
$\eps'>0$ such that we have the following:

$$(2-\eps)^\delta <(B-\eps')^\gamma< B^{\gamma+1} < 2^\delta$$

and 

$$ k \le \frac{p}{\gamma}+2 $$

We will construct an instance of \textsc{MaxW-SAT} that satisfies the
following:

\begin{enumerate}

\item The new instance is equivalent to the original instance.

\item The size of the new instance is polynomial in the size of the old
instance.

\item We can construct a path decomposition of the new instance of width
$k\delta+O(1)$.

\end{enumerate}

Before we proceed, let us explain why we obtain the lemma if we achieve the
above. We will decide the CSP instance by constructing the equivalent
\textsc{MaxW-SAT} instance and feeding it to the supposed algorithm. The
running time will be at most $O((2-\eps)^{k\delta}|\psi|^{O(1)})$, however,
$(2-\eps)^{k\delta} < (B-\eps')^{\gamma k } < (B-\eps')^{p+O(1)}$, so we get
the promised running time by selecting an appropriate $c'$.

To see that the numbers $\gamma, \delta,k, \eps'$ exist, observe that we can
always select $\delta$ large enough so that $2^\delta>B^2(2-\eps)^\delta$, in
particular, set $\delta = \lceil 2\log_{2/(2-\eps)} B \rceil+1$. As a result,
there exists integer $\gamma$ such that $(2-\eps)^\delta <
B^\gamma<B^{\gamma+1} < 2^\delta$. We now define an $\eps''>0$ so that
$B^{(1-\eps'')\gamma} = B^{\gamma/2}(2-\eps)^{\delta/2}$, in particular set
$\eps'' = \frac{1}{2} - \frac{\delta}{2\gamma}\log_B(2-\eps)$.  Note that we do
indeed have $\eps''>0$ as $\gamma>\delta\frac{\log (2-\eps)}{\log B}
\Leftrightarrow (2-\eps)^\delta < B^\gamma$, which we know to hold.  Now set
$\eps'>0$ so that $B-\eps' = B^{(1-\eps'')}$, that is, $\eps' =
B-B^{1-\eps''}>0$. Note that $\gamma, \delta, \eps'$ are absolute constants
depending only on $B,\eps$.

We now define $k$ to be the smallest positive integer so that $\gamma+1 \ge
\lceil\frac{p+1}{k}\rceil$. In particular, we have
$\lceil\frac{p+1}{k-1}\rceil> \gamma+1$. We can assume without loss of
generality that $p$ is sufficiently large for $k$ to be well-defined, as this
is the interesting case. The last inequality gives $\frac{p+1}{k-1}+1>\gamma+1
\Leftrightarrow k<\frac{p+1}{\gamma}+1$ so $k<\frac{p}{\gamma}+2$ as desired.

Our plan is now to use \cref{lem:pwcolor} to partition the variables of $\psi$
into $k$ sets $V_1,\ldots, V_k$, such that for all bags $B_j$ of the
decomposition we have $|B_j\cap V_i|\le \lceil\frac{p+1}{k}\rceil \le
\gamma+1$. We will use $\delta$ boolean variables to represent the value of
each group $B_j\cap V_i$. Since $2^\delta>B^{\gamma+1}$, this gives us
sufficiently many boolean assignments to represent all the assignments to the
at most $\gamma+1$ CSP variables of the group. Let us give some more details.

We begin by fixing an injective function $\tau$ from the set $[B]^{\gamma+1}$
to the set $\{False, True\}^\delta$. Such an injective function always exists,
as $2^\delta>B^{\gamma+1}$.  Suppose that the bags of the decomposition are
numbered $B_1,\ldots, B_t$, the decomposition is nice, and the first bag is
empty. The CSP variables are numbered $y_1,\ldots, y_n$ according to the order
of their introduction in the decomposition. 

\begin{enumerate}

\item (Variable groups): For each $i\in[k]$ and $j\in[t]$ we define $\delta$
boolean variables $x_{i,j,s}$ for $s\in[\delta]$, meant to represent the
assignment of the at most $\gamma+1$ CSP variables of the group $B_j\cap V_i$.

\item (Legal assignment): For each $i\in [k]$ and $j\in[t]$ consider every
assignment $\sigma$ to the $\delta$ variables $x_{i,j,s}$. If there does not
exist $\sigma'$ such that $\tau(\sigma')=\sigma$, then add a clause falsified
by this assignment.

\item (Consistency): For each $i\in [k]$ and $j\in[t-1]$, consider every
assignment $\sigma_1$ to the $\delta$ variables $x_{i,j,s}$ and every
assignment $\sigma_2$ to the $\delta$ variables $x_{i,j+1,s}$. If there exist
$\sigma_1', \sigma_2'$ such that $\tau(\sigma_1')=\sigma_1$,
$\tau(\sigma_2')=\sigma_2$, but the assignments $\sigma_1', \sigma_2'$ disagree
on a variable of $B_j\cap B_{j+1}\cap V_i$, add a clause falsified by
$\sigma_1,\sigma_2$.

\item (Satisfaction): For each constraint $c$ of the CSP instance, suppose that
$c$ uses $r'\le r$ variables from $r'$ groups $V_{i_1}, V_{i_2}, \ldots,
V_{i_{r'}}$.  All $r'$ variables appear in $B_{b(c)}$. For each $r'$
assignments $\sigma_1, \sigma_2,\ldots, \sigma_{r'}$ to the variables
$x_{i_1,b(c),s}$, $x_{i_2,b(c),s}$,\ldots, $x_{i_{r'},b(c),s}$ such that there
exist $\sigma_{\alpha}'$, with $\tau(\sigma_{\alpha}') = \sigma_\alpha$, for
$\alpha\in[r']$, and such that $\sigma_1', \sigma_2',\ldots, \sigma_{r'}'$
falsifies the constraint $c$, add a clause falsified by $\sigma_1,
\sigma_2,\ldots, \sigma_{r'}$.

\item (Weights): Let $w_{\max}$ be the maximum value of the weight function.
For each $j\in \{2,\ldots,t\}$ such that $B_j$ contains a variable $y_{i'}$,
with $i'\in[n]$, not present in $B_{j-1}$, we add to the instance
$(nt)^5w_{\max}$ new variables labeled $w_{i',s}$, for $s\in [(nt)^5w_{\max}]$.
Suppose $y_{i'}\in V_i$, for $i\in[k]$. Consider every possible assignment
$\sigma$ to the $\delta$ variables $x_{i,j,s}$, such that there exists
$\sigma'$ such that $\tau(\sigma')=\sigma$, where $\sigma'$ is an assignment to
the variables of $V_i\cap B_j$. If $\sigma'$ assigns to $y_{i'}$ a value with
weight $w'\in \{0,\ldots,w_{\max}\}$, construct $w_{\max}(nt)^5$ clauses which
are falsified by $\sigma$ and select a set of $w'(nt)^5$ variables $w_{i',s}$.
Add one of each selected variable to a distinct constructed clause, and add the
remaining non-selected variables \emph{negated} one each to each remaining
clause. Note that this ensures that in a satisfying assignment that agrees with
$\sigma$ exactly $w'(nt)^5$ of the new variables will be set to True.

\end{enumerate}

If the target weight value of the given CSP instance was $w_t$, we set the
target weight of the new instance to be $(nt)^5w_t$. This completes the
construction.

Correctness now is straightforward, because there is a one-to-one mapping
between satisfying assignments of $\phi$ and $\psi$ which also approximately
preserves the weight. In particular, because of clauses of step 2, the
assignment we give to each group $x_{i,j,s}$ must have a corresponding
assignment for the variables of $V_i\cap B_j$. Because of the clauses of step
3, the assignment we extract in this way is consistent. Because of the clauses
of step 4, the extracted assignment must be satisfying for both instances.
Finally, in the last step, for each variable $y$ of the CSP instance we
construct a group of $(tn)^5$ variables whose assignments are forced once we
fix the assignment to the group that contains $y$, and which contribute to the
weight exactly $(tn)^5$ times as much as $y$ contributes to the weight.  Since
the total contribution of all other variables is upper-bounded by their number,
which is at most $tn$, we can infer the maximum weight of any satisfying
assignment of the CSP instance by the maximum weight of the new instance
divided by $(nt)^5$.

Finally, let us argue why we can obtain a path decomposition of the claimed
width. For each bag $B_j$ of the original decomposition we construct a bag
containing all variables $x_{i,j,s}$, for all $i\in[k], s\in[\delta]$. This
covers clauses of steps 2 and 4. To cover clauses of step 3, take two
consecutive bags $B_j, B_{j+1}$ and insert between them a sequence of $k$ bags,
at each step inserting a group $x_{i,j+1,s}$ and in the next step removing the
group $x_{i,j,s}$. The maximum size of any bag up to this point is $(k+1)\delta
= k\delta+O(1)$. To cover the variables and clauses of the last step, find the
bag $B_j$ where $y_{i'}$ was introduced, and insert after it $(tn)^5$ copies,
each containing a distinct variable from the ones constructed in this step.
\end{proof}

\subsection{CSPs with Monotone Multi-assignments}\label{sec:cspmulti}

Even though our main motivation for proving \cref{thm:csp} was that we would
like to use \textsc{CSP} instances as starting points in our reductions of
\cref{sec:fpt}, one could argue that the question of whether \textsc{CSP}s over
alphabets of size $B> 2$ can be solved faster than $B^{\pw}$ is a natural one,
so it is good that this problem turns out to be equivalent to the \ppseth, for
all $B\ge 3$. However, we are now at a point where we will treat a
\textsc{CSP}-related problem which, a priori, seems very artificial. Therefore,
before we proceed, let us try to motivate the problem.

Most reductions which establish that a problem cannot be solved in time
$(c-\eps)^{\pw}n^{O(1)}$ (under SETH) typically start with a problem-specific
gadget: we want to reduce $n$-variable \textsc{CSP} over alphabets of size $c$
to an instance of pathwidth (roughly) $n$, so we design a gadget to represent
each variable of the \textsc{CSP}. This gadget has $c$ distinct locally optimal
configurations, one for each value of the alphabet. A common challenge now is
how to propagate these choices through the construction, that is, how to ensure
that if two distinct copies of the gadget are meant to represent the same
variable, they must select the same local configuration.

Unfortunately, for many problems the known gadgets \emph{do not} allow us to
obtain consistency in such a strong sense, without connecting the gadgets in a
way that blows up the target width.  However, they often do allow us to order
the gadgets in a linear fashion in a way that ensures that, even though it is
possible for the optimal solution to ``cheat'', flipping local configurations
between consecutive copies, the cheating needs to be \emph{monotone}. More
precisely, there is some ordering of the $c$ canonical configurations, so that
we can prove that any locally optimal solution which switches between distinct
configurations, must move towards a configuration that is higher in the
ordering. To understand what we mean by this, recall the (by now standard)
reduction proving that \textsc{Independent Set} cannot be solved in
$(2-\eps)^{\pw}$, given by Lokshtanov, Marx, and Saurabh \cite{LokshtanovMS18}
(see also \cite{CyganFKLMPPS15}). In this reduction, every variable of the
original instance is represented by a long path, where there are two canonical
solutions: either take all odd-numbered vertices (starting from the first one),
or all even-numbered vertices. It is possible for a solution to ``switch'' at
some point, but this can only happen from a configuration of the first type, to
the second type, and not vice-versa, as this would select adjacent vertices.
This monotonicity property allows us to obtain consistency simply by repeating
the construction a sufficiently large number of times. This simple but
ingenious trick has been reused in numerous SETH-based lower bounds
(e.g.~\cite{BorradaileL16,DubloisLP21,DubloisLP22,FockeMR22,HanakaKLOS20,KatsikarelisLP19,KatsikarelisLP22,LampisV23}).

The challenge we face now is that the repetition trick we sketched above cannot
be used to ensure consistency in our setting. We now want to reduce from a
\textsc{CSP} instance where we keep the pathwidth constant, so any section of
our construction can only represent the variables appearing in a section of the
path decomposition. Repeating just this part does not help us in ensuring that
a variable that appears there does not flip values in a constraint that appears
outside this section; however, once we exit the section we can no longer repeat
its construction, since some of the relevant variables have been forgotten.
This conundrum was dealt with by Iwata and Yoshida \cite{IwataY15} via the
following method: in the beginning of each section we add $O(c\log p)$
variables which count, for each of the $c$ configurations, how many of the
gadgets  (representing the $p$ variables of a bag of the \textsc{CSP}) are in
this configuration. At the end of each section we do the same, and we compare
the two counts to make sure they are identical. The monotonicity property is
crucially used here: because if some variable gadget ``cheats'' in the middle
of the section, flipping configurations, it must assume a configuration
representing a higher value, we know that if the count for the lowest value is
correct, then the variables that had this value did not cheat; but then we can
reapply the argument to the second value, and so on, proving that no cheating
can have occurred.

Our reductions of \cref{sec:fpt} will rely on the strategy  introduced by Iwata
and Yoshida \cite{IwataY15} sketched above. However, because it would be very
tedious to reproduce this trick with problem-specific gadgets each time, we
believe it is better to formalize this method through a \textsc{CSP} question.
We therefore arrive at the following definition:

\begin{definition} Suppose we have a \textsc{CSP} instance $\psi$ over alphabet
$[B]$, for $B\ge 2$, with variable set $V$, a path decomposition of its primal
graph $B_1,\ldots, B_t$, and an injective function $b$ mapping each constraint
to the index of a bag that contains all its variables. A
\emph{multi-assignment} is a function $\sigma$ that takes as input a variable
$x\in V$ and an index $j\in [t]$ such that $v\in B_j$ and returns a value in
$[B]$. We will say that:

\begin{enumerate}

\item A multi-assignment $\sigma$ is satisfying for $\psi$ if for each
constraint $c$, the assignment $\sigma_c(x) = \sigma(x,b(c))$, that is, the
restriction of the multi-assignment to $b(c)$, satisfies the constraint.

\item A multi-assignment $\sigma$ is monotone if for all $x\in V$ and $j_1<j_2$
with $x\in B_{j_1}\cap B_{j_2}$ we have $\sigma(x,j_1)\le \sigma(x,j_2)$.

\item A multi-assignment $\sigma$ is consistent for $x\in V$ if for all
$j_1,j_2\in [t]$ such that $x\in B_{j_1}\cap B_{j_2}$ we have
$\sigma(x,j_1)=\sigma(x,j_2)$.

\end{enumerate}

\end{definition} 

Informally, a multi-assignment to a \textsc{CSP} instance supplied with a path
decomposition is an assignment that is allowed to cheat, assigning a distinct
value to each appearance of a variable $x$ in a bag of the decomposition. In
order to call such an assignment satisfying, we only demand that for each
constraint $c$, if we look at the values we gave to the variables in the bag
$B_{b(c)}$, the constraint should be satisfied (even if some variables take
other values in other bags). However, if a multi-assignment is monotone, then
it cannot arbitrarily change values between distinct bags and it must instead
always increase values as we move later in the decomposition.

We now present a technical lemma that captures the technique of \cite{IwataY15}
and shows that looking for monotone multi-assignments is not any easier than
looking for consistent satisfying assignments.

\begin{lemma}\label{lem:weird}

For each $B\ge 2$, there exists a polynomial time algorithm which takes as
input a 4-\textsc{CSP} instance $\psi$ over alphabet $[B]$, as well as a path
decomposition of the primal graph of $\psi$ of width $p$. The algorithm
produces as output a new 4-\textsc{CSP} instance $\psi'$, over the same
alphabet, constructed by adding variables and constraints to $\psi$, a path
decomposition of the primal graph of $\psi'$, and an injective function $b$
mapping constraints of $\psi'$ to bags that contain all their variables, with
the following properties:

\begin{enumerate}

\item Any satisfying assignment for $\psi$ can be extended to a satisfying
assignment for $\psi'$.

\item Any monotone satisfying multi-assignment for $\psi'$ (where monotonicity
is for the constructed decomposition and $b$) which is consistent on the new
variables, is in fact consistent on all variables.  Hence, its restriction to
the variables of $\psi$ satisfies $\psi$.

\item Each bag of the path decomposition of the primal graph of $\psi'$
contains at most $p+1$ variables of $\psi$ and $O(B\log p)$ new variables.

\end{enumerate}

\end{lemma}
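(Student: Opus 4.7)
The plan is to adapt the counting trick of Iwata and Yoshida to the CSP level by augmenting $\psi$ with binary counters that, at every bag, record how many original variables currently hold each value under the multi-assignment. Monotonicity will then force consistency of the original variables by a global flow argument on the counters.

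After applying \cref{lem:nice} to obtain a nice decomposition $B_1,\dots,B_t$ and an injective function $b_\psi$ placing each original constraint in a bag covering its variables, I would introduce for every bag $B_j$ and value $v\in[B]$ a binary-encoded counter consisting of $\lceil\log_2(p+2)\rceil$ fresh variables $w_{j,v,*}$, meant to represent the number of original $x\in B_j$ assigned value $v$. Between any two consecutive bags $B_j,B_{j+1}$, which in a nice decomposition differ by at most one introduced or forgotten variable $x$, I insert an auxiliary bag containing $B_j\cup B_{j+1}$, both counter blocks $w_{j,v,*},w_{j+1,v,*}$, and $O(\log p)$ temporary carry bits per value; in this bag I place arity-$4$ transition constraints realising $w_{j+1,v,*}=w_{j,v,*}\pm [x=v]$ via a ripple-carry gadget, where the sign is $+$ for an introduce step, $-$ for a forget step, and the modification is replaced by equality if nothing changes. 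Finally, I add anchor constraints forcing $w_{1,v,*}=w_{t,v,*}=0$; this is legal since the first and last bags of a nice decomposition can be taken to be empty.

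Property~1 is immediate by setting the counters to the true per-bag tallies. For Property~2, fix a monotone satisfying multi-assignment $\sigma'$ that is consistent on the new variables, and for each original $x$ let $v_{\mathrm{in}}(x)$ (respectively $v_{\mathrm{out}}(x)$) denote $\sigma'(x,j)$ at the first (respectively last) bag containing $x$. Since the counters are globally well-defined and both endpoint counters are $0$, telescoping the per-transition contributions yields $|\{x:v_{\mathrm{in}}(x)=v\}|=|\{x:v_{\mathrm{out}}(x)=v\}|$ for every $v\in[B]$. Combined with $v_{\mathrm{in}}(x)\le v_{\mathrm{out}}(x)$ from monotonicity, induction upward on $v=1,2,\dots,B$ (the set of $x$ with $v_{\mathrm{out}}(x)=v$ is forced to coincide with the set of $x$ with $v_{\mathrm{in}}(x)=v$ among the remaining variables) forces $v_{\mathrm{in}}(x)=v_{\mathrm{out}}(x)$ for every $x$, and then monotonicity leaves $\sigma'(x,\cdot)$ constant along $x$'s lifetime. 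Thus $\sigma'$ is consistent on all variables, and its restriction to the original variables satisfies $\psi$ because each original constraint $c$ is satisfied at $B_{b_\psi(c)}$ by hypothesis.

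The main technical obstacle is honouring the arity-$4$ restriction while implementing the counter transitions. The standard ripple-carry adder comparing two $O(\log p)$-bit integers modified by a single indicator can be written as a linear chain of arity-$4$ constraints coupling the old bit, the new bit, the incoming and outgoing carry bits, and, at the bottom position, the indicator $[x=v]$, which is a unary relation on $x$ and thus costs no arity. Because all carry bits live only inside one transition gadget, the per-bag count of new variables remains $O(B\log p)$ as required by Property~3, while the per-bag count of original variables is unchanged at $p+1$. A mild additional point is that $b$ must be injective across all new constraints, which is easily arranged by duplicating each auxiliary bag a constant number of times without affecting the asymptotic bounds.
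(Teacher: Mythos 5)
Your high-level plan is the same counting idea the paper uses (and attributes to Iwata--Yoshida): binary counters per value, arity-$4$ ripple-carry updates, and a ``monotone plus conserved counts implies consistent'' argument. The construction and Properties 1 and 3 are fine modulo routine details. The gap is in the proof of Property 2, and it sits exactly at the point you dismiss as ``a mild additional point.'' Because $b$ must be injective, the $B$ bottom-bit update constraints for a single transition --- one per value $v$, each of which genuinely contains the original variable $x$ among its (at most four) variables --- must be mapped to $B$ \emph{distinct} bags, i.e.\ to distinct copies of your auxiliary bag. A monotone multi-assignment is free to change the value of $x$ across those copies. Concretely, for $B=3$, if $x$ takes values $1,2,2$ at the three copies to which the counters' constraints for $v=1,2,3$ are mapped, then the counter for value $1$ and the counter for value $2$ \emph{both} increment at $x$'s introduction. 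So the quantity your telescoping computes is not $|\{x: v_{\mathrm{in}}(x)=v\}|$; it is $|\{x: \sigma(x,b(c_v^x))=v\}|$ for $B$ different bags $b(c_v^x)$, and the identity $|\{x:v_{\mathrm{in}}(x)=v\}|=|\{x:v_{\mathrm{out}}(x)=v\}|$ on which your induction rests does not follow from the constraints. This is not a technicality: the whole difficulty of the lemma is that a cheating assignment can flip values between adjacent bags, and the copies you introduce for injectivity are adjacent bags.

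This is repairable, but it needs a real idea, not just bookkeeping. One fix is to route $x$'s value through a single fresh proxy variable $q_x$ (respectively $q'_x$) whose defining binary constraint $q_x=x$ is the \emph{unique} new constraint containing $x$ at the introduce (respectively forget) transition, mapped to the first (respectively last) bag containing $x$; since $q_x$ is a new variable it is consistently assigned by hypothesis, and then your telescoping identity and induction go through verbatim. Alternatively, one can keep your construction and redo the induction with the sets $A(x)=\{v:\sigma(x,I_v(x))=v\}$ and $D(x)=\{v:\sigma(x,F_v(x))=v\}$, using that monotonicity forces $A(x)\neq\emptyset$ (a non-decreasing sequence $w_1\le\cdots\le w_B$ in $[B]$ with $w_v\neq v$ for all $v$ would give $w_B>B$), that $\max A(x)\le\min D(x)$, and a ``minimal value first'' induction showing $A(x)=D(x)$ is a singleton for every $x$; one then still needs a final repair step arguing that the residual value changes of $x$ at bags where no constraint containing $x$ is mapped can be smoothed out without affecting satisfaction. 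The paper avoids all of this by counting each bag twice (a left and a right census) and arguing locally via the minimal problematic value; your global ``conserve from the empty first bag to the empty last bag'' variant is a genuinely different and arguably cleaner route, but only once the multi-bag issue above is addressed.
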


\begin{proof} We apply \cref{lem:nice} to the given path decomposition to make
it nice. The bags are numbered $B_1,\ldots, B_t$ and we have an injective
function $b$ that maps constraints to bags containing all their variables.  Let
$\rho = \lceil \log(p+2)\rceil$ be the number of bits needed to represent every
integer in $\{0,\ldots, p+1\}$.

For each $j\in [t]$ we introduce $2B(p+1)\rho$ variables, call them
$c_{i,j,v,s}^\alpha$, for $i\in[p+1]$, $v\in[B]$, $s\in[\rho]$, and
$\alpha\in\{L,R\}$.  The intuitive meaning of these variables is the following:
suppose that $B_j$ contains $|B_j|\le p+1$ variables $x_{j_1}, x_{j_2}, \ldots,
x_{j_{|B_j|}}$.  Then, for each $v\in [B]$, $\alpha\in\{L,R\}$, and $i\in
[p+1]$ the $\rho$ variables $c_{i,j,v,s}^\alpha$ encode a binary number which
tells us how many of the first $i$ variables of $B_j$ have taken value $v$.
Therefore, for $i=p+1$, the variables $c_{p+1,j,v,s}^\alpha$ are meant to tell
us for each $v\in[B]$ how many variables of $B_j$ have taken value $v$. We
introduce two such groups, one for $\alpha=L$ (left) and one for $\alpha=R$
(right), because we intend to measure this twice: once immediately to the left
of bag $B_j$ and once immediately to the right. This is supposed to account for
the possibility that a variable changes value in the middle of the counting
phase.

To simplify presentation, assume that the variables $c_{i,j,v,s}^\alpha$ take
values in $\{0,1\}$ rather than $[B]$. This can be achieved by adding
constraints ensuring that such variables only have two acceptable values in
$[B]$ and then renaming these two values to $\{0,1\}$. Furthermore, to unify
presentation, define for all $j\in[t], v\in[B], s\in[\rho]$,
$\alpha\in\{L,R\}$, a variable $c_{0,j,v,s}^\alpha$ and add constraints to
ensure it takes value $0$.  This is consistent with our intended meaning as
among the first $0$ variables of $B_j$, $0$ have taken value $v$.

Before we go on, let us state a useful (but trivial) technical tool for
handling variables that represent binary numbers.

\begin{claim}\label{claim:count}

For each $r\ge 1$, there exists a 3-CNF formula $\phi_{+1}^r$ with $2r$ input
variables $x_1,\ldots,x_r$ and $y_1,\ldots,y_r$ and at most $2r+1$ additional
(internal) variables, which can be constructed in time polynomial in $r$, and
which satisfies the following: $\phi_{+1}^r(x_1,\ldots,x_r,y_1,\ldots,y_r)$ is
satisfiable, if and only if $x_1,\ldots,x_r$ represent a binary number $X$,
$y_1,\ldots,y_r$ represent a binary number $Y$, and $X= Y+1$.

\end{claim}

\begin{claimproof} Suppose that bits are ordered so that $x_1,y_1$ are the least
significant bits.  The input variables satisfy the property $X=Y+1$ if and only
if there exists $i\in[r]$ which satisfies the following: (i) $x_i=1$ and
$y_i=0$ (ii) for all $i'<i, i'\in[r]$ we have $x_i=0$ and $y_i=1$ (iii) for all
$i'>i$ we have $x_i=y_i$. If we fix an $i$, we can express all these
constraints with $O(r)$ clauses of arity at most $2$, call such a collection of
clauses $\phi_i$.  Define now $r$ variables $z_i$, $i\in[r]$. For each $i$ we
add to all clauses of $\phi_i$ the literal $\neg z_i$. $\phi_{+1}^r$ contains
all clauses of each $\phi_i$ for $i\in[r]$.

Furthermore, we also add $r+1$ variables $w_i$ for $i\in\{0,\ldots,r\}$. We add
the clauses $(w_{r}), (\neg w_0)$ and for each $i\in[r]$ the clause $(\neg
w_{i} \lor w_{i-1} \lor z_i)$. Adding these clauses to $\phi_{+1}^r$ completes
the construction. For one direction, if $X=Y+1$, there exists $i\in[r]$ as
described in the previous paragraph. We satisfy the formula by setting $z_i$ to
True, all other $z_{i'}$ variables to False, and $w_{i'}$ to True whenever
$i'\ge i$. For the converse direction, in any satisfying assignment we must
have $z_i$ set to True for some $i\in[r]$. Then, $\phi_i$ is satisfied, which
implies that $X=Y+1$.  \end{claimproof}

We now add the constraints necessary to ensure that the new variables have
values consistent with their intended role:

\begin{enumerate}

\item For each $j\in[t]$, $v\in[B]$, $i\in[p+1]$, $\alpha\in\{L,R\}$,
construct a new variable $q_{i,j,v,\alpha}$, which takes value $1$ if the
$i$-th variable of $B_j$ takes value $v$, and takes value $0$ otherwise.  In
particular, if $B_j$ has fewer than $i$ variables, add a constraint forcing
$q_{i,j,v,\alpha}$ to value $0$.

\item For each $j\in[t]$, $v\in[B]$, $i\in[p+1]$, $\alpha\in\{L,R\}$,  add
constraints ensuring that if $q_{j,i,v,\alpha}=0$, then
$c_{i,j,v,s}^\alpha=c_{i-1,j,v,s}^\alpha$ for all $s\in[\rho]$.

\item For each $j\in[t]$, $v\in[B]$, $i\in[p+1]$, $\alpha\in\{L,R\}$, if $B_j$
has at least $i$ variables, add constraints ensuring that if
$q_{j,i,v,\alpha}=1$, then $c_{i,j,v,s}^\alpha$ encodes an integer that is one
more than the integer encoded by $c_{i-1,j,v,s}^\alpha$. This is done by
constructing the formula $\phi_{+1}^\rho$ of \cref{claim:count}, using input
variables $c_{i,j,v,s}^\alpha, c_{i-1,j,v,s}^\alpha$ for $s\in[\rho]$, and
fresh internal variables, and adding the literal $\neg q_{j,i,v,\alpha}$ to all
clauses.

\item For each $j\in[t]$, $v\in[B]$, and $s\in[\rho]$ we add a constraint
ensuring that $c_{p+1,j,v,s}^L = c_{p+1,j,v,s}^R$.

\item For each $j\in[t-1]$ such that $B_j=B_{j+1}$, for each $v\in[B]$ we add
constraints ensuring that for all $s\in[\rho]$ we have
$c_{p+1,j,v,s}^R=c_{p+1,j+1,c,s}^L$.

\item For each $j\in[t-1]$ such that $B_j\neq B_{j+1}$, let $x$ be the unique
element that appears in exactly one of $B_j, B_{j+1}$. Let $j',j''$ be such
that $j',j''\in\{j,j+1\}$, and $x\in B_{j'}$ but $x\not\in B_{j''}$. Let
$\alpha=R$ if $j'=j$ and $\alpha=L$ if $j'=j+1$, while
$\alpha'\in\{L,R\}\setminus\{\alpha\}$.  Suppose that $x$ is the $i$-th
variable of $B_{j'}$, for $i\in[p+1]$.  We add constraints that ensure that for
all $v\in[B]$, if $q_{i,j',v,\alpha}=0$, then
$c_{p+1,j,v,s}^{R}=c_{p+1,j+1,v,s}^{L}$, for all $s\in[\rho]$.  Furthermore,
for all $v\in[B]$, we add constraints ensuring that if $q_{i,j',v,\alpha}=1$,
then $c_{p+1,j',v,\alpha}$ encodes an integer one higher than the one encoded
by $c_{p+1,j'',v,\alpha'}$. Again this can be done by including a copy of the
formula $\phi_{+1}^\rho$ of \cref{claim:count}, using fresh internal variables,
where we add in all clauses the literal $\neg q_{i,j',v,\alpha}$.

\end{enumerate}

This completes the construction and it is clear that we can perform this in
polynomial time. In particular, we have added a polynomial number of new
constraints, and the arity of each constraint is at most $4$ (this is the case
for clauses of $\phi_{+1}^\rho$ to which we add a literal).  $\psi'$ is
obtained by adding variables and constraints to $\psi$, as promised.  Also, by
construction we have assumed that new variables take only two possible values. 

Let us also describe how we obtain a path decomposition of $\psi'$ by editing
the decomposition of $\psi$ and how to update the function $b$ so that it
remains injective and covers new constraints.  For each $j\in[t]$ we insert
immediately before $B_j$ a sequence of $p+1$ bags, call them $B^{L,1}_j,
B^{L,2}_j, \ldots, B^{L,p+1}_j$, and immediately after $B_j$ a sequence of
$p+1$ bags, call them $B^{R,1}_j, B^{R,2}_j, \ldots, B^{R,p+1}_j$.  We place
into each of these bags all the elements of $B_j$.  For $i\in[p+1]$,
$\alpha\in\{L,R\}$, bag $B^{\alpha,i}_j$ also contains all $3B\rho$ variables
$c_{i,j,v,s}^\alpha, c_{i-1,j,v,s}^\alpha, c_{p+1,j,v,s}^\alpha$, for $v\in[B],
s\in[\rho]$, all variables $q_{i,j,v,\alpha}$, for $v\in [B]$, and the internal
variables of the copy of $\phi_{+1}^\rho$ we used in step 3.  We also add the
$2B\rho$ variables $c_{p+1,j,v,s}^L, c_{p+1,j,v,s}^R$ to $B_j$. Furthermore, if
$B_j$ is an Introduce bag, and the element $x\in B_{j}\setminus B_{j-1}$ is the
$i$-th variable of $B_j$, add the $B$ variables $q_{i,j,v,L}$ to all bags in
$B^{L,1},\ldots,B^{L,p+1},B_j$. Symmetrically, if $B_j$ is a Forget bag, and
the element $x\in B_{j}\setminus B_{j+1}$ is the $i$-th variable of $B_j$, add
the $B$ variables $q_{i,j,v,R}$ to all bags in $B_j,B^{R,1},\ldots,B^{R,p+1}$.
We have now covered all constraints of steps 1,2,3, and 4, and each bag
contains the elements of an old bag plus at most $O(B\rho)$ new variables.

In order to cover the remaining constraints, for each $j\in[t-1]$ we insert
between $B_j^{R,p+1}$ and $B_{j+1}^{L,1}$ a bag containing $ B_j^{R,p+1}\cup
B_{j+1}^{L,1}$ plus the internal variables of the copy of $\phi_{+1}^\rho$ we
may have used in step 6. This covers the constraints of steps 5 and 6, and the
total number of new variables in any bag is $O(B\rho)$.  One can easily check
that the properties of path decompositions are satisfied. To update the
function $b$ to cover the new constraints, for each $j\in[t]$, we map the
constraints of step 1 that involve the $i$-th variable of $B_j$ to bags
$B_j^{L,i}$ and $B_j^{R,i}$. The function $b$ is now injective for the
constraints involving any of the old variables, as all other new constraints
involve only the new variables. For each remaining new constraint, find a bag
$B$ that contains all variables of $c$, insert after $B$ a copy of it, and set
$b$ to map $B$ to this copy. This ensures that $b$ is injective. In the
remainder, when we talk about the interval around a bag $B_j$ of the original
decomposition, we mean the bags $B_{j}^{L,1},\ldots, B_j^{L,p+1}, B_j,
B_j^{R,1},\ldots, B_j^{R,p+1}$, as well as the copies of these bags added to
make $b$ injective.

We now need to argue for correctness.  If $\psi$ is satisfiable, use the same
assignment for the variables of $\psi'$.  We can now assign to the
$c_{i,j,v,s}^\alpha$ variables values that correspond to their intended
meaning, that is, for fixed $i,j,v,\alpha$, the $\rho$ variables
$c_{i,j,v,s}^\alpha$ represent a binary number that tells us how many of the
first $i$ variables of $B_j$ have value $v$, and for each $i,j,v,\alpha$, the
variable $q_{i,j,v,\alpha}$ has value $1$ if the $i$-th variable of $B_j$ has
value $v$.  It is not hard to see that this satisfies $\psi'$, or more
precisely this can be extended to satisfy $\psi'$, including the clauses of all
copies of $\phi_{+1}^\rho$ we have used.

For the converse direction, suppose we have a satisfying multi-assignment
$\sigma$ which is, however, monotone for the new path decomposition and $b$ and
which is consistent for the new variables $c_{i,j,v,s}^\alpha$,
$q_{i,j,v,\alpha}$ and the internal variables of copies of $\phi_{+1}^\rho$.
We want to show that the assignment must also be consistent on the old
variables.  In order to achieve this, we first consider for each $j\in[t]$ the
interval of bags  around $B_j$, as we defined it above. We first want to
establish that every old variable $x$ is consistent in any such interval. Note
that if $x$ appears in one bag of such an interval, it must appear in $B_j$,
therefore it must appear in the whole interval. Suppose that $x$ is the $i$-th
element of $B_j$.  We observe that there are only three bags where the
assignment to $x$ possibly affects satisfaction: $B_j$ (where $b$ may map an
old constraint involving $x$), $B_j^{L,i}$, and $B_j^{R,i}$ (where the new
constraints of step 1 are mapped). Suppose that $\sigma$ assigns to $x$ values
$v_1, v_2, v_3\in [B]$ in bags $B_j^{L,i}, B_j, B_j^{R,i}$ respectively, with
$v_1\le v_2\le v_3$ because of monotonicity. If $v_1=v_3$, then we can actually
assign value $v_1$ to $x$ in the whole interval, without affecting satisfaction
of any constraint or the monotonicity of $\sigma$. If $v_1<v_3$, then $x$ is
problematic. Among all problematic variables, let $x$ be one with minimum
$v_1$. We will show that if such an $x$ exists, then some constraint is
violated. By the selection of $x$, any other variable of $B_j$ which is
assigned a value $v'<v_1$ anywhere in the interval around $B_j$ is actually
consistent in the whole interval, retaining a value strictly smaller than
$v_1$. Consider now the variables $c_{i,j,v_1,s}^L, c_{i,j,v_1,s}^R$ as two
counters, which are increased each time we encounter a variable with value
$v_1$.  Variable $x$ contributes to increasing this counter on the left,
because its value in $B^{L,i}_j$ is $v_1$, but it does not contribute on the
right.  Other variables of $B_j$ which fail to contribute on the left do so
either because their value at the moment they are considered is strictly
smaller than $v_1$, but then such variables do not contribute on the right
either, since they are consistently assigned; or because their value is
strictly larger than $v_1$, in which case they fail to contribute on the right
due to monotonicity.  We conclude that the number encoded by
$c_{p+1,j,v_1,s}^R$ must be strictly smaller than $c_{p+1,j,v_1,s}^L$,
violating the constraints of step 4.

Suppose then that the multi-assignment is locally consistent for old variables,
that is, all old variables in the interval around $B_j$ receive assignments
consistent for that interval, for all $j\in [t]$. We now want to argue that
constraints of steps 5 and 6 ensure that $\sigma$ must be consistent across
consecutive intervals. 

Fix a $j\in[t-1]$ and consider a variable $x\in B_j\cap B_{j+1}$ which is the
$i_1$-th variables of $B_j$ and the $i_2$-th variable of $B_{j+1}$.  In the
interval of bags strictly between $B_j$ and $B_{j+1}$, the value assigned to
$x$ affects satisfaction only in $B_j^{R,i_1}$ and $B_{j+1}^{L,i_2}$. If
$\sigma$ assigns to $x$ values $v_1<v_2$ in these two bags, we say that $x$ is
problematic. Select $x$ among the problematic variables so as to minimize
$v_1$. As a result, any old variable which is assigned a value $v'<v_1$
anywhere receives a consistent assignment in all intervals where it appears.
Since old variables are locally consistent around $B_j$ and $B_{j+1}$, the
variables $c_{p+1,j,v_1,s}^R$ and $c_{p+1,j+1,v_1,s}^L$ encode how many
variables of $B_j$ and $B_{j+1}$ respectively received value $v_1$ in the
interval around $B_j$, $B_{j+1}$ respectively. 

Recall that there is a unique vertex $y\in (B_{j+1}\setminus B_j) \cup
(B_j\setminus B_{j+1})$. Let $j',j''\in\{j,j+1\}$ such that $y\in B_{j'}$ and
$y\not\in B_{j''}$, and suppose $y$ is the $i$-th variable of $B_{j'}$. Let
$\alpha=R$ if $j'=j$ and $\alpha=L$ if $j'=j+1$, while
$\alpha'\in\{L,R\}\setminus\{\alpha\}$. Recall that we have inserted a unique
bag between $B_j^{R,p+1}$ and $B_{j+1}^{L,1}$ that contains
$q_{i,j',v,\alpha}$, for all $v\in [B]$. These variables encode whether $y$
received value $v$.

We now distinguish between several cases: 

\begin{itemize}

\item $q_{i,j',v_1,\alpha}=1$, that is, $y$ is assigned the same value as $x$,
according to $q_{i,j',v_1,\alpha}$. If $y\in B_j$, then $y$ contributes to the
counter $c_{p+1,j,v_1,s}^R$, to which $x$ also contributes. This counter is
supposed to decrease by one to satisfy the constraint of the last step, but it
decreases by two (due to forgetting $y$ and changing the value of $x$).
Similarly, if $y\in B_{j+1}$, the constraint of the last step stipulates that
the counter $c_{p+1,j+1,v_1,s}^L$ must be strictly higher than
$c_{p+1,j,v_1,s}^R$; however, the value of $c_{p+1,j+1,v_1,s}^L$ cannot achieve
this value, because $x$ no longer contributes to it, and other variables that
do not contribute to $c_{p+1,j,v_1,s}^R$, cannot contribute to
$c_{p+1,j+1,v_1,s}^L$ either. This is due to the selection of $v_1$ (variables
with lower values are consistent) or monotonicity (variables with higher
values, retain a higher value).

\item $q_{i,j',v_1,\alpha}=0$, that is, $y$ is assigned a different value from
$x$, according to $q_{i,j',v_1,\alpha}$. In this case the constraint of the
last step states that $c_{p+1,j,v_1,s}^R=c_{p+1,j+1,v_1,s}^L$ for all
$s\in[\rho]$. However, this cannot be achieved, as $x$ contributes to the
counters for $B_j, v_1$, but not to the counters for $B_{j+1}, v_1$, and other
variables which do not contribute to the former, also do not contribute to the
latter (again, by the selection of $v_1$ or monotonicity).

\end{itemize}

We have now established that a monotone multi-assignment must be consistent:
for any old variable $x$ and $j,j'$ such that $x\in B_{j}\cap B_{j'}$, $\sigma$
assigns the same value to $x$ in both $B_j$ and $B_{j'}$, and the intervals of
bags around them.  \end{proof}

By \cref{lem:weird} we obtain the following:

\begin{corollary}\label{cor:weird}

For all $\eps>0, B\ge 2$ we have the following. Suppose there is an algorithm
that takes as input a 4-\textsc{CSP} instance $\psi$, a partition of its
variables into two sets $V_1, V_2$, a path decomposition of its primal graph of
width $p$ where each bag contains at most $O(B\log p)$ variables of $V_2$, and
an injective function $b$ mapping each constraint to a bag that contains its
variables.  The algorithm decides if there exists a monotone satisfying
multi-assignment $\sigma$, which is consistent for the variables of $V_2$. If
the supposed algorithm runs in time $O((B-\eps)^p |\psi|^{O(1)})$, then the
\ppseth\ is false.

\end{corollary}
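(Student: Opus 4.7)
The plan is to derive this corollary directly from \cref{lem:weird}, using as a starting point the standard (non-promise) satisfiability problem for 4-\textsc{CSP} over $[B]$. By \cref{thm:csp} (for $B\ge 3$) or by \cref{thm:robust} (for $B=2$, where 4-\textsc{CSP} over $\{0,1\}$ is just 4-\textsc{SAT}), under the \ppseth\ this problem cannot be solved in time $(B-\eps')^{\pw(\psi)}|\psi|^{O(1)}$ for any $\eps'>0$, even when a path decomposition of the primal graph of the input is provided.

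Given such a 4-\textsc{CSP} instance $\psi$ over $[B]$ with a path decomposition of its primal graph of width $p$, I would first feed it to \cref{lem:weird}, obtaining a new 4-\textsc{CSP} instance $\psi'$ over $[B]$, a path decomposition of the primal graph of $\psi'$, and an injective function $b$ assigning each constraint of $\psi'$ to a bag containing all its variables. I partition the variables of $\psi'$ by letting $V_1$ be the variables of $\psi$ and $V_2$ be the new variables introduced by the construction. By the third guarantee of \cref{lem:weird}, each bag contains at most $p+1$ variables of $V_1$ and $O(B\log p)$ variables of $V_2$, so the new decomposition has width $p'\le p+O(B\log p)$, and, in particular, each bag contains $O(B\log p')$ variables of $V_2$. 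The resulting tuple $(\psi',V_1,V_2,b,\text{decomposition})$ is a legal input to the hypothetical promise algorithm.

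I then run the promise algorithm and return its verdict as the answer for $\psi$. For correctness, if $\psi$ is satisfiable then the first guarantee of \cref{lem:weird} provides a satisfying assignment for $\psi'$, which is in particular monotone and trivially consistent on $V_2$, so the algorithm answers YES; conversely, any YES answer yields a monotone satisfying multi-assignment consistent on $V_2$, which by the second guarantee of \cref{lem:weird} must actually be consistent on every variable, and its restriction to $V_1$ satisfies $\psi$. Thus the algorithm correctly decides the satisfiability of $\psi$.

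The only step requiring a calculation, and the one I view as the ``main obstacle'' in the proof, is verifying that the additive overhead of $O(B\log p)$ in the pathwidth translates into only a polynomial-in-$|\psi|$ factor in the running time. Plugging $p'$ into the assumed bound gives
\[
(B-\eps)^{p'}|\psi'|^{O(1)} \;=\; (B-\eps)^{p}\cdot (B-\eps)^{O(B\log p)}\cdot |\psi|^{O(1)} \;=\; (B-\eps)^{p}\cdot p^{O(B\log(B-\eps))}\cdot |\psi|^{O(1)},
\]
and since $B$ and $\eps$ are constants the middle factor is polynomial in $p$ (hence in $|\psi|$). Choosing any $\eps'\in(0,\eps)$ absorbs the polynomial factor and yields a $(B-\eps')^{p}|\psi|^{O(1)}$-time algorithm for 4-\textsc{CSP} over $[B]$, contradicting the starting hardness assumption and therefore the \ppseth.
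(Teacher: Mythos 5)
Your proposal is correct and follows essentially the same route as the paper's own proof: apply \cref{lem:weird}, feed the output to the hypothesized promise algorithm, and absorb the $O(B\log p)$ additive pathwidth overhead into a polynomial factor. The only cosmetic difference is that the paper bounds the extra factor by $B^{O(B\log p)}=|\psi|^{O(1)}$ and keeps $\eps'=\eps$, whereas you shrink to some $\eps'<\eps$; both work.
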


\begin{proof}

We begin with an instance $\phi$ of 4-\textsc{CSP} over alphabet $B$ (if $B=2$
consider simply an instance of \tsat) and a path decomposition of its primal
graph. If for some $\eps'$ we could decide $\phi$ in time
$O((B-\eps')^{\pw(\phi)}|\phi|^{O(1)})$, then the \ppseth\ would be falsified,
by \cref{thm:csp}. We first execute the algorithm of \cref{lem:weird} to obtain
in polynomial time an instance $\phi'$ whose variables are partitioned into
$V_1$ (variables of $\phi$) and $V_2$ (new variables). Deciding if a satisfying
multi-assignment with the desired conditions exists for $\phi'$ is equivalent
to deciding $\phi$ and this can be done by the supposed algorithm. The width of
the decomposition of the new instance is at most $p=\pw(\phi)+O(B\log
(\pw(\phi)))$. We therefore have $(B-\eps)^p<(B-\eps)^{\pw(\phi)}\cdot
B^{O(B\log(\pw(\phi)))} = (B-\eps)^{\pw(\phi)} |\phi|^{O(1)}$ obtaining an
upper bound on the running time of the algorithm on $\phi'$.  Setting
$\eps'=\eps$ gives the statement.  \end{proof}

\section{Evidence}\label{sec:evidence}

Given the results of \cref{thm:robust}, it is clear that the \ppseth\ is
implied by (and hence is at least as plausible as) the SETH: since for all CNF
formulas $\phi$ on $n$ variables, $\pw(\phi)\le n$, falsifying the \ppseth\
would imply the existence of an algorithm solving \textsc{SAT} (and indeed,
\textsc{Max-}$(t,w)$\textsc{-SAT}) in time $(2-\eps)^n$. It is, however, less
clear how much more believable one should consider the \ppseth. In this section
we give evidence that falsifying the \ppseth\ would require significantly more
effort than falsifying the SETH, because the \ppseth\ is implied by two other
well-established complexity assumptions, namely the $k$-Orthogonal Vectors
Assumption (\kova) and the Set Cover Conjecture (\scc).

Our results are summarized below in \cref{thm:evidence} and also in
\cref{fig:evidence}.

\begin{figure}
\centering
\input{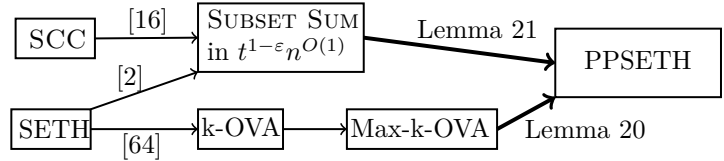} \caption{Summary of connections between other conjectures and
the \ppseth}\label{fig:evidence} \end{figure}

\begin{theorem}\label{thm:evidence} 

If the \ppseth\ is false, then both of the following hold:

\begin{enumerate}

\item There exist $k_0>1, \eps>0$, such that there exists an algorithm which
for all $k>k_0$ solves $k$-\textsc{Most-Orthogonal Vectors} in time
$O(n^{(1-\eps)k}d^{O(1)})$. In particular, this implies that the maximization
versions of the \seth\ and the \kova\ are false.

\item There exists $\eps>0$ and an algorithm that solves \textsc{Subset Sum} on
$n$ elements with target value $t$ in time $O(t^{1-\eps}n^{O(1)})$. In
particular, this implies that the \scc\ is false.

\end{enumerate}

\end{theorem}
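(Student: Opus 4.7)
The overall plan is, for each of the two conclusions, to give a fine-grained reduction that turns a hypothetical \ppseth-falsifying algorithm for a \textsc{SAT} variant into an algorithm contradicting the target conjecture. For both reductions I exploit the equivalent formulations provided by \cref{thm:robust}: a $(2-\eps)^{\pw}$-time algorithm for any of \tsat, \textsc{MaxW-SAT}, or \textsc{Max-SAT} parameterized by primal pathwidth would already falsify the \ppseth, so I have flexibility in what I reduce to.

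\emph{Part 1 ($k$-Most-Orthogonal Vectors).} Given $n$ vectors $v_1,\dots,v_n \in \{0,1\}^d$, introduce $k$ groups of $\lceil\log n\rceil$ boolean selector variables, where the $i$-th group encodes the index $s_i\in[n]$ of the $i$-th chosen vector. For each coordinate $j\in[d]$, attach a local gadget of $O(\log n)$ auxiliary variables that decodes the $k$ selectors and produces a ``reward bit'' that is true when at least one of the chosen vectors is $0$ at coordinate $j$; these bits drive the MaxW objective from \cref{thm:robust}, so that maximizing total weight corresponds to maximizing the number of orthogonal coordinates. Only the $k\lceil\log n\rceil$ selector variables are permanently live in the decomposition, while the $d$ coordinate gadgets are introduced, consumed, and forgotten one at a time; organizing this via \cref{lem:nice} yields a primal path decomposition of width $k\log n + O(\log n)$. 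A hypothetical $(2-\eps)^{\pw}N^{O(1)}$ algorithm then solves $k$-Most-Orthogonal Vectors in time $(2-\eps)^{k\log n+O(\log n)}d^{O(1)} = O(n^{(1-\delta')k} d^{O(1)})$ for some $\delta'>0$ and all $k$ beyond a constant $k_0$, falsifying the maximization versions of \seth\ and \kova.

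\emph{Part 2 (Subset Sum).} Given items $a_1,\dots,a_n$ and target $t$, create a choice variable $x_i$ for each item and $n+1$ blocks of $\lceil\log_2(t+1)\rceil$ bits encoding in binary the partial sums $S_0,\dots,S_n$, with constraints forcing $S_0=0$, $S_n=t$, and $S_i = S_{i-1}+x_i\cdot a_i$ for every $i$. The key technical point is to arrange the addition gadgets so that bits of $S_{i-1}$ are forgotten in tandem with introducing bits of $S_i$, processing from least to most significant bit and propagating a single carry; then any bag contains only the $\lceil\log_2(t+1)\rceil$ bits of the current partial-sum frontier together with $O(1)$ auxiliary variables (the active carry and the current pair of bits being added). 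The resulting primal pathwidth is $\log_2(t+1)+O(1)$, and applying the hypothetical $(2-\eps)^{\pw}$-time algorithm (after the pathwidth-preserving reduction to \tsat\ of \cref{thm:robust}) solves Subset Sum in time $t^{\log_2(2-\eps)}\cdot n^{O(1)} = O(t^{1-\delta} n^{O(1)})$ for $\delta = 1-\log_2(2-\eps)>0$, falsifying the \scc\ by~\cite{CyganDLMNOPSW16}.

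\emph{Main obstacle.} The principal difficulty in both parts is squeezing the primal pathwidth down to its absolute minimum, since any constant-factor slack in the leading term becomes a polynomial-factor slack in the resulting running time. In Part~1, an extra additive $\Theta(\log n)$ of variables kept live \emph{per coordinate} would inflate the pathwidth to $k\log n + \Theta(d\log n)$ and destroy all savings; routing the coordinate gadgets through a shared, reusable pool of $O(\log n)$ auxiliary variables avoids this. In Part~2, the naive encoding that keeps both $S_{i-1}$ and $S_i$ simultaneously live during each addition gives pathwidth $\approx 2\log t$, from which one obtains only a $t^{2\log_2(2-\eps)}$ algorithm---which exceeds $t$ for every $\eps<2-\sqrt{2}$---so the on-the-fly bit-forgetting trick is essential to bring the leading constant in the pathwidth to exactly $1$ and make the reduction work for every $\eps>0$ guaranteed by the negation of the \ppseth.
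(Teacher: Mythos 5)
Your proposal is correct and follows the paper's own proof essentially step for step: Part~1 is the construction of \cref{lem:kova} ($k\log n$ permanently live selector variables plus per-coordinate reward bits driving the \textsc{MaxW-SAT} objective, giving width $k\log n+O(k)$), and Part~2 is the rolling binary-adder encoding of \cref{lem:subsetsum}, including the same interleaved forget/introduce of partial-sum bits that keeps the width at $\log t+O(1)$ rather than $2\log t$. The only detail you leave implicit in Part~1 is that \textsc{MaxW-SAT} counts \emph{all} True variables, so the $d$ reward bits must be amplified (the paper replicates each $z_j$ into polynomially many chained copies and sets the target weight accordingly) so that they dominate the $O(d\log n)$ auxiliary and selector variables in the weight --- a routine padding step, but one your sketch needs to make the ``maximizing weight equals maximizing orthogonal coordinates'' claim literally true.
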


Before we proceed, let us provide some further context. Recall that the \kova\
states that for all $k\ge2, \eps>0$, there is no algorithm which takes as input
$k$ arrays, each containing $n$ $d$-dimensional boolean vectors and decides if
there exists a collection of $k$ vectors, one from each array, with inner
product equal to $0$. The maximization version of this problem, which we denote
as $k$-\textsc{Most-Orthogonal Vectors}, asks whether there exists a collection
of $k$ vectors such that the weight of their inner product contains at least
$t$ zeroes. It is by now a classical result in fine-grained complexity that the
\kova\ is implied by the SETH, going back to Williams \cite{Williams05}. The
\kova, and especially its restriction to $k=2$, is among the most prominent
complexity assumptions in fine-grained complexity.  We refer the reader to the
survey of Vassilevska Williams for more information \cite{williams2018some}.
Note that, since maximizing the number of zeroes in the inner product is harder
than deciding if the inner product can be made equal to $0$, the assumption
that the optimization problem is hard is implied by the \kova. However, it is
not clear if this is actually a harder problem, and in fact the two problems
may have the same complexity for all $k\ge2$, as indicated by the work of Chen
and Williams for $k=2$ \cite{ChenW19}. Gao, Impagliazzo, Kolokolova, and
Williams have shown that the restriction of the \kova\ to $k=2$ is equivalent
to the statement that there exists a first-order formula with $k$ quantifiers
that cannot be model checked in time $O(m^{k-1-\eps})$.

Our results show that falsifying the \ppseth\ is harder than falsifying (even
the maximization version of) the \kova. This is established in \cref{lem:kova}.
Hence, the \ppseth\ should be thought of as more plausible than not just the
SETH, but also than some of its most prominent weakenings.

The Set Cover Conjecture (\scc) states that there is no algorithm which can
solve \textsc{Set Cover} on instances with $n$ elements and $m$ sets in time
$(2-\eps)^nm^{O(1)}$. This conjecture was introduced in \cite{CyganDLMNOPSW16},
where it was shown that the symmetric version of the conjecture, where we
replace $m$ with $n$ and vice-versa, is in fact equivalent to the SETH. Despite
much effort, finding a connection between the \scc\ and the SETH (in either
direction) was left as an open problem in \cite{CyganDLMNOPSW16} and remains a
major question to this day.   One part of the work of \cite{CyganDLMNOPSW16}
was devoted to exploring the consequences of the SCC and, among other results,
it was shown that if the SCC is true, then no algorithm can solve
\textsc{Subset Sum} in time $t^{1-\eps}n^{O(1)}$, where $n$ is the number of
given integers and $t$ the target sum. This important problem was later
revisited by Abboud, Bringmann, Hermelin, and Shabtay \cite{AbboudBHS22}, who
showed that such a running time can also be ruled out if we assume the SETH
instead of the SCC.

\scc\ is currently viewed as a plausible complexity assumption, somewhat
orthogonal to the SETH -- indeed this is what motivated the authors of
\cite{AbboudBHS22} to reprove a bound known under one conjecture, using the
other conjecture as their basis. Here we show that the \ppseth\ is also implied
by the \scc, giving some further, seemingly independent evidence that the
problem we consider is hard. This is established in \cref{lem:subsetsum}, where
we in fact show that falsifying the \ppseth\ would provide a \textsc{Subset
Sum} algorithm fast enough to falsify both the \scc\ and the SETH.

We now proceed to the two lemmas which together imply \cref{thm:evidence}.

\subsection{Orthogonal Vectors}

\begin{lemma}\label{lem:kova} If there exist $\eps>0, c>0$ such that there exists an algorithm
that solves \textsc{MaxW-SAT} in time $((2-\eps)^{\pw}|\psi|^c)$, where $\psi$
is the input formula, then there exist $\eps'>0, k_0>1, c'>0$ and an algorithm
which for all $k>k_0$ solves $k$-\textsc{Most-Orthogonal Vectors} in time
$O(n^{(1-\eps')k}d^{c'})$.  \end{lemma}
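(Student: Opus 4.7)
The plan is to reduce a $k$-\textsc{Most-Orthogonal Vectors} instance (consisting of $k$ arrays $A_1,\ldots,A_k$ of $n$ vectors in $\{0,1\}^d$ with target $t$) to a \textsc{MaxW-SAT} instance $\psi$ of size $\mathrm{poly}(n,d,k)$ whose primal pathwidth is $k\log n+O(k)$. Writing $\delta=1-\log_2(2-\eps)>0$, feeding $\psi$ to the assumed algorithm runs in time $(2-\eps)^{k\log n+O(k)}|\psi|^c=2^{O(k)}\cdot n^{O(1)+k(1-\delta)}\cdot d^{O(1)}$; choosing any $\eps'\in(0,\delta)$ and $k_0$ large enough that $O(1)+k(1-\delta)\le k(1-\eps')$ for all $k\ge k_0$ collapses this to $O(n^{(1-\eps')k}d^{c'})$, as required (the $2^{O(k)}$ factor is a constant once $k$ is fixed).

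For the encoding, I introduce $k\log n$ index variables $y_{i,j}$ (with $i\in[k]$, $j\in[\log n]$), where $y_{i,\ast}$ is meant to encode the index of the vector chosen from $A_i$. For each coordinate $s\in[d]$ and each $i\in[k]$, I introduce a variable $w_{s,i}$ together with clauses enforcing that $w_{s,i}=1$ implies the chosen vector of $A_i$ has bit $0$ at position $s$: concretely, for every $v\in[n]$ whose $s$-th bit in $A_i$ is $1$, I add a single CNF clause of width $1+\log n$ expressing $\neg w_{s,i}\lor(y_i\ne v)$. Finally, for each coordinate $s$ I introduce $z_s$ together with the single clause $\neg z_s\lor w_{s,1}\lor\cdots\lor w_{s,k}$, which is the only constraint involving $z_s$.

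Because \textsc{MaxW-SAT} counts all True variables, I neutralize the contributions of the $y_{i,j}$ and $w_{s,i}$ variables by adding, for each such $x$, a shadow $\bar x$ and the two clauses forcing $\bar x=\neg x$; the combined contribution of these pairs is then exactly $k\log n+dk$ in every satisfying assignment. The only remaining free-weight variables are the $z_s$, so maximizing the weight will set $z_s$ to True precisely when some $w_{s,i}$ can be set to True, i.e.\ exactly when the chosen vector of some $A_i$ has a zero at position $s$---which is exactly the condition that coordinate $s$ is zero in the coordinate-wise AND of the chosen vectors. Hence $\psi$ has a satisfying assignment of weight at least $k\log n+dk+t$ iff the original \textsc{Most-OV} target is met.

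For the path decomposition, I keep a permanent block of bags containing all $k\log n$ index variables and process coordinates one by one: for each $s$, a short run of bags adds $w_{s,1},\ldots,w_{s,k},z_s$ (and their shadows) on top of the block and then removes them before moving to the next $s$. Every implication clause mentions only $w_{s,i}$ and $y_{i,\ast}$, and the $z_s$ clause mentions only $w_{s,\ast}$ and $z_s$, so all clauses are covered and the maximum bag size is $k\log n+O(k)$. The main technical concern, and the reason for the shadow-variable construction, is that without it the assignment-dependent count of True $w_{s,i}$ variables would pollute the \textsc{MaxW-SAT} objective and destroy the clean correspondence with the number of zero coordinates; the shadow trick makes the map between optima of the two problems a bijection up to a fixed additive constant.
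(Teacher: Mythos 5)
Your proposal is correct and follows essentially the same route as the paper's proof: the same $k\log n$ index variables, the same per-coordinate indicator variables $w_{s,i}$ with implication clauses of width $1+\log n$, the same $z_s$ clauses, and the same path decomposition (a permanent block of index variables with per-coordinate gadgets swapped in and out) and runtime analysis. The only deviation is how the weight objective is sanitized: the paper amplifies each $z_j$ by attaching a chain of $dn^2$ forced copies so that the contribution of all other variables becomes negligible, whereas you neutralize the other variables with complement pairs so the weight is a fixed constant plus the number of True $z_s$; both work, yours gives a smaller formula, and the only detail to add is that each shadow $\bar y_{i,j}$ of an index variable should be placed in its own extra copy of the block-bag (it has degree one in the primal graph) so the permanent block does not double in width.
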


\begin{proof}

Fix the $\eps,c$ for which we have a fast algorithm solving \textsc{MaxW-SAT}
as described. We are given $k$ arrays $A_1,\ldots, A_k$, each containing $n$
boolean vectors of dimension $d$. We are also given a target weight $d_t$ and
are asked to select for each $i\in[k]$ an element $v_i\in A_i$ such that the
inner product $\Pi_{i=1}^k v_i$ contains at least $d_t$ zeroes (note that
maximizing zeroes is of course equivalent to minimizing ones).

We will construct a CNF formula $\psi$ and a weight target $w_t$ with the
following properties:

\begin{enumerate}

\item $\psi$ has a satisfying assignment setting at least $w_t$ variables to
True, if there exist $k$ vectors, one from each array, whose inner product has
at least $d_t$ zeroes.

\item $\psi$ can be produced in time $O(d^2n^2)$ and we can also produce a path
decomposition of its primal graph of width at most $k\log n + O(1)$.

\end{enumerate}

Before we proceed, let us argue why the above imply the lemma. We will decide
the $k$-\textsc{Most-Orthogonal Vectors} instance by running the above
reduction and feeding the resulting instance to the supposed algorithm for
\textsc{MaxW-SAT}. The running time is at most $O((2-\eps)^{k\log n+O(1)}
(d^2n^2)^c)$. Set $\delta>0$ so that  $2-\eps = 2^{1-\delta}$, in particular,
$\delta = 1-\log(2-\eps)$. The running time is then at most
$O(n^{(1-\delta)k+2c}d^{2c})$. Set $k_0 = 4c/\delta$ and $c'=2c$. If $k>k_0$
the running time is at most $O(n^{(1-\delta/2)k}d^{c'})$, so if we set
$\eps'=\delta/2$ we are done. Observe that $\eps', k_0, \delta$ are absolute
constants that depend only on $\eps, c$.

Assume that $n$ is a power of $2$ (we can add dummy all-1 vectors in each array
to achieve this) and that the elements of each array $A_i$ are $v_{i,0},
v_{i,1},\ldots, v_{i,n-1}$.  We construct a CNF formula starting with $k\log n$
variables $x_{i,s}$, for $s\in[\log n]$. The intended meaning is that for a
fixed $i\in [k]$, the value of the variables of the group $x_{i,s}$ encodes the
index of the selected vector of $A_i$.

We now add the following:

\begin{enumerate}

\item For each $j\in[d]$ we add a variable $z_j$, intended to represent whether
the $j$-th bit of the inner product is zero. We also add $n^2$ variables
$z_{j,s}$, $s\in[dn^2]$. For each $s\in[dn^2-1]$ we add the clause $(\neg
z_{j,s+1}\lor z_{j,s})$. We also add the clause $(\neg z_{j,1}\lor z_j)$. We
have constructed $O(n^2d^2)$ clauses of size $2$.

\item For each $i\in[k]$ and $j\in [d]$ we add a variable $w_{i,j}$. Its
intended meaning is to represent if the $j$-th bit of the $i$-th selected
vector is zero. For each assignment $\sigma$ to the variables of $x_{i,s}$,
with $s\in[\log n]$, read the assignment as a binary number $s_1$. If vector
$v_{i,s_1}$ has a $1$ in position $j$, we add a clause that is falsified by the
assignment $\sigma$ and add to it the literal $\neg w_{i,j}$. We have
constructed $O(nd)$ clauses of size $O(\log n)$.

\item For each $j\in [d]$ add the clause $(\bigvee_{i\in [k]} w_{i,j} \lor \neg
z_j)$. These are $d$ clauses of arity $k$.

\end{enumerate}

We set the target weight to be $d_tdn^2$. This completes the construction.
Correctness is straightforward by observing that because of the clauses of the
first step, setting any of the $z_{j,s}$ to True implies that $z_{j}$ is True,
and in such a case it is optimal to set all the $z_{j,s}$ to True. Because the
number of variables outside of the variables $z_{j,s}$ is at most $O(k\log n +
d +kd)$, the weight of the satisfying assignment is dominated by the number of
$z_{j,s}$ variables that we set to True. We now see that for each assignment to
the $k\log n$ variables $x_{i,s}$, if we read this assignment in the intended
way, $w_{i,j}$ can be set to True if and only if the $i$-th selected vector has
a $0$ in position $j$ and it is optimal to do so, as it allows us to set $z_j$
to True.

Let us also argue why we can construct a path decomposition of the desired
width. Start width $d$ bags $B_1,\ldots, B_d$ and place all $x_{i,s}$ for
$i\in[k], s\in[\log n]$ in all bags. For each $i\in[k], j\in [d]$ we place in
bag $B_j$ the variables $w_{i,j}$ and $z_j$. Finally, for each $j$, after $B_j$
we insert a sequence of $dn^2$ bags $B_j^1,\ldots, B_j^{dn^2}$, each of which
contains all the elements of $B_j$. For each $s\in[dn^2-1]$ we add to $B_j^s$
the variables $z_{j,s},z_{j,s+1}$.  \end{proof}

\subsection{Subset Sum and Set Cover Conjecture}

\begin{lemma}\label{lem:subsetsum} If there exist $\eps>0, c>0$ such that \textsc{SAT} can be solved
in time $O((2-\eps)^{\pw}|\psi|^c)$, where $|\psi|$ is the input formula, then
there exists $\eps'>0$ and an algorithm that solves \textsc{Subset Sum} on $n$
elements with target value $t$ in time $O(t^{1-\eps'}n^{O(1)})$.  \end{lemma}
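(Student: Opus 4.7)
The plan is to reduce a \textsc{Subset Sum} instance $(a_1,\ldots,a_n,t)$ to a CNF formula $\psi$ whose primal pathwidth is essentially $\log_2 t$, so that the hypothesised $(2-\eps)^{\pw}|\psi|^c$ \textsc{SAT} algorithm yields a \textsc{Subset Sum} algorithm running in time $O(t^{1-\eps'}n^{O(1)})$, where $\eps':=1-\log_2(2-\eps)>0$.

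I would encode the standard prefix-sum DP in binary. Setting $k:=\lceil\log_2(t+1)\rceil$, introduce (i) variables $y_{i,j}$ for $i\in\{0,\ldots,n\}$ and $j\in\{0,\ldots,k-1\}$, meant to represent the $j$-th bit of the partial sum after deciding about items $a_1,\ldots,a_i$; (ii) select variables $s_i$ for $i\in[n]$ indicating whether $a_i$ is taken; and (iii) carry variables $c_{i,j}$. For each pair $(i,j)$ I add constant-arity clauses enforcing the full-adder identities $y_{i,j}=y_{i-1,j}\oplus(s_i\wedge(a_i)_j)\oplus c_{i,j-1}$ and $c_{i,j}=\mathrm{MAJ}(y_{i-1,j},s_i\wedge(a_i)_j,c_{i,j-1})$, together with unit clauses pinning $y_{0,\cdot}=0$ and $y_{n,\cdot}$ to the bits of $t$. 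Correctness is immediate from the correctness of binary addition.

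The crux is achieving primal pathwidth $k+O(1)$, and not the naive $2k+O(1)$. I would construct a path decomposition that processes items in order and, within the processing of item $i$, sweeps bit positions from the least significant $j=0$ to the most significant $j=k-1$. At the moment bit $j$ is being processed, the bag contains only the unprocessed high bits $y_{i-1,j},\ldots,y_{i-1,k-1}$, the already-computed low bits $y_{i,0},\ldots,y_{i,j}$, the two adjacent carries $c_{i,j-1},c_{i,j}$, and the select $s_i$---a total of exactly $k+O(1)$ variables. The enabling observation is that binary addition propagates only a single bit of state (the carry) between consecutive bit positions, so $y_{i-1,j}$ can be forgotten as soon as $y_{i,j}$ and $c_{i,j}$ have been produced.

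Feeding this $\psi$ to the supposed \textsc{SAT} algorithm then takes time $(2-\eps)^{k+O(1)}\cdot|\psi|^c = O(t^{\log_2(2-\eps)}(n\log t)^{O(1)}) = O(t^{1-\eps'}n^{O(1)})$, as required. The main obstacle is precisely the pathwidth bookkeeping described above: a less careful encoding in which the vectors $y_{i-1,*}$ and $y_{i,*}$ must coexist in the same bag would yield pathwidth $\approx 2k$ and only produce a nontrivial speedup for $\eps>2-\sqrt{2}$, whereas exploiting the locality of the carry is what makes the reduction work uniformly for every $\eps>0$.
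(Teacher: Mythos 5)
Your proposal is correct and matches the paper's proof essentially step for step: the paper also encodes the prefix-sum DP with selection variables, partial-sum bit variables, and carry variables joined by full-adder clauses, and obtains pathwidth $\log t + O(1)$ by the same bit-by-bit sweep that keeps only one "old" and one "new" copy of each exchanged bit (plus adjacent carries) in a bag. The only cosmetic difference is that you should take $\eps'$ slightly smaller than $1-\log_2(2-\eps)$ to absorb the $(\log t)^{O(1)}$ factor, exactly as the paper does.
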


\begin{proof}

Fix the $\eps,c$ for which a fast satisfiability algorithm exists.  We are
given a collection of $n$ positive integers $v_1,\ldots, v_n$ and a target
value $t$ and are asked if there exists a subset of the given integers whose
sum is exactly $t$. We will construct a CNF formula $\psi$ that satisfies the
following:

\begin{enumerate}

\item $\psi$ is satisfiable if and only if a subset of the given integers has
sum exactly $t$.

\item $\psi$ can be constructed in time $O(n\log t)$.

\item We can construct in the same amount of time a path decomposition of the
primal graph of $\psi$ of width $\log t + O(1)$.

\end{enumerate}

Before we proceed, let us explain why the above imply the lemma. We will decide
the given \textsc{Subset Sum} instance by producing $\psi$ and feeding it into
the supposed \textsc{SAT} algorithm. The running time of this procedure is at
most $O((2-\eps)^{\log t} (n\log t)^c) = O(t^{\log(2-\eps)}(\log^{c}t)n^c)$.
Set $\eps' = (1-\log(2-\eps))/2>0$ and for sufficiently large $t$ we have
$t^{\eps'}>\log^{c}t$, so we get the promised running time as
$t^{1-2\eps'}\log^{c}t<t^{1-\eps'}$.

Let us now explain how to construct $\psi$. Let $b=\lceil\log t\rceil$ be the
number of bits required to represent $t$. We construct the following variables:

\begin{enumerate}

\item $n$ variables $s_1,\ldots, s_n$, such that $s_i$ for $i\in[n]$ is meant
to indicate whether the solution selects integer $v_i$.

\item $nb$ variables $x_{i,s}$, for $i\in[n]$, $s\in[b]$. For a given $i\in[n]$
the variables $x_{i,s}$ are supposed to encode the sum of the elements that
have been selected among the integers $v_1,\ldots, v_i$. To unify the
construction of the rest of the formula, we also define $b$ variables
$x_{0,s}$, for $s\in[b]$ and add the clauses $(\neg x_{0,s})$ for all
$s\in[b]$. 

\item $nb$ variables $c_{i,s}$, for $i\in[n]$, $s\in[b]$. For a given
$i\in[n]$, the $c_{i,s}$ variable is supposed to calculate the $i$-th carry bit
we obtain when adding $v_i$ to the current sum, which is encoded in
$x_{i-1,s}$. Again, to unify presentation we also construct for each $i\in[n]$
a variable $c_{i,0}$ and add the clause $(\neg c_{i,0})$.

\end{enumerate}

Intuitively, the variables above are meant to capture the execution of a
non-deterministic algorithm which goes through the $n$ elements, at each step
guesses if the element is selected (this guess is encoded in $s_i$) and keeps
track of the current sum using $b$ bits. If the current element is selected the
algorithm needs to update the current sum by adding to it the current element.

In the following, when we view an integer $v_i$ as a binary number, we suppose
that the bits are numbered $\{1,\ldots,b\}$, where bit $1$ is the least
significant bit (that is, right to left). We now add the following clauses:

\begin{enumerate}

\item For all $i\in[n]$ and $s\in[b]$ we add the clauses $(s_i\lor \neg
x_{i-1,s}\lor x_{i,s})$ and $(s_i\lor \neg x_{i,s} \lor x_{i-1,s})$.
Intuitively, these ensure that if $s_i$ is false, then $x_{i,s-1}$ and
$x_{i,s}$ encode the same integer.

\item For all $i\in[n]$ we do the following. For all $s\in[b]$, if the $s$-th
bit of $v_i$ is $0$, then we add clauses encoding the constraint $x_{i,s} =
x_{i-1,s} \oplus c_{i,s-1}$, that is, we add the four clauses corresponding to
assignments that violate this constraint. Furthermore, we add the clause $(\neg
c_{i,s})$. In all the clauses constructed for some $i\in[n]$ in this step we
add the literal $\neg s_i$.

\item For all $i\in[n]$ we do the following. For all $s\in[b]$, if the $s$-th
bit of $v_i$ is $0$, then we add clauses encoding the constraint $x_{i,s} =
x_{i-1,s} \oplus c_{i,s-1}\oplus 1$, that is, we add the four clauses
corresponding to assignments that violate this constraint. We also add clauses
to encode the constraint $c_{i,s}\leftrightarrow (x_{i-1,s}\lor c_{i,s-1})$. In
all the clauses constructed for some $i\in[n]$ in this step we add the literal
$\neg s_i$.

\item For all $i\in[n]$ we add the clause $(\neg c_{i,b})$. Informally, this
clause ensures that we never have an overflow.

\item For $s\in [b]$, if the $s$-th bit of $t$ is $0$ we add the clause $(\neg
x_{n,s})$ otherwise we add the clause $(x_{n,s})$.

\end{enumerate}

We have constructed $O(nb)=O(n\log t)$ clauses of constant arity, and the
construction can be performed in time linear in the output size. Let us first
argue for correctness.  Suppose that a set of integers of sum $t$ exists.  We
set the $s_i$ variables to encode the integers included in this set, and the
variables $x_{i,s}$ to encode the sum of the integers selected among the first
$i$ integers of the input.  Clauses of the first step are already satisfied by
this assignment, as are the clauses of the last step.  For a value $i\in[n]$
for which $v_i$ was not selected, we set $c_{i,s}$ to False for all $s\in[b]$
and all clauses added in steps 2,3,4 for this $i$ are satisfied. For the
remaining values, we set $c_{i,s}$ to $1$ if the result of adding the $s$ least
significant bits of $v_i$ to the $s$ least significant bits of the sum we have
formed using the $i-1$ first integers is at least $2^{s}$, that is, if we have
a carry at bit $s$ when adding $v_i$ to the current sum.  This satisfies the
cluses of steps 2 and 3, and clauses of step 4 are satisfied if the sum of the
selected integers is indeed $t$, in which case we should never have an
overflow. For the converse direction, using the same interpretation of the
variables, we claim that any satisfying assignment that sets the variables
$x_{i,s}$ to encode some binary value $t_i$ implies that there is a subset of
$v_1,\ldots,v_i$ with sum exactly $t_i$. This can be shown via induction and
implies that, since by step 5 the variables $x_{n,s}$ encode $t$, we have a
solution for \textsc{Subset Sum}.  

Let us now argue about the pathwidth. Variables $x_{0,s}$ and $c_{i,0}$ can be
omitted: we have added these variables to simplify the presentation of the
construction, but since their values are forced, we can assign to them the
corresponding values and simplify the formula. Note that this simplification
procedure will also remove variables $c_{1,s}$, which is natural since when we
(possibly) add $v_1$ to the initial sum (which is $0$), all carries will be
$0$.  Construct $n$ bags $B_1,\ldots, B_n$ and place into each $B_i$ the $b$
variables $x_{i,s}$, as well as $s_i$.  For each $i\in[n-1]$ we insert a
sequence of $t-1$ bags between $B_i$ and $B_{i+1}$, call them $B_i^1,\ldots,
B_i^{t-1}$, all of which contain $s_i$. For $s\in[t-1]$ we add into $B_i^s$ the
variables $c_{i+1,s}, c_{i+1,s-1}, c_{i+1,s+1}$, for all $s'\in[b]$ with $s'\le
s$ we add to $B_i^s$ the variable $x_{i+1,s'}$ and for all $s'\in [b]$ with
$s'\ge s$ we add to $B_i^s$ the variables $x_{i,s'}$.  Informally, we are
exchanging at each step a variable $x_{i,s}$ with a variable $x_{i+1,s}$, while
making sure that these two variables appear together in some bag together with
$c_{i+1,s}, c_{i+1,s-1}$. Observe that all constructed bags have size $b+O(1)$,
as desired.  \end{proof}

\section{Single-exponential FPT problems}\label{sec:fpt}

In this section we consider a number of fundamental algorithmic problems,
parameterized by a linear structure (pathwidth or linear clique-width), such as
\textsc{Coloring}, \textsc{Independent Set}, and \textsc{Dominating Set}.  For
all the problems we consider, fine-grained tight results are already known, if
one assumes the SETH. For instance, for $k$-\textsc{Coloring}, it is known that
the problem can be solved in time $k^{\pw}n^{O(1)}$ and in time
$(2^k-2)^{\lcw}n^{O(1)}$, while obtaining algorithms with any improvement on
the bases of the exponentials is impossible under the SETH
(\cite{Lampis20,LokshtanovMS18}). Indeed, starting with the pioneering work of
Lokshtanov, Marx, and Saurabh \cite{LokshtanovMS18}, such fine-grained tight
bounds are today known for a plethora of standard problems parameterized by
pathwidth and other structural parameters, such as for example
\textsc{Independent Set} and its generalization to larger distances called
$d$-\textsc{Scattered Set} (\cite{LokshtanovMS18,KatsikarelisLP22}),
\textsc{Dominating Set} and its generalization to large distances
(\cite{BorradaileL16,KatsikarelisLP19}), \textsc{Hamiltonicity}
(\cite{CyganKN18}), \textsc{Max Cut} (\cite{LokshtanovMS18}), \textsc{Odd Cycle
Transversal} (\cite{LokshtanovMS18,HegerfeldK22}), \textsc{Steiner Tree},
\textsc{Feedback Vertex Set} (\cite{CyganNPPRW22}), \textsc{Bounded Degree
Deletion} (\cite{LampisV23}), and many others for various structural parameters
(\cite{BojikianCHK23,DubloisLP21,DubloisLP22,GanianHKOS22,HanakaKLOS20,HegerfeldK23,HegerfeldK23b,JaffkeJ23,OkrasaR21}).
We note in passing that one striking aspect of the current state of the art is
that for almost all these problems, the lower bound for the parameter that
represents a linear structure (pathwidth, linear clique-width) matches the
complexity of the best algorithm for the corresponding \emph{tree-like}
structure (treewidth, clique-width), giving the impression that the two
complexities coincide.

Our goal in this section is to revisit and strengthen these results. As
mentioned, one weakness of the state of the art is that all lower bound results
are known only as implications of the SETH. Therefore, if the SETH turns out to
be false, we know nothing about the complexity of all the aforementioned
problems, and if we obtain an improved algorithm for one of them, it would not
a priori be sufficient to improve the complexity of any other problem in the
list. This seems rather disappointing, because the best known algorithms for
most of these problems are essentially facets of the same algorithm: perform
straightforward DP on the path decomposition.

The solution we propose to this conundrum is to base our results not on the
SETH, but on the \ppseth. In addition to basing all lower bounds on a
hypothesis that seems more plausible (as indicated by the results of
\cref{sec:evidence}), this has the major benefit of allowing us to obtain
reductions in both directions. As a result, all lower bounds we present are not
mere implications of the \ppseth, but in fact equivalent reformulations. This
means that improving the standard DP algorithm for any of the problems we
consider would be \emph{both necessary and sufficient} for falsifying the
\ppseth\ and for improving the DP algorithms for \emph{all} the problems
considered. The end result is a much more solid understanding of the
fine-grained complexity of problems which are FPT parameterized by pathwidth or
linear clique-width.

As mentioned above, the list of problems on which we could attempt to apply
this approach is quite long. Our goal in this section is not to be exhaustive,
but rather to showcase the power and flexibility of the framework built on top
of the \ppseth, while investigating the limits of this approach.  To summarize
our results, we give the following list of problems which we show to be
equivalent to the \ppseth\ in this section.

\begin{enumerate}

\item For any $k\ge 3, \eps>0$, solving $k$-\textsc{Coloring} in time
$(k-\eps)^{\pw}n^{O(1)}$. (\cref{lem:coloringpw})

\item For any $k\ge 3, \eps>0$, solving $k$-\textsc{Coloring} in time
$(2^k-2-\eps)^{\lcw}n^{O(1)}$. (\cref{lem:coloringlcw1} and
\cref{lem:coloringlcw2})

\item For any $d\ge 2, \eps>0$, solving $d$-\textsc{Scattered Set} in time
$(d-\eps)^{\pw}n^{O(1)}$. (\cref{lem:scattered1} and \cref{lem:scattered2})

\item For any $\eps>0$, solving \textsc{Set Cover} in time
$(2-\eps)^{\pw}(n+m)^{O(1)}$, where $n$ is the size of the universe, $m$ the
number of sets, and $\pw$ denotes the pathwidth of the incidence graph.
(\cref{lem:setcover})

\item For any $r\ge 2, \eps>0$, solving $r$-\textsc{Dominating Set} in time
$(2r+1-\eps)^{\pw}n^{O(1)}$. (\cref{lem:domset1} and \cref{lem:domset2})

\end{enumerate}

It was already known that the SETH implies that no algorithm achieving any of
the above exists, so our work adapts previous reductions to the new setting,
strengthening these lower bounds by showing they are equivalent to each other
and to a more plausible assumption (the \ppseth), and that they are implied by
the \scc\ and the \kova\ (cf. \cref{thm:evidence}).  It is important to note,
however, that the work we invested in \cref{sec:csprobust} and
\cref{sec:cspmulti} pays off handsomely in this section.  Recall that we have
given equivalent reformulations of the \ppseth\ for \textsc{CSP}s with larger
alphabet, via a promise problem that only requires monotone assignments in the
negative case (\cref{cor:weird}). As a result, our reductions are significantly
simpler and shorter than the previous reductions, even though we mostly just
recycle existing gadgets.  We hope that the simplification we contribute in
this work will help to ``democratize'' the field of SETH and \ppseth-based
lower bounds and allow for the easier discovery of more \ppseth-equivalent
parameterized problems. Prior to this work, the only problem for which
equivalence to the \ppseth\ followed from previous work was
$d$-\textsc{Scattered Set} for $d=2$ (that is, \textsc{Independent Set}), given
in the work of Iwata and Yoshida (\cite{IwataY15}). Our results generalize
their work and allow us to go much further than is likely possible in their
setting, as we explain below.

\subparagraph*{Applicability} In this section we focus on only a few
fundamental problems to showcase the applicability of the \ppseth, mostly
because it would not be practical to try to be more exhaustive. The selection
is, however, not completely arbitrary and is meant to also trace the limits of
what we expect can be done. Our results are much more general and cover more
problems than those of Iwata and Yoshida \cite{IwataY15} who based their
investigation on the treewidth version of the \ppseth\ -- indeed the only
non-\textsc{SAT}-related problem they consider is \textsc{Independent Set}.
There are two main reasons why we are able to go farther, one that is mostly
practical and one that runs quite a bit deeper. The shallow practical reason is
that our preparatory work on \textsc{CSP}s makes the formulation of reductions
for problems with inconvenient complexity (that is, where the base is not a
power of $2$) much more tractable, and the formalization of the promise problem
of \cref{cor:weird} allows one to effortlessly generalize the technique used in
\cite{IwataY15} for \textsc{Independent Set} to other problems. In other words,
as far as we can see, there is no fundamental reason why it was not already
known from  \cite{IwataY15} that, for example, solving $d$-\textsc{Scattered
Set} in time $(d-\eps)^{\tw}n^{O(1)}$ is equivalent to the treewidth version of
the \ppseth\ -- no fundamental reason, that is, except for the fact that
formulating such a reduction would be very tedious and time-consuming, because
we would have to encode in the problem instance the machinery needed both to
obtain the right base, and to perform the counting part that turns monotone
assignments to consistent assignments. In this sense, one contribution of this
work is that we simplify the framework sufficiently to make it practical to
obtain many such results without superhuman effort.

There is, however, also a deeper reason why we are able to obtain more results
than \cite{IwataY15} and this is related to the fact that \cref{thm:robust} is
flexible enough to cover also the parameterization by the \emph{incidence}
pathwidth. In the case of treewidth, the ``obvious'' DP for parameter $\tw^I$
has complexity $3^{\tw^I}|\psi|^{O(1)}$ and obtaining an algorithm with the
(likely) optimal complexity $2^{\tw^I}|\psi^{O(1)}|$ requires advanced
techniques, such as fast subset convolution (see \cite{CyganFKLMPPS15}, Chapter
11.1).  As a result, it is not at all clear if the hypothesis that \tsat\
cannot be solved with dependence $(2-\eps)^{\tw}$ is equivalent to the
hypothesis that the dependence cannot be $(2-\eps)^{\tw^I}$. This may seem like
a minor inconvenience, but it implies that for many domination and covering
type problems, it would be very challenging (if not impossible) to obtain an
equivalence between a lower bound for their complexity parameterized by
treewidth and the treewidth version of the \ppseth. Hence, a reason we manage
to show that more problems are equivalent to the \ppseth\ is that covering
problems, which are indeed equivalent to the \ppseth\ in the pathwidth setting,
require advanced techniques which are a priori not easy to translate in the
treewidth setting.

On an intuitive level, a takeaway from this discussion is that the \ppseth\ is
expected to be equivalent to improving the complexity of problems for which the
optimal algorithm is believed to be ``simple DP''. Under this assumption, it is
natural that many domination-type problems drop out of the equivalence class of
the corresponding hypothesis for treewidth, because the best algorithm for this
parameter is not ``simple DP''. In other words, what fails is not the reduction
from \textsc{SAT} to the corresponding problem, but the reduction from the
problem to \textsc{SAT} parameterized by the width of the primal graph. This
unmasks a subtle fine-grained distinction between linear and tree-like
parameters, which exists for many problems and is hidden by the fact that the
running times of the best algorithms parameterized by pathwidth and treewidth
coincide.

\subparagraph*{Limits} Given the above discussion, we would in general expect
the \ppseth\ to give tight equivalent lower bounds for the complexity of any
problem for which, when parameterized by a linear structure width, the best
algorithm is simple DP. For pathwidth and linear clique-width, where fast
subset convolution techniques are usually not relevant, this covers a large
swath of natural problems, for example we see no major obstacle stopping us
from extending our results for $k$-\textsc{Coloring} to $k$-\textsc{Colorable
Deletion} (as in \cite{HegerfeldK22}), or from \textsc{Independent Set} to
\textsc{Bounded Degree Deletion} (as in \cite{LampisV23}).  It is, however,
important to discuss also a class of problems which \emph{do not} fall in this
category.  These are the connectivity problems, such as \textsc{Steiner Tree},
\textsc{Feedback Vertex Set}, and \textsc{Connected Dominating Set}, which are
treated using advanced techniques, such as Cut\&Count \cite{CyganNPPRW22}. If
we consider such a problem, the techniques of this section would likely easily
allow us to adapt the current reduction and show that, say, obtaining a
$(3-\eps)^{\pw}n^{O(1)}$ algorithm for \textsc{Steiner Tree} (to take an
example from \cite{CyganNPPRW22}) would falsify the \ppseth, rather than the
SETH. However, the blocking point to obtain \emph{equivalence} to the \ppseth,
would be reducing \textsc{Steiner Tree} back to \textsc{SAT}.  Adapting the
Cut\&Count algorithm would more naturally provide a reduction to
$\oplus$-\textsc{SAT}. Is the hypothesis that this problem is hard equivalent
to the \ppseth? This seems likely to be a very difficult question, so we
suspect that proving \ppseth-equivalence for connectivity problems
parameterized by linear width parameters is, if not impossible, at least
currently out of reach. Note that this is akin to the situation for the class
XNLP: \textsc{Feedback Vertex Set} parameterized by pathwidth$/\log n$ is known
to be XNLP-hard, but not known to be contained in XNLP \cite{BodlaenderGJJL22}.

\subsection{Coloring}\label{sec:coloring1}

In this section we consider the complexity of the $k$-\textsc{Coloring}
problem, for all fixed $k\ge 3$. Recall that it is already known that if this
problem admits algorithms of complexity $(k-\eps)^{\pw}n^{O(1)}$ or
$(2^k-2-\eps)^{\lcw}n^{O(1)}$, for any $k\ge 3, \eps>0$, then the SETH is false
\cite{Lampis20,LokshtanovMS18}, while algorithms essentially matching these
bounds are known for treewidth and clique-width respectively.  Our goal is to
strengthen these lower bounds by showing that they are \emph{equivalent} to the
\ppseth, that is, algorithms beating these bounds can be obtained if and only
if the \ppseth\ is false.

We will need to present reductions in both directions. Since \textsc{Coloring}
is a CSP problem, showing that if the \ppseth\ is false, then \textsc{Coloring}
admits a faster algorithm will be the easy direction. For the other direction,
we will rely on several simple tricks from previous reductions. 

First, we will actually reduce \textsc{CSP} (with an appropriate alphabet size)
to \textsc{List Coloring}, rather than \textsc{Coloring}, but using everywhere
lists which are subsets of $[k]$. If $k$ is fixed, \textsc{List Coloring} can
then be reduced to \textsc{Coloring} in a way that preserves the pathwidth (by
adding to the graph a $k$-clique, as in \cite{LokshtanovMS18}) and linear
clique-width (Lemma 4.1 of \cite{Lampis20}).

Second, we will use a simple gadget, called a weak edge in \cite{Lampis20}. In
a graph $G=(V,E)$ for two (not necessarily distinct) values $i_1,i_2\in[k]$, an
$(i_1,i_2)$-weak edge between two vertices $v_1,v_2\in V$ is a path consisting
of three new internal vertices linking $v_1,v_2$, where appropriate lists are
assigned to the internal vertices so that the only color combination forbidden
in $(v_1, v_2)$ is $(i_1, i_2)$. In other words, the internal vertex lists are
appropriately constructed so that if we assign colors $i_1,i_2$ to $v_1,v_2$
the coloring fails, while in all other cases the internal vertices can be
colored. We refer the reader to Definition 4.2 and Lemma 4.3 of \cite{Lampis20}
for the exact definition and proof of correctness of this gadget.

Armed with these basic tools, we are now ready to establish our results.

\begin{lemma}\label{lem:coloringpw} 

For all $k\ge3, \eps>0$ we have the following: there exists an algorithm
solving $k$-\textsc{Coloring} in time $O((k-\eps)^{\pw}n^{O(1)})$ if and only
if the \ppseth\ is false.

\end{lemma}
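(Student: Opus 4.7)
The lemma asserts an equivalence, so I would establish both directions. The forward direction is immediate from \cref{thm:csp}: $k$-\textsc{Coloring} is precisely the \textsc{2-CSP} problem over alphabet $[k]$ whose constraints are edges interpreted as inequalities, so if the \ppseth\ is false then \cref{thm:csp} directly supplies an algorithm running in time $O((k-\eps')^{\pw}|\psi|^{O(1)})$ for some $\eps'>0$, which dominates the target bound for every $\eps\le \eps'$.

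For the reverse direction, the plan is to start from \cref{cor:weird} with $B=k$ and reduce the promise problem it identifies to $k$-\textsc{List Coloring} with all lists equal to $[k]$, then lift to $k$-\textsc{Coloring} via the standard trick of attaching a $k$-clique and connecting each list-vertex $v$ to every clique vertex whose color is not in $v$'s list (see \cite{LokshtanovMS18}), which preserves pathwidth up to an additive $k$. Concretely, given a 4-\textsc{CSP} instance $\psi$ over $[k]$ with variables partitioned into $V_1,V_2$, a path decomposition of width $p$ whose every bag contains at most $O(k\log p)$ variables of $V_2$, and an injective function $b$ mapping constraints to bags, I would introduce a single vertex $v_x$ with list $[k]$ for each $x\in V_2$ (automatically consistent across the decomposition, since only one vertex represents $x$) and a separate vertex $v_{x,j}$ with list $[k]$ for each $x\in V_1$ and each bag $B_j\ni x$, whose color encodes the value assigned to $x$ at the slot associated with $B_j$ in a multi-assignment.

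Monotonicity on $V_1$-variables would be enforced by placing between every pair of consecutive bags $B_j,B_{j+1}$ containing a common $V_1$-variable $x$ one $(i_1,i_2)$-weak edge of \cite{Lampis20} between $v_{x,j}$ and $v_{x,j+1}$ for every pair $i_1>i_2$ in $[k]$, ruling out precisely the strictly-decreasing transitions. Constraints would be handled by first applying \cref{lem:csparity} locally to each 4-ary constraint, producing $O(1)$ binary constraints and $O(1)$ fresh auxiliary variables that I would place in $V_2$ (so the bound on $V_2$-variables per bag grows only by a constant factor), and then, for every resulting binary constraint and every forbidden pair of values, placing the corresponding weak edge between the two associated vertices. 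Correctness is the standard weak-edge argument: proper $k$-colorings of the constructed graph correspond exactly to monotone satisfying multi-assignments that are consistent on $V_2$, and conversely any such multi-assignment extends to a proper coloring.

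The main thing to verify is the pathwidth. Each bag of the decomposition of the constructed graph can be arranged to contain the $|V_1\cap B_j|\le p+1$ representative vertices of $V_1$, the $O(k\log p)$ shared $V_2$-vertices, the $k$ clique vertices, plus $O(1)$ weak-edge internal vertices, since these can be introduced and forgotten locally between consecutive bag-pairs. So the total width is $p+O(k\log p)$, and since $k$ is fixed, $(k-\eps)^{O(k\log p)}=|\psi|^{O(1)}$; running the hypothesized algorithm on the constructed graph then takes time $(k-\eps)^{p}|\psi|^{O(1)}$, contradicting \cref{cor:weird}. The main obstacle is not the gadgetry, which closely mirrors \cite{LokshtanovMS18}, but rather exploiting the preparatory machinery of \cref{cor:weird} so that the reduction need only enforce monotonicity (cheaply, through local weak edges) instead of full consistency; without this, forcing consistency while keeping the pathwidth blow-up at $O(k\log p)$ would be the technically delicate step.
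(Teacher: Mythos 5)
Your forward direction is exactly the paper's: $k$-\textsc{Coloring} is a 2-\textsc{CSP} over alphabet $[k]$, so \cref{thm:csp} gives the algorithm when the \ppseth\ fails. For the reverse direction, however, you take a genuinely different and heavier route than the paper. The paper does \emph{not} invoke \cref{cor:weird} here: it simply takes a 2-\textsc{CSP} instance over $[k]$ (whose hardness is already guaranteed by \cref{thm:csp}), represents each variable by a \emph{single} vertex with list $[k]$, and for each binary constraint places an $(i_1,i_2)$-weak edge for every falsifying pair, so the primal graph is just a subdivision of the CSP's primal graph and the pathwidth is $\pw(\psi)+O(1)$; consistency is a non-issue because each variable has one representative. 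Your proof instead starts from the monotone multi-assignment promise problem of \cref{cor:weird}, creates one copy $v_{x,j}$ per bag for $V_1$-variables, enforces monotonicity with $\binom{k}{2}$ weak edges between consecutive copies, and localizes the 4-ary constraints via \cref{lem:csparity}. As far as I can tell this works and the width accounting ($p+O(k\log p)$, absorbed into $|\psi|^{O(1)}$ for fixed $k$) is fine, but the multi-assignment machinery buys you nothing for this particular lemma; the paper reserves it for targets (linear clique-width coloring, scattered set, set cover, dominating set) whose gadgets genuinely cannot enforce consistency of a variable across the linear structure without blowing up the width. Your version is thus correct but mirrors the template of \cref{lem:coloringlcw2}/\cref{lem:scattered2} rather than the much shorter argument the paper actually uses for pathwidth-parameterized $k$-\textsc{Coloring}.
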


\begin{proof}

Fix $k,\eps$. For one direction we observe that $k$-\textsc{Coloring} is a
special case of  2-\textsc{CSP} with alphabet $[k]$. If the \ppseth\ is false,
by \cref{thm:csp} we thus have an algorithm solving $k$-\textsc{Coloring} in
$O((k-\eps)^{\pw}n^{O(1)})$.

For the converse direction, take an instance $\psi$ of 2-\textsc{CSP} with
alphabet $[k]$. By \cref{thm:csp}, if we can solve $\psi$ in time
$O((k-\eps)^{\pw(\psi)}|\psi|^{O(1)})$, then the \ppseth\ is false. We will
reduce this to $k$-\textsc{Coloring} in a straightforward way. For each
variable of $\psi$ we construct a vertex of a new graph $G$ and give it list
$[k]$; and for each constraint of $\psi$ involving two variables $x,y$, we
consider each of the at most $k^2$ possible assignments to $x,y$ that falsify
$c$. For each such $(i_1,i_2)\in[k]^2$ that falsifies $c$ we add an
$(i_1,i_2)$-weak edge between the vertices representing $x,y$. Correctness is
not hard to see. It is also easy to see that the pathwidth of the new graph is
at most $\pw(\psi)+1$, because we only added weak edges between $x$ and $y$
only if $x,y$ already had an edge in the primal graph of $\psi$, and weak edges
are just sub-divided edges.  \end{proof}

For linear clique-width, we need slightly (but not much) more work, so we break
down the result into two parts.

\begin{lemma}\label{lem:coloringlcw1}

For all $k\ge3$ we have the following: if the \ppseth\ is false, then
there exists $\eps>0$ and an algorithm solving $k$-\textsc{Coloring} in time
$O((2^k-2-\eps)^{\lcw}n^{O(1)})$.

\end{lemma}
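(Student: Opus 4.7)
The plan is to reduce $k$-\textsc{Coloring} on a graph $G$ equipped with a linear clique-width expression of width $\ell = \lcw(G)$ to a \textsc{CSP} instance $\psi$ of bounded arity over alphabet $[B]$ with $B = 2^k - 2$ (which is $\geq 6$ since $k \geq 3$, hence $\geq 3$ as required by \cref{thm:csp}) whose primal graph has pathwidth $\ell + O(1)$. Combining this reduction with the implication ($1 \Rightarrow 3$) of \cref{thm:csp}, applied with a trivial weight function under the assumption that the \ppseth\ is false, I obtain an algorithm for $\psi$ of complexity $O((B-\eps'')^{\pw(\psi)}|\psi|^{O(1)})$, which translates to an algorithm for $k$-\textsc{Coloring} of complexity $O((2^k-2-\eps)^{\ell}n^{O(1)})$ for some $0 < \eps < \eps''$, as desired.

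For the construction of $\psi$, I would rely on the DP underlying the $(2^k-2)^{\lcw}n^{O(1)}$ algorithm of Lampis \cite{Lampis20}, which maintains, at each position $t$ in the linear clique-width expression and for each label $i \in [\ell]$, a state $s_{t,i}$ drawn from a set of size exactly $2^k - 2$. Essentially, this state records the non-empty, non-full color set used by vertices currently labeled $i$, and a mild preprocessing of the expression ensures that the DP never passes through the trivial states $\emptyset$ or $[k]$ without blowing up $\ell$. I introduce, for each $(t,i)$, a CSP variable $x_{t,i}$ over alphabet $[B]$ with this meaning. Between consecutive positions $t$ and $t+1$, for every label $i$ unaffected by the operation at step $t+1$, I add a binary equality constraint $x_{t,i} = x_{t+1,i}$; for the at most two labels that \emph{are} affected by that operation (Introduce touches one, Rename and Join touch two), I add a single constraint of arity at most $4$ encoding the exact DP transition (adding a color to a set, merging two sets, or verifying disjointness, respectively). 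A satisfying assignment thus corresponds bijectively to a valid DP trajectory and hence to a valid $k$-coloring.

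For the primal pathwidth, I use a snake-like path decomposition: at each position $t$ a bag contains the $\ell$ variables $\{x_{t,1},\ldots,x_{t,\ell}\}$; between consecutive positions, I interpolate by introducing new variables and forgetting old ones one at a time, giving bags of size at most $\ell + O(1)$, each containing the endpoints of any transition constraint localized to that step. Applying \cref{thm:csp} yields the claimed complexity; the additive $O(1)$ in the exponent is absorbed by choosing $\eps < \eps''$ so that $(B-\eps'')^{c}\leq ((B-\eps)/(B-\eps''))^{\ell}$ holds for all sufficiently large $\ell$, and handling a finite number of small $\ell$ by brute force. The main obstacle is the normalization step inherited from \cite{Lampis20}: we must guarantee that the DP state space is exactly $[2^k-2]^{\ell}$, excluding the trivial $\emptyset$ and $[k]$ states, without increasing the number of labels beyond $\ell + O(1)$. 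Absent such a careful normalization, we would be forced to use alphabet $2^k$, which would only yield the weaker bound $(2^k - \eps)^{\ell}$ and miss the statement of the lemma.
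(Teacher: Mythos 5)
Your proposal is correct and follows essentially the same route as the paper: encode the natural DP over the linear clique-width expression as a CSP over alphabet $[2^k-2]$ with one variable per (position, label) pair and transition constraints of bounded arity, observe that the primal graph sits in an $\lcw(G)\times t$ grid so the pathwidth is $\lcw(G)+O(1)$, and invoke \cref{thm:csp}. The one implementation detail where the paper differs slightly is the normalization you correctly flag as the crux: rather than preprocessing the expression, the paper simply declines to create variables for label classes that are not ``live'' (i.e., have no future incident edges), since for live classes the color set is automatically a non-empty proper subset of $[k]$ -- which achieves exactly what your preprocessing is meant to achieve.
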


\begin{proof}

Fix $k\ge 3$. Suppose we are given a graph $G$ and a linear clique-width
expression of $G$ with $\lcw(G)$ labels. We want to reduce the problem of
$k$-coloring $G$ to a \textsc{CSP} instance $\psi$ over alphabet $[2^k-2]$ such
that $\pw(\psi) = \lcw(G)+O(1)$. If the \ppseth\ is false, we will then be able
to solve this instance fast enough to obtain a $k$-\textsc{Coloring} algorithm
in the desired time. Suppose the given linear clique-width expression is a
string $\chi$ of length $t=|\chi|$. We construct at most $t\cdot\lcw(G)$
variables, call them $x_{i,j}$ for $i\in[\lcw(G)], j\in[t]$.  The intuitive
meaning of $x_{i,j}$ is the following: if we consider a coloring of $G$ and
look at the subgraph of $G$ associated with the length-$j$ prefix of $\chi$,
then $x_{i,j}$ encodes the subset of colors of $[k]$ which appear in a vertex
with label $i$. Note that we only allow the variables $x_{i,j}$ to take values
which correspond to non-empty proper subsets of $[k]$, hence we have $2^k-2$
choices.

Let us now be a little more precise. For $j\in[t]$ let $G_j$ be the graph
represented by the length-$j$ prefix of $\chi$. We denote $V_{i,j}$ the set of
vertices with label $i$ in $G_j$. We say that $V_{i,j}$ is \emph{live} if
$V_{i,j}\neq \emptyset$ and there is an edge of $G$ incident on some vertex of
$V_{i,j}$ which does not appear in $G_j$. It is now not hard to see that in any
valid coloring of $G$, the set of colors used in a live set $V_{i,j}$ is a
non-empty proper subset of $[k]$: the set is non-empty as
$V_{i,j}\neq\emptyset$, and it is not $[k]$ as any vertex that becomes a
neighbor of a vertex of $V_{i,j}$ later in the construction, must be connected
to all of $V_{i,j}$. We therefore construct a variable $x_{i,j}$ only for
\emph{live} sets $V_{i,j}$. In the remainder we will look at the values of
$x_{i,j}$ as subsets of $[k]$.

For the constraints we have the following:

\begin{enumerate}

\item If the $j$-th operation of $\chi$ is $I(i)$, for $i\in[k]$ and
$V_{i,j-1}, V_{i,j}$ are live, we add a constraint ensuring that
$x_{i,j}\supseteq x_{i,j-1}$ and $|x_{i,j}|\in [|x_{i,j-1}|, |x_{i,j-1}|+1]$.
If $V_{i,j}$ is live and $V_{i,j-1}=\emptyset$ we add the constraint that
$|x_{i,j}|=1$. For all $i'\in [k]\setminus\{i\}$ for which $V_{i',j}$ is live,
add the constraint $x_{i',j}=x_{i',j-1}$.

\item If the $j$-th operation of $\chi$ is $R(i_1\to i_2)$, for $i_1,i_2\in[k]$
and $V_{i_2,j}$ is live, add a constraint ensuring that
$x_{i_2,j}=x_{i_1,j-1}\cup x_{i_2,j-1}$. For all $i'\in[k]\setminus\{i_1,i_2\}$
for which $V_{i',j}$ is live, add the constraint $x_{i',j}=x_{i',j-1}$.  

\item If the $j$-th operation of $\chi$ is $J(i_1\to i_2)$, add the constraint
$x_{i_1,j-1}\cap x_{i_2,j-1} = \emptyset$. For all $i\in[k]$ for which
$V_{i,j}$ is live, add the constraint $x_{i,j}=x_{i,j-1}$.

\end{enumerate}

This completes the construction of the \textsc{CSP} instance. It is not hard to
show by induction that for all $j\in [t]$, $G_j$ has a $k$-coloring if and only
if there is an assignment to the variables $x_{i,j'}$, for $i\in[\lcw(G)],
j'\le j$ that satisfies all constraints induced by these variables.  The
pathwidth of the primal graph of the produced instance is at most $\lcw(G)+2$,
because the primal graph is essentially contained in a grid of dimensions
$\lcw(G)\times t$.  \end{proof}

\begin{lemma}\label{lem:coloringlcw2}

For all $k\ge3, \eps>0$ we have the following: If there exists an algorithm
solving $k$-\textsc{Coloring} in time $O((2^k-2-\eps)^{\lcw}n^{O(1)})$, then
the \ppseth\ is false.

\end{lemma}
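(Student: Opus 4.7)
The plan is to reduce the promise $4$-\textsc{CSP} problem over alphabet $[B]$ with $B = 2^k - 2$, as given by \cref{cor:weird}, to $k$-\textsc{Coloring} parameterized by linear clique-width. Fix $k$ and $\eps$. Invoke \cref{cor:weird} to obtain a $4$-\textsc{CSP} instance $\psi$ over alphabet $[B]$, a partition of its variables into $V_1 \cup V_2$, a nice path decomposition of width $p$ such that each bag contains at most $O(B\log p)$ variables of $V_2$, and an injective function $b$ mapping constraints to bags containing their variables. I will construct a graph $G$ together with a linear clique-width expression of width $p + O(B\log p)$ such that $G$ is $k$-colorable iff $\psi$ admits a monotone satisfying multi-assignment consistent on $V_2$. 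A hypothetical $(2^k-2-\eps)^{\lcw}n^{O(1)}$ algorithm would then solve $\psi$ in time $(B-\eps)^{p + O(B\log p)}|\psi|^{O(1)} = (B-\eps')^p|\psi|^{O(1)}$ for some $\eps'>0$, falsifying the \ppseth\ by \cref{cor:weird}.

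To encode values, fix a bijection $\tau$ from $[B]$ to the $2^k-2$ non-empty proper subsets of $[k]$, enumerated in order of increasing cardinality (so that $\tau(v_1)\subseteq\tau(v_2)$ implies $v_1\le v_2$). The core building block, adapted from \cite{Lampis20}, is the \emph{palette}: a set of vertices sharing a common label, joined to appropriate guard vertices so that in any proper $k$-coloring the set of colors used on the palette must be a non-empty proper subset of $[k]$, thus encoding a value $\tau^{-1}(S) \in [B]$. For each variable $x\in V_1$ alive in the current bag I maintain a palette with a dedicated label $\ell_x$; this accounts for at most $p+1$ labels, since only the current bag's variables are alive. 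For $V_2$ variables I use a direct binary encoding with $O(\log B) = O(k)$ auxiliary vertices on individual labels per variable, contributing an additive $O(B\log p)$ to the clique-width.

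The linear clique-width expression is built in stages tracking the bags. Moving from bag $B_j$ to $B_{j+1}$ either introduces a new palette (via $I(\ell_x)$ plus an initial color choice) or retires a forgotten variable (via $R(\ell_x \to g)$ to a graveyard label $g$ that never interacts with later gadgets). For each constraint $c$ of arity $r\le 4$, at the stage corresponding to $B_{b(c)}$, I add one constant-size sub-gadget per forbidden tuple $(S_1,\ldots,S_r)$ of color subsets, following the weak-edge construction of \cite{Lampis20}: the sub-gadget is $k$-colorable iff at least one of the involved palettes, \emph{at the moment of this stage}, avoids $S_i$. Crucially, a Join only links to vertices existing at that stage, so later additions to a palette do not interfere with previously installed sub-gadgets. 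The $O(B^4)$ forbidden tuples per constraint each use $O(1)$ fresh auxiliary labels that are recycled via rename operations after use; hence the total linear clique-width stays at $p + O(B\log p) + O(1)$.

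The main obstacle is proving that any $k$-coloring of $G$ induces a satisfying multi-assignment that is monotone and consistent on $V_2$. The forward direction is routine: a satisfying assignment of $\psi$ colors each palette with $\tau(v_x)$ and extends naturally to the auxiliary gadgets. For the converse, a $k$-coloring defines for each palette $\ell_x$ and each stage $j$ the color set $S_{x,j}$ used at that moment, yielding the multi-assignment $\sigma(x,j) = \tau^{-1}(S_{x,j})$. Since palette vertices can only be added (never removed) while $x$ is alive, $S_{x,j}\subseteq S_{x,j'}$ for $j\le j'$, and by the monotonicity of $\tau$ we get $\sigma(x,j)\le\sigma(x,j')$. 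Consistency on $V_2$ is immediate from the binary encoding. Finally, the sub-gadgets attached at stage $B_{b(c)}$ ensure each constraint $c$ is satisfied by the values assigned at that stage, so $\sigma$ is a satisfying multi-assignment for $\psi$ in the sense of \cref{cor:weird}, completing the proof.
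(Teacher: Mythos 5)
Your high-level architecture matches the paper's: reduce from the promise \textsc{CSP} of \cref{cor:weird} with $B=2^k-2$, encode each live variable's value as the \emph{set} of colors appearing in a dedicated label class, and get monotonicity of the extracted multi-assignment for free because a label class's color set can only grow (your explicit choice of $\tau$ as a linear extension of set inclusion is a correct and welcome detail). The gap is in the constraint gadget, which is the technical heart of the proof and which you dispatch with "one sub-gadget per forbidden tuple \ldots\ $k$-colorable iff at least one of the involved palettes avoids $S_i$." In the clique-width setting the only primitive available for interacting with a palette is a Join, and a Join with a forced-color-$c$ vertex can only \emph{enforce} that the class currently avoids $c$; there is no Join-based gadget whose colorability certifies that a class \emph{contains} a given color, since Joins only ever create conflicts. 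Consequently "palette $\alpha$'s set $\neq S_\alpha$" cannot be detected: its witness may be either a missing color of $S_\alpha$ (detectable) or an extra color outside $S_\alpha$ (not detectable). If you build the per-falsifying-tuple gadget out of the available primitive, it ends up forbidding every tuple $(T_1,\dots,T_r)$ with $T_\alpha\supseteq S_\alpha$ for all $\alpha$, not just the exact tuple $(S_1,\dots,S_r)$, which over-constrains and breaks the forward direction (a satisfying tuple dominating a falsifying one would be wrongly excluded).

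The paper's proof resolves exactly this asymmetry by working with the \emph{satisfying} assignments $\mathcal{S}_c$ instead: an odd cycle of vertices $t_{c,\sigma}$ forces at least one $\sigma\in\mathcal{S}_c$ to be "selected" (colored $3$), which in turn forces $4k$ auxiliary vertices to fixed colors; for each color $i$ that $\sigma$ says a palette should \emph{not} contain, the forced-color-$i$ vertex is \emph{Joined} to that palette, and for each color $i$ it \emph{should} contain, the forced-color-$i$ vertex is \emph{Renamed into} the palette's label. The Rename injects the required colors rather than testing for them, so at bag $b(c)$ the palette's color set is forced to equal $\tau(\sigma(x_{i_\alpha}))$ exactly, and the multi-assignment restricted to $b(c)$ satisfies $c$. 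Without this inject-via-Rename mechanism (or an equivalent), your converse direction does not go through: a coloring could give a palette a strict subset of a satisfying value while passing all your sub-gadgets, and $\tau^{-1}$ of that subset need not satisfy the constraint.
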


\begin{proof}

We will make use of \cref{cor:weird}. Suppose we are given a 4-\textsc{CSP}
instance $\psi$ with alphabet size $B=2^k-2$, a partition of the variable set
into $V_a,V_b$, a path decomposition of the primal graph of width $p$ where
each bag contains $O(B\log p)$ variables of $V_b$, and an injective function
$b$ mapping constraints to bags that contain their variables. We want to reduce
the problem of deciding whether a multi-assignment that is monotone and
consistent for $V_b$ exists to $k$-\textsc{Coloring} on a graph $G$ with
$\lcw(G) = p +O(B\log p)$.  If we can do this, and if there exists a fast
$k$-\textsc{Coloring} algorithm as stated in the lemma, then by
\cref{cor:weird} we would falsify the \ppseth.  As mentioned, we will in fact
reduce to \textsc{List Coloring}, where all lists are subsets of $[k]$, and use
known reductions from this problem to $k$-\textsc{Coloring}.

Assume that the given decomposition is nice (or use \cref{lem:nice}) and
partition the variables of $\psi$ into $p+1$ sets $V_1,\ldots,V_{p+1}$, so that
each bag contains at most $1$ variable from each set, using \cref{lem:pwcolor}
for $k=p+1$.  Suppose the bags of the decomposition are numbered $B_1,\ldots,
B_t$.  Without loss of generality, assume also that $B_1=\emptyset$ (this can
be achieved while keeping the decomposition nice by adding a sequence of bags
in the beginning, removing one by one the elements of $B_1$). 

We will construct a graph $G$ and also give a corresponding linear clique-width
expression using $p+O(kB^4)$ labels, of which one special label $*$ will be
called a junk label (its intended meaning is that once we give a vertex this
label, the vertex receives no further edges).  For each $j\in\{2,\ldots,t\}$,
suppose we have already constructed a graph for the first $j-1$ bags and a
linear clique-width expression given by string $\chi_{j-1}$. Furthermore,
suppose that in $G_{j-1}$ all vertices have a label from $[p+1]$ or the junk
label. $G_1$ is the empty graph which represents $B_1$, which is an empty bag.
We do the following:

\begin{enumerate}

\item If $B_j=B_{j-1}\setminus\{x\}$, and $x\in V_i$, for some $i\in[p+1]$ we
append to $\chi_{j-1}$ the operation $R(i\to*)$, that is, we rename vertices
with label $i$ to the junk label.

\item If $B_j=B_{j-1}\cup\{x\}$, and $x\in V_i$, for some $i\in[p+1]$, we
append to $\chi_{j-1}$ $k-1$ operations $I(i)$. All new vertices constructed in
this step have list $[k]$.

\item If there exists a (unique) constraint $c$ that is mapped to $B_j$ by $b$
we consider the list of at most $B^4$ satisfying assignments of $c$, call it
$\mathcal{S}_c$, with $\ell_c=|\mathcal{S}_c|$. For each $\sigma\in
\mathcal{S}_c$ construct a vertex $t_{c,\sigma}$ with list $\{1,2,3\}$.
Construct another $\ell_c+1$ vertices with list $\{1,2\}$ and connect them to
the vertices $t_{c,\sigma}$ in a way that a cycle of length $2\ell+1$ is formed
in which no two $t_{c,\sigma}$ vertices are adjacent. For each
$\sigma\in\mathcal{S}_c$, construct $4k$ vertices and place weak edges between
these vertices and $t_{c,\sigma}$ ensuring that if $t_{c,\sigma}$ receives
color 3, then for each $i\in[k]$ exactly $4$ of these vertices receive color
$i$; otherwise the vertices can be colored in any way. Call the vertices that
receive color $i$ if $t_{c,\sigma}$ receives color $3$,
$t_{c,\sigma,i,\alpha}$, for $\alpha\in\{1,2,3,4\}$. What we have constructed
so far can easily be accomplished using $O(kB^4)$ labels in a way that each
$t_{c,\sigma,i,\alpha}$ has a distinct label. Suppose that $c$ involves
variables from $V_{i_1}, V_{i_2}, V_{i_3}, V_{i_4}$. For each
$\alpha\in\{1,2,3,4\}$, each assignment $\sigma\in\mathcal{S}_c$, and each
$i\in[k]$, if $\sigma$ dictates that the variable of $V_{i_\alpha}$ involved in
$c$ takes a value in $[2^k-2]$ which, when viewed as a set, does not contain
$i$, add a Join operation between label $i_1$ and the label of
$t_{c,\sigma,i,\alpha}$; if $\sigma$ dictates that the same variable takes a
value which (viewed as a set) contains $i$, add a Rename operation from the
label of $t_{c,\sigma,i,\alpha}$.  Conclude this part by renaming all vertices
that still have a label outside $[p+1]$ to the junk label $*$.

\end{enumerate}

This concludes the construction and we have already argued why the graph has
clique-width $p+O(kB^4) = p+O(1)$. 

To prove equivalence, if there is a satisfying assignment for $\psi$, we obtain
a coloring of $G$ as follows: when we introduce a group of $k-1$ vertices
representing a variable $x$, we give them colors that correspond to the value
of $x$ in $\{2^k-2\}$, viewed as a subset of $[k]$ (that is, if the $i$-th bit
is $1$, we color a vertex in the group $i$); when we introduce a vertex
representing $x\in V_b$ we give it color $1$ or $2$ corresponding to the value
of $x$ in the satisfying assignment. We now claim that for all gadgets
constructed in step 4 we can color the $O(kB^4)$ added vertices. In particular,
if constraint $c$ is satisfied via assignment $\sigma$, we give color $3$ to
$t_{c,\sigma}$ and use $1,2$ for the rest of the odd cycle; this eliminates all
weak edges not incident on $t_{c,\sigma}$. For vertices $t_{c,\sigma,i,\alpha}$
we use the forced color, but this does not create any improperly colored edge,
because any Join operation these vertices are involved with is with a set that
does not contain its color; furthermore, renaming these vertices does not
change the set of colors contained in any label class. For remaining vertices
$t_{c,\sigma',i,\alpha}$, for $\sigma'\neq\sigma$, if we performed a Join on
such a vertex we pick an arbitrary color that does not produce a monochromatic
edge, while if we renamed it into a label in $[p+1]$ we pick a color that is
already present in that label class. This preserves the invariant that label
class $i\in[p+1]$ contains a set of colors representing the value of $x\in
V_i\cap B_j$.

For the converse direction, we extract an assignment to $V_b$ variables by
looking at the $k-1$ vertices introduced for each such variable and reading the
set of colors used (which is a non-empty proper subset of $[k]$) as a value in
$[2^k-2]$. Note that it is crucial here that when the path decomposition
introduces a variable $x\in V_b\cap V_i$, and we introduce $k-1$ vertices with
label $i$ in the new graph, all pre-existing vertices of label $i$ have been
forgotten, since no other vertex of $V_i$ can appear in the current bag. 

For variables of $V_a$ we extract a multi-assignment as follows: for each $x\in
V_a\cap V_i$ and $j$ such that $x\in B_j$ we look at the set of colors used by
vertices with label $i$ in the graph $G_j$, that is, the graph constructed
after we have processed bag $B_j$.  We claim that this assignment must satisfy
any constraint $c$ with $b(c)=j$.  Indeed, because the $t_{c,\sigma}$ vertices
participate in an odd cycle, one of them must have color $3$, meaning that the
vertices $t_{c,\sigma,i,\alpha}$ receive forced colors. These ensure that the
colors present in label classes $i_1, i_2, i_3$ are consistent with $\sigma$,
as we Join such classes with all the colors they are not supposed to contain in
$\sigma$, and we add to them all the colors they are supposed to contain.
Finally, observe that the multi-assignment we have extracted is monotone, as
the set of colors contained in a label class can only increase, until the point
when we forget the corresponding variable in the decomposition. Therefore, if
there exists a list coloring of the constructed graph, we have a
multi-assignment as demanded by \cref{cor:weird}.  \end{proof}

\subsection{Independent Set and Scattered Set}

In this section we consider a generalization of \textsc{Independent Set} which
is sometimes called in the literature $d$-\textsc{Scattered Set}. In this
problem we are given a graph $G=(V,E)$ and are asked to select a maximum
cardinality $S\subseteq V$ such that all vertices of $S$ are at pairwise
distance at least $d$. Clearly, for $d=2$ this problem is exactly
\textsc{Independent Set}. We show that for all fixed $d\ge 2$, solving
$d$-\textsc{Scattered Set} with parameter dependence better than $d^{\pw}$ is
equivalent to the \ppseth. This strengthens the lower bounds given by
Katsikarelis et al. \cite{KatsikarelisLP22}, who proved that no algorithm can
solve this problem in time $(d-\eps)^{\pw}n^{O(1)}$, for any $d\ge2, \eps>0$,
assuming the SETH. Note that an algorithm with complexity $d^{\pw}n^{O(1)}$ is
obtained via standard DP in \cite{KatsikarelisLP22}, so this result is tight.

We note also that Iwata and Yoshida \cite{IwataY15} have shown that, for the
specific case of $d=2$, \textsc{Independent Set} is equivalent to the treewidth
version of the \ppseth, that is, the hypothesis that \textsc{SAT} cannot be
solved with parameter dependence better than $(2-\eps)^{\tw}$. Their reductions
can easily go through unchanged if we start from the \ppseth\ instead, so the
fact that solving \textsc{Independent Set} with this complexity is equivalent
to the \ppseth\ essentially follows from their work. Our result generalizes
theirs to larger values of $d$. However, we prefer to give a unified proof for
all cases $d\ge 2$, including $d=2$, to showcase the flexibility of our
framework.

We begin by giving a complexity upper bound.

\begin{lemma}\label{lem:scattered1} If the \ppseth\ is false, then for all
$d\ge2$ there exists $\eps>0$ such that there is an algorithm that solves
$d$-\textsc{Scattered Set} in time $O((d-\eps)^{\pw}n^{O(1)})$.  \end{lemma}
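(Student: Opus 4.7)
The plan is to reduce $d$-\textsc{Scattered Set} on $G$ of pathwidth $p$ to a \textsc{MaxW-CSP} instance over alphabet $[d]$ of primal pathwidth $p+O(1)$, and then invoke \cref{thm:csp} (for $d\ge 3$) or the \ppseth\ directly (for $d=2$, where the problem is just \textsc{Independent Set} and $G$ itself is the primal graph of a \textsc{MaxW-SAT} instance). Assuming the \ppseth\ is false, this yields a $(d-\eps)^{p+O(1)} n^{O(1)}=(d-\eps)^p n^{O(1)}$ algorithm, as required.

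For $d\ge 3$, the CSP simulates the standard $d^{\pw}$ DP for $d$-\textsc{Scattered Set}. Given a nice path decomposition $B_1,\ldots,B_t$ of $G$ of width $p$, I introduce for each bag index $j$ and vertex $v\in B_j$ a variable $x_{j,v}\in\{0,\ldots,d-1\}$, meant to encode $v$'s DP state: $x_{j,v}=0$ if $v$ is selected, and $x_{j,v}=k\ge 1$ records $v$'s distance to the nearest selected vertex in the partial graph $G_j$, capped at $d-1$. Constraints encode the DP: edges within each bag must have compatible endpoints and no two selected vertices; selecting $v$ at its introduction requires each $u\in N(v)\cap B_j$ to satisfy $x_{j,u}=d-1$; and when $v$ is introduced non-selected, $x_{j+1,v}$ and the updated values $x_{j+1,u}$ for $u\in B_j$ are given by the natural local update rules (which may decrease $x_{j+1,u}$ if $v$ creates a new short path to a previously selected vertex). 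Any minimum over a neighborhood required by these rules is computed incrementally using a pair of running-minimum accumulator variables, so every constraint has constant arity and can further be reduced to arity two via \cref{lem:csparity}. The size of the scattered set is tracked either by per-vertex padding variables that equalize contributions across bag-copies, or by an $O(\log n)$-bit counter maintained along the decomposition.

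The main technical obstacle is ensuring that the primal pathwidth of this CSP stays at $p+O(1)$, rather than the naive $2p+O(1)$ one obtains by simultaneously keeping both the ``old'' and ``new'' state variables at every transition. I process each nice-decomposition transition one variable at a time: at an introduce step, first append $x_{j+1,v}$ together with the running-minimum accumulators to a bag already containing $\{x_{j,u}:u\in B_j\}$, which covers all the constraints relating the new variable to the old states of its neighbors; then, for each $u\in B_j$, add $x_{j+1,u}$ and remove $x_{j,u}$ in two consecutive bags, so that the consistency or shortcut-update constraint between the two copies is covered precisely when they coexist. Each swap temporarily enlarges the bag by a single variable, keeping the peak bag size at $p+O(1)$; forget transitions are symmetric. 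Applying \cref{thm:csp} with $B=d$ to the resulting CSP then produces the claimed $(d-\eps)^p n^{O(1)}$ algorithm.
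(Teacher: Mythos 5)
Your high-level architecture (reduce to a \textsc{MaxW-CSP} over alphabet $[d]$ of primal pathwidth $p+O(1)$, then apply \cref{thm:csp}, treating $d=2$ separately) matches the paper's. However, the CSP you describe encodes an unsound dynamic program, and this is a genuine gap, not a bookkeeping issue. You define $x_{j,v}$ as the distance from $v$ to the nearest selected vertex \emph{in the processed subgraph $G_j$}, propagate these values by edge-local update rules, and perform the only feasibility check at the moment a vertex $v$ is selected, by inspecting $x_{j,u}$ for $u\in N(v)\cap B_j$. This fails for $d\ge 3$ because the shortest path between two selected vertices may pass through vertices introduced \emph{later} in the decomposition. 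Concretely, take $d=5$, the path $u-a-z-b-w$, and the width-$2$ decomposition $\{u,a\},\{a,b\},\{a,b,w\},\{a,b\},\{a,b,z\}$. Selecting $\{u,w\}$ is infeasible since $d_G(u,w)=4<5$, yet your DP accepts it: when $w$ is introduced, its only bag neighbor $b$ has no path to $u$ in $G_3$ (so $x_{3,b}=d-1$ and the check passes), and when $z$ finally closes the short $u$--$w$ path, $z$ is not selected, its label becomes $\min(2,2)=2$, and no constraint fires --- decreasing the label of a non-selected vertex after the fact never retroactively invalidates an earlier selection. Storing only the distance to the \emph{nearest} past selected vertex cannot detect that a newly introduced vertex simultaneously shortcuts to two different selected vertices.

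The paper sidesteps this by formulating every constraint in terms of the \emph{metric closure} of $G$: it precomputes all-pairs distances $d(x,y)$ in the full graph and imposes, within each bag, the triangle inequalities $x_j\le y_j+d(x,y)$, plus the check that when a new vertex $x$ is selected, $y_{j}+d(x,y)\ge d$ for every $y$ in the preceding bag. Since that bag separates $x$ from all already-forgotten vertices, and $d(x,y)$ already accounts for paths routed through not-yet-introduced vertices, the check at selection time is final; an induction then shows $y_j$ always lower-bounds the true distance from $y$ to the past selected set, so every violation is caught. To repair your proof you would need to replace your edge-local semantics with this metric-closure formulation (or otherwise enrich the state beyond a single nearest-distance value, which would blow the base past $d$). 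Your $d=2$ case, your weight-tracking, and your bag-by-bag variable swapping to keep the width at $p+O(1)$ are all fine and essentially identical to the paper's.
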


\begin{proof}

For $d=2$ the lemma is trivial, as \textsc{Independent Set} can be seen as a
special case of \textsc{MaxW-}$2$\textsc{-SAT}, which, if the \ppseth\ is
false, can be solved in the desired amount of time by \cref{thm:robust}. We
therefore assume $d>2$.

Given an instance $G=(V,E)$ of $d$-\textsc{Scattered Set} and a path
decomposition of width $\pw(G)$,  we will reduce it to an instance of
2-\textsc{CSP} over an alphabet of size $d$, while increasing the pathwidth by
at most an additive constant. If the \ppseth\ is false, then the resulting
instance can be solved fast enough to obtain the lemma by \cref{thm:csp}.

Suppose we have a nice path decomposition of $G$ with the bags numbered
$B_1,\ldots, B_t$. Suppose also that for each $x,y\in V$ we have calculated the
shortest path distance $d(x,y)$ in $G$. For $j\in[t]$ we will denote by $G_j$
the edge-weighted graph whose vertex set is $V(G_j)=\bigcup_{j'=1}^jB_j$ and
where for each $x,y\in V(G_j)$ we have an edge $xy$ with weight $d(x,y)$. In
other words, $G_j$ is constructed by first taking the metric closure of $G$ and
then keeping only vertices of the first $j$ bags.

Let us now describe the construction of our \textsc{CSP} instance. For each
$x\in V$ we construct $t$ \textsc{CSP} variables, $x_j$, for $j\in[t]$. For a
vertex $x$ appearing in a bag $B_j$ the informal meaning of $x_j$ is that it
represents the distance between $x$ and its closest selected vertex in the
graph $G_j$. We will allow these variables to take $d$ possible values, in the
set $\{0,\ldots, d-1\}$. We now add the following constraints:

\begin{enumerate}

\item For each $j\in[t]$ and $x,y\in B_j$, we add the constraints $x_j\le
y_j+d(x,y)$ and $y_j\le x_j+d(x,y)$.

\item For each $j\in[t-1]$ and $x\in B_{j}\cap B_{j+1}$ we add the constraint
$x_{j+1}\le x_j$. Furthermore, we add a constraint ensuring that if
$x_{j+1}=0$, then $x_{j}=0$.

\item For each $j\in[t-1]$ such that there exists $x\in B_{j+1}\setminus B_j$,
for each $y\in B_{j}$, we add the constraint ensuring that if $x_j=0$, then
$y_{j-1}+d(x,y)\ge d$.

\item For each $x\in V$ such that $x$ appears in bags with indices in
$[j_1,j_2]$, we add the constraints $x_{j'}=x_{j'+1}$ for all
$j'\in[j_1-1]\cup\{j_2,\ldots,t-1\}$.

\end{enumerate}

To complete the construction we define a weight function that gives weight $1$
to value $0$ and weight $0$ to every other value. If the target value for the
scattered set was $k$, then we set the target weight to $tk$.

To prove correctness, suppose first that a $d$-scattered set $S$ of size $k$
exists in $G$. We assign value $0$ to all $x_j$, such that $x\in S, j\in[t]$,
achieving the target weight. For all $x\not\in S$ and $j\in[t]$ such that $x\in
B_j$, let $y$ be the vertex of $S\cap V(G_j)$ that is closest to $x$. Then,
assign to $x_j$ value $d(x,y)$. If $y$ does not exist, or $d(x,y)\ge d-1$,
assign to $x_j$ value $d-1$. Extend this assignment for all $x$ and $j$ for
which $x\not\in B_j$ in a way that satisfies constraints of the last step (give
the same value as $x_{j_1}$ to $x_{j'}$ for $j'<j$, and similarly for
$j'>j_2$). The assignment we have constructed is satisfying for the following
reasons: for constraints of the first step, observe that the distance from $x$
to $S$ in $G_j$ is at most $d(x,y)$ plus the distance from $y$ to $S$; for the
second step, the distance from $x$ to $S$ can only decrease from $G_j$ to
$G_{j+1}$; for the third step, if $x\in S$ and there existed a vertex $y\in
V(G_j)$ at distance at most $d-1-d(x,y)$ from $S$ in $G_j$, then $x$ would be
too close to another element of $S$.

For the converse direction, suppose we find a satisfying assignment of the new
instance that sets as many variables as possible to $0$. Because of the
constraints of step $2$, the number of such variables is a multiple of $t$, and
if we have weight at least $tk$, we can extract a set $S\subseteq V$ of at
least $k$ vertices of $G$, which we claim is a $d$-scattered set. To see this,
we will prove that for all $j\in[t]$ and $x\in B_j$, if $S\cap
V(G_j)\neq\emptyset$, then the variable $x_j$ must take value at most as large
as the distance from $x$ to $S\cap V(G_j)$. We can establish this by induction
on $j$. For $j=1$, if $S\cap B_1=\emptyset$, then the statement is vacuous,
while if $S\cap B_1\neq\emptyset$, then the constraints of the first step
assure the property because for $y\in S$ we have $y_1=0$. Since by constraints
of step 2 the values assigned to $x_j$ can only decrease as we increase $j$, if
$B_{j+1}\cap S=\emptyset$ and the statement is true for $G_j$, then the
statement is true for $G_{j+1}$. The remaining case is when $B_{j+1}$
introduces a new vertex in $S$, call this vertex $y$, with $y_{j+1}=0$.
However, for all $x\in B_j$ the constraints of step 1 ensure that if $x$ is
closer to $y$ than to $S\cap V(G_j)$, then the value $x_{j+1}$ will decrease
from $x_j$ to (at most) $d(x,y)$. Having established the property that for
$x\in B_j$ we have $x_j\le d(x,S\cap V(G_j))$, constraints of step 3 ensure
that the first time we introduce a vertex of $S$, we check that this vertex is
not too close to any previously introduced vertex of $S$. This is true, because
in this case the remaining vertices of the bag form a separator, so any
shortest path to a previous vertex of $S$ would have to go through them.

Finally, we need to argue that the new instance has small pathwidth. Take the
path decomposition of $G$ and replace in each $B_j$ each vertex $x\in B_j$ with
the variable $x_j$. This covers constraints of step 1. For each $j$ such that
there exists $x\in B_{j+1}\setminus B_j$, also place $x_{j+1}$ in $B_j$. Since
the decomposition is nice, this can only increase the width by $1$, and we
cover constraints of step 3. For each $j\in[t-1]$, in order to cover the
constraints of step 2, we insert between $B_j$ and $B_{j+1}$ a sequence of bags
which start with $B_j$ and at each step add a variable $x_{j+1}$ and then
remove $x_j$, one by one. Finally, to cover the remaining variables and
constraints of step 4 it suffices for each $x\in V$ that appears in the
interval $B_{j_1},\ldots, B_{j_2}$ to insert to the left of $B_{j_1}$ a
sequence of bags that contain all of $B_{j_1}$ and a path decomposition of the
path formed by $x_1,\ldots, x_{j_1}$, and similarly after $B_{j_2}$.
\end{proof}

We now proceed to the reduction in the converse direction.

\begin{lemma}\label{lem:scattered2} If for some $d\ge2, \eps>0$ there is an
algorithm that solves $d$-\textsc{Scattered Set} in time
$O((d-\eps)^{\pw}n^{O(1)})$, then the \ppseth\ is false.  \end{lemma}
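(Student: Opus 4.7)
The plan is to reduce the promise 4-\textsc{CSP} problem over alphabet $[d]$ of \cref{cor:weird} to $d$-\textsc{Scattered Set} in such a way that the output graph has pathwidth $p+O(d\log p)$, where $p$ is the width of the input decomposition. Since $d$ is constant, any $(d-\eps)^{\pw}n^{O(1)}$ algorithm for $d$-\textsc{Scattered Set} would then decide the CSP in time $(d-\eps)^p|\psi|^{O(1)}$, which by \cref{cor:weird} falsifies the \ppseth.

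First I would invoke \cref{lem:pwcolor} to split the CSP variables into $p+1$ groups $V^1,\ldots,V^{p+1}$, each meeting every bag in at most one variable. For every pair $(i,j)$ with $V^i\cap B_j\neq\emptyset$ I would introduce a \emph{column gadget}: a path $P_{i,j}$ of length $Nd$, for $N$ much larger than the total number of constraints, whose $d$ canonical maximum $d$-scattered sets correspond bijectively to values $v\in [d]$ via the residue class mod $d$ of the chosen positions. Consecutive columns for the same variable would be stitched together by a short connecting piece of length less than $d$, arranged so that changing value between bags from $v_1$ to $v_2$ is free when $v_2\ge v_1$ but costs exactly one vertex when $v_2<v_1$ (because one full ``block'' on the combined path must then be skipped); this realises precisely the monotonicity permitted by \cref{cor:weird}. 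For variables in $V_2$, which must be fully consistent, I would add a parallel reverse-orientation copy of the gadget joined at both ends, forming a long cycle that forbids any switch; this is affordable because only $O(d\log p)$ columns per bag require it.

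Second, for each constraint $c$ with $b(c)=j$ and each of its at most $d^4$ falsifying assignments $(v_1,\ldots,v_4)$ to the involved variables in groups $V^{i_1},\ldots,V^{i_4}$, I would attach a constant-size \emph{forbidden-combination gadget} linking suitable representative vertices of the four columns $P_{i_\alpha,j}$. The gadget would be engineered to contribute its full canonical quota to a $d$-scattered set unless all four attachment points are simultaneously chosen to lie in the canonical configurations $S_{v_\alpha}$, in which case exactly one vertex is lost. Setting the scattered-set target $k$ to the sum of all canonical yields then drives a standard packing argument: a monotone satisfying multi-assignment for $\psi$ realises $k$ using canonical configurations, while conversely any $d$-scattered set of size $k$ must saturate every canonical quota, forcing every column to adhere to a single canonical configuration per bag with only monotone switches across bags, and to respect all constraints. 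The extracted multi-assignment is therefore satisfying, monotone, and (by the cycle gadget) consistent on $V_2$.

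The main obstacle will be designing the forbidden-combination gadget so as to simultaneously satisfy three competing constraints: it must trigger exactly one vertex loss in the forbidden case and none otherwise, it must not consume any of the canonical vertices of the four columns in any non-forbidden case, and it must be inserted within the column structure of $B_{b(c)}$ using only $O(1)$ extra width per bag, so that the overall pathwidth increase remains the $O(d\log p)$ dictated by the $V_2$-consistency gadgets rather than multiplicative. Existing reductions such as \cite{KatsikarelisLP22} provide gadgets of a related but coarser character, designed for SETH-based bounds where constant-factor pathwidth inflation is tolerable, so the bulk of the technical work will consist of adapting them to meet these rigidity requirements. Once the gadget is in place the bookkeeping cleanly decomposes into ``switching loss'' on each column plus ``constraint loss'' per bag, exactly matching the monotone-multi-assignment semantics of \cref{cor:weird}.
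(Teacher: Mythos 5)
Your high-level plan — reduce from the promise 4-CSP of \cref{cor:weird} over alphabet $[d]$, encode values by which vertex of a path gadget is selected, enforce monotonicity by making downward switches infeasible under a tight budget, and check constraints at the bag $b(c)$ — is exactly the paper's strategy, and your budget accounting (one vertex per variable gadget plus one per constraint) is sound in outline. However, the proposal has a genuine gap at its core: the ``forbidden-combination gadget'' is never constructed, and you yourself flag it as the main obstacle. This is not a routine detail. You want a gadget that loses a vertex iff \emph{all four} attachment points are simultaneously selected — a conjunction detector. But in a distance-based problem like $d$-\textsc{Scattered Set}, what comes for free is the disjunctive polarity: a candidate vertex is blocked iff \emph{at least one} nearby vertex is selected. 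The paper resolves this by flipping the quantifier: for each constraint $c$ it builds two cliques $X_c,Y_c$ indexed by the (at most $d^4$) \emph{satisfying} assignments $\sigma$, connects $y_{c,\sigma}$ by paths of length $\lceil d/2\rceil$ to the path vertices encoding values that \emph{disagree} with $\sigma$, and connects $x_{c,\sigma}$ to $y_{c,\sigma}$ by a path of length $\lfloor d/2\rfloor-1$; then $x_{c,\sigma}$ is selectable iff the realized assignment equals $\sigma$, and the single vertex the gadget may contribute exists iff \emph{some} satisfying assignment is realized. Your per-falsifying-assignment AND-detector would essentially have to rediscover this inversion, so the proposal as written does not yet constitute a proof.

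Separately, your architecture is more complicated than necessary in ways that create avoidable verification burdens. The paper uses paths of only $d$ vertices per (variable, bag) pair, joined by a single edge $p_{i,j}^d p_{i,j+1}^1$, so that exactly one vertex per path is selected, the value read at each bag is unambiguous, and a downward switch is geometrically \emph{impossible} (distance $d-v_1+v_2<d$) rather than merely ``costing one vertex''; your length-$Nd$ residue-encoded columns allow the phase to drift upward within a single column, so the value at the constraint's attachment point need not match the value at the stitching points, which you would have to argue is harmless. Likewise, for $V_2$ the paper simply uses \emph{one} $d$-vertex path per variable (placed in every bag containing it, which is affordable since bags contain only $O(d\log p)$ such variables), making consistency automatic, whereas your reversed-cycle construction is extra machinery that must itself be verified. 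None of these simplifications is essential to correctness, but the missing constraint gadget is.
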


\begin{proof}

We present a reduction from the 4-\textsc{CSP} problem of \cref{cor:weird} with
alphabet size $d$ to $d$-\textsc{Scattered Set}. We are given a \textsc{CSP}
instance $\psi$, a partition of its variables into two sets $V_1, V_2$, a path
decomposition $B_1,\ldots, B_t$ of $\psi$ of width $p$ such that each bag
contains at most $O(d\log p)$ variables of $V_2$, and an injective function $b$
mapping each constraint to a bag that contains its variables. We want to
construct a $d$-\textsc{Scattered Set} instance $G$ with $\pw(G)=p+O(d\log
p+d^6)$ and a target size $k$, such that if $\psi$ is satisfiable, then $G$ has
a $d$-scattered set of size at least $k$, while if $G$ has such a $d$-scattered
set, $\psi$ admits a monotone satisfying multi-assignment which is consistent
for $V_2$.

Suppose that the variables of $\psi$ are numbered $x_1,\ldots, x_n$ and that
the given path decomposition is nice. We construct the following graph:

\begin{enumerate}

\item For each $x_i\in V_2$ we construct a path $P_i$ on $d$ vertices, call
them $p_i^1,\ldots,p_i^d$.

\item For each $x_i\in V_1$ and $j\in[t]$ with $x_i\in B_j$ we construct a path
$P_{i,j}$ on $d$ vertices, call them $p_{i,j}^1,\ldots,p_{i,j}^d$.

\item For each $j\in[t-1]$ and $x\in B_j\cap B_{j+1}\cap V_1$ we add an edge
$p_{i,j}^dp_{i,j+1}^1$.

\item For each constraint $c$ let $j=b(c)$ and suppose $c$ involves variables
$x_{i_1},x_{i_2},x_{i_3}, x_{i_4}\in B_j$. Consider the set $\mathcal{S}_c$ of
the at most $d^4$ satisfying assignments of $c$. Construct two sets of vertices
$X_c = \{ x_{c,\sigma}\ |\ \sigma\in\mathcal{S}_c\}$ and $Y_c = \{
y_{c,\sigma}\ |\ \sigma\in\mathcal{S}_c\}$ and add edges so that each of the
two sets induces a clique.  For each $\sigma\in \mathcal{S}_c$ and
$\alpha\in\{1,2,3,4\}$, if $\sigma(x_{i_\alpha})= v_{\alpha}$, then connect
$y_{c,\sigma}$ with $p_{i_\alpha,j}^{v'}$ (or $p_{i_\alpha}^{v'}$, if
$x_{i_\alpha}\in V_2$), for all $v'\in[d]\setminus\{v_{\alpha}\}$, using paths
of length $\lceil \frac{d}{2}\rceil$ going through new vertices.  For each
$\sigma\in\mathcal{S}_c$, connect $x_{c,\sigma}$ to $y_{c,\sigma}$ using a path
of length $\lfloor\frac{d}{2}\rfloor-1$, through new vertices if necessary.
For $d\le2$ this means that we contract the vertex $x_{c,\sigma}$ into
$y_{c,\sigma}$.

\end{enumerate}

We set the target $d$-scattered set size to be equal to be $k=L_1+L_2+m$, where
$L_1$ is the number of paths constructed in step 1, $L_2$ is the number of
paths constructed in step 2, and $m$ is the number of constraints of $\psi$.

For correctness, suppose we have a satisfying assignment $\sigma$ for $\psi$.
For each $x_i\in V_2$ we select into our solution $p_{i}^{\sigma(x_i)}$. For
each $x_i\in V_1$ and $j\in[t]$ such that $x_i\in B_j$ we select
$p_{i,j}^{\sigma(x_i)}$. Finally, for each constraint $c$, we select
$x_{c,\sigma_c}$, where $\sigma_c\in\mathcal{S}_c$ is the restriction of
$\sigma$ to the variables of $c$. The set we selected has the required size. To
see that it is a $d$-scattered set, we first note that any path between a
vertex of $P_i$ or $P_{i,j}$ and a vertex of the cliques $X_c$ has length at
least $\lceil \frac{d}{2}\rceil$. This implies that selected vertices from
cliques $X_c$ are at pairwise distances at least $d$; and that if two selected
vertices from the paths $P_i$ and $P_{i,j}$ are at distance at most $d-1$, then
the shortest path connecting them must use only edges added in steps 1,2, and
3. However, it is not hard to see that if we restrict ourselves to such edges,
the distances between selected vertices are at least $d$. Finally, to check
that no selected vertex from a clique $X_c$ is at distance at most $d-1$ to a
selected vertex from a path, we observe that each $x_{c,\sigma_c}\in X_c$ is at
distance at most $d-1$ only from the vertices of the paths corresponding to
vertices of $c$ which represent assignments that disagree with $\sigma_c$.
Since no such vertex has been selected, we have a valid scattered set.

For the converse direction we first observe that for each $c$, the vertices
added in step 4, namely $X_c, Y_c$ and possibly the internal vertices of paths
we constructed, are all at pairwise distances at most $d-1$. Hence, from each
such set any scattered set selects at most one vertex. Similarly, for each path
$P_i$ and $P_{i,j}$ the scattered set may select at most one vertex. Therefore,
a solution of the desired size must select exactly one vertex from each path
and from each constraint gadget. From the vertices selected in paths we extract
a multi-assignment: for $x_i\in V_1$ and $j\in [t]$ with $x_i\in B_j$ we assign
to $x_i$ value $v$ in $j$ if $p_{i,j}^v$ is selected; for $x_i\in V_2$ we
assign $x_i$ value $v$ if $p_i^v$ is selected. This assignment is monotone,
because if for $x_i\in B_j\cap B_{j+1}\cap V_1$ we assign values $v_1$ in $j$
and $v_2<v_1$ in $j+1$, this means we have in the scattered set $p_{i,j}^{v_1},
p_{i,j+1}^{v_2}$, which are at distance $v_2-v_1+d<d$, contradiction. Let us
also argue why the assignment is satisfying. Consider a constraint $c$ with
$b(c)=j$, involving four variables from $V_{i_1}, V_{i_2}, V_{i_3}, V_{i_4}$.
If the assignment we have given to the four variables is not satisfying, then
for each $\sigma\in\mathcal{S}_c$, the vertex $y_{c,\sigma}$ is at distance at
most $\lceil\frac{d}{2}\rceil$ from a selected vertex. This means that all
vertices of $X_c$ (and therefore all vertices of the gadget) are at distance at
most $\lfloor\frac{d}{2}\rfloor +\lceil\frac{d}{2}\rceil-1 = d-1$ from a
selected vertex, meaning that we cannot have selected any vertex from this
constraint gadget, contradiction.

Finally, to bound the pathwidth of the graph $G$, start with the decomposition
of the primal graph of $\psi$ and in each $B_j$ replace each $x_i\in V_2\cap
B_j$ with all the vertices of $P_i$. For each constraint $c$, let $b(c)=j$ and
add into $B_j$ all the at most $O(d^6)$ vertices we constructed in step 4 for
$c$ (that is, $X_c, Y_c$ and the internal vertices of the paths). So far each
bag contains $O(d\log p + d^6)$ vertices. To cover the remaining vertices, for
each $j\in [t]$ we replace the bag $B_j$ with a sequence of bags such that all
of them contain the vertices we have added to $B_j$ so far, the first bag
contains $\{p_{i,j}^1\ |\ x_i\in V_1\cap B_j\}$ and the last bag contains
$\{p_{i,j}^d\ |\ x_i\in V_1\cap B_j\}$. We insert a sequence of $O(pd)$ bags
between these two, at each step adding a vertex $p_{i,j}^{v+1}$ and then
removing $p_{i,j}^v$ in a way that covers all edges of paths of step 2. To
cover the edges of step 3 it suffices to add a similar sequence of bags between
the bag containing $\{p_{i,j}^d\ |\ x_i\in V_1\cap B_j\}$ and the bag
containing $\{p_{i,j+1}^1\ |\ x_i\in V_1\cap B_{j+1}\}$.  \end{proof}

We note that in \cref{thm:robust} we skipped the case of \textsc{MaxW-SAT} for
instances of arity $2$, as hardness for such instances follows from the
hardness of \textsc{Independent Set}, which we just established. We therefore
have the following corollary.

\begin{corollary}

There exists an algorithm solving \textsc{MaxW-}$2$\textsc{-SAT} in time
$O((2-\eps)^{\pw}|\psi|^{O(1)})$, where $\psi$ is the input formula, if and
only if the \ppseth\ is false. 

\end{corollary}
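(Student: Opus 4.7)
The plan is to prove the two directions separately, leveraging the equivalences already established in the preceding sections.

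For the direction \ppseth\ false $\Rightarrow$ fast \textsc{MaxW-}$2$\textsc{-SAT} algorithm, I would simply invoke \cref{thm:robust}. Since the \ppseth\ is equivalent to the non-existence of a $(2-\eps)^{\pw(\psi)}|\psi|^{O(1)}$ algorithm for \textsc{Max-}$(t,w)$\textsc{-}$2$\textsc{-SAT} (even restricted to instances where each variable appears at most $16$ times), falsification of the \ppseth\ produces such an algorithm. Observing that \textsc{MaxW-}$2$\textsc{-SAT} is precisely the restriction of \textsc{Max-}$(t,w)$\textsc{-}$2$\textsc{-SAT} obtained by setting $t=m$, the same algorithm solves \textsc{MaxW-}$2$\textsc{-SAT} within the promised time bound.

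For the converse direction, I would give a trivial pathwidth-preserving reduction from \textsc{Independent Set} to \textsc{MaxW-}$2$\textsc{-SAT}. Specifically, given a graph $G=(V,E)$ and a path decomposition of $G$ of width $\pw(G)$, construct a 2-CNF formula $\psi$ with one variable $x_v$ per vertex $v\in V$ and one clause $(\neg x_u \lor \neg x_v)$ per edge $uv\in E$. Any assignment satisfying all clauses corresponds to an independent set of $G$ (True variables), and vice versa, so the maximum weight of a satisfying assignment equals the size of a maximum independent set. The primal graph of $\psi$ is exactly $G$, so the same path decomposition works and $\pw(\psi)=\pw(G)$.

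If there existed an algorithm solving \textsc{MaxW-}$2$\textsc{-SAT} in time $O((2-\eps)^{\pw(\psi)}|\psi|^{O(1)})$, running it on the reduced formula would solve $2$-\textsc{Scattered Set} (i.e., \textsc{Independent Set}) in time $O((2-\eps)^{\pw(G)}|G|^{O(1)})$, since $|\psi|=O(|V|+|E|)$ is polynomial in $|G|$. By \cref{lem:scattered2} applied with $d=2$, this would falsify the \ppseth, completing the equivalence. No significant obstacle is expected: both reductions are immediate, and all the heavy lifting has already been done in \cref{thm:robust} and \cref{lem:scattered2}.
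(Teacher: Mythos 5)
Your proposal is correct and matches the paper's argument: the paper justifies this corollary with exactly the two observations you make, namely that the algorithmic direction follows from \cref{thm:robust} and that the hardness direction follows because \textsc{MaxW-}$2$\textsc{-SAT} generalizes \textsc{Independent Set} (i.e., $2$-\textsc{Scattered Set}) via the pathwidth-preserving encoding with one clause $(\neg x_u \lor \neg x_v)$ per edge, so \cref{lem:scattered2} applies. No gaps.
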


\subsection{Set Cover}

We present a tight \ppseth\ equivalence for the complexity of solving
\textsc{Set Cover}. Recall that in this problem we are given a universe $U$ of
$n$ elements and a collection $\mathcal{C}$ of $m$ subsets of $U$ and are asked
to select a minimum cardinality collection of sets from $\mathcal{C}$ whose
union covers $U$. The pathwidth of such an instance is the pathwidth of the
bipartite incidence graph that represents the set system, that is, the graph
that has $U$ on one side, $\mathcal{C}$ on the other, and edges $xC$, for $x\in
U, c\in\mathcal{C}$ whenever $x\in C$.

We recall that it is already known that an algorithm solving \textsc{Set Cover}
in time $(2-\eps)^{\pw}(n+m)^{O(1)}$ would refute both the SETH and the \scc.
This is because $\pw\le \min\{n,m\}$, and it was shown in
\cite{CyganDLMNOPSW16} that an algorithm for \textsc{Set Cover} running in time
$(2-\eps)^mn^{O(1)}$ contradicts the SETH, while for the \scc\ this follows
from the definition of the conjecture. As shown in \cref{sec:evidence}, the
\ppseth\ is weaker than both of these conjectures and in this section we
establish equivalence with the \ppseth, strengthening the lower bound.

\begin{lemma}\label{lem:setcover}

For all $\eps>0$ we have the following: \textsc{Set Cover} with $n$ elements
and $m$ sets can be solved in time $O((2-\eps)^{\pw}(n+m)^{O(1)})$ if and only
if the \ppseth\ is false.

\end{lemma}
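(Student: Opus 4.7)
I will prove the equivalence in two directions, exploiting the fact that by \cref{thm:robust} the \ppseth\ is equivalent to the statement that \textsc{MaxW-SAT} cannot be solved in time $O((2-\eps)^{\pw^I(\phi)}|\phi|^{O(1)})$, where $\pw^I$ is the pathwidth of the incidence graph. Working with \textsc{MaxW-SAT} parameterized by incidence pathwidth (rather than \tsat\ with primal pathwidth) is crucial because this is precisely the formulation that lines up syntactically with \textsc{Set Cover}.

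\textbf{Forward direction (\ppseth\ false $\Rightarrow$ faster \textsc{Set Cover}).} Given a \textsc{Set Cover} instance $(U,\mathcal{C},k)$ together with a path decomposition of its incidence graph of width $p$, I would build a \textsc{MaxW-SAT} instance in the obvious way: create one variable $y_C$ per set $C\in\mathcal{C}$, give every variable unit weight, and for each element $e\in U$ add the monotone clause $\bigvee_{C\ni e} y_C$. Then $(U,\mathcal{C})$ has a cover of size at most $k$ iff the formula has a satisfying assignment of weight at most $k$. The incidence graph of the produced formula is \emph{identical} to the incidence graph of the set system (after renaming sets to variables and elements to clauses), so the same path decomposition, of width $p$, certifies $\pw^I\le p$. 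Applying the hypothesized $(2-\eps)^{\pw^I}|\phi|^{O(1)}$ algorithm for \textsc{MaxW-SAT} then solves \textsc{Set Cover} in time $(2-\eps)^{p}(n+m)^{O(1)}$. Note that here no counter gadget is needed, because \textsc{MaxW-SAT} measures weight natively; this is one of the main payoffs of having established \cref{thm:robust}.

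\textbf{Backward direction (faster \textsc{Set Cover} $\Rightarrow$ \ppseth\ false).} Given a CNF formula $\phi$ with a nice path decomposition of its incidence graph of width $p$, I would produce an equivalent \textsc{Set Cover} instance by the classical gadget: for each variable $x$ introduce two sets $T_x,F_x$ and an element $u_x$ with $u_x\in T_x\cap F_x$; for each clause $c$ introduce an element $u_c$, placed in $T_x$ (resp.\ $F_x$) iff $x$ appears positively (resp.\ negatively) in $c$. Set the target $k=n$. Any cover of size $\le n$ must select exactly one of $\{T_x,F_x\}$ for each $x$ (to cover $u_x$), which reads off as a \textsc{SAT} assignment, and this assignment satisfies every clause iff every $u_c$ is covered. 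Conversely, any satisfying assignment yields such a cover. The fast \textsc{Set Cover} algorithm, applied to this instance, then solves \textsc{SAT} and falsifies the \ppseth\ via \cref{thm:robust}.

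\textbf{Main obstacle.} The nontrivial technical point is controlling the incidence pathwidth of the constructed \textsc{Set Cover} instance. The naive construction, obtained by simultaneously replacing each occurrence of $x$ in the original decomposition by the pair $\{T_x,F_x\}$, doubles the bag size and so can only give a $(2-\eps)^{2\pw^I(\phi)}$ algorithm, which is \emph{not} enough to refute the \ppseth. The resolution is to use \cref{cor:weird}: we may assume the starting instance comes with a bag partition in which each bag carries at most $O(\log p)$ "consistent" variables, together with a mapping $b$ of constraints to bags, and it suffices to defeat monotone multi-assignments rather than genuine assignments. This lets me replace each $x$ at each bag by only one of $T_x,F_x$, chosen according to the polarity actually needed at that bag, and introduce the three-element bundle $\{T_x,F_x,u_x\}$ in a single dedicated bag — the one mapped by $b$ that forces a polarity switch. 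The monotonicity tolerance of \cref{cor:weird} absorbs any locally inconsistent choices made by different bags, while the $O(\log p)$ consistent variables, together with the standard counter gadget, ensure that globally the extracted cover corresponds to a genuine satisfying assignment. The resulting decomposition has width $p+O(\log|\phi|)$, and the extra $\log$-factor is absorbed into the $(n+m)^{O(1)}$ slack of the target running time.
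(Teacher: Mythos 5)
Your forward direction matches the paper's, up to one polarity slip: \textsc{MaxW-SAT} as defined in \cref{thm:robust} asks for an assignment with \emph{at least} $w$ True variables, whereas your encoding ($y_C$ True meaning ``$C$ selected'', cover of size at most $k$) needs \emph{at most} $k$ True variables, i.e.\ a minimization. The paper fixes this by writing each element's clause as $\bigvee_{C\ni u}\neg x_C$ and selecting $C$ when $x_C$ is \emph{False}, so that maximizing weight minimizes the cover. Easy to repair, and the incidence-graph identification is exactly right.

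The backward direction has a genuine gap. You keep a single global pair $T_x,F_x$ per variable and claim you can place ``only one of $T_x,F_x$'' in each bag, choosing the polarity needed locally. But in the incidence graph of the Set Cover instance, $T_x$ is adjacent to $u_c$ for \emph{every} clause $c$ containing $x$ positively, and $F_x$ to every clause containing $x$ negatively; a path decomposition requires each of $T_x,F_x$ to occupy a contiguous interval of bags covering all of its incident clause-elements (plus the bag with $u_x$). Since a typical variable occurs with both polarities in clauses scattered along its life interval, both $T_x$ and $F_x$ must span essentially that entire interval, and the width doubles after all --- the dedicated ``polarity switch'' bag does not help, because connectivity, not the location of $u_x$, is what forces both sets everywhere. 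The paper's resolution is not to reuse a global pair but to create a \emph{fresh} pair of sets $C_{x,j,1},C_{x,j,2}$ and a fresh choice-element $u_{x,j}$ for every (variable, bag) occurrence, with clause gadgets referring only to the bag-local copies, and to link consecutive copies by an extra element $u_{x,j,j+1}\in C_{x,j,1}\cap C_{x,j+1,2}$ whose coverage forbids exactly the non-monotone transition $2\to 1$. This is precisely the pattern \cref{cor:weird} licenses: one does not recover consistency of the local choices (there is no ``standard counter gadget'' to add on the Set Cover side --- the counters live inside \cref{lem:weird}); one only needs the target-problem gadgets to enforce \emph{monotonicity} of the induced multi-assignment on $V_1$ and genuine consistency on the $O(\log p)$ variables of $V_2$, for which a single global pair per $V_2$-variable suffices. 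With per-bag copies the width bookkeeping goes through ($p+O(\log p)$), and your exactly-$n$ budget argument generalizes to the tight budget $L_1+L_2+m$ once the constraint gadgets are counted.
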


\begin{proof}

One direction is easy: if the \ppseth\ is false, by \cref{thm:robust} we have
an algorithm solving \textsc{MaxW-SAT} in
$((2-\eps)^{\pw^I(\psi)}|\psi|^{O(1)})$ time, where $\psi$ is the input
formula. We construct from the \textsc{Set Cover} instance a formula that has a
variable $x_C$ for each $C\in\mathcal{C}$. For each element $u\in U$ we
construct the clause $(\bigvee_{C: u\in C} \neg x_C )$. An assignment of
maximum weight to this formula corresponds to a minimum cardinality set cover
(select sets $C$ for which $x_C$ is False). The incidence graph of $\psi$ is
exactly the same as the incidence graph of the \textsc{Set Cover} instance.

For the converse direction, we perform a reduction from the \textsc{4-CSP}
problem of \cref{cor:weird} for $B=2$. Given a \textsc{CSP} instance $\psi$ and
a path decomposition  $B_1,\ldots, B_t$ of its primal graph, with the variables
partitioned into $V_a, V_b$, we do the following:

\begin{enumerate}

\item For each $x\in V_b$ we construct an element $u_x$ and two sets $C_{x,1},
C_{x,2}$. The element $u_x$ is placed in both of these sets and no other set.

\item For each $x\in V_a$ and $j\in[t]$ such that $x\in B_j$ we construct an
element $u_{x,j}$ and two sets $C_{x,j,1}, C_{x,j,2}$. The element $u_{x,j}$ is
placed in both of these sets and no other set.

\item For each $j\in[t-1]$ and variable $x\in B_j\cap B_{j+1}\cap V_a$, we
construct a new element $u_{x,j,j+1}$ and place it into $C_{x,j,1}$ and
$C_{x,j+1,2}$.

\item For each constraint $c$ such that $b(c)=j$, suppose that $c$ involves
four variables $x_1,x_2,x_3,x_4\in B_j$ and has the set of satisfying
assignments $\mathcal{S}_c$. For each $\sigma\in\mathcal{S}_c$ we construct
four elements $y_{c,\sigma,\alpha}$, for $\alpha\in\{1,2,3,4\}$. For each
$\sigma$, if $\sigma$ assigns to $x_1,x_2,x_3,x_4$ values $v_1,v_2,v_3,v_4$
respectively, then for each $\alpha\in\{1,2,3,4\}$ we place
$y_{c,\sigma,\alpha}$ in the set $C_{x_{i_\alpha},j,v_\alpha}$. We add
$|\mathcal{S}_c|$ further sets: for each $\sigma\in\mathcal{S}_c$, the set
$R_{c,\sigma}$ contains all elements $y_{c,\sigma',\alpha}$, for
$\sigma'\in\mathcal{S}\setminus\{\sigma\}$, $\alpha\in\{1,2,3,4\}$. Add a new
element $w_c$ and place it into all sets $R_{c,\sigma}$ and into no other set.

\end{enumerate}

If the number of elements constructed in steps $1$ and $2$ is $L_1,L_2$
respectively, and the number of constraints of $\psi$ is $m$, we set the target
cover cardinality to $L_1+L_2+m$. This completes the construction. 

To argue for correctness, assume first that we have a satisfying assignment
$\sigma$ to $\psi$. We transform it into a set cover by selecting for each
$x\in V_a$ and $j$ such that $x\in B_j$, the set $C_{x,j,\sigma(x)}$, and for
each $x\in V_b$ the set $C_{x,\sigma(x)}$. For each constraint $c$ we select
the set $R_{c,\sigma_c}$, where $\sigma_c$ is the restriction of $\sigma$ to
the variables of $c$. This selection clearly has the desired cardinality and
covers elements of steps 1 and 2. Elements of step 3 are also covered, because
the selection is consistent. Elements of step 4 are covered, because $\sigma_c$
is satisfying for $c$, so the four elements $y_{c,\sigma_c,\alpha}$ which are
not covered by $R_{c,\sigma_c}$ are coverd by the other sets.

For the converse direction, if we have a set cover of the desired size we
observe that (i) we must select at least one set for each element of step 1
(ii) at least one set for each element of step 2 (iii) at least one set for
each constraint. If we achieve the desired target all three requirements become
tight, so we can extract an assignment for $V_b$ and a multi-assignment for
$V_a$. The multi-assignment must be monotone, because if for some $j$ and $x\in
B_j\cap B_{j+1}\cap V_a$ we have sets $C_{x,j,2}, C_{x,j+1,1}$, the element
added in step 3 is not covered. The multi-assignment must be satisfying because
for each $c$ we have selected a unique $R_{c,\sigma}$, for $\sigma$ a
satisfying assignment to $c$ and the elements $y_{c,\sigma,\alpha}$, which are
not covered by $R_{c,\sigma}$ must therefore be covered by our selection for
the variables.

Finally, let us bound the pathwidth of the new instance. For each $j\in[t]$ we
replace the bag $B_j$ of the original decomposition with a sequence of
$|B_j\cap V_a|\le p+1$ bags $B_j^1,\ldots,B_j^{|B_j\cap V_a|}$. For $\beta\in
[|B_j\cap V_a|]$ the bag $B_j^\beta$ contains: (i) for all $x\in V_b\cap B_j$
the elements $C_{x,1}, C_{x,2}, u_x$; (ii) if there exists $c$ with $b(c)=j$,
then $B_j^\beta$ contains all the at most $O(1)$ elements and sets of step 4
constructed for $c$. We now add elements to this sequence of bags to cover the
construction of step 2, namely, for each $x\in V_a\cap B_j$ we place
$C_{x,j,2}$ in $B_j^1$ and $C_{x,j,1}$ in $B_j^{|B_j\cap V_a|}$; for each $x\in
B_j\cap V_a$ we select a distinct bag in the sequence, we add to this bag
$u_{x,j}$, we add $C_{x,j,2}$ to all bags up to and including this bag, and we
add $C_{x,j,1}$ to all bags from this bag to the last. Finally, to cover the
edges of step 3 it suffices to add a sequence of bags between $B_j^{|V_a\cap
B_j|}$ and $B_{j+1}^1$ to cover the edges between the new elements
$u_{x,j,j+1}$ and sets $C_{x,j,2}, C_{x,j+1,1}$, again by exchanging elements
one pair at a time. The width of the resulting decomposition is $p+O(\log p)$.
\end{proof}

\subsection{Dominating Set}

For a positive integer $r\ge 1$, the $r$-\textsc{Dominating Set} problem is the
problem of selecting in a graph $G$ a set of vertices of minimum cardinality
such that all remaining vertices are at distance at most $r$ from a selected
vertex. For $r=1$ this is just \textsc{Dominating Set}. In this section we
present a tight bound on the complexity of solving $r$-\textsc{Dominating Set},
proving that, for all $r\ge 1$, obtaining an algorithm with parameter
dependence better than $(2r+1)^{\pw}$ is equivalent to the \ppseth. This
strengthens a result of Borradaile and Le \cite{BorradaileL16} who proved that
such an algorithm cannot be obtained under the SETH. We remark that obtaining
an algorithm with running time $(2r+1)^{\pw}n^{O(1)}$ is easy using dynamic
programming, and it is possible to achieve this performance even if we
parameterize by treewidth, by using fast subset convolution techniques
\cite{BorradaileL16}.

\begin{lemma}\label{lem:domset1} If the \ppseth\ is false, then for all $r\ge
1$ there exists $\eps>0$ and an algorithm that solves $r$-\textsc{Dominating
Set} in time $O((2r+1-\eps)^{\pw}n^{O(1)})$.  \end{lemma}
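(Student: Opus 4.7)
\medskip

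The plan is to reduce $r$-\textsc{Dominating Set} parameterized by pathwidth to an instance of \textsc{MaxW-2-CSP} over an alphabet of size $2r+1$, while preserving pathwidth up to an additive constant. Then, assuming the \ppseth\ is false, we invoke \cref{thm:csp} to solve the resulting \textsc{CSP} in time $O((2r+1-\eps)^{\pw}|\psi|^{O(1)})$, which transfers back to the desired algorithm for $r$-\textsc{Dominating Set}. This follows the same template as \cref{lem:scattered1}, with the CSP alphabet mimicking the standard $(2r+1)$-state DP for $r$-\textsc{Dominating Set}.

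More concretely, given an instance $(G,k)$ with a nice path decomposition $B_1,\ldots,B_t$ of $G$ of width $p$, I would introduce for each $v\in V$ and each $j\in[t]$ with $v\in B_j$ a \textsc{CSP} variable $v_j$ over alphabet $\{0,1,\ldots,2r\}$. The intended meaning is: $v_j=0$ means $v$ lies in the dominating set; $v_j=i$ for $1\le i\le r$ means $v$ is ``supplied'' at distance $i$ (i.e., the nearest selected vertex in $G_j$, the union of the first $j$ bags, is at distance $i$); and $v_j=r+i$ for $1\le i\le r$ means $v$ is ``demanding'' and must be dominated by a future vertex at distance at most $r-i+1$. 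The constraints would be of three kinds: (i) adjacency constraints of the form $|u_j - v_j|\le 1$ (in the appropriate sense), for each edge $uv\in E$ with $u,v\in B_j$, propagating supply/demand states along edges exactly as in the DP; (ii) monotonicity constraints across consecutive bags $B_j, B_{j+1}$ for vertices $v\in B_j\cap B_{j+1}$, forcing $v_{j+1}$ to be a valid DP successor of $v_j$ (supply distances cannot increase, demand states may be upgraded to supply states when the decomposition introduces a nearby selected vertex); and (iii) forgetting constraints, namely $v_j\le r$ at the last bag $B_j$ containing $v$, ensuring that every vertex is eventually supplied. Outside a variable's active range we clamp it to a fixed value, as in \cref{lem:scattered1}. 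The weight function gives weight $1$ to value $0$ and weight $0$ otherwise (counted only at the first active bag of each vertex), so that an assignment of total weight exactly $k$ corresponds to an $r$-dominating set of size $k$.

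The path decomposition of the primal graph of $\psi$ is obtained from that of $G$ in the obvious way: each bag $B_j$ is replaced by the bag $\{v_j : v\in B_j\}$, and short sequences of bags are inserted between consecutive $B_j,B_{j+1}$ to cover the monotonicity constraints (adding $v_{j+1}$ before removing $v_j$ one variable at a time). This yields $\pw(\psi)\le p+O(1)$. Applying \cref{thm:csp} to $\psi$ then gives the claimed running time, since $|\psi|$ is polynomial in $n$ and $(2r+1-\eps)^{p+O(1)} = O((2r+1-\eps')^{p})$ for an appropriate $\eps'>0$.

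The main obstacle I anticipate is correctly formulating the within-bag and transition constraints as binary constraints that faithfully capture the ``promise/supply'' semantics of the $(2r+1)$-state DP, and in particular ensuring that a demand made by $v_j$ in one bag is genuinely fulfilled later (rather than being silently discharged by the clamping outside $v$'s active range). This is a matter of carefully unfolding the standard DP on path decompositions, and is essentially a bookkeeping exercise; none of the reductions in \cref{sec:fpt} required the heavier machinery of \cref{cor:weird} for this direction, and here too the reduction goes from the problem to \textsc{CSP}, so no multi-assignment subtleties arise.
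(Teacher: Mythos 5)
Your high-level strategy is the same as the paper's: reduce to \textsc{MaxW-CSP} over an alphabet of size $2r+1$ with one variable per vertex per bag, preserve pathwidth up to an additive constant, and invoke \cref{thm:csp}. However, two points need attention. First, your weight accounting is inverted: \textsc{MaxW-CSP} asks for an assignment of weight \emph{at least} a target, so giving weight $1$ to the ``in the set'' value and asking for total weight $k$ only certifies the existence of a \emph{large} $r$-dominating set, which is vacuous. The paper does the opposite: every vertex gets a variable in all $t$ bags, the value encoding membership in $S$ is forced to be consistent across all $t$ copies, weight $1$ is given to every value \emph{except} that one, and the target is $(n-k)t$, so maximizing weight minimizes $|S|$. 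Your ``counted only at the first active bag'' clause also does not fit the problem definition, where the weight function is applied uniformly to every variable.

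Second, the step you defer as ``essentially a bookkeeping exercise'' --- formulating transition constraints so that a demand recorded at $v_j$ is genuinely fulfilled rather than silently discharged --- is exactly where the content of the proof lies, and the paper does \emph{not} resolve it by unfolding the supply/demand DP. Instead it uses a signed-distance certificate: variables range over $\{-r,\ldots,r\}$, the constraint $|x_{i,j}|=|x_{i,j+1}|$ forces $|x_{i,j}|$ to equal a single putative distance $d(v_i,S)$ throughout, the boundary constraints $x_{i,1}\le 0$ and $x_{i,t}\ge 0$ together with near-monotone decrease force the sign to flip from negative to positive exactly once, and the flip is only permitted at the (injectively assigned) bag $b(e)$ of an edge $e$ incident on $v_i$ whose other endpoint has absolute value one smaller. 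This makes the existence of a witnessing neighbor on a shortest path to $S$ a local, checkable condition and avoids the fixpoint/ordering subtleties of propagating demands through the edges of a bag. If you want to stick with the DP-state encoding, you would need to spell out the edge-by-edge transition constraints (using the injective edge-to-bag map of \cref{lem:nice}) and verify that $2r+1$ states suffice without the consistency pitfalls that Borradaile and Le had to address; as written, the proposal asserts rather than establishes this.
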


\begin{proof}

Fix some $r\ge 1$. We will reduce a given $r$-\textsc{Dominating Set} instance
$G$ to \textsc{MaxW-CSP} over an alphabet of size $2r+1$, while increasing the
pathwidth by at most an additive constant. If the \ppseth\ is false, then by
\cref{thm:robust} there is a fast algorithm for solving this \textsc{MaxW-CSP}
instance, which will allow us to obtain the desired running time for
$r$-\textsc{Dominating Set}.

Suppose we have a nice path decomposition of $G=(V,E)$ with the bags numbered
$B_1,\ldots, B_t$. Vertices of $G$ are numbered $V=\{v_1,\ldots, v_n\}$. We can
assume that there exists an injective function $b:E\to [t]$ which maps each
edge of $G$ to the index of a bag that contains both endpoints ($b$ can be made
injective by repeating bags if necessary).  Furthermore, we can assume that $b$
maps edges to bags which are not introducing new vertices, again by repeating
bags.

We will construct a \textsc{CSP} instance $\psi$ over an alphabet $\{-r,\ldots,
r\}$ of size $2r+1$ as follows:

\begin{enumerate}

\item For each $v_i\in V$ we construct $t$ variables $x_{i,j}$, for $j\in[t]$.

\item For each $v_i\in V$ and $j\in[t-1]$,  we add a constraint which states
that $|x_{i,j}|=|x_{i,j+1}|$. If no edge $e$ has $b(e)=j+1$ or for some edge
$e$ we have $b(e)=j+1$ but $e$ is not incident on $v_i$, then we also add the
constraint $x_{i,j+1}\le x_{i,j}$.

\item For each $v_i\in V$, if the last bag containing $v_i$ is $B_j$, then for
all $j'\in\{j,\ldots, t-1\}$ we add the constraint $x_{i,j'}=x_{i,j'+1}$.

\item For each $e\in E$ let $j=b(e)$ and $e=v_{i_1}v_{i_2}$ with
$v_{i_1},v_{i_2}\in B_j$. We add a constraint dictating that if $x_{i_1,j}\ge
0$, then $x_{i_1,j-1}\ge0$ or $|x_{i_1,j}|\ge |x_{i_2,j}|+1$.  We also add the
symmetric constraint for $x_{i_2,j}$.

\item For each $v_i\in V$ we add the constraint $x_{i,t}\ge 0$ and $x_{i,1}\le
0$.

\end{enumerate}

We define a weight function which gives weight $1$ to all values except value
$0$, which receives value $0$. The target weight is $(n-k)t$, where $k$ is the
target value for the $r$-dominating set.

For correctness, suppose we have an $r$-dominating set of $G$, call it $S$, of
size at most $k$. For each $v_i\in S$ we assign to $x_{i,j}$, for $j\in[t]$ the
value $0$. We will not reuse value $0$, so this will achieve the target weight.
For each other $v_i\in V\setminus S$, we assign all $x_{i,j}$ values such that
$|x_{i,j}|=d(v_i,S)\le r$.  More precisely, for each $v_i\in V\setminus S$, let
$e$ be the edge incident on $v_i$ that is the first edge of a shortest path
from $v_i$ to $S$. Let $j=b(e)$. We set $x_{i,j'}=-d(v_i,S)$ for all $j'<j,
j'\in[t]$ and $x_{i,j'}=d(v_i,S)$ for all remaining $j'$. This satisfies
constraints of steps 2,3 and 5. To see that constraints of step 4 are
satisfied, we observe that if $x_{i_2}$ is the other endpoint of $e$, then it
must be the case that $d(v_{i_2},S) = d(v_i,S)-1$ so $|x_{i,j}|=|x_{i_2,j}|+1$.

For the converse direction, suppose that we have a satisfying assignment with
the desired weight. By constraints of step 2, each group of $t$ variables is
either assigned $0$ everywhere or nowhere, therefore if we select the set $S=\{
v_i\ |\ x_{i,j}=0\ \textrm{for some }j\}$, then $|S|\le k$. We want to argue
that $S$ is an $r$-dominating set. For this we will prove that for all $v_i\in
V$, and $j\in[t]$, if $x_{i,j}\ge 0$, then $d(v_i,S)\le x_{i,j} $. Since
$x_{i,t}\ge 0$ for all $v_i\in V$, this will prove correctness. We prove the
statement by induction on $|x_{i,j}|$. For $|x_{i,j}|=1$, the constraints
ensure that $|x_{i,j}|$ is contant throughout, $x_{i,1}<0$, and $x_{i,t}>0$, so
there must be some $j$ such that $x_{i,j-1}=-1$ and $x_{i,j}=1$. Then, to avoid
violating the constraints of step 2, it must be the case that there exists $e$,
with $b(e)=j$, such that $e$ is incident on $v_i$. This implies that we have to
satisfy the constraint of step 4, which can only be satisfied if the other
endpoint of $e$ is a vertex of $S$. For the general case, if we have
$x_{i,j}>0$ and $x_{i,j-1}<0$, there must exist an edge $e$ with $b(e)=j$, $e$
is incident on $v_i$, and if $i_2$ is the other endpoint of $e$, then
$|x_{i_2,j}|\le x_{i,j}-1$. Since $x_{i_2,t}=|x_{i_2,t}|$ and we have
established by induction that $d(v_{i_2},S)=|x_{i_2,j}|$, we conclude that
$d(x_i,S)\le x_{i,j}$, as desired.

Finally, let us argue about the pathwidth of $\psi$.  Take the path
decomposition of $G$ and replace in each $B_j$ each vertex $v_i\in B_j$ with
the variable $x_{i,j}$. For each $e\in E$ such that $e=v_{i_1}v_{i_2}$ also
place $x_{i_1,j}, x_{i_2,j}$ in $B_{j-1}$. This covers constraints of step 4,
increasing the width by at most $2$ (since $b$ is injective). For each
$j\in[t-1]$ we insert between $B_j$ and $B_{j+1}$ a sequence of bags which
start with $B_j$ and at each step add a variable $x_{i,j+1}$ and then remove
$x_{i,j}$, one by one.  Finally, to cover the remaining variables and
constraints of steps 2,3 and 5, it suffices for each $v_i\in V$ that appears in
the interval $B_{j_1},\ldots, B_{j_2}$ to insert to the left of $B_{j_1}$ a
sequence of bags that contain all of $B_{j_1}$ and a path decomposition of the
path formed by $x_{i,1},\ldots, x_{i,j_1}$, and similarly after $B_{j_2}$.
\end{proof}

\begin{lemma}\label{lem:domset2} If for some $r\ge 1$ there exists $\eps>0$ an
algorithm that solves $r$-\textsc{Dominating Set} in time
$O((2r+1-\eps)^{\pw}n^{O(1)})$, then the \ppseth\ is false.  \end{lemma}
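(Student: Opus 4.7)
The plan is to mirror the structure of \cref{lem:scattered2} and \cref{lem:setcover} (converse direction), reducing from the promise 4-\textsc{CSP} of \cref{cor:weird} with alphabet size $B = 2r+1$ to $r$-\textsc{Dominating Set}. Starting with an instance $\psi$, a partition $V_1, V_2$ of its variables, a path decomposition $B_1,\ldots,B_t$ of width $p$ containing at most $O((2r+1)\log p)$ variables of $V_2$ per bag, and an injective function $b$ mapping constraints to bags, I will build a graph $G$ with $\pw(G) = p + O((2r+1)\log p + r^{O(1)})$ and a target cardinality $k$ such that satisfiability of $\psi$ corresponds to the existence of a size-$k$ $r$-dominating set of $G$. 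Feeding this graph to the hypothesized $(2r+1-\eps)^{\pw}n^{O(1)}$ algorithm would then falsify the \ppseth\ via \cref{cor:weird}, exactly as in the previous reductions of this section.

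The main building block is a variable gadget $D_{i,j}$ (for each $x_i \in V_1$ and $j$ with $x_i \in B_j$, and analogously a single $D_i$ for $x_i \in V_2$) consisting of a sequence of $2r+1$ ``slot'' vertices together with private witness vertices placed at carefully chosen distances, so that any $r$-dominating set must pick exactly one slot, and picking slot $v \in [2r+1]$ encodes the value $v$ assigned to $x_i$. To enforce monotonicity between consecutive bags, for each $x_i \in B_j \cap B_{j+1} \cap V_1$ I will join $D_{i,j}$ to $D_{i,j+1}$ by a \emph{directed ladder} of length $\Theta(r)$: a path of witness vertices whose distance to slot $v$ of $D_{i,j}$ is exactly $r$, while its distance to slot $v'$ of $D_{i,j+1}$ is $r - (v'-v)$ when $v' \ge v$ and strictly greater than $r$ when $v' < v$. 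This way, each intermediate witness is dominated iff the chosen values increase, which is the monotonicity condition of \cref{cor:weird}. For $x_i \in V_2$, the same ladder idea, applied to \emph{every} pair of bags where $x_i$ appears, forces global consistency (indeed monotonicity combined with the cyclic structure of intervals collapses to consistency).

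For each constraint $c$ with $b(c) = j$ involving variables from (at most) four groups $V_{i_1}, \dots, V_{i_4}$, I will attach a constraint gadget $C_c$ following the template of steps~3--4 in previous constructions: a ``disjunction'' vertex $w_c$ that must be $r$-dominated, together with, for each satisfying assignment $\sigma \in \mathcal{S}_c$, a witness $y_{c,\sigma}$ linked by length-$\le r$ paths to every slot of each $D_{i_\alpha,j}$ \emph{consistent} with $\sigma$. The gadget is designed so that $w_c$ is $r$-dominated iff at least one $y_{c,\sigma}$ has all four of its target slots selected in the variable gadgets, i.e.\ iff the assignment locally satisfies $c$. Constant padding inside $C_c$ ensures that any $r$-dominating set selects exactly one vertex there, so the total target $k$ is the sum of one per variable gadget plus one per constraint gadget.

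The main obstacles are twofold. First, the gadgets must be \emph{exactly tight}: the variable gadget must have precisely $2r+1$ locally optimal configurations (no more), since any slack in the selection gadget would weaken the base of the exponential and ruin the $(2r+1-\eps)$ reduction; careful placement of witness vertices at distances $r, r-1, \dots, 1$ inside $D_{i,j}$ is needed to achieve this. Second, and more delicate, is the pathwidth accounting: the ladder between $D_{i,j}$ and $D_{i,j+1}$ has $\Theta(r)$ internal vertices, and all of them, together with the $O(r)$ slot vertices of the two adjacent variable gadgets, must sit in a single bag during the transition; moreover the constraint gadget $C_c$ contributes $O(r^{O(1)})$ vertices that must coexist in the bag indexed by $b(c)$ with the $O(r)$ slot vertices of the four implicated variable gadgets. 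I will obtain the stated $p + O((2r+1)\log p + r^{O(1)})$ width by slotting these contributions into the bags in the same sweep-style fashion as in \cref{lem:scattered2}, using that $b$ is injective so that each original bag receives $O(1)$ constraint gadgets. Correctness in the forward direction is immediate by selecting slot $\sigma(x_i)$ in each variable gadget and one vertex in each $C_c$; in the backward direction, a size-$k$ solution must be ``tight'' everywhere, which lets us read off a multi-assignment, monotone by the ladder construction and satisfying by the constraint gadget analysis, thereby discharging the hypotheses of \cref{cor:weird}.
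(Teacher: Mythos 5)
Your overall architecture matches the paper's: reduce from the promise 4-\textsc{CSP} of \cref{cor:weird} with alphabet $2r+1$, use per-bag variable gadgets with exactly $2r+1$ locally optimal states, set a tight budget of one selected vertex per gadget, and do sweep-style pathwidth accounting. However, the constraint gadget as you describe it cannot work. You want ``$w_c$ is $r$-dominated iff at least one $y_{c,\sigma}$ has all four of its target slots selected.'' Domination of a single vertex is inherently a disjunction over selected vertices within distance $r$; it cannot express the quantifier pattern $\exists\sigma\,\forall\alpha$ (``some satisfying assignment has \emph{all four} of its slots selected''). If $y_{c,\sigma}$ is merely a relay connected to the slots consistent with $\sigma$, then $w_c$ becomes dominated as soon as \emph{one} consistent slot of \emph{one} $\sigma$ is selected, which is vacuous. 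The paper inverts the logic: $w_c$ can only be dominated (within budget) by \emph{selecting} some $y_{c,\sigma}$, and the gadget is wired so that $y_{c,\sigma}$ fails to dominate four private vertices $x_{c,\sigma,\alpha}$ (while dominating $x_{c,\sigma',\alpha}$ for all $\sigma'\neq\sigma$); each $x_{c,\sigma,\alpha}$ is reachable within distance $r$ only from the single slot $p_{i_\alpha,j}^{\sigma(x_{i_\alpha})}$, so the four residual domination obligations implement the conjunction. Your proposal is missing this second layer entirely, and without it the reduction does not force satisfaction.

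A second, structural omission: your gadgets are threaded with length-$\Theta(r)$ connector paths (hub-to-slot paths, ladders, constraint wiring), and under your exactly-tight budget every internal vertex of every such path would itself demand $r$-domination, which your accounting ignores. The paper handles this by first reducing to a variant with \emph{optional} vertices (vertices that need not be dominated), marking all connector-path interiors and the extremal $r$ resp.\ $r+1$ slots of the first/last copy of each $V_1$-variable path as optional, and then eliminating optionality at the end via a single hub vertex with a pendant path and budget $+1$. You need this (or an equivalent device) for the tightness argument to go through. Two smaller points: your $V_2$ treatment is muddled --- with a single gadget per $V_2$ variable kept in every bag where it occurs, consistency is automatic and no ladders between pairs of bags are needed (and ``monotonicity plus cyclic structure collapses to consistency'' is not an argument); and your monotonicity ladder is over-engineered --- a single edge $p_{i,j}^{2r+1}p_{i,j+1}^{1}$ between consecutive copies suffices, since a value decrease along the concatenated path creates a window of $2r+1$ consecutive unselected path vertices and hence an undominated one.
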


\begin{proof}

We present a reduction from the 4-\textsc{CSP} problem of \cref{cor:weird} with
alphabet size $(2r+1)$ to $r$-\textsc{Dominating Set}. We are given a
\textsc{CSP} instance $\psi$, a partition of its variables into two sets $V_1,
V_2$, a path decomposition $B_1,\ldots, B_t$ of $\psi$ of width $p$ such that
each bag contains at most $O(d\log p)$ variables of $V_2$, and an injective
function $b$ mapping each constraint to a bag that contains its variables. We
want to construct an $r$-\textsc{Dominating Set} instance $G$ with
$\pw(G)=p+O(r\log p+r^9)$ and a target size $k$, such that if $\psi$ is
satisfiable, then $G$ has an $r$-dominating set of size at most $k$, while if
$G$ has such an $r$-dominating set, $\psi$ admits a monotone satisfying
multi-assignment which is consistent for $V_2$. To ease presentation, we will
be looking for a multi-assignment that is monotonically \emph{decreasing}
rather than increasing. This is equivalent to what is required by
\cref{cor:weird}, as we can just flip the ordering of the bags of the given
decomposition.

Before we proceed, let us define a slightly more general version of
\textsc{Dominating Set}. Suppose that a subset of vertices of $G$ is given to
us marked as ``optional'', meaning that we are seeking a dominating set which
may select some of these vertices, but it is not mandatory to dominate them.
This more general version of the problem is actually equivalent to
$r$-\textsc{Dominating Set}. Indeed, given an instance of the problem with a
set of optional vertices $V_o$, we can add a new vertex $x$, connect it with
paths of length $r$ to all vertices of $V_o$, attach to $x$ a path of length
$r$, and increase the budget by $1$. Since without loss of generality any
$r$-dominating set of the new graph selects $x$ and no other new vertex, there
is a one-to-one correspondence between optimal dominating sets that dominate
non-optional vertices in the original graph and optimal dominating sets in the
new graph. Observe that this construction only increases the pathwidth by an
additive constant. Because of this, we will in the remainder allow ourselves to
mark some vertices as optional.

Suppose that the variables of $\psi$ are numbered $x_1,\ldots, x_n$ and that
the given path decomposition is nice. We construct the following graph:

\begin{enumerate}

\item For each $x_i\in V_2$ we construct a path $P_i$ on $2r+1$ vertices, call
them $p_i^1,\ldots,p_i^{2r+1}$. Construct a vertex $t_i$, connected to each
$p_i^v$, for $v\in[2r+1]$ via a path of length $r$ that goes through new
optional vertices.

\item For each $x_i\in V_1$ and $j\in[t]$ with $x_i\in B_j$ we construct a path
$P_{i,j}$ on $2r+1$ vertices, call them $p_{i,j}^1,\ldots,p_{i,j}^{2r+1}$.
Construct a vertex $t_{i,j}$, connected to each $p_i^v$, for $v\in[2r+1]$ via a
path of length $r$ that goes through new optional vertices.

\item For each $x_i\in V_1$ if the interval of bags containing $x_i$ is
$B_{j_1},\ldots,B_{j_2}$, we mark the vertices $p_{i,j_1}^1,\ldots,p_{i,j_1}^r$
and $p_{i,j_2}^{r+1},\ldots,p_{i,j_2}^{2r+1}$ as optional.

\item For each $j\in[t-1]$ and $x\in B_j\cap B_{j+1}\cap V_1$ we add an edge
$p_{i,j}^{2r+1}p_{i,j+1}^1$.

\item For each constraint $c$ let $j=b(c)$ and suppose $c$ involves variables
$x_{i_1},x_{i_2},x_{i_3}, x_{i_4}\in B_j$. Consider the set $\mathcal{S}_c$ of
the at most $(2r+1)^4$ satisfying assignments of $c$. For each
$\sigma\in\mathcal{S}_c$ construct four vertices $x_{c,\sigma,\alpha}$, for
$\alpha\in\{1,2,3,4\}$ and a vertex $y_{c,\sigma}$. Add edges so that vertices
$y_{c,\sigma}$, for $\sigma\in\mathcal{S}_c$ form a clique. Construct a vertex
$w_c$.  Connect $w_c$ to each $y_{c,\sigma}$, for $\sigma\in\mathcal{S}_c$, via
paths of length $r$, going through $r-1$ new vertices. For each
$\sigma,\sigma'\in\mathcal{C}$ with $\sigma\neq\sigma'$, connect $y_{c,\sigma}$
with each $x_{c,\sigma',\alpha}$, for $\alpha\in\{1,2,3,4\}$, through paths of
length $r$ through new vertices. For each $\sigma\in\mathcal{S}_c$ and
$\alpha\in\{1,2,3,4\}$, if $\sigma(x_{i_\alpha})=v_{\alpha}$, then connect
$x_{c,\sigma,\alpha}$ to $p_{i,j}^{v_\alpha}$ (or to $p_i^{v_\alpha}$ if
$x_i\in V_2$) using a path of length $r$ through new vertices. All internal
vertices of paths of length $r$ constructed in this step are marked as
optional.

\end{enumerate}

We set the target $r$-dominating set size to be equal to be $k=L_1+L_2+m$,
where $L_1$ is the number of paths constructed in step 1, $L_2$ is the number
of paths constructed in step 2, and $m$ is the number of constraints of $\psi$.

For correctness, suppose we have a satisfying assignment $\sigma$ for $\psi$.
For each $x_i\in V_2$ we select into our solution $p_{i}^{\sigma(x_i)}$. For
each $x_i\in V_1$ and $j\in[t]$ such that $x_i\in B_j$ we select
$p_{i,j}^{\sigma(x_i)}$. Finally, for each constraint $c$, we select
$y_{c,\sigma_c}$, where $\sigma_c\in\mathcal{S}_c$ is the restriction of
$\sigma$ to the variables of $c$. The set we selected has the required size. To
see that it is an $r$-dominating set for all non-optional vertices, we observe
that for each $x_i\in V_2$ all the vertices of $P_i$ and $t_i$ are dominated;
for each $x_i\in V_1$ and $j\in[t]$ such that $x_i\in B_j$ all the vertices of
$P_{i,j}$ and $t_{i,j}$ are dominated, except possibly the optional vertices of
$P_{i,j_1}$ or $P_{i,j_2}$, where $j_1,j_2$ are as in step 3; for each
constraint $c$, $w_c$ is dominated, all vertices  $y_{c,\sigma}$ are dominated,
and for all $\sigma'\in\mathcal{S}_c\setminus\{ \sigma_c\}$ and
$\alpha\in\{1,2,3,4\}$, $x_{c,\sigma',\alpha}$ is dominated, while the vertices
$x_{c,\sigma_c,\alpha}$ are dominated by the vertices selected in the paths.

For the converse direction, we first observe that if we consider a set made up
of the middle vertex of each path $P_i$ or $P_{i,j}$ and the vertex $w_c$ for
each constraint $c$, we have a set of $k$ vertices which are at pairwise
distance at least $2r+1$. This implies that a solution of size $k$ must
dominate each of these vertices exactly once, as no vertex can dominate two of
them. In particular, from each path $P_i$ or $P_{i,j}$, at most one vertex can
be selected in the solution. We can now extract a multi-assignment for $x_i\in
V_1$ and an assignment for $x_i\in V_2$: look at the path $P_{i,j}$ or $P_i$
and locate the vertex $p_i^v$ or $p_{i,j}^v$ which is in the shortest path from
the middle vertex of the path to the vertex that dominates it, and assign value
$v$ to $x_i$. This assignment is monotonically decreasing, because if for
$x_i\in B_j\cap B_{j+1}\cap V_1$ we assign values $v_1$ in $j$ and $v_2>v_1$ in
$j+1$, this means there is a sequence of $2r+1$ consecutive vertices of the
path $P_{i,j},P_{i,j+1}$ which contain no vertex of the dominating set, hence
there is a non-dominated vertex.  Let us also argue why the assignment is
satisfying. Consider a constraint $c$ with $b(c)=j$, involving four variables
from $V_{i_1}, V_{i_2}, V_{i_3}, V_{i_4}$. Without loss of generality, we have
selected a vertex $y_{c,\sigma}$ to dominate $w_c$, as any other vertex that
can dominate $w_c$ covers fewer vertices. We claim that $\sigma$ must be
consistent with our assignment. To see this, observe that $x_{c,\sigma,\alpha}$
are not covered by $y_{c,\sigma}$, therefore, they must be covered by the
vertices we have selected in the paths. If $\sigma(x_{i_\alpha})=v_\alpha$,
this can only happen if the path vertex closes to a selected vertex is
$p_{i_\alpha,j}^{v_\alpha}$.

Finally, to bound the pathwidth of the graph $G$, start with the decomposition
of the primal graph of $\psi$ and in each $B_j$ replace each $x_i\in V_2\cap
B_j$ with all the vertices of $P_i$. For each constraint $c$, let $b(c)=j$ and
add into $B_j$ all the at most $O(r^9)$ vertices we constructed in step 5 for
$c$. So far each bag contains $O(r\log p + r^9)$ vertices. To cover the
remaining vertices, for each $j\in [t]$ we replace the bag $B_j$ with a
sequence of bags such that all of them contain the vertices we have added to
$B_j$ so far, the first bag contains $\{p_{i,j}^1\ |\ x_i\in V_1\cap B_j\}$ and
the last bag contains $\{p_{i,j}^{2r+1}\ |\ x_i\in V_1\cap B_j\}$. We insert a
sequence of $O(pr)$ bags between these two, at each step adding a vertex
$p_{i,j}^{v+1}$ and then removing $p_{i,j}^v$ in a way that covers all edges of
paths of step 2. To cover the edges of step 3 it suffices to add a similar
sequence of bags between the bag containing $\{p_{i,j}^{2r+1}\ |\ x_i\in
V_1\cap B_j\}$ and the bag containing $\{p_{i,j+1}^1\ |\ x_i\in V_1\cap
B_{j+1}\}$.  \end{proof}

\section{Super-exponential FPT problems}\label{sec:fpt2}

\subsection{Coloring}\label{sec:coloring2}

In this section we consider the complexity of \textsc{Coloring} parameterized
by pathwidth, when the number of colors is part of the input. It is well known
(and easy to see) that on a graph of pathwidth $\pw$ it is always possible to
produce a proper coloring using $\pw+1$ colors in polynomial time, hence
\textsc{Coloring} is FPT and admits an algorithm with complexity
$\pw^{\pw}n^{O(1)}$ via the standard DP algorithm. The question we focus on is
whether this algortihm is best possible. 

It is already known that if we disregard constants in the exponent, then the
algorithm we mentioned cannot be significantly improved: Lokshtanov, Marx, and
Saurabh showed that an algorithm with running time of $\pw^{o(\pw)}n^{O(1)}$
would falsify the ETH (\cite{LokshtanovMS18b}). However, we are asking a more
fine-grained question here, which sets a more modest goal: is it possible to
improve upon this algorithm even slightly, obtaining a running time of the form
$\pw^{(1-\eps)\pw}n^{O(1)}$? Our results in this section show that the answer
is likely to be negative, and in fact that obtaining such an algorithm is both
necessary and sufficient for falsifying the \ppseth. In other words, we obtain
a lower bound result that is sharper than that of \cite{LokshtanovMS18b},
because we not only show that the dependence on treewidth has to be exponential
in $\Theta(\pw\log\pw)$, but we precisely determine that the coefficient of the
leading term must be equal to $1$. 

To establish equivalence we need a reduction from and to \textsc{SAT}. We give
these reductions in \cref{lem:coloring2a} and \cref{lem:coloring2b}.

\begin{lemma}\label{lem:coloring2a}

If the \ppseth\ is false, then there exists an $\eps>0$ and an algorithm
solving \textsc{Coloring} in time $\pw^{(1-\eps)\pw}n^{O(1)}$.

\end{lemma}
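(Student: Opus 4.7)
The plan is to reduce \textsc{Coloring} to \textsc{SAT} parameterized by the pathwidth of the primal graph via a straightforward binary encoding of colors. Combined with the hypothetical fast \textsc{SAT} algorithm obtained from the negation of the \ppseth\ (via \cref{thm:robust}, applied to \textsc{SAT} of unbounded arity), this will yield the desired \textsc{Coloring} algorithm.

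Given an instance $(G,k)$ of \textsc{Coloring} together with a nice path decomposition of $G$ of width $\pw$, we may assume without loss of generality that $k\le \pw+1$ (every graph of pathwidth $\pw$ admits a proper $(\pw+1)$-coloring, computable in polynomial time from the decomposition) and that $\pw$ exceeds any desired absolute constant (otherwise the standard $k^{\pw}n^{O(1)}=n^{O(1)}$ dynamic programming algorithm suffices). We then construct a \textsc{SAT} formula $\psi$ with $\ell:=\lceil\log_2 k\rceil$ Boolean variables $x_{v,1},\ldots,x_{v,\ell}$ per vertex $v\in V(G)$, whose joint assignment is intended to encode the color of $v$ as a binary number in $[k]$. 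For each $v$ we add clauses (each of arity $\ell$) ruling out the fewer than $k$ ``illegal'' bit combinations, and for each edge $uv\in E(G)$ and each color $c\in[k]$ we add a clause of arity $2\ell$ ruling out the assignment that sets both $u$ and $v$ to color $c$. Clearly $\psi$ is satisfiable iff $(G,k)$ is a Yes-instance.

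To obtain a path decomposition of the primal graph of $\psi$, we replace in each bag $B_j$ of the decomposition of $G$ each vertex $v\in B_j$ with its $\ell$ variables; since the variables of every clause come from at most two adjacent vertices of $G$, and any edge $uv\in E(G)$ appears jointly in some bag, every clause's variables fit in some corresponding new bag. This yields a decomposition of the primal graph of $\psi$ of width at most $(\pw+1)\ell-1\le (\pw+1)(\log_2 k+1)-1$. Applying the hypothetical \textsc{SAT} algorithm of running time $(2-\eps)^{\pw(\psi)}\cdot|\psi|^{O(1)}$ to $\psi$, and setting $\alpha:=\log_2(2-\eps)<1$, the overall running time is bounded by
\[
(2-\eps)^{(\pw+1)(\log_2 k+1)}\cdot (nk)^{O(1)} \;=\; k^{\alpha(\pw+1)}\cdot(2-\eps)^{\pw+1}\cdot(nk)^{O(1)}.
\]
Using $k\le \pw+1$ together with $(1+1/\pw)^{\pw}\le e$, this is in turn at most $\pw^{\alpha\pw}\cdot 2^{O(\pw)}\cdot (n\pw)^{O(1)}$. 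Setting $\eps':=(1-\alpha)/2>0$, taking logarithms base $2$ and dividing by $\pw$ reduces the inequality $\pw^{\alpha\pw}\cdot 2^{O(\pw)}\le \pw^{(1-\eps')\pw}$ to $(1-\alpha-\eps')\log_2\pw\ge O(1)$, which holds once $\pw$ exceeds an absolute constant depending on $\eps$.

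The main (and essentially only) subtlety is ensuring that the additive losses in pathwidth—arising from rounding $\log_2 k$ up to $\lceil\log_2 k\rceil$ and from the ``$+1$'' in $|B_j|\le\pw+1$—are absorbed without spoiling the $(1-\eps')$ coefficient. These losses contribute a multiplicative factor of only $2^{O(\pw)}$ to the running time, which is negligible compared to the gap $\pw^{(1-\alpha)\pw}$ once $\pw$ is large enough, precisely because $\alpha$ is bounded strictly below $1$. Hence no problem-specific gadgetry is needed beyond the naive binary encoding of colors.
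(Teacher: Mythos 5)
Your proposal is correct and follows essentially the same route as the paper: encode each vertex's color with $\lceil\log_2 k\rceil$ Boolean variables, use $k\le\pw+1$ to bound the resulting primal pathwidth by $\pw\log\pw+O(\pw)$, and absorb the additive $O(\pw)$ loss into the gap between $\alpha<1$ and $1$ for large $\pw$. The only (cosmetic) difference is in the edge constraints: the paper introduces auxiliary ``difference bit'' variables $e_i$ per edge with a single clause $(e_1\lor\cdots\lor e_t)$, whereas you write $k$ clauses per edge, one forbidding each monochromatic color; both encodings are valid, yield polynomially many clauses, and give the same pathwidth bound up to the $O(\pw)$ slack.
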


\begin{proof}

Suppose that the \ppseth\ is false, so there exists $\delta>0$ and an algorithm
that decides the satisfiability of a CNF formula $\phi$ in time
$2^{(1-\delta)\pw(\phi)}|\phi|^{O(1)}$. Suppose also that we are given a graph
$G$ on $n$ vertices and a tree decomposition of width $\pw(G)$, as well as an
integer $k$, and we want to decide if $G$ can be colored with $k$ colors. 

We will assume without loss of generality that $\pw(G)>2^{4/\delta}$
(otherwise, $\pw(G)$ is bounded by an absolute constant and we can decide the
$k$-colorability of $G$ in linear time).  We can also assume without loss of
generality that $k\le \pw(G)$, because if $k\ge \pw(G)+1$ a $k$-coloring can
always be found by a simple greedy algorithm, using the fact that every induced
subgraph of $G$ contains a vertex of degree at most $\pw(G)$.  We will
construct in polynomial time an instance of \textsc{SAT} $\phi$ that satisfies
the following:

\begin{enumerate}

\item $\phi$ is satisfiable if and only if $G$ is $k$-colorable.

\item  $\pw(\phi) \le \pw(G)\log(\pw(G)) + 2\pw(G)$

\item $|\phi|=n^{O(1)}$

\end{enumerate}

Observe that if we have the above, then we obtain the lemma. This is because we
can decide $k$-colorability of $G$ by invoking the supposed algorithm that
falsifies the \ppseth.  We note that $(1-\delta)\pw(\phi) \le
(1-\delta)(\pw(G)\log(\pw(G)) + 2 \pw(G))\le
(1-\frac{\delta}{2})\pw(G)\log(\pw(G)) -
\pw(G)\left(\frac{\delta}{2}\log(\pw(G)) - 2 \right) \le
(1-\frac{\delta}{2})\pw(G)\log(\pw(G))$, where we use the fact that
$\pw(G)>2^{4/\delta}$. But then the running time of our procedure is
$2^{(1-\delta)\pw(\phi)}|\phi|^{O(1)} \le
(\pw(G))^{(1-\frac{\delta}{2})\pw(G)}n^{O(1)}$, so it suffices to set
$\eps=\delta/2$.

We now describe the construction of $\phi$. Let $t=\lceil\log k \rceil \le \log
\pw(G) +1$ be a number of bits sufficient to write $k$ distinct integers (from
the set $\{0,\ldots,k-1\}$) in binary. For each vertex $v$ of $G$ we construct
$t$ variables $v_i, i\in[t]$, which informally are meant to encode the color of
$v$ in binary. If $k$ is not a power of $2$, then some of the assignments to
the variables $v_i$ encode integers outside of $\{0,\ldots,k-1\}$; in this
case, for each $v$, and for each such assignment, we construct a clause of size
$t$ that is falsified by this assignment. Now, for each edge $e=uv$ in $G$ we
construct $t$ variables $e_i, i\in [t]$. Informally, $e_i$ is set to True if
the binary representations of the colors of $u$ and $v$ differ in the $i$-th
bit. We add the clause $(e_1\lor e_2\lor\ldots\lor e_t)$. Also, for each $i\in
[t]$, we add clauses encoding the constraint $e_i\leftrightarrow (u_i\neq
v_i)$. This completes the construction, which can clearly be carried out in
polynomial time.

It is now not hard to see that $\phi$ is satisfiable if and only if $G$ is
$k$-colorable. For one direction, if $G$ is $k$-colorable, we use the variables
$v_i$ to encode the color of $v$ in binary and for each edge $e=uv$ we set
$e_i$ to True if and only if the binary encodings of the colors of $u,v$ differ
in the $i$-th bit. This satisfies all clauses of $\phi$; in particular, clauses
of the form $(e_1\lor\ldots\lor e_t)$ are satisfied because if we have a proper
coloring, the encodings of the colors of the endpoints of $e$ must differ in
some bit. For the converse direction, we have added clauses ensuring that from
a satisfying assignment we can extract a $k$-coloring of $V(G)$. To see that
this coloring is proper, for each $e=uv$ the clause $(e_1\lor\ldots\lor e_t)$
ensures that at least one $e_i$ is True, but this is only possible if the
encoding of the colors of $u,v$ differ in the $i$-th bit and are therefore
encodings of distinct colors.

Finally, to obtain the bound on the pathwidth, we start with a decomposition of
$G$ and replace each vertex $v$ with all the $t$ variables $v_i$. Suppose we
have an injective mapping $f$ from $E(G)$ to the bags of the decomposition,
such that $f(e)$ contains both endpoints of $e$ (this can be obtained by
repeating bags if necessary). For each $e$ we add to $f(e)$ the $t$ variables
$e_i, i\in[t]$. It is not hard to see that this covers all clauses and all bags
contain now at most $(\pw(G)+1)t + t \le (\pw(G)+1)(\log(\pw(G))+1) +
\log(\pw(G))+1 \le \pw(G)\log(\pw(G)) + \pw(G) + 2\log(\pw(G)) + 2 \le
\pw(G)\log(\pw(G)) + 2\pw(G)$ variables, where we use the fact that $\pw(G)$ is
sufficiently large to have $\pw(G)\ge 2\log(\pw(G))+2$ (for this it suffices
that $\pw(G)\ge 8$, but we have already assumed that $\pw(G)>2^{4/\delta}$).
\end{proof}

For the converse reduction, we will go through \textsc{List Coloring}.
Similarly to the reductions of \cref{sec:coloring1}, we will use the simple
gadget that is called a weak edge (\cite{Lampis20}).  We recall again that in a
graph $G=(V,E)$ for two (not necessarily distinct) values $i_1,i_2\in[k]$, an
$(i_1,i_2)$-weak edge between two vertices $v_1,v_2\in V$ is a path consisting
of three new internal vertices linking $v_1,v_2$. Appropriate lists are
assigned to the internal vertices so that the color combination $(i_1, i_2)$ is
forbidden and all other combinations are allowed.

\begin{lemma}\label{lem:coloring2b}

If there exists $\eps>0$ and an algorithm solving \textsc{Coloring} in time
$\pw^{(1-\eps)\pw}n^{O(1)}$, then the \ppseth\ is false.

\end{lemma}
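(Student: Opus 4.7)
The plan is to reduce from the monotone multi-assignment \textsc{CSP} of \cref{cor:weird} over the binary alphabet $B=2$ (equivalently, a promise version of \tsat\ guaranteed to be hard by the \ppseth, via \cref{thm:robust}) with primal pathwidth $p$ to a \textsc{Coloring} instance $(G,k)$ whose pathwidth $p'=\pw(G)$ satisfies $k\le p'+1$ and $p'\log p'=(1+o(1))p$. A hypothetical algorithm running in $\pw^{(1-\eps)\pw}n^{O(1)}$ on $G$ would then decide the source instance in time $2^{(1-\eps)p'\log p'}|\phi|^{O(1)}\le (2-\delta)^p|\phi|^{O(1)}$ for some $\delta>0$, contradicting the \ppseth.

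For the construction, I set $k=\Theta(p/\log^2 p)$ and $s=\lceil (p+1)/\log k\rceil=\Theta(p/\log p)$; the target pathwidth is $p'=k+s+O(1)=\Theta(p/\log p)$, so $p'\log p'\approx p$ as required. The graph $G$ contains (i) a palette clique on $k$ vertices, which in any proper $k$-coloring exhausts all colors and thereby serves as a global color reference, enabling list-coloring gadgets to be simulated by adding an edge from any vertex to each palette vertex whose color is forbidden on its intended list; (ii) for each bag $B_j$ of the given CSP decomposition, at most $s$ super-vertices, one per part of a partition of $B_j$ into groups of size at most $\log k$ obtained via \cref{lem:pwcolor}, each with list $\{0,1\}^{|g|}$ encoding the Boolean assignments to its group; (iii) between consecutive bags $B_j,B_{j+1}$ (which in a nice decomposition differ by exactly one variable), a monotonicity gadget built of $O(k^2)$ weak edges as in \cref{lem:coloringpw} between the unique pair of super-vertices whose group changes, forbidding every color combination that violates per-variable monotonicity on the shared variables; and (iv) for each constraint $c$, a constellation of $O(1)$ weak edges among the super-vertices covering the variables of $c$ (which all coexist in the region of $B_{b(c)}$) that forbids every color combination corresponding to an assignment falsifying $c$.

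Correctness is straightforward: a consistent satisfying assignment of the source CSP extends to a proper $k$-coloring of $G$ by coloring each super-vertex according to the value of its group and completing the weak-edge internals by their standard colorings; conversely, a proper $k$-coloring induces a multi-assignment which the monotonicity gadgets force to be monotone along the decomposition and the constraint gadgets force to satisfy every constraint, so by the promise of \cref{cor:weird} it must in fact come from a consistent satisfying assignment. The main obstacle, and the reason this reduction cannot simply inherit the ETH-based construction of \cite{LokshtanovMS18b}, is the tight pathwidth accounting. Each bag of $G$'s decomposition must carry the palette clique ($k$ vertices) together with the $s$ currently active super-vertices, leaving only $O(1)$ extra slots; the $\Theta(k^2)$ weak edges of a monotonicity gadget must therefore be introduced sequentially through a chain of transition bags, each adding three fresh internal vertices while dropping the previous ones, so that the instantaneous bag size stays within $k+s+O(1)=O(p/\log p)$. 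Verifying that the monotonicity and constraint gadgets can be laid out in this disciplined way---and in particular that the super-vertices of unchanged classes can be reused verbatim across consecutive bag regions, so only the single pair belonging to the changing class requires a transition---is where most of the technical work lies.
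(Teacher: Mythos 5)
Your overall strategy is the same as the paper's: partition each bag into roughly $p/\log p$ groups of roughly $\log p$ variables, represent each (bag, group) pair by one super-vertex whose $\approx 2^{\log p}$ colors encode the group's assignments, enforce consistency/satisfaction with weak edges, simulate lists by a palette clique, and observe that the pathwidth is dominated by the number of groups while the clique is lower order. Your parameter choice ($k=\Theta(p/\log^2 p)$ colors versus the paper's $k\approx p^{1-\eps/2}$) is different but equally valid, and your pathwidth accounting, including laying out the $\Theta(k^2)$ weak edges of a transition sequentially, is sound.

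There is, however, a concrete gap in your satisfaction gadget. A weak edge is an inherently \emph{binary} device: it forbids one pair of colors on one pair of vertices. A clause of arity $3$ (or a constraint of arity $4$) whose variables fall into three or four distinct groups cannot be expressed by any collection of weak edges ``among the super-vertices covering the variables of $c$'': the set of falsifying tuples of a disjunction is not an intersection of binary relations, and forbidding the pair of colors corresponding to two falsified literals would also kill colorings in which the third literal is true. You need either an auxiliary arity-reduction gadget --- the paper introduces a vertex $w_c$ with list $\{1,\ldots,7\}$ encoding the seven satisfying assignments of the clause, plus binary-list vertices $w_c^1,w_c^2,w_c^3$ tied to the super-vertices by weak edges --- or a preprocessing step (as in \cref{lem:reconf21}) that introduces copy variables so that every constraint touches at most two groups, after which pairwise weak edges over the product alphabet suffice. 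A secondary, easily repaired slip: you invoke \cref{cor:weird}, whose promise requires the extracted multi-assignment to be \emph{consistent on $V_2$}, but your gadgets only enforce monotonicity and never address $V_2$. Since weak edges between the two super-vertices of a changing group can just as well forbid \emph{all} inconsistent pairs (not merely the non-monotone ones), the cleanest fix is to enforce full consistency and reduce directly from \tsat\ (or from the 2-\textsc{CSP} of \cref{thm:csp}), dispensing with \cref{cor:weird} entirely --- which is exactly what the paper does.
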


\begin{proof}

Suppose that for some $\eps\in (0,1/2)$, we have an algorithm solving
\textsc{Coloring} on graphs of $n$ vertices in time
$\pw^{(1-\eps)\pw}n^{O(1)}$. We are given a \tsat\ instance $\phi$ of pathwidth
$p=\pw(\phi)$, as well as a corresponding path decomposition. Define $s=\lceil
(1-\frac{\eps}{2})\log p\rceil$ and $k=2^s\le 2p^{1-\frac{\eps}{2}}$. We will
construct from $\phi$ in polynomial time a graph $G$ satisfying the following:

\begin{enumerate}

\item $G$ is $k$-colorable if and only if $\phi$ is satisfiable

\item $\pw(G) \le (1+\eps) \frac{p}{\log p}$

\end{enumerate}

If we have the properties above, the supposed algorithm for \textsc{Coloring}
would allow us to decide $\phi$ in time $(\pw(G))^{(1-\eps)\pw(G)}|G|^{O(1)}
\le p^{(1-\eps^2)\frac{p}{\log p}}|\phi|^{O(1)} =
2^{(1-\eps^2)p}|\phi|^{O(1)}$, which falsifies the \ppseth.

Start with a path decomposition of $\phi$ on which we have applied
\cref{lem:nice} and suppose the bags are numbered $B_1,\ldots, B_t$. Let
$g=\lceil\frac{p+1}{s}\rceil$ and note that $g\le \frac{p+1}{s}+1 \le
\frac{p}{s}+2 \le (1+2\eps/3)\frac{p}{\log p}+2$.  Here, we are using the fact
that $\frac{1}{1-\eps/2}<1+2\eps/3$ for $\eps<1/2$.  We invoke
\cref{lem:pwcolor} to partition the variables of $\phi$ into $g$ groups,
$V_1,\ldots,V_g$ so that each group contains at most $\lceil\frac{p+1}{g}\rceil
\le s$ variables of each bag.

We will describe a reduction to \textsc{List Coloring} where the list of each
vertex will be a subset of $[k]$. This can in turn easily be reduced to
$k$-\textsc{Coloring} by adding to the graph a clique of size $k$ and
connecting each vertex with the vertices of the clique that correspond to
colors missing from its list. This will add to the pathwidth of the graph at
most $k\le 2p^{1-\eps/2}$.

Let us now describe how we construct our \textsc{List Coloring} instance.
First, for each bag $B_i$ and each group $V_j, j\in [g]$ we construct a vertex
$v_{i,j}$ whose color is supposed to represent the assignment to the variables
of $B_i\cap V_j$. More precisely, we fix an arbitrary mapping $\tau_{i,j}$
which for each assignment $\sigma$ to the variables of $B_i\cap V_j$ gives a
distinct color in $[k]$; note that since $|B_i\cap V_j|\le s$ and $k=2^s$ this
is always possible. We assign the vertex $v_{i,j}$ as list the range of
$\tau_{i,j}$, that is, the set of all values $x$ such that there exists an
assignment $\sigma$ such that $\tau_{i,j}(\sigma)=x$.

We now need to add gadgets to ensure two properties: consistency and
satisfaction. For consistency, for each $i\in[t-1]$ and $j\in[g]$, consider the
sets of variables $B_i\cap V_j$ and $B_{i+1}\cap V_j$, which, since the path
decomposition is nice, have a symmetric difference that contains at most one
variable. We construct $|B_i\cap B_{i+1}\cap V_j|$ vertices, where each vertex
represents a distinct variable of $B_i\cap B_{i+1}\cap V_j$, and each vertex
has list $\{1,2\}$. For each assignment $\sigma$ to $B_i\cap V_j$ let
$\tau_{i,j}(\sigma)$ be the corresponding color of $v_{i,j}$. For each variable
$x\in B_i\cap B_{i+1}\cap V_j$ we place a weak edge between its representative
vertex and $v_{i,j}$ ensuring that if $v_{i,j}$ has color $\tau_{i,j}(\sigma)$,
then the vertex representing $x$ must take color $\sigma(x)+1$.  Similarly, we
place weak edges between $v_{i+1,j}$ to ensure that the color of $v_{i+1,j}$
encodes an assignment that is consistent with the $2$-coloring of the vertices
representing $B_i\cap B_{i+1}\cap V_j$.

Satisfaction gadgets are similar. Recall that we have an injective function $b$
such that for each clause $c$, the bag $B_{b(c)}$ contains all variables of
$c$. Consider each clause $c$ of $\phi$, and suppose that $c$ has size $3$
(repeating a literal if necessary). We construct a vertex $w_c$ and $3$
vertices $w_c^1, w_c^2, w_c^3$. The vertex $w_c$ has list $\{1,\ldots,7\}$,
informally encoding the $7$ possible satisfying assignments, while the vertices
$w_c^1, w_c^2, w_c^3$ represent the three literals of $c$ and have lists
$\{1,2\}$. We now add weak edges between $w_c$ and $w_c^\alpha$, for
$\alpha\in\{1,2,3\}$ ensuring that the assignment encoded by the color of $w_c$
is consistent with the one encoded by the colors of the three other vertices.
Furthermore, for $\alpha\in\{1,2,3\}$, if the $\alpha$-th variable of $c$
belongs in $V_{j_\alpha}$, then we add weak edges between $v_{b(c),j_\alpha}$
and $w_c^\alpha$, ensuring that the assignment of $B_{b(c)}\cap V_{j_\alpha}$
encoded by the color of $v_{b(c),j_\alpha}$ is consistent with the assignment
to the $\alpha$-th variable of $c$ encoded by $w_c^\alpha$.

This completes the construction and it is straightforward to check that the
graph has a valid list coloring if and only if $\phi$ is satisfiable. Regarding
the pathwidth of the new graph, we will treat weak edges as edges; a path
decomposition of the resulting graph can be transformed into a path
decomposition of the original graph while only increasing the width by adding
at most $2$. Now we start with the same decomposition as we have for $\phi$ and
in bag $B_i$ place the $g$ vertices $v_{i,j}$, for $j\in [g]$. If there exists
$c$ such that $b(c)=i$, then we also place in the same bag $w_c, w_c^1, w_c^2,
w_c^3$. What remains is to cover the consistency gadgets. Between bags $B_i,
B_{i+1}$ we insert a sequence of $g$ bags. In the first bag we place all
$v_{i,j}$, for $j\in[g]$ and $v_{i+1,1}$. In each subsequent bag we remove the
vertex $v_{i,j}$ with smallest $j$ that appeared in the previous bag and
replace it with $v_{i+1,j+1}$. Now, for each $j\in [g]$ we have a bag
containing both $v_{i,j}$ and $v_{i+1,j}$; we add to this bag the at most $s$
vertices representing the variables of $B_i\cap B_{i+1}\cap V_j$. It is not
hard to see that we have covered all edges and no bag contains more than
$g+s+1$ vertices.

To complete the proof, if we put back the weak edges and reduce \textsc{List
Coloring} to \textsc{Coloring} we obtain a path decomposition of the final
graph of width at most $g+k+s+4$. We have $g+k+s+4 \le
(1+\frac{2\eps}{3})\frac{p}{\log p} + 2p^{1-\frac{\eps}{2}} + \log p + 7
<(1+\eps)\frac{p}{\log p}$. To obtain the last inequality we can assume that
$p$ is sufficiently large (more precisely, that it is larger than some value
that depends only on $\eps$), as otherwise the satisfiability of $\phi$ can be
decided in linear time.  \end{proof}

\subsection{Hitting 4-cycles}

In this section we consider the following problem: we are given a graph $G$ and
are asked to find a minimum cardinality set of vertices $S$ of $G$ such that
$G-S$ contains no cycle of length $4$ ($C_4$) as a subgraph. This problem is a
particular specimen in a large class of problems where one seeks to hit all
copies of a given subgraph in a large graph, so the reader may be wondering why
we are interested in hitting $C_4$'s rather than something else. The answer is,
to some extent historical. After the discovery of the famous Cut\&Count method,
it was posed as a natural question to understand what are the problems for
which this method fails to deliver single-exponential algorithms. Within a
wider effort to trace the limits of this method (via a meta-theorem), Pilipczuk
gave a natural example of such a problem by showing that hitting all cycles of
length at least $5$ cannot be solved in time $2^{o(\pw^2)}n^{O(1)}$, under the
ETH \cite{Pilipczuk11}. Pilipczuk left it as an open problem whether the same
holds for cycles of length $4$, but this was more recently resolved in
\cite{SauS21}, where it was shown that this problem is also unsolvable in time
$2^{o(\pw^2)}n^{O(1)}$.  Therefore, hitting all $C_4$'s in a graph using the
minimum number of vertices is one of the simplest natural problems for which
the complexity is expected to be exponential in the square of the pathwidth. We
focus on this problem because we want to show that the \ppseth\ is applicable
also to FPT problems where the parameter dependence is ``unusual'' in this way.

We ask a more fine-grained question than that asked in \cite{Pilipczuk11}: what
is the smallest $c$ such that finding the minimum set hitting all $C_4$'s can
be done in time $c^{\pw^2}n^{O(1)}$? We first show that the correct value of
$c$ is not $2$, but in fact at most $\sqrt{2}$. This is because a
straightforward DP algorithm needs to store one bit of information for each
pair of vertices in the bag, which means that the dependence is at most
(roughly) $2^{\pw \choose 2}$ (\cref{thm:c4alg}).

Our main result is then to show that this complexity is very likely to be
correct: obtaining an algorithm with running time
$(\sqrt{2}-\eps)^{\pw^2}n^{O(1)}$ is \emph{equivalent} to falsifying the
\ppseth. Formally, we prove the following:

\begin{theorem}\label{thm:c4} There exists $\eps>0$ and an algorithm that takes
as input an $n$-vertex graph $G$ and a path decomposition of width $p$ and
computes a minimum-size set $S$ such that $G-S$ has no $C_4$ subgraph in time
$(\sqrt{2}-\eps)^{\pw^2}n^{O(1)}$ if and only if the \ppseth\ is false.
\end{theorem}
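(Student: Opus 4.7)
The plan is to establish the two directions via reductions whose pathwidth parameters differ quadratically, matching the $\sqrt{2}^{\pw^2}$ shape of the conjectured bound. In one direction we reduce $C_4$-\textsc{Hitting Set} to \textsc{SAT} with a quadratic blow-up of the parameter, and in the other we reduce \textsc{SAT} to $C_4$-\textsc{Hitting Set} with a quadratic compression.

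For the easy direction, assuming the \ppseth\ is false, the plan is to express the standard pathwidth DP for $C_4$-\textsc{Hitting Set} as a CNF formula. The DP state at a bag $B$ of width $p+1$ consists of the subset of $B$ that is deleted, together with, for each unordered pair $\{u,v\}$ of non-deleted vertices in $B$, a single bit recording whether $u$ and $v$ already share a common neighbor among the forgotten vertices (a second such common neighbor would complete a $C_4$). I will introduce a block of $\binom{p+1}{2}+p+1$ boolean variables per bag to encode this state and add local clauses enforcing the transitions at introduce/forget bags. The resulting formula admits a primal path decomposition of width $\tfrac{p^2}{2}+O(p)$, so a $(2-\eps)^{\pw}|\phi|^{O(1)}$ \textsc{SAT} algorithm (which exists when the \ppseth\ is false, by \cref{thm:robust}) solves it in time $(2-\eps)^{p^2/2+O(p)}n^{O(1)}\le(\sqrt{2}-\eps')^{p^2}n^{O(1)}$ for an appropriate $\eps'>0$ and all sufficiently large $p$.

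For the hard direction, I plan to reduce an instance of the monotone multi-assignment CSP problem with $B=2$ supplied by \cref{cor:weird}---whose parameter is the pathwidth $p$ of the primal graph---to a $C_4$-\textsc{Hitting Set} instance whose pathwidth is $\lceil\sqrt{2(p+1)}\rceil+O(1)$. The key observation is that with $k$ \emph{track} vertices per bag position we obtain $\binom{k}{2}$ unordered pairs of tracks, so setting $k\approx\sqrt{2p}$ gives us enough pairs to accommodate all variables of a CSP bag via an injective assignment of variables to track pairs (exactly the same combinatorial device that underlies the $2^{\pw^2/2}$ upper bound). Each pair of tracks will host a variable gadget whose two locally minimum deletion patterns correspond to the two truth values of the associated variable, and common neighbors between tracks are arranged so that inconsistencies between adjacent bag sections or violated clauses force additional $C_4$'s that raise the required deletion budget beyond the target.

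The main technical obstacle is to implement the gadgets using \emph{no slack} in the parameter: previous reductions for $C_4$-\textsc{Hitting Set}, such as \cite{Pilipczuk11,SauS21}, introduce separate gadgets per literal and per clause incidence, which inflates the track count by a constant factor and only rules out algorithms of the form $2^{o(\pw^2)}$ rather than $(\sqrt{2}-\eps)^{\pw^2}$. To avoid this loss I plan to pack each variable into a single pair of tracks and to crucially exploit the monotone promise of \cref{cor:weird}: since in the No case it suffices to rule out \emph{monotone} multi-assignments rather than arbitrary ones, the inter-section propagation gadgets can be implemented with only $O(1)$ additional common-neighbor vertices per bag boundary, preserving the track count up to a small additive constant. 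Verifying that the clause and consistency gadgets neither introduce spurious $C_4$'s nor admit cheating solutions within the prescribed deletion budget, and that the total pathwidth is indeed $\lceil\sqrt{2(p+1)}\rceil+O(1)$, will be the most delicate part of the argument.
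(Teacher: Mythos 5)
Your easy direction matches the paper's: the paper likewise encodes the state of the natural DP (deleted subset plus one bit per pair of bag vertices recording a forgotten common neighbor) as a formula of primal pathwidth $\tfrac{p^2}{2}+O(p)$ and feeds it to a fast \textsc{MaxW-SAT} algorithm. That part is fine.

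The hard direction has a genuine gap. You correctly identify the $\binom{k}{2}$-pairs-from-$k$-tracks device with $k\approx\sqrt{2p}$, and you correctly identify that the per-literal duplication of earlier reductions must be avoided. But with only \emph{one} gadget vertex per variable attached to a pair of tracks, nothing local forces that vertex to be deleted or kept: the two truth values are not two ``locally minimum deletion patterns'' of equal cost, they are ``delete'' versus ``don't delete''. Consequently the only enforcement mechanism left is the global deletion budget, and this is where the construction breaks in two ways. First, the number of variable vertices an intended solution deletes in a given bag equals the number of variables set to one particular truth value there, which depends on the (unknown) satisfying assignment, so you cannot set a tight budget at all. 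Second, and worse, a cheating solution can simply delete the $O(\sqrt{p})$ track vertices of a bag, destroying every $C_4$ your gadgets rely on at a cost far below the $\Theta(p)$ you would need to charge for the variables of that bag. The monotone promise of \cref{cor:weird} does not help here: it addresses consistency of a multi-assignment across bags, not the fact that a hitting set is free to ignore your intended gadget semantics entirely.

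The paper closes exactly this gap with two ingredients you are missing. It first reduces to an intermediate \textsc{SAT} problem (\cref{lem:satforc4}) in which the variables are partitioned into small groups, each group is padded with $O(\sqrt{r})$ extra variables so that $\binom{x+2y}{\lfloor (x+2y)/2\rfloor}\ge 2^x$, and one only seeks satisfying assignments setting \emph{exactly half} of each group (per bag) to True; this makes the per-group deletion quota a fixed, assignment-independent number. It then attaches a ``budget-setter'' gadget (\cref{lem:budget-setter}) to each group, built from strong edges, which forces any feasible hitting set to spend at least that quota inside the group and costs the same regardless of which half is chosen. Summing these quotas gives a budget with zero slack, which is what prevents the track-deletion escape. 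Without some substitute for this balanced-assignment normalization and budget enforcement, your reduction is not sound. (A minor additional note: the paper only achieves additive overhead $O(q^{3/4})$ on the track count, not $O(1)$; anything $o(\sqrt{p})$ suffices, so your stated target is stricter than necessary.)
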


The proof of \cref{thm:c4} is given in two separate lemmas (\cref{lem:c4-1} and
\cref{lem:c4-2}).  We believe that these results make the problem of hitting
all $C_4$'s an interesting specimen, because not only are we able to determine
that the correct dependence is exponential in $\Theta(\pw^2)$ (as was done in
\cite{SauS21}), but in fact we determine precisely the coefficient of the
leading term, and this coefficient is equal to $\frac{1}{2}$. In particular, we
give an explicit algorithm with complexity that exactly matches our lower bound
(\cref{thm:c4alg}). 

Compared to previous reductions, we have to overcome several technical
obstacles to achieve our results.  The most basic one is that, because our
reduction needs to be much more efficient, we cannot afford to encode a
\textsc{SAT} instance by using one gadget for a variable $x$ and a different
gadget for its negation $\neg x$, as is done in \cite{SauS21} and
\cite{Pilipczuk11}. We overcome this difficulty by using an intermediate
version of \textsc{SAT} where we are only looking for assignments that set
exactly half of the variables to True, in a certain controlled way.

We begin with the algorithmic part.

\begin{theorem}\label{thm:c4alg} There is an algorithm that takes as input a
graph $G$ on $n$ vertices  and a path decomposition of width $p$ and outputs a
minimum set  of vertices $S$ such that $G-S$ contains no $C_4$ subgraph in time
$2^{\frac{p^2}{2}+O(p)}n^{O(1)}$. \end{theorem}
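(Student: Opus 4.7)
The plan is to design a dynamic programming algorithm over a nice path decomposition in which, at each bag $B$, the table stores not only which vertices of $B$ lie in the partial solution but also one bit per pair of non-solution bag vertices indicating whether they already share a common forgotten non-solution neighbor. The pair information is exactly what is needed to detect any $C_4$ at the moment when it is ``closed'' by a forget operation, and it is also the dominant contribution to the state count, producing the $2^{p^2/2+O(p)}$ bound.

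Concretely, I would first transform the given decomposition, in polynomial time and without changing the width, into a nice path decomposition with introduce-vertex, introduce-edge, and forget-vertex bags. For each bag $B$, the DP is indexed by pairs $(S_B,f)$, where $S_B\subseteq B$ represents the intersection of the solution with $B$ and $f\colon \binom{B\setminus S_B}{2}\to\{0,1\}$ records, for each pair $\{u,v\}$ of non-solution bag vertices, whether some already-forgotten $w\notin S$ is adjacent to both $u$ and $v$. Each entry stores the minimum number of forgotten solution vertices with which such a configuration is achievable while keeping the already-processed induced subgraph $C_4$-free. The transitions are the natural ones: (i) introducing $v$ branches on $v\in S$ or $v\notin S$ and initializes all new pair-bits to $0$; (ii) introducing the edge $uv$ only records the adjacency (it does not touch $f$, because a $C_4$ uses two length-two paths between opposite corners, not a direct edge); (iii) forgetting $v$ with $v\in S_B$ increments the cost by $1$ and drops all pair-entries involving $v$, whereas forgetting $v\notin S_B$ declares the state infeasible if any pair $\{u_1,u_2\}\subseteq N(v)\cap(B\setminus S_B)$ already has $f(\{u_1,u_2\})=1$ (a $C_4$ $u_1\,w\,u_2\,v$ has been closed, where $w$ witnessed the $1$-bit), and otherwise sets those bits from $0$ to $1$ before dropping entries involving $v$.

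Correctness follows from the standard path-decomposition argument: for any $C_4$ on vertices $a,b,c,d$ with opposite pairs $\{a,c\},\{b,d\}$, and all of $a,b,c,d$ outside $S$, there is a well-defined moment when the \emph{second} of $b,d$ is forgotten while $a,c$ are both still in the bag (or vice versa, swapping the roles of the two opposite pairs); at that forget step $f(\{a,c\})$ has already been set to $1$ by the earlier of the two forgets, so the algorithm correctly reports infeasibility. Conversely, any trajectory that never triggers infeasibility corresponds to a partial solution whose induced subgraph contains no $C_4$, since every $C_4$ would be caught at some forget step.

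For the running time, each bag contributes at most $2^{|B|}\cdot 2^{\binom{|B|}{2}}\le 2^{p+1}\cdot 2^{\binom{p+1}{2}}=2^{p^2/2+O(p)}$ states, each transition is handled in $\mathrm{poly}(p,n)$ time, and the nice decomposition has $O(np)$ bags, yielding the claimed $2^{p^2/2+O(p)}n^{O(1)}$ total time. I do not expect any deep obstacle; the only real care required is bookkeeping, namely ensuring every $C_4$ is detected at exactly one forget step (the later of the two forgets among the opposite witnesses) and that solution vertices are counted exactly once (by charging the cost at the forget step, in the standard way).
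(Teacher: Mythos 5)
Your table definition (solution vertices in the bag, plus one bit per pair of non‑solution bag vertices recording a common forgotten non‑solution neighbor) is the same as the paper's, and the state count analysis is fine. The gap is in where you detect the $C_4$: you only test at forget steps, and your claimed invariant --- that for any surviving $C_4$ on $a,b,c,d$ there is a moment when the second of $b,d$ is forgotten while $a,c$ are both still in the bag, or vice versa --- is false. Take the $4$-cycle $a\text{--}b\text{--}c\text{--}d\text{--}a$ with the width-$2$ path decomposition $\{a,b\},\{a,b,c\},\{a,c,d\},\{c,d\},\{d\}$, so the forget order is $b,a,c,d$. Forgetting $b$ correctly sets $f(\{a,c\})=1$, but when $a$ is forgotten its only cycle-neighbor still in the bag is $d$ (no pair to test), likewise for $c$, and when $d$ is forgotten neither $a$ nor $c$ remains; moreover $b$ and $d$ never coexist in a bag, so $f(\{b,d\})$ is never even defined. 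Your algorithm therefore accepts $S=\emptyset$ on a graph that \emph{is} a $C_4$.

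The paper avoids this by performing the check at \emph{introduce} time: if $u$ is the last vertex of a $C_4$ to be introduced, then both of its cycle-neighbors $v_1,v_2$ must already be in $u$'s introduction bag (their intervals meet $u$'s and start earlier), and the opposite vertex $w$ is either still in the bag (checked directly) or was forgotten at a moment when $v_1,v_2$ were both present, so the bit for $\{v_1,v_2\}$ is already $1$. In the example above, introducing $d$ into the bag $\{a,c,d\}$ sees $a,c\in N(d)$ with $f(\{a,c\})=1$ and rejects. So you need to move the infeasibility test from the forget transition to the introduce transition (keeping the bit-setting at forget); with that change the rest of your argument and the running-time bound go through.
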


\begin{proof}

The algorithm uses standard dynamic programming, so we sketch some of the
details. We are looking for a deletion set $S$ such that $G-S$ contains no
$C_4$ subgraph and $S$ has minimum size. For each bag $B$ of a path
decomposition let $B^{\leftarrow}$ be the set of vertices of $G$ which appear
in $B$ or in a bag to the left of $B$ (that is, a bag already processed by the
algorithm). We define the signature of a partial solution as a pair consisting
of two parts: (i) a subset of $B$, meant to represent $S\cap B$ (ii) a bit for
each pair of distinct vertices $u,v\in B\setminus S$, meant to indicate whether
$u,v$ have a common neighbor in $G-S$ which appears in $B^{\leftarrow}\setminus
B$.  For each signature we store the minimum size of a set $S\subseteq
B^{\leftarrow}$ that respects the signature and that hits all $C_4$'s induced
by $G[B^{\leftarrow}]$.

It is now not hard to update the DP table as we process the path decomposition
of $G$. When introducing a new vertex $u$, we consider the cases $u\in S$ and
$u\not\in S$. In the former case, we take every valid signature of the previous
bag and increase the cost by $1$. In the latter case, we retain all signatures
with the same cost, but discard signatures which have one of the following: $u$
is adjacent to two vertices $v_1,v_2\in B\setminus S$ which already have a
common neighbor in $B^{\leftarrow}\setminus (B\cup S)$; $u$ is adjacent to two
vertices $v_1,v_2\in B\setminus S$ which have a common neighbor $v_3\in
B\setminus S$. For all pairs involving the new vertex we set the corresponding
bit to $0$. In both of these cases, discarding these signatures is justified as
we have a $C_4$ in $G[B^{\leftarrow}\setminus S]$.  When forgetting a vertex
$u$, for each signature where $u\in S$ we produce a new signature where we have
removed the information about $u$; while for each signature with $u\not\in S$
we produce a signature where we set to $1$ the bit for each pair $v_1,v_2\in
B\setminus S$ if $u$ is adjacent to $v_1,v_2$.  This algorithm will always
produce a valid deletion set, because if there was a $C_4$ in $G-S$, we can
consider its last vertex to be introduced, say $u$, which is adjacent to
$v_1,v_2$ in the $C_4$. By the properties of path decompositions, $v_1,v_2$
must be in the bag where $u$ is introduced. Then, if the fourth vertex of the
cycle is not in $B$ we must have set the bit for the pair $v_1,v_2$ to $1$
(hence we discard this solution); while if the fourth vertex is in $B$ we will
again discard the solution.

To compute the running time we mainly need to bound the size of the DP table,
which is at most $2^{{p+1\choose 2}+p}=2^{\frac{p^2}{2}+O(p)}$. Since we spend
polynomial time per DP entry, the total running time is then
$2^{\frac{p^2}{2}+O(p)}n^{O(1)}$.  \end{proof}

\begin{lemma}\label{lem:c4-1} If the \ppseth\ is false, then there exists an
$\eps>0$ and an algorithm that takes as input a graph $G$ and a path
decomposition of $G$ of width $p$ and solves $C_4$-\textsc{Hitting Set} in time
$2^{(1-\eps)\frac{p^2}{2}}n^{O(1)}$.  \end{lemma}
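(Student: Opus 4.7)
My plan is to encode the natural dynamic programming of \cref{thm:c4alg} directly into a SAT instance whose primal pathwidth matches the DP signature size $\binom{p+1}{2}+O(p)=\frac{p^2}{2}+O(p)$. Since by \cref{thm:robust} the falsity of the \ppseth\ gives a $(2-\delta)^{\pw(\psi)}|\psi|^{O(1)}$ algorithm for \textsc{MaxW-SAT}, running that algorithm on this formula yields the desired $2^{(1-\eps)p^2/2}n^{O(1)}$ bound for $C_4$-\textsc{Hitting Set}. The conceptual message is that the natural DP state (one bit per pair of current-bag vertices) \emph{is} what a SAT encoding should look like, so the complexities coincide exactly.

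Concretely, after applying \cref{lem:nice} to obtain a nice path decomposition $B_1,\ldots,B_t$ of $G$, I introduce: (i) a Boolean $s_v$ for every $v\in V(G)$, meaning ``$v$ is deleted''; (ii) for each bag $B_j$ and each unordered pair $\{u,v\}\subseteq B_j$, a Boolean $c_{u,v,j}$, meaning ``$u,v$ share an undeleted common neighbor lying in $B_j^{\leftarrow}\setminus B_j$''. The clauses mirror the two DP transitions. At an introduce bag $B_j=B_{j-1}\cup\{u\}$ I add clauses forbidding the two ways a new $C_4$ can be created (all four vertices undeleted, with $u$ adjacent to $v_1,v_2$, and either $c_{v_1,v_2,j-1}$ true or some $v_3\in B_{j-1}$ adjacent to both $v_1,v_2$); I also copy $c_{v_1,v_2,j}\leftrightarrow c_{v_1,v_2,j-1}$ for old pairs and set $c_{u,v,j}=\bot$ for new pairs. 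At a forget bag removing $u$ I enforce $c_{v_1,v_2,j}\leftrightarrow c_{v_1,v_2,j-1}\lor(\lnot s_u\land u{\sim}v_1\land u{\sim}v_2)$ for $\{v_1,v_2\}\subseteq B_j$. To express minimization I use the MaxW variant from \cref{thm:robust}: add for each $v$ a partner variable $\bar s_v$ with $\bar s_v\leftrightarrow\lnot s_v$ and ask for the maximum achievable True-weight among the $\bar s_v$'s. All clauses have constant arity.

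The subtle and pivotal step is the pathwidth bound, since the target constant $\tfrac12$ in the leading term leaves no slack. For each original bag $B_j$ I build a block of new bags all containing $\{s_v,\bar s_v:v\in B_j\}\cup\{c_{u,v,j}:\{u,v\}\subseteq B_j\}$, of size $2|B_j|+\binom{|B_j|}{2}\le\binom{p+2}{2}+(p+1)=\frac{p^2}{2}+O(p)$; this already covers every introduce/forget clause associated with $B_j$. Between two consecutive blocks I insert a short sequence of intermediate bags in which I swap $c_{u,v,j-1}\to c_{u,v,j}$ one pair at a time, each intermediate bag carrying only a constant number of extra $c$-variables so as to cover the transition clauses. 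The main pitfall to avoid is any construction that would retain \emph{both} time-$(j-1)$ and time-$j$ copies of all $\binom{p+1}{2}$ pair-variables simultaneously, or that would use a separate variable for each \emph{ordered} pair $(u,v)$; either misstep would double the leading term and ruin the $\tfrac12$ in the exponent.

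Putting the pieces together: if the \ppseth\ is false, \cref{thm:robust} provides $\delta>0$ and a \textsc{MaxW-SAT} algorithm running in $(2-\delta)^{\pw(\psi)}|\psi|^{O(1)}$. Writing $2-\delta=2^{1-\delta'}$ for some $\delta'>0$ and feeding this algorithm the formula $\psi$ above, whose primal pathwidth is at most $\frac{p^2}{2}+O(p)$, the total running time is $2^{(1-\delta')(p^2/2+O(p))}n^{O(1)}\le 2^{(1-\eps)p^2/2}n^{O(1)}$ for any $\eps<\delta'/2$ and all sufficiently large $p$. For the finitely many small values of $p$ the problem is decidable in polynomial time by the direct DP of \cref{thm:c4alg}, so this yields the lemma.
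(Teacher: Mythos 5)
Your proposal is correct and follows essentially the same route as the paper: encode the DP signature of \cref{thm:c4alg} as a \textsc{MaxW-SAT} instance with one deletion variable per vertex and one variable per pair of bag vertices per bag, and cover the transition clauses by swapping the pair variables between consecutive bags one at a time so that the width stays at $\binom{p+1}{2}+O(p)=\frac{p^2}{2}+O(p)$. The one detail to make explicit is that counting weight only on the $\bar s_v$'s is not the literal \textsc{MaxW-SAT} of \cref{thm:robust}; the paper reduces this restricted-weight variant to the standard one by duplicating each counted variable $N^2$ times, and your encoding needs the same (routine) step, since otherwise the pair variables would contribute a solution-dependent amount to the total weight.
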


\begin{proof} The idea of this proof is to encode the workings of the algorithm
of \cref{thm:c4alg} as a \textsc{MaxW-SAT} instance and then find a satisfying
assignment of maximum weight. Suppose that there is an algorithm for
\textsc{MaxW-SAT} running in time $2^{(1-\delta)\pw}|\phi|^{O(1)}$, where
$\phi$ is the input formula.  We are given a graph $G$ on $n$ vertices,
together with a path decomposition of width $p$. We will construct a formula
$\phi$ satisfying the following:

\begin{enumerate}

\item $\phi$ has size $n^{O(1)}$, and its variables are partitioned into two
sets $V_0,V_1$. 

\item $\phi$ can be constructed in time $n^{O(1)}$ and a path decomposition of
$\phi$ of width $\frac{p^2}{2}+O(p)$ can be constructed in the same time.

\item For all $k\ge0$, there is a satisfying assignment setting at least $k$ of
the variables of $V_0$ to True if and only if there is a deletion set $S$ such
that $G-S$ has no $C_4$ subgraph and $|V(G)\setminus S|\ge k$.

\end{enumerate}

We first note that we are using a slightly more general version of
\textsc{MaxW-SAT}, because we partition the variables of $\phi$ into two sets,
but only care about the weight of the assignment on the variables of $V_0$.
This does not change much, though, because we can take our $\phi$, and assuming
$\phi$ has $N$ variables, introduce for each $x\in V_0$, $N^2$ copies, $x_i,
i\in[N^2]$, with the clauses $(\neg x\lor x_i)\land(\neg x_i\lor x)$, which
ensures that the copies take the same value as $x$. Solving standard
\textsc{MaxW-SAT} on the new instance is equivalent to solving our instance
where we only care about the weight in $V_0$. The new variables only increase
the pathwidth by at most $1$, since we can find a distinct bag for each $x\in
V_0$ and insert after it $N^2$ copies, inserting into each copy a distinct
$x_i$.

Now, let us explain why if we achieve the above, then we obtain the lemma.
Suppose that the decomposition of $\phi$ we produce has width at most
$\frac{p^2}{2}+cp$, for some constant $c$. We assume without loss of generality
that $p$ is sufficiently large to have
$(1-\delta)(\frac{p^2}{2}+cp)<(1-\frac{\delta}{2})\frac{p^2}{2}$ (this happens
when $p>\frac{4c}{\delta}$). This assumption is without loss of generality,
because if $p$ is bounded by a constant we can solve the problem outright in
polynomial time. But then, the $2^{(1-\delta)\pw}|\phi|^{O(1)}$ algorithm for
\textsc{MaxW-SAT} gives an algorithm for our problem with running time at most
$2^{(1-\frac{\delta}{2})\frac{p^2}{2}}n^{O(1)}$, as required. 

Suppose that the bags of the path decomposition of $G$ are numbered
$B_1,\ldots,B_t$, and assume without loss of generality that $t\le 2n$, because
each bag introduces or forgets a vertex (otherwise the bag is redundant). We
construct for each $v\in V(G)$ a variable $x_v\in V_0$. Furthermore, for each
$i\in[t]$ and for each pair of distinct $v_1,v_2\in B_i$ we construct a
variable $x_{v_1,v_2,i}\in V_1$. The total number of variables is at most
$n+t{p+1\choose 2} = O(n^3)$.

The intuitive meaning of the variables $x_v$ is that $x_v$ is True if and only
if $v\in V\setminus S$. The meaning of $x_{v_1,v_2,i}$ is that the variable is
set to True if $v_1,v_2$ have a common neighbor in
$\bigcup_{i'\in[i-1]}B_{i'}\setminus S$. We now describe the clauses of $\phi$. 

\begin{enumerate}

\item For each $v\in V(G)$ let $B_i$ be the last bag that contains $v$. Then we
add the clause $(\neg x_v \lor x_{v_1,v_2,i+1})$ for all pairs of distinct
$v_1,v_2\in B_i$ for which $v_1,v_2\in N(v)\cap B_{i+1}$.

\item For each $i\in[t-1]$ and pair of distinct $v_1,v_2\in B_i\cap B_{i+1}$ we
add the clause $(\neg x_{v_1,v_2,i} \lor x_{v_1,v_2,i+1})$.

\item For each $C_4$ of $G$ which has all its vertices, say $v_1,v_2,v_3,v_4$,
contained in a bag $B_i$, we construct the clause $(\neg x_{v_1}\lor \neg
x_{v_2} \lor \neg x_{v_3} \lor \neg x_{v_4})$. 

\item For each $C_4$, say $v_1,v_2,v_3,v_4$, that is not fully contained in any
bag, suppose without loss of generality that the last vertex to be introduced
in the decomposition is $v_2$, which first appears in bag $B_i$ (so
$v_1,v_3,v_4$ all appear in bags $B_i'$ with $i'\le i$). By the properties of
path decompositions, $B_i$ must also contain $v_1,v_3$. We construct the clause
$(\neg x_{v_1}\lor \neg x_{v_2} \lor \neg x_{v_3} \lor \neg x_{v_1,v_3,i})$.

\end{enumerate} 

We now show how a path decomposition of the primal graph of $\phi$ of width
$\frac{p^2}{2}+O(p)$ can be constructed from the decomposition of $G$. We start
with the decomposition of $G$, replace each $v\in V(G)$ in each bag with
$x_v\in V_0$ and add inside each bag $B_i$ all the variables $x_{v_1,v_2,i}\in
V_1$.  Furthermore, for each $i\in[t-1]$, add all variables of $V_0$ which are
in $B_i$ also to bag $B_{i+1}$. Each bag now contains at most $2p+2$ variables
of $V_0$. We observe that we have covered all clauses of steps 1,3, and 4. To
cover the clauses of step 2, we observe that for each $i\in[t-1]$ these edges
form a matching between variable contained in $B_i$ and variables contained in
$B_{i+1}$. We insert a sequence of bags between $B_i$ and $B_{i+1}$, where we
place into all copies all variables of $B_i\cap V_0$. Consider now the (at most
$p+1\choose 2$) variables of $B_i\cap V_1$. We order the pairs $v_1,v_2\in B_i$
in some arbitrary way, add in the first new bag all variables $x_{v_1,v_2,i}$
and the variables $x_{v_1,v_2,i+1}$ for the first pair $v_1,v_2$. Then we
gradually swap variables $x_{v_1,v_2,i}$ with variables $x_{v_1,v_2,i+1}$
covering all edges of the matching. The width of the new decomposition is
therefore ${p+1\choose 2}+O(p) = \frac{p^2}{2}+O(p)$. 

To show correctness, first suppose that a satisfying assignment to $\phi$ sets
at least $k$ of the variables of $V_0$ to True. We construct a deletion set $S$
by setting $v\in S$ if and only if $x_v$ is False in the satisfying assignment.
We claim $S$ hits all $C_4$'s. $C_4$'s which are fully contained in a bag are
hit due to clauses of step 3. For a $C_4$, say $v_1,v_2,v_3,v_4$ that is not
contained in any bag, suppose without loss of generality that $v_2$ is the last
vertex to be introduced in the decomposition, in bag $B_i$, which also contains
$v_1,v_3$. We claim that if $v_1,v_3,v_4\not\in S$ (so
$x_{v_1},x_{v_3},x_{v_4}$ are True), then the satisfying assignment must have
$x_{v_1,v_3,i}$ set to True, therefore, $x_{v_2}$ set to False (due to the
clause of step 4), therefore $v_2\in S$. To see that $x_{v_1,v_3,i}$ must be
True, consider the last bag $B_{i'}$ that contains $v_4$ (we have $i'<i$).
$B_{i'}$ must also contain $v_1,v_3$, therefore, since $v_4\not\in S$ implies
$x_{v_4}$ is True, the clause of step 1 implies that $x_{v_1,v_2,i'+1}$ is
True. Then, the clauses of step 2 imply that $x_{v_1,v_2,i''}$ is True for all
$i''\in\{i'+1,\ldots,i\}$.  We conclude that $S$ is a valid hitting set.

For the converse direction, suppose that $S$ is a valid hitting set and we
obtain a satisfying assignment by setting $x_v$ to True whenever $v\in
V\setminus S$. We claim that this can be extended to a satisfying assignment
for $\phi$. For each variable $x_{v_1,v_2,i}$ we give it value True if and only
if there exists $v\in \bigcup_{i'\in[i-1]} B_{i'}$ such that $v$ is adjacent to
both $v_1,v_2$ and $v\not\in S$ (therefore $x_v$ is True). This satisfies
clauses of step 2 (as our assignment is non-decreasing) and step 1 (such
clauses are trivially satisfied if $x_v$ is False). Clauses of step 3 are
satisfied by the assumption that $S$ is a hitting set for all $C_4$'s. For
clauses of step 4, suppose $x_{v_1},x_{v_2},x_{v_3}$ are True (otherwise the
clause is satisfied), and $x_{v_1,v_3,i}$ is also True. However, if
$x_{v_1,v_3,i}$ is True, this means there exists
$v_4\in\bigcup_{i'\in[i-1]}B_{i'}$ such that $v_4$ is adjacent to $v_1,v_2$ and
$v_4\not\in S$. But then we have $v_1,v_2,v_3,v_4\not\in S$ and we have a $C_4$
that is not hit, which gives a contradiction. Hence, $\phi$ is satisfied by
this assignment. \end{proof}

We now move on to the main result of this section, which is a reduction from
\textsc{SAT} to the problem of hitting all $C_4$'s that establishes a tight
lower bound on the parameter dependence on pathwidth. As mentioned, our
reduction needs to be significantly more efficient than previous reductions,
because we want to reduce a \textsc{SAT} instance of pathwidth $p$ to a graph
of pathwidth $\sqrt{2p}+o(\sqrt{p})$. In particular, this means that every pair
of vertices in a bag will represent a single variable, meaning that we cannot
afford to use a distinct gadget for $x$ and for $\neg x$, as is done in
previous reductions. To work around this difficulty, we will need an
intermediate satisfiability problem, where the variables are given to us
partitioned into groups, and we are only interested in satisfying assignments
that set half of the variables of each group to True. 

\begin{lemma}\label{lem:satforc4} There exists an algorithm that takes as input
a \tsat\ formula $\phi$ on $n$ variables and a path decomposition of its primal
graph where each bag contains at most $p$ variables and outputs a \textsc{SAT}
formula $\psi$ and a path decomposition of its primal graph such that we have
the following (where $q=\lceil \sqrt{2p}\ \rceil$, $r=\lceil \sqrt{q}\
\rceil$):

\begin{enumerate}

\item The algorithm runs in time $2^{O(\sqrt[4]{p})}n^{O(1)}$.

\item The algorithm also outputs a partition of the variables of $\psi$ into $
qr +1$ sets $V_0$ and $V_{i_1,i_2}$ where $i_1\in [q], i_2\in [r]$.

\item Each bag of the decomposition of $\psi$ contains at most $r = O(p^{1/4})$
variables of $V_0$

\item For each $i_1\in[q]$, $i_2\in[r]$, and each bag $B$ of the decomposition
of $\psi$ we have that $|B\cap V_{i_1,i_2}|\le\frac{q-i_1}{r} + 6\sqrt{r}$ and
$|B\cap V_{i_1,i_2}|$ is even. Hence, for each $i_1\in[q]$, each bag contains
at most $q-i_1+6r\sqrt{r}$ variables of $\bigcup_{i_2\in[r]} V_{i_1,i_2}$.

\item For each $i_1\in[q]$, $i_2\in[r]$ and consecutive bags $B, B'$ of the
decomposition of $\psi$ we have either $B\cap V_{i_1,i_2}=B'\cap V_{i_1,i_2}$
or $B\cap V_{i_1,i_2}\cap B'=\emptyset$. Furthermore, for each pair $B,B'$ of
consecutive bags, there exists at most one pair $i_1,i_2$ such that $B\cap
V_{i_1,i_2}\cap B'=\emptyset$.

\item Every clause of $\psi$ contains at most $O(r)$ literals.

\item $\phi$ is satisfiable if and only if there exists a satisfying assignment
of $\psi$ which also has the property that for each $i_1\in [q]$, $i_2\in[r]$
and each bag $B$ of the decomposition exactly $\frac{|B\cap V_{i_1,i_2}|}{2}$
variables of $B\cap V_{i_1,i_2}$ are set to True.

\end{enumerate}

\end{lemma}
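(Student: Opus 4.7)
The plan is to construct $\psi$ from $\phi$ by three main operations: (i) duplicate every variable with a shadow to enforce the \emph{half-True} condition, (ii) partition the variables into a two-level staircase of $qr$ groups, and (iii) re-engineer the path decomposition using ``transition bags'' that perform at most one group replacement at a time. As a first step, I would apply \cref{lem:nice} to obtain a nice path decomposition together with an injective clause-to-bag map $b$, and then apply \cref{lem:pwcolor} with $k=q$ to partition the variables of $\phi$ into $q$ layers $W_1,\ldots,W_q$, so that each bag contains at most $\lceil(p+1)/q\rceil=O(\sqrt p)$ variables per layer.

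Next I enforce balance: for each variable $x$ of $\phi$, introduce a shadow $\bar x$ together with the two binary clauses $(x\lor\bar x)\land(\neg x\lor\neg\bar x)$, which force $\bar x=\neg x$. Any satisfying assignment of the resulting formula sets exactly one of $\{x,\bar x\}$ to True; provided every group $V_{i_1,i_2}$ consists of whole $(x,\bar x)$ pairs, this guarantees that exactly half of $B\cap V_{i_1,i_2}$ is True in every bag, giving property 7. The original clauses of $\phi$ are carried over as clauses of arity at most $3$, and all new clauses (consistency clauses and eventual clause-encoding gadgets) can be kept of arity $O(r)$, giving property 6.

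Within each layer $W_{i_1}$, I would further subdivide the (shadowed) variables into $r$ subgroups $V_{i_1,1},\ldots,V_{i_1,r}$ using a greedy balancing argument analogous to \cref{lem:pwcolor}. To meet the staircase bound $|B\cap V_{i_1,i_2}|\le(q-i_1)/r+6\sqrt r$ and the parity requirement, I would add \emph{dummy} pair variables per cell, distributed to absorb slack and to adjust parities without violating the bound. I would then produce the decomposition of $\psi$ by replacing each pair of consecutive original bags $B_j,B_{j+1}$ with a sequence of transition bags, each of which either leaves all groups intact or completely wipes $B\cap V_{i_1,i_2}$ and replaces it with the intended new membership, for exactly one pair $(i_1,i_2)$ (for instance, ordering the replacements lexicographically in $(i_1,i_2)$). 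Original clauses of $\phi$ are placed in $B_{b(c)}$; consistency clauses for $(x,\bar x)$ pairs are covered by a dedicated bag inserted near the introduction of $\bar x$; clause-encoding gadgets that treat $x$ and $\bar x$ symmetrically are handled using the small set $V_0$ of auxiliary variables (at most $O(r)=O(p^{1/4})$ per bag, matching property 3).

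The main obstacle will be the simultaneous satisfaction of three rigid structural demands: (i) the staircase size bound, which must hold at every bag (including transition bags); (ii) the parity constraint $|B\cap V_{i_1,i_2}|$ even, which must be preserved through every batch replacement; and (iii) the rule that consecutive bags differ in at most one group and only by a full replacement. These interact delicately because each batch replacement must simultaneously respect the staircase bound and keep sizes even, which forces a careful choice of how many dummy pairs live in each $V_{i_1,i_2}$ and of the replacement order. The $2^{O(\sqrt[4]{p})}n^{O(1)}$ construction time should come from enumerating local configurations (clause satisfaction patterns over at most $O(r)=O(\sqrt[4]{p})$ variables of $V_0$) while assembling each transition sequence, which is affordable since $r=O(p^{1/4})$ and the number of bags remains polynomial in $n$.
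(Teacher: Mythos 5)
Your overall skeleton (nice decomposition with an injective clause-to-bag map, a two-level grouping into $qr$ cells, transition bags that replace one cell at a time, and a small set $V_0$ of per-bag auxiliaries for clause and consistency gadgets) matches the paper's proof. However, two of your central choices create a genuine gap, and they are precisely the two difficulties the lemma is engineered to overcome.

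First, the partition into layers cannot be equitable. Applying \cref{lem:pwcolor} with $k=q$ gives every layer about $p/q\approx q/2$ variables per bag, uniformly. But property 4 is a \emph{staircase}: layer $i_1$ is allowed only about $q-i_1+6r\sqrt r$ variables per bag, so the top layers ($i_1$ close to $q$) may hold only $O(q^{3/4})=o(q)$ variables. Dummy pairs can only \emph{increase} $|B\cap V_{i_1,i_2}|$, so no amount of padding repairs an overfull layer. The paper instead allocates $q-i_1+1$ variables of each bag to layer $W_{i_1}$ directly (using $\sum_{i_1}(q-i_1+1)=q(q+1)/2\ge p$), propagating the allocation greedily along the nice decomposition so that a newly introduced variable inherits the layer of the one it replaces.

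Second, the shadow-variable trick ($\bar x=\neg x$) doubles the number of non-$V_0$ variables per bag to $2p$, while the staircase with $q=\lceil\sqrt{2p}\,\rceil$ only accommodates $\sum_{i_1}(q-i_1+6r\sqrt r)=p+O(p^{7/8})$ of them. This factor-of-$2$ loss is exactly the literal-versus-variable inefficiency that the surrounding text identifies as unacceptable (it would force $q\approx 2\sqrt p$ and destroy the $2^{\pw^2/2}$ tightness downstream). The paper achieves the half-True property without doubling: each cell $W_{i_1,i_2}\cap B$ of $x\le r$ original variables is replaced by a cell of $x+2\lceil\sqrt r\,\rceil$ \emph{fresh} variables, and the $2^x$ assignments of the former are injectively mapped to \emph{balanced} assignments of the latter; this is possible because $\binom{x+2y}{\lfloor(x+2y)/2\rfloor}\ge 2^x$ whenever $x\le y^2$ (\cref{claim:many}). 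Clauses then forbid every non-balanced or out-of-range assignment, and decoding clauses through $V_0$ recover the original variable values for consistency and for the clauses of $\phi$. The additive overhead is only $2qr\sqrt r=O(p^{7/8})$ in total, which is what makes properties 4 and 7 compatible. Without replacing your two choices by these, the construction as proposed cannot satisfy property 4.
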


\begin{proof}

We begin with some basic preprocessing of the given path decomposition. First,
assume that all bags contain exactly $p$ variables (this can be achieved by
adding dummy fresh variables in smaller bags). Then, assume that the
decomposition is nice, that is, for each consecutive bags $B,B'$ we have either
$B=B'$, or $B'$ is obtained from $B$ by removing a variable and replacing it
with a new one. This can be achieved by inserting between bags whose symmetric
difference is larger than two a sequence of bags that gradually swap vertices
one by one. Finally, assume we have an injective function that for each clause
of $\phi$,  maps the clause to a bag $B$ that contains all its variables, such
that the bags adjacent to $B$ in the decomposition contain the same variables
as $B$ (this can be achieved by repeating bags if necessary). All these steps
can be carried out in polynomial time.

We now group the variables of $\phi$ into $q$ sets, $W_1,\ldots, W_{q}$ so that
each set $W_{i_1}$, for $i_1\in[q]$, contains at most $q-{i_1}+1$ variables
from each bag.  This is done as follows: In the first bag, place for each
$i_1\in[q]$, $q-{i_1}+1$ arbitrarily chosen variables of $B$ into each
$W_{i_1}$.  Observe that $\sum_{i_1\in[q]}(q-i_1+1) = \frac{q(q+1)}{2} \ge
\frac{\sqrt{2p}(\sqrt{2p}+1)}{2}\ge p$, so this process partitions all
variables of the first bag without placing too many variables in any $W_{i_1}$.
Now we move along the decomposition and for each bag that is identical to the
previous bag we do nothing; while for a bag $B'$ that is after bag $B$ such
that $B'\setminus B =\{x'\}$ and $B\setminus B' = \{x\}$, we place $x'$ in the
same set $W_{i_1}$ that contains $x$. We now have that for each bag $B$,
$i_1\in[q]$, $|B\cap W_{i_1}|\le q-i_1+1$. Furthermore, for each $i_1\in[q]$
and two bags $B,B'$, we have $|B\cap W_{i_1}|=|B'\cap W_{i_1}|$.

We now further break down the groups as follows: for each $i_1\in [q]$, we want
to partition $W_{i_1}$ into $r$ groups $W_{i_1,i_2}$, $i_2\in [r]$, so that for
each bag $B$ we have $|W_{i_1,i_2}\cap B| \le \lceil \frac{q-i_1+1}{r} \rceil$.
We achieve this in a similar way as previously: In the first bag, we
arbitrarily partition the at most $q-i_1+1$ variables of $W_{i_1}$ into the $r$
sets $W_{i_1,i_2}$, for $i_2\in [r]$, so that each set contains at most
$\lceil\frac{q-i_1+1}{r}\rceil$ variables. Then, for each bag $B'$ that
contains a variable $x'$ which replaces the variable $x$ of the preceding bag
$B$, we place $x'$ in the same set $W_{i_1,i_2}$ as $x$. We now have that for
each bag $B$, $i_1\in[q]$, $i_2\in[r]$, $|B\cap W_{i_1,i_2}|\le \lceil
\frac{q-i_1+1}{r}\rceil$.

We are now ready to construct $\psi$ and we will also along the way construct a
path decomposition by editing the decomposition of $\phi$. For the first bag
$B$ of the decomposition of $\phi$, for each $i_1\in[q]$, $i_2\in[r]$, we
construct a set of variables $V_{i_1,i_2}$ of size
$\lceil\frac{q-i_1+1}{r}\rceil + 2\lceil \sqrt{r}\ \rceil = O(r)$. The informal
meaning is that the truth assignments of $W_{i_1,i_2}\cap B$ should be
injectively mapped into truth assignments to $V_{i_1,i_2}\cap B$ that set
exactly $\lfloor\frac{|B\cap  V_{i_1,i_2}|}{2}\rfloor$ variables to True. We
claim that this is always possible:

\begin{claim}\label{claim:many}

For all $i_1\in[q]$ we have  ${ \lceil\frac{q-i_1+1}{r}\rceil +
2\lceil\sqrt{r}\rceil \choose \lfloor \frac{\lceil\frac{q-i_1+1}{r}\rceil +
2\lceil\sqrt{r}\rceil}{2} \rfloor }\ge 2^{\lceil\frac{q-i_1+1}{r}\rceil}$.

\end{claim}

\begin{proof} Let $x= \lceil\frac{q-i_1+1}{r}\rceil$, $y=\lceil \sqrt{r}\
\rceil$, and we note that $x\le \lceil \frac{q}{r}\rceil \le r \le y^2$.  We
will show that for all positive integers $x,y$, such that $x\le y^2$ we have
${x+2y \choose \lfloor \frac{x+2y}{2}\rfloor} \ge 2^x$. We have that for all
$z\in\{0,1,\ldots,x+2y\}$, ${x+2y\choose \lfloor \frac{x+2y}{2}\rfloor} \ge
{x+2y\choose z}$, and that $\sum_{z=0}^{x+2y} {x+2y\choose z} = 2^{x+2y}$.
Therefore, ${x+2y\choose \lfloor \frac{x+2y}{2}\rfloor} \ge
\frac{2^{x+2y}}{x+2y+1}\ge 2^x (\frac{2^{2y}}{y^2+2y+1})$.  To obtain the claim
it is therefore sufficient to have $2^{2y}\ge y^2+2y+1$, which is true for all
$y\ge 2$. For $y=1$, since $x\le y^2$ we must have $x=1$, and the inequality
trivially holds.  \end{proof}

Armed with \cref{claim:many} we do the following for each $i_1\in[q],
i_2\in[r]$ starting from the first bag $B$: place all the (new) variables of
$V_{i_1,i_2}$ into $B$ and fix an injective mapping from each truth assignment
to $B\cap W_{i_1,i_2}$ to a truth assignment of $B\cap V_{i_1,i_2}$ that sets
exactly $\lfloor \frac{|B\cap V_{i_1,i_2}|}{2} \rfloor$ variables to True; for
each assignment to $B\cap V_{i_1,i_2}$ that is not in the range of this mapping
construct a clause in $\psi$ that is falsified by this assignment. If $|B\cap
V_{i_1,i_2}|$ is odd, add a new (dummy) variable to $B\cap V_{i_1,i_2}$ and a
clause that forces this variable to be True. Remove from $B$ the variables of
$W_{i_1,i_2}$ (since they do not appear in $\psi$).

We now continue to the next bag of the decomposition of $\phi$. For each bag
$B'$, let $B$ be the previous bag, and for each $i_1\in[q], i_2\in[p]$ we do
the following: if $B\cap W_{i_1,i_2} = B'\cap W_{i_1,i_2}$, then we do nothing;
otherwise we construct $|B\cap V_{i_1,i_2}|\le\lceil \frac{q-i_1+1}{r} \rceil +
2\lceil\sqrt{r}\ \rceil+1$ new variables and place them into $B'\cap
V_{i_1,i_2}$. Again, we construct an arbitrary injective mapping from
assignments to $B'\cap W_{i_1,i_2}$ to assignments that set exactly half the
variables of $B'\cap V_{i_1,i_2}$ to True (observe that we now know that
$|B'\cap V_{i_1,i_2}|$ is even) and add clauses ensuring that we must select an
assignment in the range of the mapping to satisfy $\psi$. We now need to ensure
consistency, meaning that the assignment we select for $B'\cap V_{i_1,i_2}$ and
the assignment we select for $B\cap V_{i_1,i_2}$ agree on the variables of
$B\cap B'\cap W_{i_1,i_2}$.  In order to achieve this, we make a fresh copy of
each variable of $W_{i_1,i_2}$ and add it to $V_0\cap B\cap B'$.  For each such
variable $x$ and each assignment $\sigma$ to $B\cap V_{i_1,i_2}$ that is in the
range of our injective mapping, there is a unique assignment $\sigma'$ to
$W_{i_1,i_2}$ which maps to $\sigma$. We add a clause that ensures that if we
set the variables of $B\cap V_{i_1,i_2}$ to $\sigma$ and $x$ to $\neg
\sigma'(x)$, $\psi$ is falsified. This ensures that the assignment we give to
the new variable $x$ is consistent with what the selected assignment to $B\cap
V_{i_1,i_2}$ encodes. We add similar clauses for $B'\cap V_{i_1,i_2}$.  Observe
that if we add the (at most $r$) new variables of $V_0$ to $B, B'$, both bags
satisfy condition 3, since there is only one pair $i_1,i_2$ for which they
differ. We continue in this way, until we have processed the whole
decomposition, and it is not hard to see that we have a decomposition of the
formula $\psi$ which we have constructed so far that satisfies conditions 4 and
5.

What remains is to add some clauses to $\psi$ so that only encodings of
satisfying assignments to $\phi$ can satisfy $\psi$. To do this, for each
clause of $\phi$ we find a bag $B$ that contains all its literals using the
injective mapping we assumed previously. We add fresh copies of the three
variables of the clause into $B\cap V_0$. Note that $B$ currently contains no
other variable of $V_0$, since it is the same as its neighboring bags. We use
these fresh variables to make a copy of the original clause. Now, to ensure
that the assignment to the new variables is consistent with the assignments to
$B\cap V_{i_1,i_2}$, we decode the assignment as in the previous paragraph. In
other words, for each $x\in V_0\cap B$ we added, if $x\in W_{i_1,i_2}\cap B$,
for each assignment $\sigma$ to $V_{i_1,i_2}\cap B$ that is in the range of the
injective mapping we have used, there is a unique assignment $\sigma'$ to
$W_{i_1,i_2}\cap B$ that maps to $\sigma$. We add a clause that is falsified if
we use simutaneously the assignment $\sigma$ for $V_{i_1,i_2}\cap B$ and
$\neg\sigma'(x)$ for $x$.

This completes the construction and it is not hard to see that $\psi$ is
satisfied only by assignments that set half of each $B\cap V_{i_1,i_2}$ to
True. Furthermore, such a satisfying assignment can be found if and only if
$\phi$ has a satisfying assignment. The construction can be carried out in
polynomial time, except for the part where we have to consider all assignments
to each $W_{i_1,i_2}$, but since each such set has size at most $O(r)$, we get
the promised running time.  \end{proof} 

Before we proceed, let us give some more intuition about why a problem such as
the one given by \cref{lem:satforc4} will be useful. A straightforward attempt
to reduce \textsc{SAT} to hitting $C_4$'s would, for each bag of the
decomposition of the formula containing $p$ variables, construct $\sqrt{2p}$
vertices. We would then place, for each pair of such vertices, a vertex of
degree $2$ connected to this pair. The idea would be that deleting this vertex
or not would represent the value of one variable, and we would have exactly as
many pairs as we need to represent all the variables. Before adding gadgets to
implement the rest of the reduction, though, it is important to observe an
important flaw in this strategy: one could always simply choose to delete the
$\sqrt{2p}$ vertices, rather than any of the vertices representing variables.
Since $\sqrt{2p}$ is significantly smaller than $p$, this looks like a serious
problem. The way this is handled in previous reductions is by constructing a
degree $2$ vertex for each \emph{literal}, rather than variable, and then
adding a $C_4$ that ensures one of the two literal vertices must be deleted.
This is simple, but will not work in our case, because we cannot afford to
waste a factor of $2$ in the efficiency of our reduction. 

The strategy we will use to overcome the obstacle outlined above is the
following: suppose we have some (large but not too large) groups of variables
for which we know the weight $w$ of a satisfying assignment. We can then add an
appropriate gadget to ensure that in any feasible solution at least $w$ of the
degree $2$ vertices representing variables from this group must be deleted. By
placing such gadgets for all groups of variables (since in the problem of
\cref{lem:satforc4} we know the desired weight for all groups), we can adjust
the total deletion budget so that it is exactly equal to the sum of the weights
dictated locally in each group. This ensures that there is never any budget
left to delete the $\sqrt{2p}$ vertices of a bag that ensure connectivity, and
the reduction works as intended.

With this intuition in mind, let us know define the budget-setting gadget we
will use in our reduction. We first mention a more basic gadget which we call a
\emph{strong edge}. A strong edge with endpoints $u,v$ is made up of four new
vertices $x_1,x_2,x_3,x_4$ which are adjacent to $u,v$ and not adjacent to any
other vertex. Informally, a strong edge ensures that at least one of $u,v$ must
be in the deletion set. Indeed, if a solution does not delete neither $u$ nor
$v$ it must delete at least three of the new vertices; however, in that case
exchanging these vertices with $u,v$ gives a better solution.

\begin{definition} Let $G=(V,E)$ be a graph, $X\subseteq V$ a set of vertices,
and $b\le |X|$ a positive integer. A budget-setter for $X$ with budget $b$ is
the following gadget: let $f={|X|\choose b}$ and we construct $3f$ sets of
$2|X|$ vertices each, call them $A_i, B_i, C_i$ for $i\in[f]$. For each
$i\in[f]$ we place strong edges between all vertices of $A_i$ and $B_i\cup
C_i$, and between all vertices of $B_i$ and $C_i$. For each $i\in[f-1]$ we
place strong edges between all vertices of $A_i$ and all vertices $A_{i+1}$,
and strong edges between all vertices of $B_i$ and all vertices of $B_{i+1}$.
If $f$ is odd, we place strong edges between all vertices of $A_f$ and all
vertices of $A_1$; and all vertices of $B_f$ and all vertices of $B_1$.  If $f$
is even, we place strong edges between all vertices of $A_f$ and $B_1$; and
strong edges between all vertices of $B_f$ and $A_1$. Finally, for each $i\in
[f]$ we place strong edges between all vertices of $C_i$ and all vertices of a
distinct subset of $X$ of size $b$. \end{definition}

We now show that the budget-setter has the required effect, namely that we
force the solution to delete at least $b$ of the vertices of $X$, and that if
we use any set of $b$ vertices the cost of the budget-setter is the same.

\begin{lemma}\label{lem:budget-setter} Let $G=(V,E)$ be a graph, $X$ a set of
vertices, and $b\le |X|$ a positive integer. Let $G'$ be the graph obtained by
adding a budget-setter for $X$ with budget $b$ to $G$. Then, we have the
following (where $f={|X|\choose b}$):

\begin{enumerate}

\item The optimal $C_4$-hitting set of $G'$ contains at least $b$ vertices of
$X$ and at least $4f|X|$ vertices of the budget-setter.

\item Any $C_4$-hitting set of $G$ of size $k$ that includes at least $b$
vertices of $X$ can be extended to a hitting set of $G'$ of size $k+4f|X|$.

\end{enumerate}

\end{lemma}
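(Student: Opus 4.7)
The plan is to prove the two claims separately, in both cases via a careful per-position accounting of the deletions forced by the strong edges. For the lower bound on budget-setter vertices, the per-position analysis is straightforward: within each position $i$ the three groups $A_i,B_i,C_i$ are pairwise joined by strong edges between every pair of vertices, so if at most one of them lies fully in $S$ then some strong edge has both endpoints outside $S$, and the resulting helper-deletion cost (at least $3$ per uncovered strong edge) dwarfs the $2|X|$ one could hope to save by not including a group. Hence in any optimal solution at least two of $\{A_i,B_i,C_i\}$ are fully in $S$, contributing at least $4|X|$ per position and hence $\ge 4f|X|$ overall.

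For the lower bound $|S\cap X|\ge b$, my plan is to argue by contradiction. If $|S\cap X|<b$, then no $b$-subset of $X$ lies in $S$; in particular the $b$-subset associated with each $C_i$ contains a vertex outside $S$, and the strong edges from $C_i$ to that vertex force $C_i\subseteq S$ for every $i$. A parity/cycle argument then kicks in: each position still needs at least one of $A_i,B_i$ in $S$, and the strong edges $A_i$--$A_{i+1}$, $B_i$--$B_{i+1}$, together with the crossover edges when $f$ is even, imply that the positions whose $A_i$ (respectively $B_i$) is \emph{not} in $S$ form independent sets in the corresponding cycles. For $f$ odd these are disjoint odd $f$-cycles of independence number $(f-1)/2$, so at most $f-1$ positions are covered by $A$- or $B$-keeps. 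For $f$ even the combined $2f$-cycle together with the within-position edges $A_i$--$B_i$ rules out, by a quick parity check on alternating letter patterns, any assignment picking exactly one of $\{A_i,B_i\}$ per position. In either case at least one position must have all three groups fully in $S$, adding an extra $2|X|$ beyond the $4f|X|$ minimum. Since $b\le |X|<2|X|$, an alternative solution that uses a $C$-keep at some $i_0$ and the associated $b$-subset of $X$ is strictly cheaper, contradicting optimality.

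For the second claim, my plan is constructive. Given $S$ with $|S\cap X|\ge b$, pick any $b$-subset $X^*\subseteq S\cap X$ and let $i_0$ be the unique index such that $X^*$ is the subset associated with $C_{i_0}$. Build $S'$ by adding to $S$ the budget-setter configuration in which $C_{i_0}$ is the ``kept'' group at position $i_0$ (so $A_{i_0}\cup B_{i_0}\subseteq S'$ but $C_{i_0}\cap S'=\emptyset$), while at every other position exactly one of $\{A_i,B_i\}$ is the keep; this adds $2\cdot 2|X|\cdot f=4f|X|$ vertices. The main obstacle is showing that a valid keep assignment always exists for any $i_0$; for $f$ odd, removing $i_0$ from the $A$- and $B$-cycles leaves two paths which the alternating pattern $A,B,A,B,\ldots$ handles directly. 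For $f$ even the crossover edges are delicate, but setting $\chi(1)=\chi(f)=B$ and extending alternately from both ends toward $i_0$ makes the crossover auto-satisfied because $A_1,A_f\subseteq S'$ in that configuration. A direct check then verifies that every strong edge (within-position, between consecutive positions, across the crossover, and from each $C_i$ to its $b$-subset of $X$) has at least one endpoint in $S'$, so $S'$ is a valid $C_4$-hitting set of size exactly $k+4f|X|$.
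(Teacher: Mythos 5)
Your proposal is correct and follows essentially the same route as the paper's proof: strong edges force two of the three groups $A_i,B_i,C_i$ to be fully deleted at each position (giving the $4f|X|$ bound), the assumption $|S\cap X|<b$ forces all $C_i\subseteq S$ and an independence argument on the $A/B$ cycle structure forces an extra fully-deleted group that makes an exchange strictly cheaper, and the extension in part 2 uses an independent transversal of the $2\times(f-1)$ ladder left after removing position $i_0$. The only difference is presentational: you count per position and split the parity cases explicitly, whereas the paper packages both parities into the single claim that the auxiliary graph on the $2f$ sets $A_i,B_i$ has independence number $f-1$.
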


\begin{proof}

For the first statement, it is not hard to see that for each $i\in[f]$ a
$C_4$-hitting set of $G'$ must completely contain at least two of the sets
$A_i, B_i, C_i$, because we have added strong edges between their vertices
forming, essentially, a complete tripartite graph. This means that the hitting
set must have size at least $4f|X|$, since each set $A_i, B_i, C_i$ has size
$2|X|$. Suppose now that a hitting set $S$ contains at most $b-1$ vertices of
$X$. For all $i\in [f]$ we have that the neighborhood of $C_i$ cannot be
contained in $S\cap X$, as each $C_i$ is connected (via strong edges) to $b$
vertices of $X$. Then, for all $i\in[f]$ we have $C_i\subseteq S$. We claim
that in this case we are including at least $2(2f+1)|X|$ vertices of the
budget-setter in $S$. To see this, consider an auxiliary graph that contains a
vertex for each $A_i, B_i$ and an edge when two sets are completely joined via
strong edges. By construction, this graph has maximum independent set $f-1$.
Therefore, in the budget-setter there are at most $f-1$ sets among the
$A_i,B_i$ which contain a vertex outside $S$, hence at least $2f+1$ sets which
are contained in $S$. We now observe that the current solution can be improved
by placing all of $X$ in the hitting set, and all the sets $A_i,B_i$ for
$i\in[f]$. The previous solution was deleting at least $2(2f+1)|X|$ vertices,
while the new solution is deleting $4f|X|+|X|$ vertices and satisfies the first
condition of the lemma.

For the second condition, suppose that we have a hitting set $S$ of $G$ such
that $|S\cap X|\ge b$. Find a $C_i\in[f]$ such that $C_i$ is connected via
strong edges only to vertices of $S\cap X$. We place all of $A_i,B_i$ in $S$.
We also place all $C_{i'}$ for $i'\neq i$ into $S$ and for each $i'\neq i$
place one of $A_i,B_i$ completely inside $S$. This is always possible, because
if we consider the auxiliary graph formed by the $A_{i'},B_{i'}$ for $i'\neq i$
it is a $2\times(f-1)$ grid, which has an independent set of size $f-1$. The
hitting set of $G'$ we have thus constructed has size $k+4f|X|$. Since for any
strong edge we have deleted one of its two endpoints no $C_4$ can go through
the vertices of the budget-setter, so we have a feasible solution.  \end{proof}

\begin{lemma}\label{lem:budget-setter2} Let $G=(V,E)$ be an $n$-vertex graph,
$X_1,\ldots, X_s$ be disjoint sets of vertices, $b_1,\ldots,b_s$ be positive
integers with $b_i\le |X_i|$ for all $i\in[s]$. Let $x=\max_{i\in[s]}|X_i|$ and
suppose we have a path decomposition of $G$ of width $p$ such that for all
$i\in[s]$ there is a bag that contains all of $X_i$. Then,  if $G'$ is the
graph obtained by adding a budget-setter for each $X_i$ with budget $b_i$ to
$G$, for $i\in[s]$, we can construct $G'$ in time $2^xn^{O(1)}$ and in the same
time edit the decomposition of $G$ to obtain a decomposition of $G'$ of width
$p+O(x)$.  \end{lemma}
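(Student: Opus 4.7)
The plan is to prove the two claims (efficient construction of $G'$ and a path decomposition of width $p+O(x)$) essentially in parallel, handling each budget-setter separately since the $X_i$ are disjoint. First I would unwind the size of a single budget-setter for $X_i$ with budget $b_i$: it has $f_i = \binom{|X_i|}{b_i}\le 2^{|X_i|}\le 2^x$ layers of three sets of size $2|X_i|$, so $O(f_i\cdot x)$ gadget vertices, together with $O(f_i\cdot x^2)$ strong edges, each of which contributes $4$ further internal vertices of degree $2$. Hence one budget-setter has size $O(2^x x^2)$, and the list of its vertices, adjacencies, and the specific $b_i$-subset of $X_i$ attached to each $C_j$ can be written down in the same amount of time. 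Summing over the at most $s\le n$ sets gives the claimed $2^x n^{O(1)}$ bound on the construction time of $G'$.

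For the decomposition I would process each $X_i$ independently, inserting its budget-setter into a separate region of the given decomposition. By hypothesis there is a bag $B^*_i$ containing all of $X_i$; duplicate it once to obtain two consecutive identical bags $(B^*_i, B^*_i)$ and insert the entire budget-setter of $X_i$ between them. Every inserted bag will then contain $B^*_i$ (of size $\le p+1$), leaving an $O(x)$-sized budget for gadget vertices. Inside this region I would schedule the layers $j=1,2,\ldots,f_i$ in order: for each $j$ use a sub-sequence of bags that introduces $A_j,B_j,C_j$, covers the strong edges among them and the $2b_i$ strong edges from $C_j$ to its designated subset of $X_i$, and then covers the strong edges between $A_j,A_{j+1}$ and $B_j,B_{j+1}$. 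The transition bag from layer $j$ to layer $j+1$ needs to contain $A_j\cup B_j\cup A_{j+1}\cup B_{j+1}$, which is $O(x)$ vertices; once those edges are covered, $A_j, B_j$ can be forgotten. Each individual strong edge has four degree-$2$ internal vertices, introduced and forgotten one at a time with only their two endpoints in scope, so they do not affect the width bound.

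The main obstacle I anticipate is the closing ``wrap-around'' strong edges: for odd $f_i$ we need $A_1,A_{f_i}$ simultaneously (and analogously $B_1,B_{f_i}$), while for even $f_i$ we need $A_1,B_{f_i}$ and $B_1,A_{f_i}$ together. Since $A_1, B_1$ were introduced at the very start of the inserted region, my plan is simply to retain $A_1\cup B_1$ in \emph{every} bag of the insertion until the final transition, at which point the wrap-around edges can be covered in one more short sub-sequence. Because $|A_1|+|B_1|=O(x)$ this only inflates the extra width budget by another $O(x)$, so every inserted bag has size at most $(p+1)+O(x)=p+O(x)$. Since the insertions for different $X_i$ live in disjoint parts of the decomposition, the global width remains $p+O(x)$. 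Checking that each newly added edge of $G'$ is covered by some inserted bag (which is immediate by construction) completes the argument, and the total editing time is linear in the size of $G'$, hence also $2^x n^{O(1)}$.
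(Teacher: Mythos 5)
Your proposal is correct and follows essentially the same route as the paper: locate (a copy of) a bag containing $X_i$, add it to every bag of an $O(x)$-width path decomposition of the budget-setter, and splice that in; the paper merely asserts that the budget-setter graph has pathwidth $O(x)$, whereas you explicitly build the layer-by-layer decomposition and handle the wrap-around edges by carrying $A_1\cup B_1$ throughout. (Minor nit: each $C_j$ sends $2|X_i|\cdot b_i$ strong edges to its designated subset, not $2b_i$, but this does not affect your size or width bounds.)
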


\begin{proof} Constructing the graph $G'$ in the stated time is
straightforward, as ${|X_i|\choose b_i}\le 2^x$ for all $i$. We therefore focus
on the construction of the path decomposition. For each $i\in[s]$ find a
distinct bag $B_i$ that contains all of $X_i$ (this can be achieved by copying
bags if necessary). We observe that the graph induced by the budget-setter
vertices for $X_i$ has pathwidth $O(|X_i|) = O(x)$. We take such a path
decomposition and add $B_i$ to all its bags, then insert the whole
decomposition after $B_i$ in our decomposition of $G'$. Repeating this for all
$i\in[s]$ concludes the lemma. \end{proof}

\begin{figure}

\centering

\includegraphics[width=\textwidth]{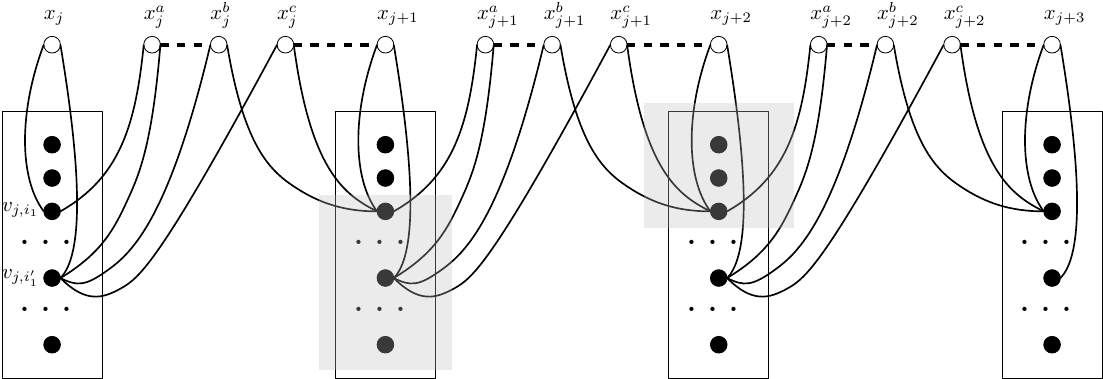}\caption{The consistency
gadget used in the reduction of \cref{lem:c4-2}. The four boxes contain the
backbone vertices $v_{j,i_1}$ and we concentrate on a specific variable $x$,
which is represented by the vertices $x_j, x_{j+1}, x_{j+2}, x_{j+3}$.  Dashed
lines represent strong edges. The intended assignment is that either we place
in the hitting set $x_j, x_j^b, x_{j+1}, x_{j+1}^b, x_{j+2}, \ldots$, or the
vertices $x_j^a, x_j^c, x_{j+1}^a, x_{j+1}^c, \ldots$. This is achieved because
we give the construction sufficient budget to delete half of the vertices of
the top row, and for any two consecutive vertices we have to delete at least
one (as they have either a strong edge or two common neighbors).  To see that
the construction has the claimed pathwidth consider the sequence of bags that
resemble the shaded area, that is, take $v_{j,i_1}, v_{j,i_1+1},\ldots$ and
also $v_{j+1,1},\ldots,v_{j+1,i_1}$. For each strong edge in the top row, there
exists such a bag such that the endpoints of the strong edge have all their
neighbors in the bag.}\label{fig:c4red}

\end{figure}

\begin{lemma}\label{lem:c4-2} If there exists an $\eps>0$ and an algorithm
solving $C_4$-\textsc{Hitting Set} in time
$2^{(1-\eps)\frac{\pw^2}{2}}n^{O(1)}$, then the \ppseth\ is false.  \end{lemma}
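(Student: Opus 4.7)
The plan is to reduce a \tsat\ instance $\phi$ of primal pathwidth $p$ to a $C_4$-\textsc{Hitting Set} instance $G$ of pathwidth $q+o(q)$, where $q=\lceil\sqrt{2p}\rceil$. If the supposed algorithm exists, applying it to $G$ would decide $\phi$ in time $2^{(1-\eps)(q+o(q))^2/2}n^{O(1)} = 2^{(1-\eps+o(1))p}n^{O(1)}$, contradicting the \ppseth. The whole challenge is to achieve a reduction whose width blow-up is exactly $\sqrt{2p}$, rather than $\Theta(\sqrt{p})$ with some worse constant; in particular we cannot afford to spend one vertex per literal (as in \cite{Pilipczuk11,SauS21}) because this would lose a factor of $2$ inside the square.

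First, I apply \cref{lem:satforc4} to $\phi$ to obtain an equivalent instance $\psi$ with variable partition $V_0 \cup \bigcup_{i_1,i_2}V_{i_1,i_2}$, where the bag $B$ of the decomposition satisfies $|B\cap V_{i_1,i_2}|\le \frac{q-i_1}{r}+6\sqrt r$ (for $r=\lceil\sqrt q\rceil$), each bag contains only $O(r)$ variables of $V_0$, and we only need to find a satisfying assignment that sets exactly half of each $B\cap V_{i_1,i_2}$ to True. This ``half-exact'' promise is the key point: it allows us to represent the value of a variable by a choice of which vertex to delete \emph{within a pair}, without adding a separate vertex for each literal, while the total deletion budget can still be fixed globally.

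Second, I construct $G$ as follows. For each bag $B_j$ of $\psi$, for each $i_1\in[q]$ I create a backbone vertex $v_{j,i_1}$ and arrange the $q$ backbone vertices of a single bag so that any two of them share many common neighbors encoding variables; the pairs $(v_{j,i_1},v_{j,i_1+1})$ are used to host the variable vertices of $V_{i_1,\cdot}$. For each variable $x\in V_{i_1,i_2}$ appearing in $B_j$ I introduce vertices $x_j,x_j^a,x_j^b,x_j^c$ attached to the backbone as in \cref{fig:c4red}: $x_j$ is a common neighbor of $v_{j,i_1}$ and $v_{j,i_1+1}$, and strong edges between $x_j^a,x_j^b,x_j^c$ and the backbone impose that on the ``top row'' of $x$-vertices, any two consecutive members must intersect the deletion set (either through a common $C_4$ or via a strong edge); combined with a budget-setter (\cref{lem:budget-setter2}) on the set $\{x_j \mid x\in V_{i_1,i_2}\cap B_j\}$ of budget exactly $\tfrac{|B_j\cap V_{i_1,i_2}|}{2}$, this enforces that within each group one of the two canonical parities must be chosen, and monotone propagation along $j$ is obtained because if we ever switch parity between consecutive bags we lose the joint neighborhood structure. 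For $V_0$-variables I use analogous but smaller gadgets (their bag contribution is only $O(r)=O(p^{1/4})$ so we can be wasteful). For each clause $c$ of $\psi$, since $\psi$ has clause arity $O(r)$ by \cref{lem:satforc4}, I enumerate its at most $2^{O(r)}$ satisfying assignments and attach a standard ``OR-of-assignments'' gadget whose internal part costs $2^{O(r)}=2^{o(\sqrt p)}$ vertices and whose external connections go through $C_4$'s attached to the relevant $x_j$ vertices inside $B_{b(c)}$.

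Third, I set the global deletion budget to be the sum of the local budgets of all budget-setters plus the costs enforced by the clause gadgets; \cref{lem:budget-setter} then guarantees that there is no slack anywhere, so in particular it is never cheaper to delete a backbone vertex $v_{j,i_1}$ than to respect the intended variable encoding. Correctness in both directions is then routine: a satisfying assignment of $\psi$ yields, via \cref{lem:satforc4}, a half-exact pattern in each $B\cap V_{i_1,i_2}$ that saturates every budget-setter; conversely, any optimal hitting set must saturate each budget-setter, which together with the consistency gadget of \cref{fig:c4red} forces a well-defined, monotone, and clause-satisfying assignment, so by the invariants of \cref{lem:satforc4} we recover a satisfying assignment of $\phi$. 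Finally, for the width bound: a bag of the target decomposition contains the $q$ backbone vertices $v_{j,i_1}$ together with $v_{j+1,1},\ldots,v_{j+1,i_1}$ plus $O(r\sqrt r)=O(p^{3/8})$ auxiliary vertices from the currently active $V_{i_1,i_2}$ group, the $O(r)$ vertices of $V_0$, and the $2^{o(\sqrt p)}$ vertices of the (at most one) active clause gadget; this totals $q+o(q)=\sqrt{2p}+o(\sqrt p)$ as required.

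The main obstacle is the consistency gadget: a priori it is not obvious how to force, using only $C_4$'s, that the two canonical ``parity'' patterns on the $x$-vertices propagate from bag to bag without spending extra backbone vertices. My plan for this is precisely the strong-edge construction depicted in \cref{fig:c4red}, where the four-vertex local gadget $\{x_j,x_j^a,x_j^b,x_j^c\}$ is connected so that the shaded ``L-shaped'' sequence of bags $v_{j,i_1},\ldots,v_{j,q},v_{j+1,1},\ldots,v_{j+1,i_1}$ already contains every neighbor of each strong edge, which simultaneously keeps the pathwidth equal to $q+o(q)$ and certifies that every non-canonical local choice creates an unhit $C_4$ or forces extra deletions violating the tight budget. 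Establishing this formally (via a careful case analysis of how a solution could ``cheat'' inside a single $V_{i_1,i_2}$ group and showing that cheating is strictly more expensive than the canonical assignment) is where the bulk of the technical work lies.
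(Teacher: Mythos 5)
Your overall architecture coincides with the paper's: you invoke \cref{lem:satforc4} for the half-exact promise, use budget-setters to pin the local deletion counts, build a backbone of roughly $q=\lceil\sqrt{2p}\,\rceil$ vertices per bag, and propagate consistency via the gadget of \cref{fig:c4red}. However, there are two concrete points where the plan as written would fail. First, you attach all variable vertices of the group $V_{i_1,\cdot}\cap B_j$ to the single consecutive pair $(v_{j,i_1},v_{j,i_1+1})$. This cannot work: two degree-two vertices sharing the same two backbone neighbors form a $C_4$ with them, so at most one such vertex could survive (rather than exactly half), and in any case the $q-1$ consecutive pairs cannot host the roughly $p$ variables of a bag. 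The actual mechanism --- and the reason the exponent is ${q\choose 2}\approx\frac{q^2}{2}\approx p$ --- is that each variable of group $i_1$ is assigned a \emph{distinct} second endpoint $i_1'>i_1$, so that the variables of a bag are mapped injectively to pairs of backbone vertices; condition 4 of \cref{lem:satforc4} (the triangular bound giving at most $q-i_1+6r\sqrt{r}$ variables of $\bigcup_{i_2}V_{i_1,i_2}$ per bag) exists precisely to guarantee that group $i_1$ has enough indices $i_1'$ available. You quote this condition but do not use it for this purpose.

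Second, your clause gadget enumerates the $2^{O(r)}$ satisfying assignments of a clause, and you then place its $2^{O(r)}=2^{\Theta(p^{1/4})}$ vertices into a single bag while claiming the bag still has size $q+o(q)$. This is false: $2^{\Theta(p^{1/4})}$ is superpolynomial in $p$ and in particular not $o(\sqrt{p})$, so the width bound collapses. The paper instead uses a gadget with only $O(r)$ vertices: one vertex per \emph{literal}, pairwise joined by strong edges so that exactly $\rho-1$ of them must be deleted, with the surviving literal vertex attached by a strong edge to $x_j$ (for a positive occurrence) or to $x_{j-1}^c$ (for a negative occurrence, exploiting that $x_{j-1}^c$ is deleted exactly when $x_j$ is not). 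You would need to replace your gadget by something of this size, or at least prove that your OR-of-assignments gadget can be laid out with pathwidth $o(q)$, which you do not do. The remaining vagueness (the case analysis showing that parity-switching is strictly infeasible under the tight budget) is acknowledged by you and does match where the paper's technical effort lies.
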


\begin{proof}

Suppose we are given a \tsat\ instance $\phi$ and a path decomposition of its
primal graph of width $p$. We execute the algorithm of \cref{lem:satforc4} so
now we have an equivalent CNF instance $\psi$ with $q=\lceil\sqrt{2p}\ \rceil$,
$r=\lceil\sqrt{q}\ \rceil$, the variables of $\psi$ partitioned into sets $V_0$
and $V_{i_1,i_2}$ for $i_1\in[q]$ and $i_2\in[r]$, and a path decomposition of
its primal graph with the requirements stipulated in \cref{lem:satforc4}. We
would like to construct a graph $G$ and a target budget $k$ such that:

\begin{itemize}

\item $G$ has a $C_4$-hitting set of size at most $k$ if and only if $\psi$ has
a satisfying assignment such that for all bags of the given decomposition and
for each $i_1\in [q], i_2\in[r]$ exactly $\frac{ |B\cap V_{i_1,i_2}|}{2}$
variables are set to True.

\item $G$ can be constructed in time $|\psi|^{O(1)}$

\item A path decomposition of $G$ of width $q+O(q^{3/4})$ can be constructed in
the same time.

\end{itemize}

Note that if we achieve the above, we obtain the lemma, because we can solve
the original \tsat\ instance by deciding $C_4$-\textsc{Hitting Set} on $G$.
Indeed, we can assume without loss of generality that $p$ (and therefore $q$)
is large enough so that the path decomposition of $G$ which has width
$q+O(q^{3/4})$ actually has width at most $(1+\frac{\eps}{4})q$. The running
time then becomes $2^{(1-\eps)(1+\frac{\eps}{4})^2\frac{q^2}{2}}
|\phi|^{O(1)}$, which becomes at most $2^{(1-\frac{\eps}{4})p}|\phi|^{O(1)}$.

We now describe our construction of $G$. Suppose that the bags of the given
decomposition of $\psi$ are numbered $B_1,\ldots, B_t$. Furthermore, suppose we
have an injective mapping of clauses of $\psi$ to bags that contain all their
literals, such that we always maps clauses to bags which contain the same
vertices as the previous and next bag (this can easily be achieved by inserting
before and after each bag, copies of the bag).  We do the following:

\begin{enumerate}

\item For each variable $x\in V_0$ we construct two vertices, $x$ and $\neg x$
and place a strong edge connecting them.

\item For each $j\in[t]$ we construct $q+6\lceil r\sqrt{r}\rceil$ vertices,
call them $v_{j,i_1}$ for $i_1\in[q+6\lceil r\sqrt{r}\rceil]$. We call these
vertices the \emph{backbone} vertices.

\item For each $j\in[t]$ and $i_1\in[q], i_2\in[r]$ we consider the variables
of $V_{i_1,i_2} \cap B_j$. For each such variable $x$ we select the minimum
index $i_1'$ which has not yet been selected for any variable of
$\bigcup_{i_2\in[r]}V_{i_1,i_2}\cap B_j$, such that $i_1<i_1'\le q+6\lceil
r\sqrt{r}\rceil$ and construct a vertex adjacent to $v_{j,i_1}$ and
$v_{j,i_1'}$.  We say that this new vertex represents the copy of $x$ in bag
$B_j$ and call this vertex $x_j$.  Note that because
$|\bigcup_{i_2\in[r]}V_{i_1,i_2} \cap B_j|\le q-i_1 + 6r\sqrt{r}$, there exist
sufficiently many indices $i_1'$ to perform this procedure assigning a distinct
$i_1'$ to each variable. Though the choice of $i_1'$ is in part arbitrary
(because we have not specified the order in which the variables of
$V_{i_1,i_2}$ are considered), we remain consistent, that is for each $x\in
V_{i_1,i_2}$, if in bag $B_j$ we represent $x$ via a vertex adjacent to
$v_{j,i_1}$ and $v_{j,i_1'}$, if $x$ appears in bag $B_{j+1}$ we represent it
via a vertex adjacent to $v_{j+1,i_1}$ and $v_{j+1,i_1'}$. Furthermore, note
that the indices $i_1'$ used for the variables of each $V_{i_1,i_2}$ form an
interval.

\item For each $j\in[t]$, $i_1\in[q]$, and $i_2\in[r]$, let $X_{j,i_1,i_2}$ be
the set of vertices representing copies of variables of $V_{i_1,i_2}$ in bag
$B_j$. We add a budget-setter on $X_{j,i_1,i_2}$ with budget $
\frac{|X_{j,i_1,i_2}|}{2}$.

\item (Consistency) For each $j\in[t-1]$ and $i_1\in[q]$, $i_2\in[r]$ such that
$B_j\cap V_{i_1,i_2} = B_{j+1}\cap V_{i_1,i_2}$, we add the following
consistency gadget: for $x\in B_j\cap V_{i_1,i_2}$, let $x_j$ be the vertex
representing the copy of $x$ in $B_j$ and $x_{j+1}$ be the vertex representing
$x$ for $B_{j+1}$. Recall that $x_j$ is adjacent to $v_{j,i_1},v_{j,i_1'}$ and
$x_{j+1}$ is adjacent to $v_{j+1,i_1},v_{j+1,i_1'}$, for some $i_1'>i_1$.  We
add three verticex $x_j^a, x_j^b, x_j^c$, such that (i) $x_j^a$ has a strong
edge to $x_j^b$ (ii) $x_j^c$ has a strong edge to $x_{j+1}$ (iii) $x_j^a$ is
adjacent to $v_{j,i_1}$ and $v_{j,i_1'}$ (iv) $x_j^b$ and $x_j^c$ are adjacent
to $v_{j,i_1'}$ and $v_{j+1,i_1}$.  We refer the reader to \cref{fig:c4red}.

\item For each $j\in[t-1]$ and $i_1\in[q]$, $i_2\in[r]$ such that $B_j\cap
V_{i_1,i_2} = B_{j+1}\cap V_{i_1,i_2}$, let $X^a_{j,i_1,i_2}, X^b_{j,i_1,i_2},
X^c_{j,i_1,i_2}$ be the sets of vertices $x^a_j, x^b_j, x^c_j$ respectively
added in the previous step for these values of $j,i_1,i_2$.  We add a
budget-setter for each of $X^a_{j,i_1,i_2}, X^b_{j,i_1,i_2}, X^c_{j,i_1,i_2}$
with budget half the cardinality of each set (all three sets have size
$|B_j\cap V_{i_1,i_2}|$).

\item For each clause $c$ of $\psi$, suppose that the mapping assigns $c$ to
bag $B_j$, that contains all its variables. Suppose that the clause contains
$\rho=O(r)$ literals. We construct a set of $\rho$ vertices, one for each
literal and connect all pairs via strong edges. Consider a literal, and suppose
that it involves a variable $x\in V_{i_1,i_2}$. If $x$ appears positive in $c$,
then we place a strong edge between the vertex representing the literal and
$x_j$, otherwise we place a strong edge between the vertex representing the
literal and $x_{j-1}^c$ (which exists, because $B_{j-1}=B_j$). Similarly, if
the literal involves a variable $x\in V_0$, if $x$ appears positive in $c$,
then we place a strong edge between $x$ and the vertex representing the
literal; otherwise we place the strong edge between the vertex representing the
literal and $\neg x$.

\end{enumerate}

We now need to set a budget for our hitting set. To compute this budget we add
the total cost of all the budget setters we have used in steps 4 and 6
(including the budget itself and the internal cost of the budget-setter as
given in \cref{lem:budget-setter}). We add to this budget for each clause $c$
of arity $\rho$ a budget of $\rho-1$; and further add one more vertex for each
variable of $V_0$.  This sum gives an integer $k$ that is the total number of
vertices that we seek to delete to hit all $C_4$'s in the new graph.

Before we argue for correctness, we refer the reader again to \cref{fig:c4red},
which illustrates the consistency gadget, which is the most involved part of
the construction. The rest of the construction is straightforward, as clause
gadgets can easily be seen to force the deletion of all but one literals of a
clause, so we only need to check that the remaining literal is set to True. A
variable $x$ is set to True if we include in the hitting set a vertex $x_j$
that represents it in some bag $B_j$. 

The crucial part now is how to ensure that this assignment stays consistent
throughout the construction, without increasing the pathwidth by much. To see
what we mean by this, one could consider a construction where we have strong
edges forming a path $x_j, x_j^a, x_{j+1}, x_{j+1}^a, x_{j+2},\ldots$. By
setting an appropriate budget this could easily ensure that either we delete
all $x_j$ vertices or none. However, the pathwidth would be linear in the
original pathwidth $p$, while we seek a pathwidth of roughly
$q=\lceil\sqrt{2p}\rceil$.  It is therefore necessary to pass on some of this
information indirectly. This is done in \cref{fig:c4red} between vertices $x_j$
and $x_j^a$ as well as vertices $x_j^b$ and $x_j^c$, because these vertices
share two common neighbors. Setting things up carefully in this way, we are
able to propagate information about the $p$ variables of bag $B_j$ to the $p$
variables of bag $B_{j+1}$, via only roughly $\sqrt{2p}$ vertices, which gives
us the desired pathwidth.

Let us now formally argue that this construction achieves the desired
properties. First, suppose that $\psi$ is satisfiable by an assignment that
sets exactly half of the variables of each $B_j\cap V_{i_1,i_2}$ to True. For
each $x\in V_0$ we add to the hitting set the variable that corresponds to the
literal involving $x$ that is set to True by the assignment; for each clause
$c$ we pick a literal that is set to True by the assignment and place in the
hitting set the vertices representing all other literals; for each $x\in
V_{i_1,i_2}$, for some $i_1\in[q], i_2\in[r]$ we consider each bag $B_j$ such
that $x\in B_j$: if $x$ is set to True we place in the hitting set $x_j$ and
$x_j^b$ (if it exists), otherwise we place in the hitting set $x_j^a$ and
$x_j^c$. Because the assignment sets exactly half of the variables of each
$B_j\cap V_{i_1,i_2}$ to True, this places in the hitting set vertices which
allow us to use \cref{lem:budget-setter} to extend the solution to all
budget-setters while using exactly the available budget. We claim that the set
we have formed indeed hits all $C_4$'s. We first note that for each strong edge
we have selected at least one of its endpoints. If we remove from the graph the
selected set and all strong edges, what remains is a bipartite graph with
backbone vertices $v_{j,i_1}$ on one side and vertices of consistency gadgets
on the other. Therefore, a $C_4$ would require two vertices from consistency
gadgets that have two common backbone neighbors. However, a vertex $x_j$ only
has two common neighbors with the unique vertex $x_j^a$ and one of the two is
in the hitting set, and the argument is similar for $x_j^b$ and $x_j^c$.
Hence, we have a feasible solution of the desired size.

For the converse direction, if we have a hitting set, for each $x\in V_0$ it
must contain one of the two literals and for each clause $c$ of arity $\rho$ it
must contain at least $\rho-1$ vertices representing literals, due to the
strong edges. Furthermore, by \cref{lem:budget-setter} the solution must
contain at least the expected budget of vertices from each budget-setter. Since
we have set the total budget to be the sum of these values, all these
inequalities are in fact equalities. In particular, this means that no backbone
vertex $v_{j,i_1}$ is in the hitting set. Consider now the sets of vertices
$X_{j,i_1,i_2}, X_{j,i_1,i_2}^a, X_{j,i_1,i_2}^b, X_{j,i_1,i_2}^c$, as
described in steps 4 and 6, for some specific values of $j,i_1,i_2$. These sets
have the same cardinality and the budget-setters ensure that we include exactly
half of each in the hitting set. This implies that if there is some $x\in
B_j\cap V_{i_1,i_2}$ such that $x_j,x_j^a$ are both in the hitting set, there
exist some other $y\in B_j\cap V_{i_1,i_2}$ with neither $y_{j},y_{j}^a$ in the
set.  This would cause a contradiction, as $y_{j},y_{j}^a$ form a $C_4$ with
their two common backbone neighbors. With similar reasoning, we conclude that
for each $x\in B_j\cap V_{i_1,i_2}$, exactly one from each pair
$\{x_j,x_j^a\}$, $\{x_j^a,x_j^b\}$, $\{x_j^b, x_j^c\}$, $\{x_j^c,x_{j+1}\}$ is
in the hitting set. This implies that $x_j$ is in the hitting set if and only
if $x_{j+1}$ is.  We can now use this fact to extract an assignment to the
variables of $\bigcup_{i_1\in[q], i_2\in[r]} V_{i_1,i_2}$, setting $x$ to True
if for some vertex $x_j$ representing $x$, $x_j$ is in the hitting set. We
extend this assignment to $V_0$ in the natural way, setting a value for each
$x\in V_0$ that sets to True the literal whose corresponding vertex is in the
hitting set.  We now claim that this assignment is satisfying, because if we
take a clause $c$, the literal corresponding to the vertex that is not in the
hitting set must be set to True. This is the case, because the literal has a
strong edge to a vertex that is in the hitting set if and only if the literal
is True, and since the literal vertex is not in the hitting set this vertex is.

Finally, since it is not hard to see that the new instance can be constructed
in the promised time, we need to argue about the pathwidth of the constructed
instance and show how a path decomposition can be built. We first note that
strong edges can be dealt with as if they were edges, as this will increase
pathwidth by at most $1$.  We now start from the path decomposition of $\psi$
and in each bag $B_j$ we place all backbone vertices $v_{j,i_1}$, for $i_1\in
[q+6\lceil r\sqrt{r}\rceil]$. At the moment each bag has the desired width. We
now explain how between $B_j$ and $B_{j+1}$ we can place a sequence of bags so
that we cover all the consistency gadgets and so that the sets $X_{j,i_1,i_2},
X_{j,i_1,i_2}^a, X_{j,i_1,i_2}^b, X_{j,i_1,i_2}^c$ all appear fully inside a
bag (this will allow us to invoke \cref{lem:budget-setter2} to take care of the
budget-setters). For each $i_1\in[q]$ we insert in the decomposition a bag
$B_j^{i_1}$ that contains $\{ v_{j,i'}\ |\ i'\ge i_1\} \cup \{ v_{j+1,i''}\ |\
i''\le i_1\}$. In other words we gradually exchange the backbone vertices of
$B_j$ with those of $B_{j+1}$. We place, for each $i_2\in[r]$,
$X_{j,i_1,i_2}^a\cup X_{j,i_1,i_2}^b$ inside a fresh copy of $B_j^{i_1}$.
Furthermore, for each $i_1\in[q], i_2\in [r]$, find the minimum
$i_1'\in[q+6\lceil r\sqrt{r}\rceil]$ such that some $x_j^c\in X_{j,i_1,i_2}^c$
is adjacent to $v_{j,i_1'}$ and place $X_{j,i_1,i_2}^c\cup X_{j+1,i_1,i_2}$
inside a fresh copy of $B_j^{i_1'}$. This covers everything in the consistency
gadgets, except sets $X_{j,i_1,i_2}$ where $B_j\cap V_{i_1,i_2}\cap
B_{j-1}=\emptyset$, but such sets are added to a fresh copy of $B_j$. After
performing the above, the size of all bags has increased by at most
$O(\sqrt{q})$ and we have covered all the consistency gadgets. It now remains
to cover variables of $V_0$, but these can be handled by adding, for each $x\in
V_0$, both vertices $x,\neg x$ to all bags that contained $x$ and since no bag
contains more than $r$ such variables, this also increases width by
$O(\sqrt{q})$; to cover the gadget for a clause $c$ of arity $\rho=O(r)$, where
$c$ is mapped to $B_j$, we add the $\rho$ vertices representing the literals to
all bags $B_{j-1}, B_j, B_{j+1}$ and the bags we have introduced between them,
and this further increases the width of the decomposition by $O(r)=O(\sqrt{q})$
(because the mapping is injective).  This completes the construction of a path
decomposition of the new graph of the desired width. \end{proof}

\section{XNLP-complete problems}\label{sec:xnlp}

In this section we show that the phenomenon of \ppseth-equivalence is not
confined to the class FPT. More specifically, we focus on three parameterized
problems which are XNLP-complete and hence highly unlikely to be FPT, namely
\textsc{List Coloring} parameterized by pathwidth, $k$-\textsc{DFA
Intersection} parameterized by $k$, and \textsc{Short}-$k$-\textsc{Independent
Set Reconfiguration} parameterized by $k$.  All these problems admit algorithms
with complexity $n^{k+O(1)}$, for $k$ the relevant parameter.  We show that
obtaining even slightly faster algorithms is equivalent to the \ppseth, that
is, algorithms with complexity $n^{(1-\eps)k}$ exist \emph{if and only if}
there exists a satisfiability algorithm falsifying the \ppseth. Notice that
this implies that the XNLP-complete problems of this section are  in a sense
equivalent to the FPT problems of previous sections, since obtaining an
algorithmic improvement for one implies an algorithmic improvement for all,
despite the fact that the problems in question belong in classes strongly
believed to be distinct.

It is worth noting that the lower bounds we obtain for the three XNLP-complete
problems are in fact even sharper than what we described above. More precisely,
we can rule out not just algorithmic improvements which improve the coefficient
of $k$ in the exponent, but even improvements which shave off a small additive
constant in the exponent. This is achieved while still maintaining the property
that improving upon the lower bound is equivalent to the \ppseth. We discuss
this in more detail below.

Before we present our results, let us briefly give some relevant context. The
class XNLP was recently defined in \cite{BodlaenderGNS21}, though it had been
previously studied (under a different name) in \cite{ElberfeldST15} and in a
slight variation in \cite{Guillemot11}. In a nutshell, XNLP is the class of
problems that can be solved in FPT time and space $f(k)\log n$ by a
\emph{non-deterministic} Turing machine.  The main contribution of
\cite{BodlaenderGNS21}, as well as the follow-up work of
\cite{BodlaenderGJJL22}, was to show that XNLP is the natural home for
intractable problems parameterized by a linear structure, such as pathwidth or
linear clique-width. In retrospect, this is logical: typical intractable
problems parameterized by pathwidth (and similarly for related widths) admit XP
dynamic programming algorithms, which for each bag of the decomposition store
$n^{f(\pw)}$ entries. A non-deterministic algorithm should be able to speed up
this computation by guessing in each node one relevant entry of the table
(hence, we need space $f(\pw)\log n$) and then verifying that this is
consistent with an entry in the preceding table, continuing in this way until
the end. XNLP-completeness expresses a strong notion of intractability, because
XNLP-complete problems are hard for all classes $W[t]$ of the $W$-hierarchy.
This fruitful line of research has more recently led to attempts to
characterize other complexity classes with respect to graph width structures,
such as XALP for tree-like structures (\cite{BodlaenderGJPP22}) and XSLP for
tree-depth (\cite{BodlaenderGP23}).

At this point we can remark a similarity between the aforementioned line of
work on XNLP, and the Turing machine characterization of the \ppseth\ given by
Iwata and Yoshida (\cite{IwataY15}). Recall that it was shown in
\cite{IwataY15} that the \ppseth\ is false if and only if it is possible to
decide in time $(2-\eps)^kn^{O(1)}$ if a given non-deterministic machine with
space $k+O(\log n)$ accepts a certain input. In other words, the TM machine
problem that Iwata and Yoshida consider is a variation of the model defining
XNLP machines except that we replace $f(k)\log n$ with $k+O(\log n)$ and we ask
a significantly more fine-grained question. In a way, the motivation behind
both definitions is the assumption that it should not be possible to do better
to predict the behavior of a bounded-space NTM than enumerating all possible
configurations. Our work shows that this similarity is not superficial: there
exist XNLP-complete problems which can be solved in space $k\log n$ and for
which obtaining an algorithm faster than $n^k$ would be equivalent to
falsifying the \ppseth.

Our results rely exactly on the intuition explained above. In order to reduce
the XNLP-complete problems we consider to the \ppseth, we essentially build a
CNF formula that describes the workings of a non-deterministic algorithm that
guesses a solution; while in the converse direction, one can think that we
start the reduction from a formula with pathwidth $p=k\log n$ and reduce to an
instance with parameter value (roughly) $k$, so that $n^{(1-\eps)k}$ would be
less than $(2-\eps')^p$ for some appropriate $\eps'$, so a fast algorithm for
our problem would imply a fast satisfiability algorithm.

\subparagraph*{Sharpness of lower bounds.} If we look back at the results of
\cref{sec:fpt} we can remark that all lower bound results could be formulated
in two ways: we could for example state that 3-\textsc{Coloring} cannot be
solved in time $(3-\eps)^{\pw}n^{O(1)}$, or in time $3^{(1-\eps)\pw}n^{O(1)}$.
Clearly, this is just a cosmetic decision, as the two formulations are
equivalent. Things become less clear if we are considering a problem solvable
in time $n^k$. If we believe that this running time is optimal, does this mean
that it is impossible to do $n^{(1-\eps)k}$, or that it is impossible to do
$n^{k-\eps}$? Clearly, the latter lower bound is at least as strong, but the
two statements are not obviously equivalent. Indeed, proving that the two
statements are equivalent would entail showing that an algorithm running in
$n^{k_0-\eps}$, for some fixed $k_0,\eps$, implies an algorithm running in time
$n^{(1-\eps')k}$, at least for $k$ sufficiently large.

Despite this seeming inequivalence, the lower bounds we present are
particularly sharp because we prove that \ppseth-equivalence implies that we
can freely exchange between the two types of lower bounds. To be more concrete,
for the $k$-\textsc{DFA Intersection} problem we prove the following: there is
an algorithm solving the problem in $n^{(1-\eps)k}$ if and only if there is an
algorithm solving the problem in $n^{k-\eps}$ (both of which happen if and only
if the \ppseth\ is false). Intuitively, what is happening here is that we prove
that an algorithm solving the intersection problem in $n^{k-\eps}$ is fast
enough to refute the \ppseth; however, if we refute the \ppseth\ we can reduce
the intersection problem to a \textsc{SAT} instance with pathwidth $k\log n$,
which a $(2-\eps)^{\pw}$ algorithm would decide in $n^{(1-\eps)k}$ (modulo some
additive constants in the exponent which disappear for large $k$). For the DFA
problem the $O(n^k)$ bound is tight, as there is an algorithm of this running
time solving the problem.

The situation is more intriguing for the reconfiguration problem we consider.
We show that deciding if it is possible to reconfigure an independent set of
size $k$ to another with at most $\ell$ moves can be done with a complexity
that is slightly faster than $n^{k}$ by obtaining an
$O(n^{k-3+\omega}\cdot\ell)$ time algorithm using fast matrix multiplication
($\omega$ is the matrix multiplication constant). This seems to point to the
possibility that the ``correct'' exponent of $n$ is $k-1$. We show lower bounds
that match this: obtaining an $n^{k-1-\eps}\cdot\ell^{O(1)}$ time algorithm is
equivalent to falsifying the \ppseth, and also equivalent to obtaining an
$n^{(1-\eps)k}\ell^{O(1)}$ time algorithm. In the remaining case of
\textsc{List Coloring} parameterized by pathwidth, where there is some small
inevitable overhead in our reduction that prevents us from giving an equally
sharp bound, we still obtain that under the \ppseth\ the correct complexity of
this problem is of the form $n^{k+c}$, for some constant $c$ that belongs in
the reasonably-sized interval $[-4,1]$.

\subsection{List Coloring Parameterized by Pathwidth}

In this section we consider the \textsc{List Coloring} problem parameterized by
pathwidth. \textsc{List Coloring} is one of the first problems to be shown to
be W[1]-hard parameterized by structural parameters such as treewidth and
pathwidth (indeed even by vertex cover) in  \cite{FellowsFLRSST11}. More
recently, a much more precise characterization was given by Bodlaender,
Groenland, Nederlof, and Swennenhuis \cite{BodlaenderGNS21}, who showed that
\textsc{List Coloring} is XNLP-complete parameterized by pathwidth. As
explained, there is a strong intuitive connection between the class XNLP and
\ppseth-equivalence. In this section we confirm this intuition by showing that
obtaining an algorithm for \textsc{List Coloring} with complexity
$O(n^{(1-\eps)\pw})$ would be equivalent (that is, both necessary and
sufficient for) falsifying the \ppseth. Since it is fairly straightforward to
obtain an algorithm with complexity $n^{\pw+1}$ using standard DP, this result
is tight up to an additive constant in the exponent. Indeed, in \cref{thm:lc}
below we also give a bound on this additive constant, by showing that obtaining
an algorithm of running time $O(n^{\pw-4-\eps})$ is already enough to falsify
the \ppseth. If the \ppseth\ is true, this leaves a quite narrow range for the
possible running time of the best algorithm. 

Compared to previous reductions for this problem, the main obstacle we are
faced with is a common one for fine-grained reductions: we need to be more
efficient. We want to reduce \tsat\ to \textsc{List Coloring} and our strategy
is to group the variables of each bag of a decomposition of the given CNF
formula $\psi$ into $k$ groups, represent each with a vertex of the
\textsc{List Coloring} instance, and give this vertex roughly $2^{\pw(\psi)/k}$
possible colors, one for each assignment. This strategy allows us to obtain a
path decomposition of the new graph of width (roughly) $k$. A crucial detail,
though, is that we also need the size of the whole constructed graph $n$ to be
not much larger than $2^{\pw(\psi)/k}$, because, assuming an $n^{(1-\eps)\pw}$
algorithm for \text{List Coloring} will give a fast algorithm for \tsat\ only
if $n^{(1-\eps)k}<(2-\eps')^{\pw(\psi)}$. In particular, we cannot afford to
have a construction that is quadratic in the sizes of the lists, because
$(2^{2\pw(\psi)/k})^{(1-\epsilon)k} \approx 4^{\pw(\phi)}$ which does not give
a fast algorithm for \tsat. Typical \textsc{List Coloring} reductions, however,
almost always have this quadratic blow-up, because for any pair of vertices on
which we want to impose a constraint, one considers every pair of colors that
represents an unacceptable combination and constructs a vertex to ensure that
this combination is not picked (these are called the helper vertices in
\cite{BodlaenderGNS21}). One of the main technical difficulties we need to work
around is to avoid this type of super-linear construction, while still encoding
the \tsat\ constraints for satisfaction and consistency.

The main result of this section is stated below.

\begin{theorem}\label{thm:lc}

The following statements are equivalent:

\begin{enumerate} 

\item The \ppseth\ is false.

\item There exist $\eps>0,p>4$ and an algorithm that takes as input a
\textsc{List Coloring} instance on an $n$-vertex graph $G$ and a path
decomposition of $G$ of width $p$ and solves the instance in time
$O(n^{p-4-\eps})$.

\item There exist $\eps>0, p_0>0$ and an algorithm which for all $p>p_0$ solves
\textsc{List Coloring} on instances with $n$ vertices and pathwidth $p$ in time
$O(n^{(1-\eps)p})$.

\end{enumerate}

\end{theorem}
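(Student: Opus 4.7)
The plan is to establish the three-way equivalence via the cyclic chain $3 \Rightarrow 2 \Rightarrow 1 \Rightarrow 3$. The implication $3 \Rightarrow 2$ is immediate: given an algorithm running in $O(n^{(1-\eps_3)p})$ for all $p>p_0$, we pick any $p>\max(p_0,4)$ with $\eps_3 p\ge 4+\eps_2$ for some $\eps_2>0$, and observe $(1-\eps_3)p\le p-4-\eps_2$, so the same algorithm witnesses statement~2.

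For $1\Rightarrow 3$, I would reduce \textsc{List Coloring} to \textsc{SAT} with a multiplicative $\log n$ blow-up in pathwidth. Given a \textsc{List Coloring} instance on an $n$-vertex graph $G$ with a path decomposition of width $p$, encode each vertex's color in binary using $\lceil\log n\rceil$ boolean variables; list membership becomes a clause of size $\lceil\log n\rceil$ per vertex, and for each edge $uv$ and each common color $c$ we add a clause of size $2\lceil\log n\rceil$ forbidding $c$ on both endpoints. Keeping the $\lceil\log n\rceil$ variables of each original vertex together in a bag gives primal pathwidth at most $(p+1)\lceil\log n\rceil+O(1)$. By \cref{thm:robust}, if the \ppseth\ is false there is a \textsc{SAT} algorithm running in $(2-\delta)^{\pw}|\psi|^{O(1)}$, which on this formula takes time $n^{(p+1)\log(2-\delta)+O(1)}=O(n^{(1-\eps)p})$ for $\eps=1-\log(2-\delta)>0$ and sufficiently large $p$, as required by statement~3.

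The main technical step is $2\Rightarrow 1$: reduce \textsc{SAT} to \textsc{List Coloring} producing pathwidth exactly $k+4$ and instance size $O(2^{\pw/k}|\psi|^{O(1)})$, for arbitrary fixed $k\ge 1$. Starting from a CNF formula $\psi$ with a nice path decomposition (\cref{lem:nice}), I would apply \cref{lem:pwcolor} to partition the variables into $k$ groups $V_1,\ldots,V_k$ so that each bag contains at most $\lceil(\pw+1)/k\rceil$ variables per group. For each bag $B_j$ and group $V_i$ I introduce a \emph{main vertex} $v_{j,i}$ whose list of size $L=2^{\lceil(\pw+1)/k\rceil}$ is in bijection with the assignments to $B_j\cap V_i$. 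The crucial obstacle, and the reason existing constructions such as those of~\cite{BodlaenderGNS21} fail here, is to enforce (i) consistency between $v_{j,i}$ and $v_{j+1,i}$ on shared variables and (ii) satisfaction of each clause, while keeping the instance size linear rather than quadratic in $L$. I would resolve this with linear-size \emph{decoder} gadgets: for each shared variable $x$ between consecutive bags in group $V_i$, introduce a binary auxiliary vertex $a_x$ with list $\{0,1\}$, and use $O(L)$ weak edges (as in~\cite{Lampis20}) between $a_x$ and each of $v_{j,i},v_{j+1,i}$ to force their colors to project to $a_x$ on coordinate $x$; consistency on $x$ is then transmitted indirectly through $a_x$ without ever creating $\Theta(L^2)$ forbidden pairs. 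Clauses are handled symmetrically: for a clause $c$ at bag $B_{b(c)}$, introduce three binary auxiliaries (one per literal), decode each from the corresponding $v_{b(c),i}$ vertex, and attach a constant-sized gadget on the auxiliaries forcing the clause.

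The hardest part is the path-decomposition bookkeeping that achieves width exactly $k+4$, which is precisely what yields the ``$-4$'' in statement~2. I would process each pair $(B_j,B_{j+1})$ by handling groups in sequence: for $i=1,\ldots,k$, introduce $v_{j+1,i}$, then process the shared-variable auxiliaries of group $V_i$ one at a time (introduce $a_x$, perform the two weak-edge-to-main-vertex checks one internal helper at a time, then forget $a_x$), and finally forget $v_{j,i}$. At every moment the bag contains at most $k-i+1$ unforgotten old main vertices, $i$ newly introduced ones, the current $a_x$, and the three internal vertices of the weak edge in progress, for a total of $k+5$ and hence width $k+4$; clause gadgets fit within the same budget. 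The overall instance has size $n=O(L\cdot|\psi|^{O(1)})$, dominated by the $O(L)$ weak-edge helpers per decoder constraint. Given statement~2 with parameters $\eps>0$ and $p>4$, I set $k=p-4$ in the reduction; the supposed algorithm then runs in time $O(n^{p-4-\eps})=O(n^{k-\eps})=O(2^{\pw(1-\eps/k)}|\psi|^{O(1)})$, falsifying the \ppseth. The principal technical hurdle throughout is the simultaneous control of \emph{both} the linearity of $n$ in $L$ \emph{and} the exact additive overhead of $4$ in the pathwidth; relaxing either makes the arithmetic at the end collapse.
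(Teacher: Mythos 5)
Your overall architecture coincides with the paper's: the same cycle $3\Rightarrow2\Rightarrow1\Rightarrow3$, the same binary-encoding reduction from \textsc{List Coloring} to \textsc{SAT} with primal pathwidth $(p+1)\log n$ for $1\Rightarrow3$ (this is exactly \cref{lem:lctosat}), and for $2\Rightarrow1$ the same skeleton as \cref{lem:sattolc}: partition the variables into $k=p-4$ groups via \cref{lem:pwcolor}, one main vertex per bag per group with a list in bijection with the $2^{O(\pw/k)}$ assignments, and gadgets whose size is \emph{linear} rather than quadratic in the list length. Your consistency gadget is a genuinely different (and arguably more transparent) mechanism than the paper's: the paper threads the unused colors of $x_{i,j}$ and $x_{i,j+1}$ through two parallel paths with sliding two-element lists joined by $w_{i,j,s}$ vertices, whereas you decode each shared variable into a binary auxiliary via $O(L)$ weak edges to each endpoint. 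Both are linear in $L$ and both fit the $k+5$ vertices-per-bag budget, so this substitution is fine.

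The one genuine gap is the clause gadget, precisely at the point you wave away with ``clause gadgets fit within the same budget.'' Your design requires three binary auxiliaries (one per literal) to coexist so that a constant-size OR gadget can be attached to them, and each auxiliary must additionally be decoded from its main vertex through weak edges whose three internal vertices must share a bag with both endpoints. All $k$ main vertices of bag $B_{b(c)}$ must persist through the clause gadget (they are needed for consistency with the neighboring bags), so while decoding the third auxiliary the bag already holds $k$ main vertices, $3$ auxiliaries, and $3$ weak-edge internals, i.e.\ $k+6$ vertices and width $k+5$. Since the entire arithmetic of $2\Rightarrow1$ collapses if the width is $k+5$ instead of $k+4$ (setting $k=p-5$ turns $n^{p-4-\eps}$ into $n^{k+1-\eps}=2^{\pw(1+(1-\eps)/k)}|\psi|^{O(1)}$, which does not beat $2^{\pw}$), this is not a cosmetic slip. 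The fix is to replace your decode-then-OR scheme with the paper's satisfaction gadget: a triangle $s_c^1,s_c^2,s_c^3$ with lists $\{F,X,Y\}$, plus, for each assignment falsifying the $\alpha$-th literal, a single degree-two selector vertex with a two-element list adjacent to $s_c^\alpha$ and the relevant main vertex; these selectors are introduced and forgotten one at a time, so the bag never exceeds $k+4$ vertices. (Alternatively, your scheme can be salvaged by replacing three-internal-vertex weak edges with single degree-two forbidding vertices and a one-extra-vertex OR gadget, but that accounting must actually be carried out.)
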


In order to establish \cref{thm:lc} we note that the implication
3$\Rightarrow$2 is trivial. We prove that a fast \textsc{List Coloring}
algorithm (in the sense of statement 2) falsifies the \ppseth\ in
\cref{lem:sattolc}, giving 2$\Rightarrow$1; and that a fast \textsc{SAT}
algorithm implies a fast \textsc{List Coloring} algorithm (in the sense of
statement 3) in \cref{lem:lctosat}, giving 3$\Rightarrow$1 and the theorem. 

\begin{lemma}\label{lem:sattolc} If there exist $\eps>0, p>4$ such that there
exists an algorithm that takes as input an instance of \textsc{List Coloring}
$G=(V,E)$ and a path decomposition of $G$ of width $p$ and decides the problem
in time $O(|V|^{p-4-\eps})$, then there exist $\eps'>0,c>0$ such that there
exists an algorithm that takes as input a 3-CNF formula $\psi$ and decides its
satisfiability in time $O((2-\eps')^{\pw}|\psi|^c)$.  \end{lemma}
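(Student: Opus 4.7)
Given $p > 4$ and $\eps > 0$ for which we have a fast \textsc{List Coloring} algorithm, set $k := p - 4$. Starting from a $3$-CNF formula $\psi$ of primal pathwidth $\pw$, the plan is to construct a \textsc{List Coloring} instance $G$ of pathwidth at most $p = k+4$ with $|V(G)| \le 2^{\pw/k} \cdot |\psi|^{O(1)}$. Applying the assumed algorithm then decides satisfiability in time $O\bigl(|V(G)|^{p-4-\eps}\bigr) = O\bigl(|\psi|^{O(1)} \cdot 2^{\pw(1-\eps/k)}\bigr) = O\bigl((2-\eps')^{\pw}\cdot|\psi|^{O(1)}\bigr)$ for $\eps' := 2 - 2^{1-\eps/k} > 0$, falsifying the \ppseth. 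The specific exponent $p-4$ in the algorithm's running time is what sets the budget: we can afford exactly four ``non-logical'' vertices per bag of the decomposition on top of the $k$ registers that carry the assignment.

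Before building $G$, I would preprocess $\psi$ using \cref{lem:nice} to obtain a nice path decomposition together with an injective clause-to-bag mapping, and then invoke \cref{cor:weird} (with $B=2$) to reduce to the promise problem that asks for a monotone satisfying multi-assignment which is consistent on a small auxiliary variable set of size $O(\log \pw)$ per bag. This is a crucial simplification: between consecutive bags of the decomposition we will only have to enforce \emph{monotonicity} (rather than full equality) of restrictions on shared variables, which is considerably easier to encode in \textsc{List Coloring} without incurring a quadratic blow-up. Finally, \cref{lem:pwcolor} is used to partition the remaining variables into $k$ groups $V_1,\ldots,V_k$, each containing at most $\lceil(\pw+1)/k\rceil$ variables per bag.

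The backbone of $G$ is a $k \times t$ array of \emph{register} vertices $v_{j,i}$, with one color per monotone assignment to $B_j \cap V_i$, so $|L(v_{j,i})| \le 2^{\pw/k + O(1)}$. The path decomposition of $G$ contains, for each original bag $B_j$, a main bag $\{v_{j,1}, \ldots, v_{j,k}\}$ and, between consecutive main bags, a sequence of transition bags that swap $v_{j,i}$ with $v_{j+1,i}$ one row at a time, hosting $O(1)$ gadget vertices at each step. Clause-satisfaction gadgets attach to the registers of the groups touched by the clause in its home bag, and monotonicity gadgets between $v_{j,i}$ and $v_{j+1,i}$ are folded into the corresponding transition bags; consistency for the $O(\log \pw)$ auxiliary variables of $V_2$ from \cref{cor:weird} can be enforced separately and only contributes $O(1)$ to the pathwidth.

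The technical heart of the proof will be designing the monotonicity gadgets without the $|L|^2 = 2^{2\pw/k}$ blow-up that the standard ``helper per forbidden pair'' approach (as used in \cite{BodlaenderGNS21}) would incur; such a blow-up would ruin the target $|V| = 2^{\pw/k}$. I would use a \emph{bit-level} scheme: for each bit position $b$ of the shared variables $B_j \cap B_{j+1} \cap V_i$, introduce a single binary ``bit'' vertex whose value is read off from $v_{j,i}$ and written into $v_{j+1,i}$ via a family of $O(|L|)$ helpers each sitting in its own transition bag that contains only $v_{j,i}$, $v_{j+1,i}$, the bit vertex, and the helper. The total number of helpers becomes $2^{\pw/k}\cdot|\psi|^{O(1)}$, matching the required bound on $|V|$, while each individual bag of the decomposition carries only a constant number of gadget vertices. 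The main obstacle I anticipate is coordinating this bit-level enforcement with the clause-satisfaction gadgets and the small consistent set $V_2$ while respecting the tight budget of exactly $4$ additional vertices per bag; exploiting monotonicity (rather than equality) here is what keeps the helper lists small enough that spreading them along the decomposition stays within this budget.
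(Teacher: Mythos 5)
Your overall skeleton matches the paper's: the same choice $k=p-4$, the same size target $|V(G)|=2^{\pw/k}|\psi|^{O(1)}$, the same running-time arithmetic, the $k\times t$ array of register vertices built via \cref{lem:nice} and \cref{lem:pwcolor}, and the correct identification of the central difficulty, namely that the consistency gadgets must be \emph{linear} rather than quadratic in the list size. Your bit-level gadget (one binary vertex per shared variable, with $O(|L|)$ degree-two helpers reading its value off $v_{j,i}$ and writing it into $v_{j+1,i}$) is a viable alternative to the paper's gadget, which instead uses two parallel paths carrying the unused colors of $L(x_{i,j})$ and $L(x_{i,j+1})$ with cross-connecting vertices $w_{i,j,s}$; both achieve linear size and fit in the transition bags.

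There is, however, a genuine gap in the detour through \cref{cor:weird}. That corollary hands you $O(\log \pw)$ auxiliary variables of $V_2$ per bag which must be assigned \emph{consistently}, and your claim that these ``only contribute $O(1)$ to the pathwidth'' is not justified and, as written, false: representing them by separate binary vertices adds $\Theta(\log \pw)$ to every bag, and packing them into one extra always-live register makes the bags have $k+1$ registers. Either way the constructed instance has pathwidth at least $k+5$, so you are forced to take $k\le p-5$, and then the running time becomes $2^{\pw(p-4-\eps)/(p-5)}$ whose exponent ratio exceeds $1$ for every $\eps<1$ — the reduction no longer falsifies the \ppseth. The budget here is exactly tight: the exponent $p-4-\eps$ in the hypothesis allows precisely $k$ registers plus five further vertices in the largest bag, with no room for an extra register or a logarithmic surcharge. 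The detour also buys you nothing: your own bit-level gadget enforces full equality of the shared variables (forbidding $(a,T)$ costs the same as forbidding $(a,F)$), so relaxing to monotone multi-assignments does not make the gadget any smaller. The correct fix, and what the paper does, is to drop \cref{cor:weird} entirely and enforce full consistency directly with the linear-size gadget; you would then also need to spell out the clause-satisfaction gadget (a triangle on lists $\{F,X,Y\}$ plus one degree-two vertex per falsifying partial assignment suffices and costs four extra vertices in one bag).
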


\begin{proof}

Fix the $\eps,p$ for which the supposed \textsc{List Coloring} algorithm
exists. We are given a 3-CNF formula $\psi$ on $n$ variables and a path
decomposition of $\psi$ of width $\pw(\psi)$. Let $k=p-4$. We want to construct
a \textsc{List Coloring} instance $G=(V,E)$ with the following properties:

\begin{enumerate}

\item $G$ has a valid list coloring if and only if $\psi$ is satisfiable.

\item $|V(G)|=O(2^{\pw(\psi)/k}n^4)$ and $G$ can be constructed in time linear
in $|V(G)|$.

\item We can construct in time $O(|V(G)|)$ a path decomposition of $G$ of width
$p$.

\end{enumerate}

Before we proceed, let us argue that the above imply the theorem. We will
attempt to decide if $\psi$ is satisfiable by constructing $G$ and solving
\textsc{List Coloring} on the resulting instance using the assumed algorithm.
The running time is $O(|V(G)|^{p-4-\eps}) =
O(2^{\frac{\pw(\psi)}{(p-4)}(p-4-\eps)}n^{4p})$. Set $\eps'$ so that $2-\eps' =
2^{\frac{p-4-\eps}{p-4}}$ and $c'=4p$. The running time is then
$O((2-\eps')^{\pw(\psi)}n^{c'})$, and $\eps', c'$ are constants depending only
on $\epsilon, p$.

Assume that we are given a path decomposition of $\psi$, on which we execute
the algorithm of \cref{lem:nice}, so we now have a nice path decomposition of
$\psi$ of width $\pw(\psi)$ with $O(n^4)$ bags $B_1,\ldots,B_t$, as well as a
function $b$ that maps each clause to the index of a bag that contains all its
variables. Using \cref{lem:pwcolor} we partition the set of variables of $\psi$
into $k$ sets $V_1,\ldots,V_k$, such that for each bag $B$ and $i\in[k]$ we
have $|B\cap V_i|\le \lceil\frac{\pw(\psi)+1}{k}\rceil$.

For each $j\in[t]$ and $i\in[k]$ we construct a vertex $x_{i,j}$. The informal
meaning is that the color of $x_{i,j}$ is supposed to represent the assignment
to the variables of $B_j\cap V_i$. We give to each $x_{i,j}$ a list of
$2^{|B_j\cap V_i|}$ colors which do not yet appear in the list of any other
vertex.  We now need to add some extra vertices which ensure that assignments
remain consistent across bags and that they satisfy the formula. We add the
following gadgets:

\begin{enumerate}

\item For each $j\in[t-1]$ and $i\in[k]$ for which $B_j\cap V_i = B_{j+1}\cap
V_i$ we contract the vertex $x_{i,j+1}$ into the vertex $x_{i,j}$ and use
$L(x_{i,j})$ as the list of the new vertex.  Informally, this ensures that
these two vertices will have the same color. Note that repeated applications of
this step may contract all vertices $x_{i,j'}$ for $j'$ belonging in some
interval $[j_1,j_2]$ into the single vertex $x_{i,j_1}$. Any subsequent edges
incident on another vertex of this interval will actually be placed incident on
$x_{i,j_1}$.

\item (Consistency) Consider now a $j\in[t-1]$ and $i\in[k]$ such that
$B_{j+1}\cap V_i = (B_j\cap V_1) \cup \{v\}$, for some variable $v$ of the
\tsat\ instance.  We have that $|L(x_{i,j+1})| = 2|L(x_{i,j})|$. Let
$\ell=|L(x_{i,j})|$. Recall that each element of $L(x_{i,j})$ corresponds to an
assignment to the variables of $B_j\cap V_i$, and similarly for $L(x_{i,j+1})$
and $B_{j+1}\cap V_i$. By reading such assignments in binary where $v$ is
placed as the least significant bit, we number the elements of the lists as
$L(x_{i,j}) = \{a_0,a_1,\ldots,a_{\ell-1}\}$ and
$L(x_{i,j+1})=\{b_0,b_1,\ldots,b_{2\ell-1}\}$, with the property that element
$a_i\in L(x_{i,j})$ represents an assignment that is compatible with those
represented by elements $b_{2i}, b_{2i+1}\in L(x_{i,j+1})$. We construct a path
on $\ell-1$ vertices, call them $y_{i,j,s}$, for $s\in\{0,\ldots,\ell-2\}$,
connect all its vertices to $x_{i,j}$ and assign for each
$s\in\{0,\ldots,\ell-2\}$ the list $\{a_s,a_{s+1}\}$ to $y_{i,j,s}$. We
construct a path on $2\ell-1$ vertices, call them $z_{i,j,s}$, for
$s\in\{0,\ldots,2\ell-2\}$, connect all its vertices to $x_{i,j+1}$, and assign
for each $s\in\{0,\ldots,2\ell-2\}$ the list $\{b_s,b_{s+1}\}$. For each
$s\in\{1,\ldots,\ell-2\}$ we construct a new vertex $w_{i,j,s}$ and connect it
to $y_{i,j,s-1}$, $y_{i,j,s}$, $z_{i,j,2s-1}$, $z_{i,j,2s+1}$. We give
$w_{i,j,s}$ the list $\{a_s,b_{2s-1},b_{2s+2}\}$.  For $s=0$ we construct
$w_{i,j,0}$ and connect it to $y_{i,j,0}$, $z_{i,j,1}$, and give it list
$\{a_0,b_2\}$. For $s=\ell-1$ we construct $w_{i,j,\ell-1}$, connect it to
$y_{i,j,\ell-2}$, $z_{i,j,2\ell-3}$, and give it list
$\{a_{\ell-1},b_{2\ell-3}\}$. We refer the reader to \cref{fig:lc} for an
illustration.

\item For $j\in[t-1]$ and $i\in[k]$ for which $B_{j}\cap V_i = (B_{j+1}\cap
V_i) \cup \{v\}$ we construct the same gadget as in the previous case,
exchanging the roles of $B_j$ and $B_{j+1}$.

\item (Satisfaction) For each clause $c$ of $\psi$, consider the bag
$B_{b(c)}$.  Suppose without loss of generality that $c$ contains $3$ literals,
which utilize variables from the sets $V_{i_1}, V_{i_2}, V_{i_3}$ respectively,
where $i_1,i_2,i_3$ are not necessarily distinct (if $c$ has fewer literals, we
just repeat one of its literals). Construct three vertices $s_c^1, s_c^2,
s_c^3$, adjacent to each other, with $L(s_c^1)=L(s_c^2)=L(s_c^3)=\{F,X,Y\}$.
For each $\alpha\in\{1,2,3\}$ consider all the at most $2^{|B_{b(c)}\cap
V_{i_\alpha}|-1}$ assignments to the variables of $B_{b(c)}\cap V_{i_\alpha}$
which falsify the $\alpha$-th literal of $c$. Each such assignment $\sigma$
corresponds to a color of $x_{i_\alpha,b(c)}$. For each such assignment
$\sigma$, we construct a vertex $t_{c,\alpha,\sigma}$, connected to
$s_c^\alpha$ and $x_{i_\alpha,b(c)}$, with $L(t_{c,\alpha,\sigma})$ containing
only the color $F$ and the color from $L(x_{i_\alpha,b(c)})$ corresponding to
$\sigma$.

\end{enumerate}

This completes the construction of the \textsc{List Coloring} instance. The
number of vertices of the graph we construct can be upper-bounded as follows:
the $x_{i,j}$ vertices are at most $O(kn^4)$; for each $j\in[t-1]$ there is at
most one consistency gadget added between $B_j$ and $B_{j+1}$, which contains
$O(2^{\pw(\psi)/k})$ vertices, giving $O(2^{\pw(\psi)/k}n^4)$ vertices in
total; for each clause $c$ there is a unique $B_{b(c)}$ for which we construct
a satisfaction gadget, which contains $O(2^{\pw(\psi)/k})$ vertices, giving
$O(m2^{\pw(\psi)/k})=O(n^32^{\pw(\psi)/k})$ vertices. Thus, in total
$|V(G)|=O(2^{\pw(\psi)/k}n^4)$. It it also not hard to see that the
construction can be performed in time linear in the number of vertices of the
constructed graph.

In order to construct a path decomposition of $G$, we start with a
decomposition $B_1,\ldots, B_t$, where in each $B_j$, $j\in[t]$, we place all
$x_{i,j}$, for $i\in[k]$. Recall that in step 1 of our construction we may have
merged vertices $x_{i,j'}$ for $j'\in[j_1,j_2]$ into $x_{i,j_1}$; in such cases
we place $x_{i,j_1}$ in all $B_{j'}$ for $j'\in[j_1,j_2]$. At this point each
bag contains exactly $k$ vertices, and each vertex appears in a connected
interval, but we have not yet covered any of the vertices or edges of the
consistency and satisfaction gadgets.

For the consistency gadgets, consider a $j\in[t-1]$ for which we have added a
consistency gadget for $i\in[k]$. Observe that because the decomposition of
$\psi$ was nice, $B_j\setminus\{x_{i,j}\}=B_{j+1}\setminus\{x_{i,j+1}\}$, that
is, $B_j, B_{j+1}$ only differ in the $i$-th vertex. Therefore, $|B_j\cup
B_{j+1}|=k+1$.  We add a sequence of bags between $B_j$ and $B_{j+1}$ all of
which contain $B_j\cup B_{j+1}$. Since $x_{i,j}, x_{i,j+1}$ are in all these
bags, we actually only need to add here a path decomposition of the gadget,
which is made up of two parallel paths with the $w_{i,j,s}$ vertices connecting
them. It is not hard to see that we can produce a path decomposition of this
graph containing $4$ vertices in each bag. In this way we use at most $k+5$
vertices in each bag.

Finally, for the satisfaction gadget constructed for a clause $c$ we locate the
bag $B_{b(c)}$. Recall that we have constructed three vertices $s_c^1, s_c^2,
s_c^3$, and a large number of degree $2$ vertices adjacent to one of the former
vertices and a vertex of $B_{b(c)}$. Immediately after $B_{b(c)}$ we place a
sequence of bags containing $B_{b(c)}\cup\{s_c^1,s_c^2,s_c^3\}$ and each bag
containing a distinct degree $2$ vertex adjacent to one of $s_c^1, s_c^2,
s_c^3$. The largest bag we have added contains $k+4$ vertices. Overall, we have
a decomposition where each bag has at most $k+5$ vertices, therefore a
decomposition of width $p=k+4$.

To establish that the graph $G$ has a valid list coloring if and only if $\psi$
is satisfiable one direction is relatively straightforward: given an assignment
to $\psi$ we can pick a color for each $x_{i,j}$, since each color of
$L(x_{i,j})$ corresponds to an assignment to the variables of $B_j\cap V_i$. To
extend this to consistency gadgets, there is a unique way to color the two
paths $y_{i,j,s}$ and $z_{i,j,s}$ that uses all colors of $L(x_{i,j})$ and
$L(x_{i,j+1})$ which are not used in $x_{i,j}, x_{i,j+1}$.  Consider now a
vertex $w_{i,j,s}$: if color $a_s$ is not used in its neighborhood, we can use
that color and we are done; if color $a_s$ is used, this means that $x_{i,j}$
was \emph{not} assigned $a_s$, therefore, since the assignment is consistent,
$x_{i,j+1}$ was \emph{not} assigned $b_{2s}$ nor $b_{2s+1}$.  Consider now the
vertices $z_{i,j,2s-1}, z_{i,j,2s}, z_{i,j,2s+1}$, with lists $\{b_{2s-1},
b_{2s}\}, \{b_{2s}, b_{2s+1}\}, \{b_{2s+1}, b_{2s+2}\}$. The colors $b_{2s},
b_{2s+1}$ do not appear in any other lists, therefore they are both used here,
therefore one of $b_{2s-1}, b_{2s+2}$ is free to use at $w_{i,j,s}$. To extend
the coloring to the satisfaction gadgets, if the $\alpha$-th literal of $c$ is
satisfied by the assignment we select color $F$ for $s_c^\alpha$ and colors
$X,Y$ for the other two vertices of $\{s_c^1, s_c^2, s_c^3\}$. All vertices
$t_{c,\alpha,\sigma}$ for which color $F$ is available receive that color,
while the remaining such vertices receive a color that represents an assignment
to the variables of $B_{b(c)}\cap X_{i_\alpha}$ which falsifies the $\alpha$-th
literal of $c$. This is valid, as this color cannot have been used on
$x_{i_\alpha,b(c)}$.

For the converse direction, we need to argue that if $G$ has a valid list
coloring, then we can extract a consistent satisfying assignment to the
variables of $\psi$. The main obstacle is when we have two vertices $x_{i,j},
x_{i,j+1}$ which do not represent the same set of variables (otherwise the
vertices have been merged and consistency is not a problem). We therefore need
to argue that our consistency gadget ensures that if $x_{i,j}$ was assigned
color $a_s$, then $x_{i,j+1}$ must be assigned color $b_{2s}$ or $b_{2s+1}$. To
see this, we first observe that each color of $L(x_{i,j})$ is either used on
$x_{i,j}$ itself or one of the vertices of the path $y_{i,j,s}$. We will then
argue that if a color $a_s$ is used in the path, then $b_{2s}$ and $b_{2s+1}$
must be used in the path $z_{i,j,s}$. To see this, consider the vertex
$w_{i,j,s}$, which in this case must take one of the colors $b_{2s-1},
b_{2s+2}$; however, either one of these choices forces the use of both $b_{2s}$
and $b_{2s+1}$ in the path $z_{i,j,s}$. Since all the colors not used in
$x_{i,j}$ have their corresponding colors not used in $x_{i,j+1}$, we conclude
that the assignments extracted from the colors of $x_{i,j}$ and $x_{i,j+1}$
must be compatible. We now need to argue that the extracted assignment is
satisfying. Take a clause $c$ and consider $s_c^1, s_c^2, s_c^3$. One of these
vertices must have color $F$, since they form a $K_3$ with list $\{F,X,Y\}$,
suppose without loss of generality that $s_c^1$ is that vertex. Then, all the
vertices $t_{c,1,\sigma}$ received a color that encodes an assignment to
$x_{i_1,b(c)}$ that falsifies the first literal of $c$. Hence, $x_{i_1,b(c)}$
cannot have received such a color, hence the assignment we extract from
$x_{i_1,b(c)}$ satisfies $c$.  \end{proof}

\begin{figure}

\centering

\includegraphics[height=0.25\textheight]{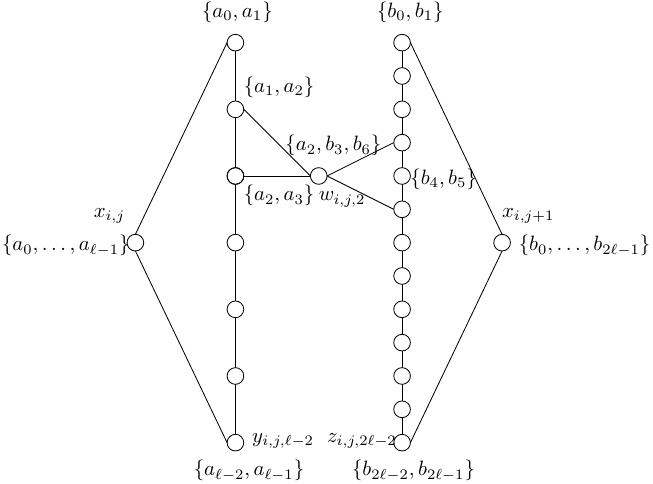} \caption{Consistency
gadget from \cref{lem:sattolc}. On the left we have vertex $x_{i,j}$ and on the
right vertex $x_{i,j+1}$.}\label{fig:lc}

\end{figure}

\begin{lemma}\label{lem:lctosat} If there exist $\eps>0, c>0$ such that there
exists an algorithm that takes as input a CNF formula $\psi$ and decides its
satisfiability in time $((2-\eps)^{\pw}|\psi|^c)$, then there exist
$p_0>0,\eps'>0$, such that there exists an algorithm which for any $p>p_0$
takes as input a \textsc{List Coloring} instance $G=(V,E)$ and a path
decomposition of $G$ of width $p$ and decides the problem in time
$O(|V|^{(1-\eps')p})$.\end{lemma}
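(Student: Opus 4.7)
The plan is to reduce any \textsc{List Coloring} instance $G=(V,E)$ with $n=|V|$ and pathwidth $p$ to a CNF formula $\psi$ whose primal pathwidth is roughly $p\log n$, and then invoke the supposed fast \textsc{SAT} algorithm. Since \cref{thm:robust} tells us that the \ppseth\ is equivalent to its formulation for CNF formulas of unbounded arity (parameterized by primal pathwidth), we may assume the hypothesized algorithm accepts arbitrary CNF input. First I would preprocess the lists: because every vertex has at most $n-1$ neighbors, we may prune each list to contain at most $n$ colors and assume all colors come from $[n]$. Let $t=\lceil \log n\rceil$.

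Next I would encode each vertex $v\in V$ by $t$ boolean variables $x_{v,1},\ldots,x_{v,t}$ meant to give the binary representation of its color. I would then add: (i) \emph{list clauses} — for each vertex $v$ and each assignment to $x_{v,1},\ldots,x_{v,t}$ representing a value outside $L(v)$, a width-$t$ clause falsified by that assignment; and (ii) \emph{edge clauses} — for each edge $uv$ and each color $c$, a width-$2t$ clause falsified exactly by the joint assignment that puts both $u$ and $v$ on color $c$. This $\psi$ is clearly satisfiable iff $G$ is list-colorable, and has size $n^{O(1)}$. The path decomposition is built by taking the given decomposition of $G$, replacing each vertex $v$ in each bag by its $t$ boolean variables, and then using \cref{lem:nice} (applied to the obvious ``cliqued'' incidence structure) to ensure that the variables of every clause lie together in some bag. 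This yields $\pw(\psi)\le (p+1)t+O(1)\le p\log n + O(\log n)$.

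Given this, the assumed algorithm decides $\psi$ in time
\[
O\bigl((2-\eps)^{\pw(\psi)}\cdot|\psi|^{c}\bigr)
\le O\bigl((2-\eps)^{p\log n + O(\log n)}\cdot n^{O(c)}\bigr)
= O\bigl(n^{p\log_2(2-\eps)+O(1)}\bigr).
\]
Setting $\delta=1-\log_2(2-\eps)>0$ and absorbing the additive $O(1)$ in the exponent by requiring $p\ge p_0:=2(O(1))/\delta$, the total running time is at most $O(n^{(1-\delta/2)p})$, so $\eps':=\delta/2$ does the job.

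The only nonobvious point is keeping the pathwidth genuinely close to $p\log n$ rather than some larger multiple — this is why I bundle the $t$ boolean variables representing a single vertex $v$ together in exactly the bags that contained $v$, and why I rely on \cref{lem:nice} to avoid creating any bag that has to simultaneously contain the encodings of more than $p+1$ original vertices. The list and edge clauses have arity $O(\log n)$ but this does not matter: each clause fits inside one bag of the produced decomposition because all its variables come from (at most two) vertex-groups already present there. No further trick is needed; the asymmetry noted in \cref{sec:techniques} — that the reduction \emph{from} the XNLP problem to \textsc{SAT} may lose a $(1+o(1))$ factor in the parameter — is exactly what makes this direction comfortable, since the final bound we seek only requires $\pw(\psi)=(1+o(1))p\log n$ for large $p$.
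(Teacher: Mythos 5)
Your overall strategy — binary-encode each vertex's color choice with $\log n$ boolean variables, add list clauses and edge clauses, replace each vertex of $G$ in the path decomposition by its variable group to get $\pw(\psi)\approx(p+1)\log n$, and then run the assumed \textsc{SAT} algorithm — is exactly the paper's, and your final running-time calculation is also the same. However, there is one concrete unjustified step: you claim that after pruning each list to size at most $n$ you ``may assume all colors come from $[n]$.'' Pruning bounds the \emph{size of each individual list} (a vertex whose list exceeds its degree can be deleted), but it does not bound the size of the \emph{union} of the lists: even after pruning, the total number of distinct colors can be as large as $\sum_v|L(v)|=\Theta(n^2)$, and there is no generic renaming of colors down to a universe of size $n$ (identifications between colors across different lists are exactly what the problem is about). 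If the color universe has size $\omega(n)$, encoding the color itself requires more than $\log n$ bits — up to $2\log n$ — which inflates the pathwidth to roughly $2p\log n$ and turns the final bound into $n^{2(1-\delta)p}$, which is useless unless $\delta>1/2$. Your edge clauses (``for each edge $uv$ and each color $c$'') also implicitly rely on this global identification of colors.

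The fix is small and is what the paper does: have the $\log n$ variables of $v$ encode the \emph{index} of the chosen color within the sorted list $L(v)$ (always a number in $\{0,\ldots,n-1\}$ after pruning), add list clauses forbidding indices $\ge|L(v)|$, and phrase the edge clauses as: for each edge $v_jv_{j'}$ and each pair of indices $(s_1,s_2)$ with $L(v_j)[s_1]=L(v_{j'})[s_2]$, forbid that joint assignment. There are at most $n$ such pairs per edge once the lists are sorted, so the formula still has size $O(n^3)$, and the rest of your argument (pathwidth bookkeeping and the choice of $\eps'$ and $p_0$) goes through unchanged. As a minor remark, your appeal to \cref{lem:nice} is unnecessary here: every clause involves variables from at most two adjacent vertices of $G$, which already share a bag of the given decomposition, so replacing vertices by their variable groups covers all clauses directly.
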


\begin{proof}

Fix the $\eps,c$ for which a fast \textsc{SAT} algorithm exists, and suppose we
are given a graph $G=(V,E)$ on $n$ vertices, together with a list of possible
colors for each vertex, as well as a path decomposition of $G$ of width $p$. We
will construct from $G$ a CNF formula $\psi$ such that:

\begin{enumerate}

\item $\psi$ is satisfiable if and only if $G$ has a valid list coloring.

\item $|\psi| = O(n^3)$

\item $\psi$ can be constructed in time linear in $|\psi|$.

\item $\pw(\psi)=(p+1)\log n$

\end{enumerate}

Let us explain why if we achieve the above, we obtain the lemma. Given the
instance $G=(V,E)$ to \textsc{List Coloring} and the corresponding
decomposition, we construct $\psi$ and execute the supposed \textsc{SAT}
algorithm. The running time of the procedure is
$O((2-\eps)^{\pw(\psi)}|\psi|^c) = O((2-\eps)^{p\log n}n^{3c+1})$. Let
$\delta>0$ be such that $2-\eps=2^{1-\delta}$, in particular set
$\delta=1-\log(2-\eps)$.  The running time is at most
$O(n^{(1-\delta)p+3c+1})$. Set $p_0=2(3c+1)/\delta$. If $p>p_0$ then we have
that $n^{(1-\delta)p+3c+1} < n^{(1-\frac{\delta}{2})p}$, so setting
$\eps'=\delta/2$ we obtain an algorithm with the promised running time.
Observe that $p_0,\eps'$ are constants that depend only on $\eps, c$.

The construction of $\psi$ is now straightforward. Assume without loss of
generality that $n$ is a power of $2$ (for example, by adding isolated vertices
to $G$, which can at most double its size). Let $V=\{v_1,\ldots, v_n\}$ and
assume that for each $v\in V$ we have been given a list $L(v)$ of available
colors, with $|L(v)|\le n$ (this assumption is without loss of generality,
since if a vertex has a list of size $n+1$ or more, it can always be colored,
so we can remove it from the graph).  For each $v_j$, with $j\in [n]$ we
construct $\log n$ variables $x_{j,s}$ with $s\in[\log n]$, with the intended
meaning that their values will encode the element of $L(v_j)$ which is selected
in a valid coloring. We now need to add clauses to ensure that the assignment
to these variables does indeed encode a valid coloring. We add the following
clauses.

\begin{enumerate}

\item For each $j\in[n]$ for each assignment $\sigma$ to the variables
$x_{j,s}$ for $s\in[\log n]$, we interpret the assignment as a binary number.
If that number is at least $|L(v_j)|$, then we add a clause that is falsified
by the assignment $\sigma$.

\item For each $j,j'\in[n]$ with $v_jv_{j'}\in E$, for each pair of assignments
$\sigma, \sigma'$ to the variables $x_{j,s}$ and $x_{j',s}$ for $s\in[\log n]$,
we interpret the two assignments as two binary numbers $s_1,s_2$. Sort $L(v_j)$
and $L(v_{j'})$ and consider them as arrays with numbering starting at $0$. If
$L(v_j)[s_1]=L(v_{j'})[s_2]$, that is, if the $s_1$-th element of $L(v_j)$ is
the same as the $s_2$-th element of $L(v_{j'})$, then we add to $\psi$ a clause
that is falsified by the assignments $\sigma,\sigma'$ to the variables
$x_{j,s}, x_{j',s}$.

\end{enumerate}

In the first step we construct at most $n$ clauses for each vertex of $G$,
while in the second step we construct for each edge of $G$ at most $n$ clauses.
The total size of $\psi$ is then $O(n^3)$. Note that if the lists $L(v_j),
L(v_{j'})$ are sorted, then the second step can be executed in time linear in
the number of clauses produced. 

To bound the pathwidth of $\psi$ we can take the decomposition of $G$ and
replace every occurrence of a vertex $v_j$ with the variables $x_{j,s}$, for
$s\in[\log n]$. This clearly has the promised width. To see that this is a
valid decomposition, observe that each clause we add contains the variables
$x_{j,s}$ and $x_{j',s}$ for two vertices $v_j,v_{j'}$ which are neighbors in
$G$, therefore must appear together in some bag.

Finally, we show that $\psi$ is satisfiable if and only if $G$ has a proper
list coloring. The correspondence is not hard to see by interpreting the value
of $x_{j,s}$ as the index of the color selected for $v_j$ in $L(v_j)$, where
$L(v_j)$ is sorted and viewed as a zero-based array. If there is a coloring of
$G$, the corresponding assignment satisfies all clauses of the first step
(because we select a color with index at most $|L(v_j)|-1$), and of the second
step, because the coloring is proper. For the converse direction, we can
extract a coloring of $G$ from a satisfying assignment in the same way. A
satisfying assignment must assign to each vertex a color from its list (because
of clauses of the first step), and the coloring will be proper because of
clauses of the second step.  \end{proof}

To end this section, let us also briefly note that, similarly to
\cite{BodlaenderGNS21}, we can obtain from \cref{thm:lc} \ppseth-equivalence
for related problems, such as \textsc{Precoloring Extension}. In
\textsc{Precoloring Extension} we are given a graph such that some of its
vertices are already colored and are asked to extend this to a proper coloring
of the whole graph, while minimizing the number of colors used. 

\begin{corollary}

There exist $\eps>0, p_0>0$ and an algorithm which for all $p>0$ takes as input
an instance $G=(V,E)$ of \textsc{Precoloring Extension} and a path
decomposition of $G$ of width $p$ and decides the instance in time
$O(n^{(1-\eps)p)})$ if and only if the \ppseth\ is false. 

\end{corollary}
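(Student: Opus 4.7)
The plan is to establish both directions of the equivalence, paralleling the proof of Theorem~\ref{thm:lc}.

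For the direction $\ppseth\ \text{false} \Rightarrow$ fast \textsc{Precoloring Extension}: any instance of \textsc{Precoloring Extension} with target chromatic number $k$ is trivially a \textsc{List Coloring} instance in disguise: give each precolored vertex the singleton list containing its color, and give each remaining vertex the list $[k]$. This transformation preserves both the pathwidth and the number of vertices. By Theorem~\ref{thm:lc}, statement 3, there exist $\eps>0, p_0>0$ such that for all $p>p_0$ every \textsc{List Coloring} instance of pathwidth $p$ on $n$ vertices is solved in $O(n^{(1-\eps)p})$ time; applied to the produced instance, this gives exactly the algorithm sought.

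For the converse direction, the plan is to adapt the \tsat-to-\textsc{List Coloring} reduction of Lemma~\ref{lem:sattolc} so that it outputs a \textsc{Precoloring Extension} instance directly. Given a 3-CNF formula $\psi$ of pathwidth $p'=\pw(\psi)$ and a target parameter $k$, we follow the skeleton of Lemma~\ref{lem:sattolc} (with $s = \lceil (p'+1)/k\rceil$, the same partition $V_1,\dots,V_k$, the same main vertices $x_{i,j}$, and the same consistency/satisfaction gadgets), but with two modifications. First, instead of assigning pairwise \emph{disjoint} lists to the main vertices $x_{i,j}$, we collapse them all onto a single shared palette of $k' = 2^s+O(1)$ colors, fixing for each group $V_i$ a canonical encoding of assignments as colors in $[2^s]$. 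Second, we realise every list constraint---both the ``large'' intended lists of main vertices and the small ($2$- or $3$-element) lists of consistency and satisfaction gadget vertices---by attaching to the corresponding vertex a set of precolored neighbours, one \emph{globally shared} precolored vertex for each forbidden colour. Correctness of the modified construction is immediate from the same assignment-encoding argument as in Lemma~\ref{lem:sattolc}, because the combinatorial structure of the gadgets is unchanged.

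The main obstacle is the pathwidth analysis. Each bag that contains a small-list gadget vertex must include most of the $k'$ globally shared precolored vertices, contributing an additive $\Theta(2^s)$ to the bag size. The plan is to choose $k$ as a function of $p'$ (e.g.\ any $k$ with $p'/\log k\to 0$ and $\log|\psi|/s\to 0$) so that $2^s=o(k)$; then the pathwidth of the resulting instance is $p = k(1+o(1))$ while the vertex count remains $n = O(2^s\cdot|\psi|^{O(1)})$, exactly as in Lemma~\ref{lem:sattolc}. Applying the supposed $O(n^{(1-\eps)p})$-time algorithm for \textsc{Precoloring Extension} then decides $\psi$ in time $n^{(1-\eps)p} = 2^{(1-\eps)p\log n}$, and with these parameter choices $(1-\eps)p\log n = (1-\eps')p'$ for some $\eps'>0$, falsifying the \ppseth. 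The delicate balancing of $k$, $s$, and the precoloring overhead mirrors the calculation already carried out in the proof of Lemma~\ref{lem:sattolc}, and it is precisely this balancing that is the crux of the argument.
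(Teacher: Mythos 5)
Your first direction is fine and is essentially the paper's: \textsc{Precoloring Extension} with a color budget $q$ is a \textsc{List Coloring} instance (singleton lists on precolored vertices, $[q]$ elsewhere) with the same vertex set and the same decomposition, so \cref{thm:lc} applies after iterating over the at most $n$ values of $q$; the extra factor $n$ is absorbed by shrinking $\eps$ slightly for $p$ large enough.

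The converse direction has a genuine gap, and it is exactly the step you defer to as ``the crux.'' Realising the lists via a \emph{globally shared} precolored palette of $k'=2^{s}+O(1)$ vertices forces every one of those palette vertices to be adjacent to gadget vertices throughout the construction (every consistency and satisfaction gadget contains vertices with $2$- or $3$-element lists, so each such vertex needs $k'-O(1)$ of the shared precolored neighbours). Hence the palette must appear in essentially every bag and the pathwidth of the produced instance is $p=k+\Theta(2^{s})$, not $k(1+o(1))$. This cannot be repaired by tuning $k$: to get $n^{(1-\eps)p}\le 2^{(1-\eps')\pw(\psi)}|\psi|^{O(1)}$ you need at the very least $p\log n=O(\pw(\psi))$, which forces $s=\Theta(\log n)=\Omega(\log|\psi|)$ and therefore $2^{s}\ge|\psi|^{\Omega(1)}\gg \pw(\psi)\ge k$, so $2^{s}=o(k)$ is impossible. (Your two stated conditions are in fact mutually contradictory: $\pw(\psi)/\log k\to 0$ forces $s=\lceil(\pw(\psi)+1)/k\rceil=1$, while $\log|\psi|/s\to 0$ forces $s\to\infty$.) With the shared palette one always ends up with $p\ge 2^{\pw(\psi)/k}-O(k)=n^{\Omega(1)}$, so the hypothesised running time $n^{(1-\eps)p}$ is doubly exponentially far from the target.

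The way out, and the route the paper takes (citing \cite{BodlaenderGNS21}), is to keep the lists \emph{private}: for each vertex $v$ and each colour $c\notin L(v)$ attach a fresh pendant vertex precolored $c$ whose only neighbour is $v$. Each pendant is placed in its own copy of a bag containing $v$, so the pathwidth increases by exactly $1$ and no palette ever accumulates in a bag. The price is paid in $|V|$ rather than in the width, and the remaining bookkeeping is to verify that this blow-up of the vertex count is still compatible with the exponent $(1-\eps)p$ one is aiming for; that accounting, not a shared-palette trick, is what you would need to supply.
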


\begin{proof}

To obtain the corollary we observe that there are simple reductions in both
directions between \textsc{List Coloring} and \textsc{Precoloring Extension}
that preserve the pathwidth up to adding $1$, as explained in
\cite{BodlaenderGNS21}. \end{proof}

\subsection{DFA and NFA intersection}

In this section we consider the following basic problem: we are given a
collection of $k$ (deterministic or non-deterministic) finite automata, each
with $n$ states, working over a finite alphabet $\Sigma$. We are also supplied
a size bound $\ell$. The question is whether there exists any string over
$\Sigma$ that is accepted by all $k$ automata. This problem is called
\textsc{Bounded FA Intersection Non-Emptiness}.

Before we go on it is important to note that for DFAs this problem can easily
be solved in time $O(n^k)$. To see this, we note that we can construct the
intersection FA, which has $n^k$ states (one for each tuple of the states of
the given automata) and then perform BFS to calculate the shortest distance
from the initial state to an accepting state. Interestingly, this running time
bound does not depend at all on $\ell$.  

We choose to focus on the bounded version of the problem (where we are supplied
with an upper bound $\ell$ on the desired string) because this version of the
problem is XNLP-complete (\cite{BodlaenderGNS21}) and as we have mentioned
there is a clear intuitive connection between XNLP-completeness and
\ppseth-equivalence. We remark that Oliveira and Wehar have already shown that
an algorithm solving the intersection problem for $k$ DFAs in time $n^{k-\eps}$
would refute the SETH (\cite{OliveiraW20}). Similarly to the results of
\cref{sec:fpt}, our result improves this SETH-hardness to \ppseth-equivalence.

In the remainder, we treat NFAs (Non-deterministic Finite Automata) as directed
graphs, where there is a special initial vertex, a set of accepting vertices,
and each arc is labeled with a symbol from $\Sigma$. An NFA accepts a string
$\chi$ if and only if there exists a path from the initial vertex to an
accepting vertex, such that the concatenation of its edge labels is $\chi$. A
DFA (Deterministic Finite Automaton) is a special case of an NFA where for each
vertex there exists exactly one outgoing labeled with each character of
$\Sigma$.

\begin{theorem}\label{thm:fas}

The following statements are equivalent:

\begin{enumerate}

\item The \ppseth\ is false.

\item There exist $\eps>0, k>1, c>0$ and an algorithm that takes as input the
descriptions of $k$ DFAs of $n$ states working over the alphabet
$\Sigma=\{0,1\}$ and an integer $\ell$ and decides if there exists a string of
length at most $\ell$ accepted by all $k$ DFAs in time $O(n^{k-\eps}\ell^c)$.

\item There exist $\eps>0, k_0>1, c>0$ and an algorithm which for all $k>k_0$
takes as input the descriptions of $k$ NFAs of $n$ states and an integer $\ell$
and decides if there exists a string of length at most $\ell$ accepted by all
$k$ NFAs in time $O(n^{(1-\eps)k}\ell^c)$.

\end{enumerate}

\end{theorem}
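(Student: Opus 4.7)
\textbf{Proof proposal for \cref{thm:fas}.} The plan is to show the cycle $3 \Rightarrow 2 \Rightarrow 1 \Rightarrow 3$. The implication $3 \Rightarrow 2$ is essentially free: a DFA is a special case of an NFA, so any algorithm that solves \textsc{Bounded}-$k$-\textsc{NFA Intersection} in $n^{(1-\eps)k}\ell^{O(1)}$ for all sufficiently large $k$ immediately yields a $n^{k-\eps'}\ell^{O(1)}$ algorithm for \textsc{Bounded}-$k$-\textsc{DFA Intersection} for some fixed $k$: pick $k$ large enough that $(1-\eps)k < k-1$, and use that the DFA alphabet $\{0,1\}$ is valid for NFAs as well.

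For $2 \Rightarrow 1$, I would reduce a 3-CNF instance $\psi$ with pathwidth $p$ to a DFA intersection instance whose size is $\approx 2^{p/k}$. Using \cref{lem:nice} and \cref{lem:pwcolor}, partition the variables of $\psi$ into $k$ groups $V_1,\ldots,V_k$ so that each bag contains at most $\lceil (p+1)/k\rceil$ variables of each $V_i$. Construct $k$ DFAs, one per group, working over $\Sigma=\{0,1\}$. The string $\chi$ read by the automata encodes, bag by bag in a nice path decomposition, the assignments to each $V_i \cap B_j$ written in $\lceil (p+1)/k\rceil$ bits. DFA $i$ must (i) check that, when moving from bag $B_j$ to $B_{j+1}$, the assignment to $V_i \cap B_j \cap B_{j+1}$ is consistent (achievable by storing in the state the assignment to the variables of $V_i$ in the current bag), and (ii) check any clauses $c$ with $b(c)=j$ whose first variable lies in $V_i$. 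Since each state only needs to remember the assignment to one group of a bag and the index of the current bag, there are at most $N=2^{\lceil(p+1)/k\rceil}\cdot O(\mathrm{poly}(|\psi|))$ states. Setting $\ell$ equal to the number of bags times $\lceil(p+1)/k\rceil$, an $O(N^{k-\eps}\ell^c)$ algorithm yields a running time of $2^{(1-\eps/k)p}|\psi|^{O(1)}$, falsifying the \ppseth. The subtle point here is that determinism forces the DFA to essentially carry the assignment to its group in the state; since the alphabet is binary and each bag needs $\lceil(p+1)/k\rceil$ symbols to encode, we must pad carefully so that all DFAs step in sync.

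For $1 \Rightarrow 3$, I would reduce an instance of \textsc{Bounded}-$k$-\textsc{NFA Intersection} with $n$ states per NFA and string-length bound $\ell$ to a CNF formula $\phi$ whose primal pathwidth is roughly $k\log n + O(k + \log \ell)$. For each time step $t \in [\ell]$ and each NFA $i \in [k]$ introduce $\lceil \log n\rceil$ variables encoding the current state $s_{i,t}$; add clauses enforcing (a) initial states at $t=0$, (b) acceptance at some $t\le \ell$, and (c) that there is a symbol $a\in\Sigma$ (encoded by $\log|\Sigma|$ extra variables per time step) such that for every $i$, $(s_{i,t}, a, s_{i,t+1})$ is a legal NFA transition. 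Using \cref{thm:robust} so that it suffices to encode this as an arbitrary-arity CNF parameterized by incidence pathwidth, the bag at time $t$ needs only the variables for $s_{i,t}$ and $s_{i,t+1}$ across all $i$, so the pathwidth is $p \le 2k\lceil\log n\rceil + O(\log n + \log \ell)$. If the \ppseth\ is false then \textsc{SAT} can be solved in $(2-\delta)^p \cdot \mathrm{poly}$, which is $n^{(2-\delta')k}\cdot \ell^{O(1)}$; this is not yet of the form $n^{(1-\eps)k}$, so I would tighten the pathwidth bound to $(1+o(1)) k \log n$ by only keeping \emph{one} copy of states per bag and introducing the transition via an auxiliary variable of arity $O(\log|\Sigma|)$ per time step. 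The hard part will be pinning down the constants precisely: we need pathwidth strictly bounded by $(1+o_k(1)) k\log n$ so that $(2-\delta)^p$ becomes $n^{(1-\Omega(\delta))k}$ once $k$ is large enough. This mirrors the overhead tradeoff discussed in \cref{sec:techniques} and the reason why the XNLP lower bounds can afford a small additive slack in the exponent.
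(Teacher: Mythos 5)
Your overall architecture ($3\Rightarrow 2\Rightarrow 1\Rightarrow 3$, with one DFA per variable group whose states record that group's assignment in the current bag, and a Turing-machine-style CNF encoding of the NFA runs for the converse) is the same as the paper's. However, there is a genuine gap in your $2\Rightarrow 1$ direction: you assign DFA $i$ the job of checking "any clauses $c$ with $b(c)=j$ whose first variable lies in $V_i$", while also asserting that a state "only needs to remember the assignment to one group of a bag and the index of the current bag". These two requirements are incompatible. A clause of $\psi$ may contain variables from up to three distinct groups; a DFA whose state records only the assignment to $V_i\cap B_j$ has no access to the values of the literals of $c$ lying in $V_{i'}$ for $i'\neq i$, and so cannot decide whether $c$ is satisfied. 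You would either have to let DFA $i$ buffer the (constantly many) relevant bits of the other groups as they stream past in bag $j$'s encoding -- which does work with only a constant-factor state blowup, but must be stated and checked -- or adopt the paper's device: enrich the alphabet and insert, at each bag $B_j$ with $b(c)=j$, one extra character $i\in[k]$ naming a group containing a satisfied literal of $c$; DFA $i$ then refuses character $i$ at that position exactly when its own partial assignment satisfies no literal of $c$, and accepts every $i'\neq i$ unconditionally, so that no DFA ever needs to consult another group's assignment. (Relatedly, if the string really encodes all $k$ groups per bag, your $\ell$ is short by a factor of $k$; this is harmless but symptomatic of the underspecification.)

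The $1\Rightarrow 3$ direction also leaves its key difficulty unresolved. You correctly observe that a bag containing $s_{i,t}$ and $s_{i,t+1}$ for all $i$ gives pathwidth about $2k\log n$, which is fatal, but "keeping one copy of states per bag" cannot work as stated: the transition clauses genuinely require both endpoints of each NFA's step to meet in some bag. The paper's fix exploits that the transition constraint for NFA $i$ couples only $s_{i,t}$, $s_{i,t+1}$, and the shared symbol variables: between consecutive time steps one inserts $k$ intermediate bags, the $i$-th of which contains $s_{i',t}$ for $i'\le i$ and $s_{i',t+1}$ for $i'\ge i$, swapping the groups one at a time. Every bag then holds at most $(k+1)\log n+O(|\Sigma|)$ variables, which is $(1+o_k(1))\,k\log n$ and suffices once $k$ exceeds a threshold depending on $\eps$ and the polynomial overhead. (The paper also demands acceptance at length exactly $\ell$ and loops over all lengths rather than encoding "acceptance at some $t\le\ell$"; either choice is fine.)
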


The implication $3\Rightarrow 2$ is trivial, so we establish the theorem in two
separate lemmas showing that $2\Rightarrow 1$ and $1\Rightarrow 3$.

\begin{lemma}\label{lem:sattodfa} 

If there exist $\eps>0, k>1, c>0$ and an algorithm that takes as input the
descriptions of $k$ DFAs of $n$ states working over the alphabet
$\Sigma=\{0,1\}$ and an integer $\ell$ and decides if there exists a string of
length at most $\ell$ accepted by all $k$ DFAs in time $O(n^{k-\eps}\ell^c)$,
then there exist $\eps'>0, c'>0$ and an algorithm that takes as input a 3-CNF
formula $\psi$ and a path decomposition of width $p$ and decides the
satisfiability of $\psi$ in time $O((2-\eps')^p|\psi|^{c'})$.

\end{lemma}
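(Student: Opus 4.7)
Fix $k>1$ and $\eps>0$ for which the hypothesized DFA intersection algorithm exists. Given a 3-CNF formula $\psi$ together with a path decomposition of its primal graph of width $p$, my plan is to build, in polynomial time, $k$ DFAs $D_1,\ldots,D_k$ over the alphabet $\{0,1\}$, each with $n=\text{poly}(p,|\psi|)\cdot 2^{\lceil (p+1)/k\rceil}$ states, and a length bound $\ell=\text{poly}(|\psi|)$, such that a common string of length at most $\ell$ exists iff $\psi$ is satisfiable. Running the assumed algorithm on this instance costs $O(n^{k-\eps}\ell^c)=\text{poly}(|\psi|)\cdot 2^{(k-\eps)(p/k+O(\log p))}=\text{poly}(|\psi|)\cdot 2^{(1-\eps/k)p+o(p)}$, which is of the form $(2-\eps')^{p}|\psi|^{O(1)}$ for a suitable $\eps'>0$, falsifying the \ppseth\ by \cref{thm:robust}.

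The preparation step is standard: invoke \cref{lem:nice} to make the decomposition nice and get an injective mapping $b$ from clauses to bags containing all their variables, then invoke \cref{lem:pwcolor} with the above $k$ to partition the variables of $\psi$ into $V_1,\ldots,V_k$ so that each bag intersects each $V_i$ in at most $q=\lceil(p+1)/k\rceil$ variables. The target string will be the concatenation, over the bags $B_1,\ldots,B_t$ in order, of fixed-length records; each record for bag $B_j$ contains $k$ blocks of $q$ bits (one per group, encoding the partial assignment to $V_i\cap B_j$ in a canonical order dictated by the DFA's construction) together with, whenever $j=b(c)$ for some clause $c$, a short header specifying the index of a literal of $c$ that is claimed to be satisfied. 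All auxiliary fields are padded to constant size so that records have identical length $L=\Theta(p)$, and separators/markers are binary-encoded.

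DFA $D_i$ will be constructed so that it ignores the blocks of $V_{i'}$ for $i'\neq i$ (just advancing its internal position counter) while performing three checks on the blocks concerning $V_i$. (i) Syntactic: each record has the correct length and format. (ii) Consistency: between consecutive bags $B_j,B_{j+1}$, the bit-encoding of $V_i\cap B_j$ agrees on the shared variables with the encoding of $V_i\cap B_{j+1}$; since the decomposition is nice, at most one variable is introduced or forgotten, so this is a lookup against the accumulated $q$-bit memory of the previous bag's $V_i$-assignment. (iii) Satisfaction: whenever a clause header appears at bag $B_{b(c)}$ and the header designates a literal whose variable lies in $V_i$, the current $V_i$-assignment (stored in the state) must satisfy that literal; clauses designating literals outside $V_i$ are simply skipped by $D_i$, because some other DFA $D_{i'}$ will verify them. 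The state space of $D_i$ consists of a $q$-bit memory for the current $V_i$-assignment, plus an $O(L)$ position counter within the current record, plus $O(1)$ control bits, giving at most $n=O(L\cdot 2^q)$ states; determinism is immediate because every transition is a table lookup on the next bit, the current position, and the stored assignment.

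For correctness, a satisfying assignment $\sigma$ of $\psi$ yields a valid common string by writing $\sigma$ bag-by-bag and, for each clause, choosing any literal satisfied by $\sigma$; all three checks pass in every $D_i$. Conversely, a common accepted string yields per-bag assignments that are consistent across bags (by check (ii)), hence defines a global assignment $\sigma^*$ to the variables of $\psi$; by check (iii), for every clause $c$ some designated literal is made true by $\sigma^*$, so $\sigma^*$ satisfies $\psi$. The string length is $\ell=O(L\cdot t)=\text{poly}(|\psi|)$. The main obstacle I expect is bookkeeping: making sure that, over the binary alphabet, each DFA can parse the record format and perform the three checks using only $O(L\cdot 2^q)$ states and remains strictly deterministic, while simultaneously guaranteeing that every clause is verified by exactly one DFA regardless of which literal the string designates. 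Once this encoding is set up cleanly, the rest of the argument is a direct calculation.
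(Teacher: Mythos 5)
Your proposal is correct and follows essentially the same route as the paper's proof: partition the variables into $k$ groups via \cref{lem:pwcolor}, have DFA $D_i$ track the assignment to $V_i$ in the current bag (about $2^{p/k}$ states times a polynomial for the bag/position index), enforce consistency by remembering the previous bag's assignment, and let the string designate for each clause which group's DFA is responsible for verifying a satisfied literal; the paper merely uses a small non-binary alphabet (converted to binary afterwards) and writes only one $T/F$ symbol per newly introduced variable rather than full per-bag records. The only nit is that your later state count $O(L\cdot 2^q)$ omits the factor of $t$ (the DFA's transitions are bag-dependent, so the state must carry a bag index), but your top-level bound $\mathrm{poly}(p,|\psi|)\cdot 2^{\lceil (p+1)/k\rceil}$ already accounts for this and the running-time calculation goes through unchanged.
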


\begin{proof}

Fix the $\eps>0, k>1, c$ for which the algorithm that decides the intersection
non-emptiness of $k$ DFAs exists. Suppose we are given a 3-CNF formula $\psi$
on $n$ variables and a path decomposition of width $p$. We will construct $k$
DFAs $D_1,\ldots, D_k$, over an alphabet $\Sigma=[k]\cup\{T,F\}$, and an
integer $\ell$ such that we achieve the following:

\begin{enumerate}

\item There exists a string of length at most $\ell$ accepted by all $k$ DFAs
if and only if $\psi$ is satisfiable.

\item For all $i\in [k]$, the number of states of $D_i$ is $O(2^{p/k}n^4)$. All
DFAs have the same number of states.

\item $\ell=O(n^4)$.

\item The DFAs and $\ell$ can be constructed in time $O(2^{p/k}n^4)$.

\end{enumerate}

Before we proceed, let us explain why the above imply the lemma. To decide
$\psi$ we will use the supposed algorithm to decide if there exists a string of
length at most $\ell$ accepted by all DFAs. The running time of this procedure
is dominated by the execution of this supposed algorithm, which takes time
$O(|D_1|^{k-\eps}\ell^c) = O((2^{p/k}n^4)^{k-\eps}n^{4c})$. Set $2-\eps' =
2^{\frac{k-\eps}{k}}$ and the running time is at most
$O((2-\eps')^pn^{4k+4c})$. Setting $c'=4k+4c$ we obtain the promised bound, and
we have that $\eps',c'$ are fixed constants depending only on $\eps,k, c$.

We also note that the lemma is stated for a binary alphabet, but we are using
an alphabet of size $k+2$. This is not important, as we can replace the
transitions out of each state with a binary tree of height
$\lceil\log|\Sigma|\rceil$, with the two transitions out of each node labeled
$0$ and $1$, and associate each binary string of length
$\lceil\log|\Sigma|\rceil$ with a character from $\Sigma$. For each leaf of the
tree we look at the path from the root that leads to this leaf and associate a
character $c\in\Sigma$ with this leaf. We can then identify this state with the
state that would have been reached in the original transition for character
$c$. The details of this construction are explained in \cite{BodlaenderGNS21}.
We therefore prefer to continue with a non-binary alphabet to simplify
presentation.

We invoke \cref{lem:nice} and \cref{lem:pwcolor} on our decomposition to obtain
a nice decomposition $B_1,\ldots, B_t$ with $t=O(n^4)$, to obtain a function
$b$ that inductively maps clauses of $\psi$ to indices of bags that contain all
their variables, and to partition the variable set of $\psi$ into $k$ sets
$V_1,\ldots, V_k$ such that for all $i\in[k], j\in[t]$ we have $|B_j\cap
V_i|\le \lceil \frac{p+1}{k}\rceil$. Without loss of generality, we will assume
that $|B_1|=1$. This property can always be assured, because if $|B_1|>1$, then
we can add a sequence of $|B_1|-1$ bags before $B_1$, where the first bag
contains one element of $B_1$ and we construct each subsequent bag by adding to
the previous one a new element of $B_1$.

Fix an $i\in[k]$. We describe the construction of the DFA $D_i$. For each $j\in
[t]$ we consider every truth assignment $\sigma$ to the variables of $V_i\cap
B_j$ (there are exactly $2^{|B_j\cap V_i|}$ such assignments).  For each such
assignment we construct two states $s_{i,j,\sigma}^0, s_{i,j,\sigma}^1$.  We
also add an initial state $s_i^I$. We make all states $s_{i,\ell,\sigma}^1$
accepting states. We now add the following transitions:

\begin{enumerate}

\item (Initial state): We add a transition for $s_i^I$ to all states
$s_{i,1,\sigma}^0$.  Let $B_1=\{v\}$. If $v\not\in B_i$, then there is in fact
exactly one state $s_{i,1,\sigma}^0$ (for $\sigma$ the empty assignment), and
we label the transition from $s_i^I$ to this state with both $T$ and $F$.
Otherwise, there are exactly two states $s_{i,1,\sigma}^0$ and
$s_{i,1,\sigma'}^0$ where $\sigma$ sets $v$ to True and $\sigma'$ sets $v$ to
False. We label the transition to the former state with $T$ and to the latter
state with $F$.

\item (Consistency): consider a $j\in[t-1]$ and every assignment $\sigma$ to
$B_j\cap V_i$ and $\sigma'$ to $B_{j+1}\cap V_i$. For each such pair of
assignments which are compatible, we add a transition from $s_{i,j,\sigma}^1$
to $s_{i,j+1,\sigma'}^0$. If $B_{j+1}\cap V_i \subseteq B_j\cap V_i$, there is
a unique transition going out of each $s_{i,j,\sigma}^1$, which we label with
both $T$ and $F$. Otherwise, each $s_{i,j,\sigma}^1$ has transitions to two
distinct $s_{i,j+1,\sigma'}^0$ and $s_{i,j+1,\sigma''}^0$, where $\sigma',
\sigma''$ disagree on the value of variable $v\in B_{j+1}\setminus B_j$. We
give label $T$ to the transition leading to the assignment that sets $v$ to
True and $F$ to the other transition.

\item (Satisfaction): for each $j\in[t]$ and all $\sigma$ we add a transition
from $s_{i,j,\sigma}^0$ to $s_{i,j,\sigma}^1$. If no clause $c$ has $b(c)=j$,
or if such a clause exists but $\sigma$ sets one of the literals of $c$ to
True, we label the transition with $[k]$. Otherwise, that is if some $c$ has
$b(c)=j$ but $\sigma$ sets no literal of $c$ to True, we label the transition
with $[k]\setminus\{i\}$.

\end{enumerate}

We note that in the above there exist states $s$ such that for some
$c\in\Sigma$ no transition with label $c$ is exiting $s$. For each such state,
we add a junk state $s_{\textrm{junk}}$ with self-loops labeled with all
characters and add the missing transitions to this state. The intended meaning
is that in case of a missing transition the DFA rejects. To complete the
construction we set $\ell = 2t$.  It is not hard to see that we have respected
the time bounds and it is possible to construct the DFAs in the claimed time.
It is also not hard to check that the automata are deterministic, that is, for
each state $s$ and $c\in\Sigma$ there is at most one transition with label $c$
going out of $s$.

What remains is to argue for correctness. Suppose that $\psi$ is satisfiable.
We construct a string $\chi$ of length exactly $\ell$, numbering its characters
from left to right starting at $0$. For $j\in[t-1]$ the characters at position
$2j$ and $2j+1$ are determined as follows:

\begin{itemize}

\item If $j=0$ or $(B_{j+1} \setminus B_j)\cap V_i\neq \emptyset$, let $v$ be
the new vertex, that is $v\in B_{j+1}\setminus B_j$ or $v$ is the unique
element of $B_1$.  If the satisfying assignment sets $v$ to True, then
character $2j$ is $T$, otherwise $F$. In other cases ($j>0$ and $B_{j+1}\cap
V_i\subseteq B_j$) we set character $2j$ to $T$.

\item If there exists clause $c$ such that $b(c)=j$, then let $i$ be such that
$c$ contains a satisfied literal using a variable of $V_i\cap B_j$. Such an $i$
must necessarily exist, since all variables of $c$ are contained in $B_j$.
Then, character $2j+1$ is $i$. If no such clause exists, then character $2j+1$
is $1$.

\end{itemize}

We now want to argue that each DFA $D_i$ will accept $\chi$.  For this it is
sufficient to observe that we never use a non-existing transition (which would
lead to a junk state), as all states at distance $2\ell$ from the initial state
are accepting. We claim that reading the constructed string the DFA will always
be in a state $s_{i,j,\sigma}^\alpha$, with $\alpha\in\{0,1\}$, such that
$\sigma$ is the restriction of the satisfying assignment to $V_i\cap B_j$. This
is easy to see for $j=1$ and can be shown by induction for all $j$ using the
construction of the string. The key observation we need here is that in all
interesting transitions (from $s_{i,j,\sigma}^1$ to $s_{i,j,\sigma'}^0$ with
$B_{j+1}\cap V_i$ containing a new variable) we have constructed $\chi$ so that
$\sigma'$ continues to agree with our satisfying assignment. We also note that
after an even number of characters the DFA is at a state $s_{i,j,\sigma}^1$,
where it will encounter by construction the characters $T$ or $F$; all such
states have an outgoing transition for both characters.  Finally, for
transitions out of states $s_{i,j,\sigma}^0$, the DFA will encounter a
character from $[k]$, while the state will have a transition labeled $[k]$ (in
which case we have nothing to argue), or $[k]\setminus\{i\}$.  In the latter
case, the character of our string cannot be $i$, because the assignment
$\sigma$ falsifies all literals of the clause $c$ with $b(c)=j$, so we must
have selected another character at this position.

For the converse direction, suppose that the $k$ DFAs accept some string
$\chi$.  We extract an assignment for $V_i$ by observing a run of $D_i$ on
$\chi$: if the DFA enters state $s_{i,j,\sigma}^0$ we use $\sigma$ for the
variables of $B_j\cap V_i$. This is consistent, because of the properties of
path decompositions (once a variable is forgotten, it does not reappear) and
because transitions always lead to compatible assignments. We claim that the
union of these $k$ assignments satisfies $\psi$. To see this, take a clause $c$
and suppose $b(c)=j$. Consider the character of $\chi$ which allows the $k$
DFAs to advance from some $s_{i,j,\sigma}^0$ to $s_{i,j,\sigma}^1$. Suppose
this character is $i$ and we examine the construction of $D_i$. Because the
transition from $s_{i,j,\sigma}^0$ to $s_{i,j,\sigma}^1$ is labeled $[k]$ and
not $[k]\setminus\{i\}$, while $b(c)=j$, it must be the case that $c$ contains
a literal that is satisfied by the assignment $\sigma$, therefore $\psi$ is
satisfied.  \end{proof}

\begin{lemma}\label{lem:nfatosat}

If there exist $\eps>0, c>0$ such that there exists an algorithm that takes as
input a CNF formula $\psi$ and decides its satisfiability in time
$O((2-\eps)^{\pw(\psi)}|\psi|^c)$, then there exist $\eps'>0, k_0>1,c'>0$ such
that there exists an algorithm which for any $k>k_0$ takes as input the
descriptions of $k$ NFAs of $n$ states and an integer $\ell$ and decides if
there exists a string of length at most $\ell$ accepted by all $k$ NFAs in time
$O(n^{(1-\eps)k})\ell^{c'}$.

\end{lemma}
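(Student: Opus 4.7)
\medskip

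The plan is to reduce the NFA intersection problem to a CNF formula whose primal pathwidth is approximately $k\log n$, so that applying the supposed fast \textsc{SAT} algorithm yields an $n^{(1-\eps')k}$ bound. The formula will encode a non-deterministic simulation that guesses an input string symbol by symbol and, in parallel, a run of each NFA on that string. The main variables are the following: for each NFA $i\in[k]$ and each time step $t\in\{0,\ldots,\ell\}$, introduce $\lceil\log n\rceil$ binary variables $q_{i,t,\cdot}$ encoding the state of NFA $i$ at time $t$; and for each $t\in\{0,\ldots,\ell-1\}$, introduce $\lceil\log|\Sigma|\rceil$ variables $c_{t,\cdot}$ encoding the symbol read at step $t$. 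Without loss of generality we may assume $|\Sigma|=2$ (using the standard binary encoding of symbols as in \cref{lem:sattodfa}), so $c_{t,\cdot}$ is a single variable. To allow strings shorter than $\ell$, extend $\Sigma$ with a padding symbol and modify each NFA so that every accepting state has a self-loop on it.

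The clauses of $\psi$ are straightforward: (i) an initialization gadget forcing $q_{i,0,\cdot}$ to encode the initial state of NFA $i$; (ii) a transition gadget asserting, for each $i$ and $t$, that the triple of values of $q_{i,t,\cdot}$, $c_{t,\cdot}$, $q_{i,t+1,\cdot}$ corresponds to a valid transition of NFA $i$ (implemented by enumerating all forbidden triples $(q,c,q')$ with $q'\notin\delta_i(q,c)$ and adding a short clause falsifying each); (iii) an acceptance gadget requiring that $q_{i,\ell,\cdot}$ encodes an accepting state of NFA $i$. The resulting formula has size polynomial in $n, \ell, k$, and clause arity $O(\log n)$.

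The crucial step is constructing a path decomposition of $\psi$ of width $(k+O(1))\log n$. I would arrange bags along the time axis: for each $t$, a ``state bag'' contains all variables $\{q_{i,t,\cdot}:i\in[k]\}$ (size $k\log n$). Between consecutive state bags for times $t$ and $t+1$, I insert an interval of bags in which $c_{t,\cdot}$ is always present, and in which the state variables for the various NFAs are gradually swapped from time $t$ to time $t+1$, one NFA at a time. During the swap for NFA $i$, the current bag contains $c_{t,\cdot}$, $q_{i,t,\cdot}$, $q_{i,t+1,\cdot}$, plus the time-$(t+1)$ copies of the already-processed NFAs and the time-$t$ copies of the remaining NFAs; this is exactly the bag where the transition constraint for $(i,t)$ lives. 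The maximum bag size is $(k+1)\log n + O(1)$.

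Feeding the resulting $\psi$ to the supposed SAT algorithm runs in time $O((2-\eps)^{(k+O(1))\log n} \cdot |\psi|^c)$. Setting $\delta = 1-\log_2(2-\eps)>0$, this simplifies to $O(n^{(1-\delta)(k+O(1))} \cdot \ell^c \cdot n^{O(1)})$; choosing $k_0$ large enough so that the additive $O(1)$ term in the exponent is absorbed by shifting $\delta$ to $\delta/2$ yields, for all $k>k_0$, a running time of the form $O(n^{(1-\eps')k}\ell^{c'})$ with $\eps'=\delta/2$ and $c'=c$. The main obstacle is the tight pathwidth bound: I must avoid any constant-factor blowup in the coefficient of $\log n$ (which is why we cannot afford a naive decomposition placing both $q_{i,t,\cdot}$ and $q_{i,t+1,\cdot}$ in one bag for all $i$ simultaneously), and the interleaving described above is what makes the additive $O(\log n)$ slack sufficient.
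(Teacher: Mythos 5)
Your proposal is correct and follows essentially the same route as the paper: encode a length-$\ell$ run of all $k$ NFAs with $\log n$ state bits per automaton per time step plus per-step symbol variables, and obtain width $(k+1)\log n+O(1)$ by swapping the state-variable blocks from time $t$ to $t+1$ one automaton at a time while keeping the symbol variables in every intermediate bag. The only (harmless) divergence is how "length at most $\ell$" is handled: you pad with a fresh symbol that self-loops on accepting states, whereas the paper encodes length exactly $\ell$ and simply reruns the reduction for each candidate length, paying an extra factor $\ell$.
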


\begin{proof}

Fix the $\eps,c$ for which a fast \textsc{SAT} algorithm exists. Suppose we are
given $k$ NFAs, each with $n$ states, using an alphabet $\Sigma$ of constant
size, and an integer $\ell$. We want to construct a CNF formula $\psi$ that
satisfies the following:

\begin{enumerate}

\item $\psi$ is satisfiable if and only if there exists a string of length
exactly $\ell$ that is accepted by all $k$ NFAs.

\item $|\psi| = O(n^2\ell)$ and $\psi$ can be constructed in time $O(n^2\ell)$.

\item $\pw(\psi) = (k+1)\log n + O(|\Sigma|)$.

\end{enumerate}

Before we proceed, let us explain why the above imply the lemma. We will decide
if such a string exists by constructing $\psi$ and checking whether it is
satisfiable using the supposed algorithm. Because $\psi$ is satisfiable only if
a string of length \emph{exactly} $\ell$ is accepted by all NFAs, while we are
looking for a string of length at most $\ell$, we repeat this procedure $\ell$
times, once for each possible length in $[\ell]$. The running time of the whole
procedure is then $O((2-\eps)^{\pw(\psi)}|\psi|^c\ell) = O( (2-\eps)^{k\log n}
n^{2c+1}\ell^{c+1})$.  Let $\delta>0$ be such that $2-\eps = 2^{1-\delta}$, in
particular set $\delta=1-\log(2-\eps)$.  The running time is at most
$O(n^{(1-\delta)k+2c+1}\ell^{c+1})$. Set $k_0 = \frac{2(2c+1)}{\delta}$.  If
$k>k_0$ we have that $n^{(1-\delta)k+2c+1}\ell^{c+1} <
n^{(1-\delta/2)k}\ell^{c+1}$ so setting $\eps'=\delta/2$, $c'=c+1$ we obtain an
algorithm with the promised running time.  Observe that $p_0, \eps', c'$ are
constants that depends only on $\eps, c$.

We now describe the construction of $\psi$. Assume $n$ is a power of $2$
(otherwise, add dummy unreachable states to each NFA, at most doubling its
size).  For each $j\in\{0,\ldots,\ell\}$, we construct $k\log n$ variables
$x_{i,j,s}$, with $i\in[k]$ and $s\in[\log n]$. Informally, the variables
$x_{i,j,s}$ encode the state of the $i$-th NFA after reading the $j$ first
letters of the supposed string. Suppose that the states of the $k$ NFAs are
numbered $s_{i,0}, s_{i,1},\ldots, s_{i,n-1}$, for $i\in[k]$ with $s_{i,0}$
being the initial state for each $i\in [k]$. Furthermore, for each $j\in[\ell]$
and $c\in\Sigma$ we construct a variables $y_{c,j}$ which informally encodes if
the $j$-th letter of the supposed accepted string is $c$.

\begin{enumerate}

\item (Initial states): We add clauses for each $i\in [k]$ that ensure that
$x_{i,0,s}$ encodes the initial state. In particular we add for each $i\in
[k]$, $n-1$ clauses, such that the only satisfying assignment sets all of
$x_{i,0,s}$ to False, which encodes the state $s_{i,0}$.

\item (Character selection): For each $j\in[\ell]$ we add the clause
$\bigvee_{c\in\Sigma} y_{c,j}$, and for each $c,c'\in \Sigma$ with $c\neq c'$
the clause $(\neg y_{c,j} \lor \neg y_{c',j})$.

\item (Transitions): For each $j\in [\ell]$ and $i\in[k]$ we add clauses that
ensure that the transition of the $i$-th NFA is correct. In particular, for
each $c\in \Sigma$ and pair of assignments $\sigma,\sigma'$ to the variables
$x_{i,j-1,s}$ and $x_{i,j,s}$ we check if the state encoded by $\sigma'$ can be
reached from the state encoded by $\sigma$ if we read character $c$. If this is
not the case, we add a clause that is falsified if we have $y_{c,j}$ and
assignments $\sigma,\sigma'$.

\item (Accepting states): For each $i\in[k]$ and each assignment $\sigma$ to
$x_{i,\ell,s}$, if $\sigma$ encodes a state that is not accepting, then we add
a clause that is falsified by this assignment to the variables $x_{i,\ell,s}$.

\end{enumerate}

The total number of clauses we have added is at most $O(n^2\ell)$, as step 3
dominates. It is not hard to see that the construction can be performed in time
$O(n^2\ell)$. Let us bound the pathwidth of $\psi$. For each
$j\in\{0,\ldots,\ell\}$ we construct a bag $B_j$ which contains all
$x_{i,j,s}$, as well as all $y_{c,j}$ (for $j>1$). This covers all the clauses
except those added to encode the transitions. To cover these, for each
$j\in[\ell]$ we add a sequence of bags between $B_{j-1}$ and $B_{j}$, call them
$B_j^1, B_j^2,\ldots, B_j^k$. Bag $B_j^i$ contains $y_{c,j}$ for all $c\in
\Sigma$, for each $i'\in[k], i'\le i$ the variables $x_{i',j-1,s}$, and for
each $i'\in[k], i'\ge i$ the variables $x_{i',j,s}$. Observe that for each
$i\in [k]$, the transition clauses constructed for $i$ are covered in $B_j^i$.
It is not hard to see that the largest bag has $(k+1)\log n+|\Sigma|$
variables.

Finally, the correctness of the construction is straightforward. If there
exists a string of length $\ell$ accepted by all NFAs, we set $y_{c,j}$ to True
if and only if the $j$-th character of that string is $c$. This satisfies all
clauses of step 2. We set all $x_{i,0,s}$ and this satisfies clauses of step 1.
Then, for each $i\in[k]$ fix an accepting run of the $i$-th NFA and set for
each $j\in[\ell]$ the variables $x_{i,j,s}$ to encode the state of this NFA in
this run after reading $j$ characters. This satisfies the transition clauses
and the clauses of the last step. For the converse direction, if $\psi$ is
satisfiable, then we can extract a string from the assignments of $y_{c,j}$, as
for each $j\in [\ell]$ exactly one such variable is True. We claim that this
string is accepted by all NFAs. Indeed, fix an $i\in[k]$ and we claim that the
states encoded by $x_{i,j,s}$ must be an accepting run of this state for the
$i$-th NFA. This follows because the clauses of the first step ensure that we
start at $s_{i,0}$; the transition clauses ensure that we always perform legal
transitions; and the clauses of the last step ensure that we end up in an
accepting state.  \end{proof}

\subsection{Independent Set Reconfiguration}\label{sec:reconf}

In this section we consider a \emph{reconfiguration} problem defined as
follows: we are given a graph $G$ on $n$ vertices and two independent sets of
$G$, call them $S,T$, each of size $k$, for $k$ being a fixed integer. We would
like to transform $S$ into $T$ via token jumping moves, where a token jumping
move is an operation that replaces a vertex in the current set with a vertex
currently outside the set, while keeping the current set independent at all
times.

Reconfiguration problems such as this one have been extensively studied in the
literature and are generally PSPACE-complete when $k$ is allowed to vary with
$n$ (we refer the reader to the surveys by Nishimura \cite{Nishimura18} and van
den Heuvel \cite{Heuvel13}). However, when $k$ is fixed, such problems become
solvable in polynomial time, as the total number of configurations is at most
$n\choose k$ and we can solve reconfiguration problems by exploring the
configuration graph. It is known that, under standard assumptions, this
complexity cannot be improved to $f(k)n^{O(1)}$, as independent set
reconfiguration is generally W[1]-hard, even if the length of the configuration
is an additional parameter (\cite{ItoKOSUY20,MouawadN0SS17}, see also the
recent survey on the parameterized complexity of reconfiguration
\cite{abs-2204-10526}). Indeed, recent work of Bodlaender, Groenland, and
Swennenhuis \cite{BodlaenderGS21} succeeded in characterizing precisely the
parameterized complexity of independent set reconfiguration when the parameter
is the size of the set $k$: the problem is XL-complete when no bound on the
length of the sequence is given; XNL-complete when a bound is given in binary;
XNLP-complete when a bound is given in unary; and W[1]-complete when the length
is also a parameter.

The previous work summarized above seems to indicate that the $n^{O(k)}$
algorithm which constructs the configuration graph is best possible. The
question we ask here is, however, more fine-grained: we would like to determine
the minimum constant $c$ for which the problem admits an $n^{ck}$ algorithm.
Note that, because previous hardness results start from $k$-\textsc{Clique}
(which does admit an $n^{ck}$-time algorithm for $c<1$), no concrete
fine-grained lower bound of this form seems to be currently known.

In this section we focus on the version of the reconfiguration problem where we
are given a desired length $\ell$ for the reconfiguration sequence, and $\ell$
is not too large (this corresponds to the XNLP-complete case studied in
\cite{BodlaenderGS21}). Our main result is that it \emph{is} in fact possible
to do slightly better than the trivial algorithm in this case. However, going a
little further than what we achieve is likely to be very hard, as we show that
significant further improvements are equivalent to falsifying the \ppseth.

\subparagraph*{Algorithmic result} Before we explain the main algorithmic
result of this section, it is worthwhile to pause for a second and take a
closer look at the complexity of the trivial algorithm. As mentioned, whether a
bound $\ell$ on the length of the sequence is given or not, one way to deal
with this problem is to construct a configuration graph that contains
${n\choose k}=O(n^k)$ vertices, one for each independent set of $G$, and has an
edge between sets which are reachable in one token jumping move. If $k$ is
fixed, this graph has $O(n^{k+1})$ edges, because each set has $O(kn)$
neighbors (we have $k$ choices for a vertex to remove and $n$ choices for a
vertex to add). Hence, using BFS, reconfiguration can be solved in $O(n^{k+1})$
time. This can be slightly improved as follows: we can construct a bipartite
graph, with independent sets of size $k$ on one side and independent sets of
size $k-1$ on the other, and place an edge between $S_1,S_2$ whenever
$S_1\subseteq S_2$. We now observe that two sets $S_1,S_2$ of size $k$ are at
distance $2$ in the new graph if and only if they are reachable in one token
jumping move. Hence, a reconfiguration sequence of length $\ell$ corresponds
exactly to a path of length at most $2\ell$ in this graph. This graph only has
$O(n^k)$ edges, since each set of size $k$ has $k$ subsets of size $k-1$, so
running BFS on this graph saves a factor of $n$.

The main question we ask then is whether we can further improve this $O(n^k)$
algorithm to some running time of the form $n^{k-\eps}$. The answer is
positive, if we accept a small cost in terms of the length of the
reconfiguration sequence: we show in \cref{thm:reconf-alg} that, for all $k\ge
3$ the problem can be solved in time $O(n^{k-3+\omega}\cdot\ell)$, where
$\omega$ is the matrix multiplication constant. In particular, this means that
whenever $\ell$ is small (say $\ell=n^{o(1)}$), we can shave a constant off the
exponent of $n$ that is slightly less than $1$ (and would be equal to $1$ if
$\omega=2$).

\subparagraph*{Lower bound} The above discussion naturally leads to the lower
bound question we ask in this section. Suppose that the correct value of the
matrix multiplication constant is $\omega=2$. Then, the reconfiguration problem
for small values of $\ell$ can be solved slightly faster than the easy $O(n^k)$
algorithm we described above as we obtain an algorithm of running time
$O(n^{k-1}\cdot\ell)$. Is there a possibility to do even better? Our main
result in this section is to give strong evidence that the answer is negative:
obtaining an algorithm with complexity $n^{k-1-\eps}\cdot\ell^{O(1)}$ is
\emph{equivalent} to falsifying the \ppseth. Hence, our results indicate that
(when $\ell$ is small) the correct dependence of the exponent of $n$ on $k$ is
exactly $k-1$ (modulo, of course, the standard conjecture on the value of
$\omega$).  Note that our lower bound is strong in the sense that we rule out
an algorithm which shaves an $\eps$ off the exponent of $n$, even if we allow
\emph{any} polynomial dependence on $\ell$ (contrast this with the linear
dependence on $\ell$ of \cref{thm:reconf-alg}). Our lower bound, given in
\cref{thm:reconf}, works for all fixed values of $k\ge 4$ and as in the case of
other XNLP-complete problems shows that it is equivalent to shave an additive
constant $\eps$ off the exponent of $n$ and to improve the \emph{coefficient}
of $k$ to $(1-\eps)$ (which a priori sounds like a more important improvement).

We now go on to present the proofs of the results of this section.

\begin{theorem}\label{thm:reconf-alg}

There exists an algorithm that takes as input a graph $G$ on $n$ vertices, two
independent sets $S,T$ of size $k\ge 3$, and an integer $\ell$ and decides if
it is possible to transform $S$ to $T$ via at most $\ell$ token jumping moves
in time $O(n^{k-3+\omega}\cdot\ell)$, where $\omega$ is the matrix multiplication
constant.

\end{theorem}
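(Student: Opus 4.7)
The plan is layered BFS on the token jumping reconfiguration graph $\mathcal{G}_k$ on independent $k$-sets, implementing each of the $\ell$ BFS layers in $O(n^{k-3+\omega})$ time via boolean matrix multiplication. The structural observation driving this is: for any independent $(k-3)$-set $J$, the subgraph of $\mathcal{G}_k$ induced on $k$-sets containing $J$ is isomorphic to the $3$-set reconfiguration graph on $G[L(J)]$, where $L(J)$ is the set of vertices outside $J$ that extend it to an independent set. Since there are $O(n^{k-3})$ such contexts $J$, it suffices to show that one BFS layer inside each context can be executed in $O(n^\omega)$ using matrix multiplication, and to handle transitions between contexts within the same per-layer budget.

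I would first establish the base case of $3$-set reconfiguration on a graph $G'$ with $n'\le n$ vertices. The trick is to summarize reachability compactly as a boolean $n'\times n'$ matrix $F_i$ with $F_i[u,v]=1$ iff some $3$-set reachable in $\le i$ moves contains the pair $\{u,v\}$. Since from any reachable $\{u,v,w\}$ we can reach in one move any $\{u,v,w'\}$ whose triple is independent, the update $F_{i+1}[u,v]=F_i[u,v]\lor\bigvee_w(F_i[u,w]\land[\{u,v,w\}\text{ ind.}])\lor\bigvee_w(F_i[v,w]\land[\{u,v,w\}\text{ ind.}])$ can be evaluated for all $(u,v)$ simultaneously by one boolean matrix product of $F_i$ with the complement adjacency of $G'$. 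This costs $O(n'^\omega)$ per layer and $O(n'^\omega\cdot\ell)$ overall; the reachability of a specific target triple is read off $F_\ell$ with a one-step consistency check.

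For $k\ge 4$ I would maintain one $n\times n$ matrix $F^J_i$ per independent $(k-3)$-set $J$, with $F^J_i[u,v]=1$ iff some reachable $k$-set in $\le i$ moves contains $J\cup\{u,v\}$. Updates split into two kinds of contributions: token jumps preserving $J$ amount to one $3$-set reconfiguration step inside $G[L(J)]$ and cost $O(n^\omega)$ per context, summing over all $O(n^{k-3})$ contexts to the target $O(n^{k-3+\omega})$ per layer; at the end, the query is answered by fixing any $(k-3)$-subset $J_T\subseteq T$, writing $T=J_T\cup\{u_T,v_T,w_T\}$, and inspecting $F^{J_T}_\ell[u_T,v_T]$. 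The main obstacle is the second kind of contribution, namely token jumps that swap one element of $J$ for another: each $F^J_i$ then needs to import entries from $F^{J'}_i$ for each of the $\Theta(kn)$ contexts $J'$ at Hamming distance $2$ from $J$, and naive bookkeeping costs $\Omega(n^k)$ per layer and destroys the improvement. The fix I would pursue is to batch this synchronization across contexts: for each fixed pair $(u,v)$ the values $(F^J_i[u,v])_J$ form a vector indexed by $(k-3)$-sets, and one step of propagation along $(k-3)$-set adjacency can be cast as a matrix product whose total cost, summed over all $(u,v)$, stays within the same $O(n^{k-3+\omega})$ budget. Correctness then follows by a monotone induction on $i$ verifying the claimed semantics of $F^J_i$, and running $\ell$ layers yields the claimed $O(n^{k-3+\omega}\cdot\ell)$ running time.
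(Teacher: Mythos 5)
Your overall architecture (layered BFS with $O(n^{k-1})$ state, one batch of boolean matrix products per layer, $O(n^{k-3})$ contexts each costing $O(n^\omega)$) matches the paper's, and your $k=3$ base case is correct. But the step you yourself flag as ``the main obstacle'' --- moves in which the jumping token enters or leaves the context $J$ --- is exactly where the argument is incomplete, and the fix you propose does not work as stated. Propagating, for each fixed pair $(u,v)$, the vector $(F^J_i[u,v])_J$ along the ``differ in one element'' adjacency of $(k-3)$-sets is a product with an $n^{k-3}\times n^{k-3}$ matrix; done per pair this costs $\Omega(n^{2k-4})$ per layer, which already exceeds $n^{k-3+\omega}$ at $k=4$ and gets worse as $k$ grows. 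Worse, the relation you would need to propagate is not a fixed $(k-3)$-set adjacency at all: whether the fact ``$J'\cup\{u,v\}$ lies in a reachable $k$-set'' yields ``$J\cup\{u,v\}$ lies in a reachable $k$-set'' after one jump depends on independence constraints coupling $J$, $J'$, $u$, $v$ and the hidden completing vertices, so it cannot be encoded as a single context-independent matrix. As written, this step is an assertion, not a proof, and the naive implementation destroys the claimed running time.

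The resolution --- and this is the one idea your proposal is missing --- is that no propagation between distant contexts is needed, because every token jump $R\to R'$ is already witnessed \emph{inside a single context}: the common $(k-1)$-set $Z=R\cap R'$ can be split as $J''\cup\{u'',w''\}$ so that the within-context product in context $J''$ sets the entry for the $(k-1)$-subset $J''\cup\{u'',a\}$ of $R'$ (where $a$ is the incoming token). All that remains is to restore, after each layer, the invariant that your value depends only on the $(k-1)$-set $J\cup\{u,v\}$ and not on its representation: for each $(k-1)$-set, OR together the freshly computed values over its $O(k^2)$ representations $(J,\{u,v\})$. This symmetrization costs $O(k^2 n^{k-1})$ per layer, well within budget. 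The paper implements precisely this, in the slightly cleaner parametrization of $(k-1)$-sets reachable by ``add a vertex keeping independence, then remove a vertex'' moves, stored as an $\binom{n}{k-2}\times n$ matrix $R_i[Y,x]$; the product $R_i\cdot A$ with the complemented adjacency matrix handles the within-row update, and the rule ``$R_{i+1}[Y,x]=1$ iff $Y\cup\{x\}$ is independent and $P_i[Y',x']=1$ for some split $Y\cup\{x\}=Y'\cup\{x'\}$'' is the symmetrization. Separately, note that your final query is off by one move: $F^{J_T}_\ell[u_T,v_T]=1$ only certifies that a $(k-1)$-subset of $T$ is covered within $\ell$ jumps, i.e.\ that $T$ is reachable within $\ell+1$ jumps; you need to query the previous layer (or, as the paper does, phrase reachability directly in terms of the intersections $S_{i-1}\cap S_i$ and check whether some reachable $(k-1)$-set is contained in $T$).
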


\begin{proof}

Suppose that a transformation sequence $S_0=S,S_1,\ldots, S_\ell=T$ exists, we
will concentrate on the sequence of sets $X_i=S_{i-1}\cap S_{i}$, for
$i\in[\ell]$, which all have $|X_i|=k-1$. We note that the transformation
sequence is valid if we have $X_1\subseteq S$, $X_\ell\subseteq T$, and for
each $i\in[\ell-1]$ we have that there exists $x_{i+1}$ such that
$X_{i+1}\setminus X_{i}=\{x_{i+1}\}$ but $X_i\cup \{x_{i+1}\}$ is an
independent set. In other words, the sequence is valid if we can transform each
$X_i$ into $X_{i+1}$ by first adding a vertex to $X_i$ (while keeping the set
independent) and then removing a vertex. We call such a move, which is slightly
more restrictive than token jumping, because we need to maintain an independent
set before we remove the old token, an addition/removal move.

We start by guessing $X_1$ (there are only $k$ choices, as $X_1\subseteq S$ and
$|X_1|=k-1$). We will construct a sequence of matrices $R_i$, which will encode
which independent sets are reachable from $X_1$ within $i$ addition/removal
moves. The matrices $R_i$ will have ${n\choose k-2} = O(n^{k-2})$ rows and $n$
columns each, where rows are indexed by sets of vertices of size $k-2$ and
columns by vertices. We initialize $R_0[Y,x]=1$ if and only if $Y\cup\{x\} =
X_1$ and we set all other values to $0$ (so $R_0$ contains $k-1$ non-zero
entries). 

We now explain how to obtain $R_{i+1}$ from $R_i$, assuming that $R_i[Y,x]=1$
if and only if $Y\cup\{x\}$ is reachable from $X_1$ within at most $i$
addition/removal moves. Let $A$ be the complement of the adjacency matrix of
$G$, that is $A$ is an $n\times n$ matrix which has $A[x,y]=1$ if and only if
$xy\not\in E$ (in particular, $A[x,x]=1$ for all $x\in V$). We compute the
matrix $P_i=R_i\cdot A$, where addition is replaced by binary disjunction, and
multiplication by binary conjunction, and then set $R_{i+1}[Y,x]=1$ if and only
if $Y\cup\{x\}$ is an independent set and there exist $Y',x'$ such that
$Y\cup\{x\}=Y'\cup\{x'\}$ and $P_i[Y',x']=1$.  Note that since $|Y|=k-2$, there
are only $k-1$ combinations $(Y',x')$ to check for each pair $(Y,x)$.

We claim that $R_{i+1}$ contains the desired information, that is
$R_{i+1}[Y,x]=1$ if and only if $Y\cup\{x\}$ is an independent set that is
reachable from $X_1$ within at most $i+1$ addition/removal moves. First, we
argue that if $Y\cup \{x\}$ is a set reachable via $i+1$ moves from $X_1$, then
$R_{i+1}[Y,x]=1$. To see this, suppose that $x$ is the last vertex added to the
set in such a configuration sequence (this is without loss of generality,
because if $y\in Y$ is actually the last vertex, then
$R_{i+1}[Y\setminus\{y\}\cup\{x\},y]=1$ by the construction of $R_{i+1}$).
Since $x$ is the last vertex added, the previous set in the sequence must be of
the form $Y\cup \{x'\}$, which is reachable in at most $i$ moves, so
$R_i[Y,x']=1$. Furthermore, it must be the case that $xx'\not\in E$. We then
have that $P_i[Y,x] = 1$ because $P_i[Y,x] = \lor_{z\in V} R_i[Y,z]\land
A[z,x]$ and the conjunction evaluates to $1$ for $z=x'$. Because $Y\cup\{x\}$
is independent, we have $R_{i+1}[Y,x]=1$ as desired.

Second, we argue that if $R_{i+1}[Y,x]=1$, then there does exist a sequence of
at most $i+1$ addition/removal moves that transforms $X_1$ to $Y\cup\{x\}$. If
$R_{i+1}[Y,x]=1$, then there exist $Y',x'$ with $Y\cup\{x\} = Y'\cup\{x'\}$
such that $P_i[Y',x']=1$. Suppose without loss of generality that $P_i[Y,x]=1$.
Then, since $P_i[Y,x] = \lor_{z\in V} R_i[Y,z]\land A[z,x]$ there exists $z\in
V$ such that $R_i[Y,z]=1$ and $xz\not\in E$. The set $Y\cup\{z\}$ is reachable
from $X_1$ in at most $i$ addition/removal moves, so we make a further move,
adding $x$ to $Y\cup\{z\}$ and then removing $z$. Because $xz\not\in E$ and
$R_{i+1}[Y,x]=1$ we have that $Y\cup\{z,x\}$ is independent, so we obtain a
valid sequence of at most $i+1$ addition/removal moves that transform $X_1$ to
$Y\cup\{x\}$.

We now iteratively compute $R_i$ for $i\in[\ell]$. If there exist $Y,x$ such
that $R_{\ell}[Y,x]=1$ and $Y\cup\{x\}\subseteq T$ we answer Yes, otherwise we
answer No. Correctness follows from the correctness of $R_i$ for $i\in[\ell]$,
so what remains is to compute the running time. We iterate over $i\in [\ell]$
and at each step we multiply a matrix of size $n^{k-2}\times n$ with a matrix
of size $n\times n$. The running time of this multiplication dominates other
operations for each iteration, which can be achieved in $O(n^{k-1})$ time.
Calculating the product of an $n^{k-2}\times n$ matrix with an $n\times n$
matrix can be broken down into $n^{k-3}$ multiplications of $n\times n$
matrices, so the running time is $O(n^{k-3+\omega}\ell)$.  \end{proof}

\begin{theorem}\label{thm:reconf}
The following statements are equivalent:

\begin{enumerate}

\item The \ppseth\ is false.

\item There exist $\eps>0, k\ge 4$ and an algorithm that takes as input a graph
$G$ on $n$ vertices, two independent sets $S,T$ of $G$ of size $k$, and an
integer $\ell$, and decides if $S$ can be transformed to $T$ via at most $\ell$
token jumping moves in time $n^{k-1-\eps}\ell^{O(1)}$.

\item There exist $\eps>0, k_0\ge 4$ and an algorithm which for all $k\ge k_0$
takes an input as in the previous case and returns a correct response in time
$n^{(1-\eps)k}\ell^{O(1)}$.

\end{enumerate}

\end{theorem}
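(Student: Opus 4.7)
The plan is to establish the cycle of implications $3 \Rightarrow 2 \Rightarrow 1 \Rightarrow 3$. The first implication is essentially free: given an algorithm as in statement 3, choosing $k$ large enough so that $\eps k > 1 + \eps'$ yields $n^{(1-\eps)k} \le n^{k-1-\eps'}$, which is a fixed-$k$ algorithm as in statement 2. The remaining two implications are the substantial content.

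For $1 \Rightarrow 3$, I would adapt the strategy of \cref{lem:nfatosat}, encoding a reconfiguration sequence as a \textsc{SAT} formula. Given an instance $(G,S,T,\ell)$ with $|V(G)| = n$ and token count $k$, I would introduce, for each time step $j \in \{0, \ldots, \ell\}$ and each of the $k$ tokens, a block of $\log n$ variables indicating which vertex the token currently occupies. Clauses enforce that the configuration at time $0$ equals $S$, at time $\ell$ equals $T$, at every time step the $k$ occupied vertices are pairwise distinct and pairwise non-adjacent, and between consecutive time steps at most one token moves. The pathwidth of the resulting formula $\phi$ is $(k+O(1))\log n$ by sweeping the time axis left to right and keeping only two consecutive configurations in each bag. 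Iterating over the possible lengths up to $\ell$, a \ppseth-falsifying algorithm decides each formula in time $(2-\eps)^{\pw(\phi)}|\phi|^{O(1)} = n^{(1-\delta)k}\,\mathrm{poly}(n,\ell)$ for $k$ sufficiently large, yielding statement 3.

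The crux of the theorem is $2 \Rightarrow 1$, where the reduction must be quite efficient because the target algorithm saves only an additive constant in the exponent of $n$. Fix the $k \ge 4$ and $\eps > 0$ for which statement 2 holds. Given a \tsat\ instance $\psi$ of pathwidth $p$ with a nice path decomposition $B_1, \ldots, B_t$ and the usual injection $b$ of clauses into bags (via \cref{lem:nice}), I would invoke \cref{lem:pwcolor} with parameter $k-1$ to partition the variables into groups $V_1, \ldots, V_{k-1}$, each contributing at most $s = \lceil (p+1)/(k-1) \rceil$ variables to each bag. The reconfiguration graph $G$ then consists of (i) for each pair $(i,j) \in [k-1] \times [t]$, a block of at most $2^s$ state vertices $u_{i,j,\sigma}$, one per assignment $\sigma$ to $V_i \cap B_j$; (ii) a path of clock vertices $c_1, \ldots, c_t$; (iii) consistency edges making $u_{i,j,\sigma}$ adjacent to $u_{i,j+1,\sigma'}$ whenever $\sigma$ and $\sigma'$ disagree on $V_i \cap B_j \cap B_{j+1}$; (iv) satisfaction edges, for each clause $c$ with $b(c) = j$, forbidding the $k$-token configuration $\{u_{1,j,\sigma_1}, \ldots, u_{k-1,j,\sigma_{k-1}}, c_j\}$ from being independent unless $\bigcup_i \sigma_i$ satisfies $c$. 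The starting set $S$ places each state token at its canonical $u_{i,1,\sigma_0}$ together with $c_1$, and $T$ does the same at position $t$; the length bound $\ell = \mathrm{poly}(|\psi|)$ suffices to walk the clock along the $c_j$ path while state tokens jump as required. The key counting check is that the total vertex count is $N = O(2^{s} \cdot t) = 2^{s + O(\log |\psi|)}$, so the supposed algorithm runs in time $N^{k-1-\eps}\ell^{O(1)} \le 2^{(1-\eps/(k-1))(p+1)}|\psi|^{O(1)}$, falsifying the \ppseth\ by \cref{thm:robust}.

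The main technical obstacle will be the careful design of the satisfaction and consistency gadgets so that (a) every single token jump either advances the clock from $c_j$ to $c_{j+1}$ or updates exactly one state token, all while preserving independence at every intermediate configuration; (b) when the clock traverses the edge $c_jc_{j+1}$, the state tokens in blocks $(i,j)$ and $(i,j+1)$ are eventually forced to represent a single underlying assignment restricted to the respective bags; and (c) no unintended independent set of size $k$ provides a ``shortcut'' sequence that avoids encoding a genuine assignment. As in the stingy reduction of \cref{lem:sattolc} for \textsc{List Coloring}, I must take care not to introduce helper vertices pairwise between blocks, which would multiply $N$ by an extra $2^s$ factor and spoil the exponent; instead I would use shared auxiliary paths in the spirit of \cref{fig:lc}, keeping the construction linear in $2^s$ while still propagating assignments faithfully from bag to bag.
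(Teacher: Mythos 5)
Your overall architecture matches the paper's: $3\Rightarrow 2$ is trivial, $1\Rightarrow 3$ is exactly the paper's \cref{lem:reconf13} (encode each configuration with $k\log n$ variables per time step, sweep the time axis to get pathwidth $k\log n + O(k)$), and for $2\Rightarrow 1$ you correctly partition the variables into $k-1$ groups of $\lceil (p+1)/(k-1)\rceil$ per bag and do the right exponent bookkeeping. However, the $2\Rightarrow 1$ direction has a genuine gap in the consistency mechanism. With a separate clock path $c_1,\ldots,c_t$ carrying the $k$-th token, the state token of group $i$ advances from block $(i,j)$ to block $(i,j+1)$ in a \emph{single} token jump, so $u_{i,j,\sigma}$ and $u_{i,j+1,\sigma'}$ are never simultaneously in the configuration. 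Your consistency edges (iii) are therefore never tested: a cheating sequence can flip the assignment of group $i$ arbitrarily at every transition, and the reduction is unsound. The paper's construction (\cref{lem:reconf21}) avoids this by \emph{not} spending the $k$-th token on a clock: it works in the clique complement and arranges that the configuration ``described by $(j,i)$'' occupies one vertex in each of $V_{j,i},\ldots,V_{j,k-1},V_{j+1,1},\ldots,V_{j+1,i}$ --- note group $i$ appears at \emph{both} times $j$ and $j+1$ simultaneously, so adjacency (= compatibility in the clique world) between $V_{j,i}$ and $V_{j+1,i}$ genuinely enforces agreement, and as $i$ cycles through $[k-1]$ every group gets checked at every transition. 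Shortcuts are then excluded by making the budget exactly $\ell=(k-1)(t-1)+1$ and observing that consecutive configurations must be described by lexicographically adjacent pairs $(j,i)$.

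A second, smaller gap is the satisfaction gadget. Independence is a pairwise constraint, so you cannot ``forbid the $k$-token configuration from being independent unless $\bigcup_i\sigma_i$ satisfies $c$'' when the three literals of a clause lie in three different groups: the falsifying assignments form a product of three bad sets, and any edge you add between two of the groups also kills configurations in which the third literal is true. The paper preprocesses the formula (introducing copy variables and equality clauses, at the cost of $O(1)$ extra variables per group per bag) so that every clause touches at most two groups, after which falsifying pairs can be encoded by deleting edges between the two relevant blocks. Your final paragraph correctly identifies (a)--(c) as the obstacles, but neither the clock design nor the pairwise clause encoding as proposed can be repaired without essentially adopting the paper's doubled-group scheme and the arity-reduction step.
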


Since the implication $3\Rightarrow 2$ is trivial (set
$k=\max\{k_0,\lceil2/\eps\rceil\}$), we break down the proof of
\cref{thm:reconf} into two lemmas from the implications $2\Rightarrow 1$ and
$1\Rightarrow 3$.

\begin{lemma}\label{lem:reconf13} If the \ppseth\ is false, then there exist
$\eps>0, k_0\ge 4$ and an algorithm that takes as input a graph $G$ on $n$
vertices, two independent sets $S,T$ of $G$ of size $k\ge k_0$, and an integer
$\ell$, and decides if $S$ can be transformed to $T$ via at most $\ell$ token
jumping moves in time $n^{(1-\eps)k}\ell^{O(1)}$.  \end{lemma}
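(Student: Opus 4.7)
The plan is to mirror the structure of \cref{lem:nfatosat}: given a reconfiguration instance $(G,S,T,\ell)$ with $|V(G)|=n$, $|S|=|T|=k$, I will build a CNF formula $\psi$ whose satisfiability encodes the existence of a reconfiguration sequence of length at most $\ell$, with $\pw(\psi)\le (k+1)\log n + O(\log k)$ and $|\psi|=\mathrm{poly}(n,\ell)$; then, plugging $\psi$ into the hypothesized \textsc{SAT} algorithm of complexity $(2-\delta)^{\pw}|\psi|^{c}$ (given by the failure of the \ppseth\ via \cref{thm:robust}), I would obtain running time $n^{(1-\delta')(k+1)+O(1)}\ell^{O(1)}$, which for $k\ge k_0$ (with $k_0$ an absolute constant depending on $\delta,c$) is absorbed into the target bound $n^{(1-\eps)k}\ell^{O(1)}$ with $\eps=\delta'/2$. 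As standard, I will assume $n$ is a power of $2$ (pad with isolated vertices) and that the sequence has length exactly $\ell$ by allowing ``no-op'' steps $S_j=S_{j+1}$; thus the problem becomes asking for $S_0=S,S_1,\dots,S_\ell=T$ with each $S_j$ independent and $|S_j\triangle S_{j+1}|\le 2$.

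For the encoding I would use three groups of variables: (i) $x_{i,j,s}$ for $i\in[k]$, $j\in\{0,\dots,\ell\}$, $s\in[\log n]$, whose joint assignment encodes the index of the $i$-th vertex of $S_j$; and (ii) $p_{j,s}$ for $j\in\{0,\dots,\ell-1\}$, $s\in[\lceil\log(k+1)\rceil]$, encoding a value in $\{0,1,\dots,k\}$ meaning ``position $p_j$ of $S_j$ is the one that changes'' (with $0$ for no change). I would add clauses enforcing: the endpoint conditions $S_0=S$ and $S_\ell=T$; independence of each $S_j$, by iterating over all pairs $(i_1,i_2)$ and all pairs of vertex assignments encoding equal or adjacent vertices and ruling them out; and transitions, by adding, for each $j$, $i\in[k]$, $v\in\{0,\dots,k\}\setminus\{i\}$ and $s\in[\log n]$, two clauses implementing $(p_j=v)\Rightarrow (x_{i,j,s}=x_{i,j+1,s})$. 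Correctness is then immediate: a satisfying assignment of $\psi$ gives a valid reconfiguration sequence since the transition clauses force at most one index to change between $S_j$ and $S_{j+1}$, and the independence/endpoint clauses handle the rest; conversely, any valid sequence (padded with no-ops) yields a satisfying assignment.

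The delicate part is the path decomposition, which must have width $(1+o(1))k\log n$. Following the interleaving trick of \cref{lem:nfatosat}, I would for each $j$ build a ``main bag'' $B_j$ containing the $k\log n$ variables $\{x_{i,j,*}:i\in[k]\}$ together with $p_{j,*}$, and between $B_j$ and $B_{j+1}$ insert a sequence of $k$ bags $B_j^1,\dots,B_j^k$, where $B_j^i$ contains $p_{j,*}$, both groups $x_{i,j,*}$ and $x_{i,j+1,*}$ (so that the transition clauses for position $i$ are covered), the groups $x_{i',j,*}$ for $i'>i$, and the groups $x_{i',j+1,*}$ for $i'<i$. Each $B_j^i$ contains at most $(k+1)\log n+O(\log k)$ variables, and the main bags cover the endpoint and independence clauses. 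The main technical obstacle is precisely this width bound: any encoding that introduces a helper variable per pair $(i_1,i_2)$ of positions in each time step would inflate the width to $\Theta(k^2\log n)$ or beyond, which would give an exponent of $n$ of the form $(1-\eps)k^2$ rather than $(1-\eps)k$ and would not suffice. The interleaved swap-bag construction keeps the overhead additive (only $+\log n+O(\log k)$), and as in \cref{lem:nfatosat} this additive slack is absorbed by a mild linear growth of $k_0$ in $1/\delta'$, yielding the claimed complexity $n^{(1-\eps)k}\ell^{O(1)}$.
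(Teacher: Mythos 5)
Your proposal is essentially the paper's own proof: the paper also encodes each set $S_j$ by $k$ groups of $\log n$ variables, adds per-step ``which token moves'' variables, transition clauses forcing all non-moving positions to be copied, independence clauses over all pairs of positions (including the ``self-loop'' pairs to force distinctness), endpoint clauses, and an interleaved path decomposition of width $(1+o(1))k\log n$, absorbing the additive slack by taking $k_0$ linear in $1/\delta$. The only substantive difference is cosmetic: the paper uses $k$ one-hot booleans $c_{j,i}$ with pairwise-exclusion clauses (width overhead $O(k)$), whereas you binary-encode the moved index in $\lceil\log(k+1)\rceil$ bits (overhead $O(\log k)$); both are fine.

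One detail you must add for soundness: if $k+1$ is not a power of two, an assignment may set $p_{j,*}$ to an out-of-range value $w>k$, in which case every premise $(p_j=v)$ with $v\in\{0,\dots,k\}$ is false, all transition clauses for step $j$ are vacuously satisfied, and the sequence could change arbitrarily many tokens at once. You need clauses (covered by the bags already containing $p_{j,*}$) ruling out all out-of-range encodings of $p_j$, exactly as in the ``legal assignment'' clauses used elsewhere in the paper; the one-hot encoding in the paper's proof sidesteps this automatically.
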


\begin{proof}

Suppose that the \ppseth\ is false and there exists an algorithm that takes as
input a \textsc{SAT} instance $\phi$ and a path decomposition of its primal
graph of width $p$ and decides if $\phi$ is satisfiable in time
$O(2^{(1-\delta)p})|\phi|^{O(1)}$.  Suppose we are given an instance of the
reconfiguration problem for a graph $G$ on $n$ vertices and two sets $S,T$ of
size $k$, with a target length $\ell$. We will construct a formula $\phi$ and a
path decomposition of its primal graph such that:

\begin{enumerate}

\item $\phi$ is satisfiable if and only if there exists a token-jumping
reconfiguration of length at most $\ell$ between $S$ and $T$ in $G$.

\item The constructed path decomposition of $\phi$ has width at most $k\log
n+O(k)$

\item $\phi$ has size polynomial in $n+\ell$ and can be constructed in
polynomial time.

\end{enumerate}

If we achieve the above we obtain the lemma, because we can decide the
reconfiguration problem by constructing $\phi$ and then deciding its
satisfiability via the supposed algorithm. Indeed, this would take time
$O(2^{(1-\delta)k\log n}n^{O(1)}\ell^{O(1)}) =
n^{(1-\delta)k+O(1)}\ell^{O(1)}$. Suppose now that the constant hidden in the
$O(1)$ in the exponent of $n$ is $c$. Let $k_0=\lceil2c/\delta\rceil$. Then,
the running time becomes $n^{(1-\delta/2)k}\ell^{O(1)}$, giving the lemma.

Assume, to ease presentation, that $n$ is a power of $2$ (otherwise add some
dummy universal vertices, increasing $n$ by at most a factor of $2$). We
construct a CNF formula as follows:

\begin{enumerate}

\item For each $j\in[\ell]$, $i\in[k]$ we construct $k$ groups of $\log n$
variables each, call them $X_{j,i}$. The idea is that for a given $j,i$ the
$\log n$ variables of $X_{j,i}$ encode the index of the $i$-th vertex of the
$j$-th independent set in the reconfiguration sequence.

\item For each $j\in[\ell-1]$ we construct $k$ variables $c_{j,i}$, for
$i\in[k]$. Informally, $c_{j,i}$ is true if we intend to remove the $i$-th
token of the $j$-th set of the sequence to obtain the following set.

\item For each $j\in[\ell-1]$ and $i,i'\in [k]$ with $i\neq i'$ we add the
clause $(\neg c_{j,i} \lor \neg c_{j,i'})$.

\item For each $j\in[\ell-1]$ and $i\in [k]$ we add clauses ensuring that
$X_{j,i}$ and $X_{j+1,i}$ encode the same value (this can be done with clauses
of size $2$ by enforcing this constraint bit by bit). We add the literal
$c_{j,i}$ to all these clauses.

\item For each $j\in[\ell]$  and $i,i'\in [k]$, for each edge of $G$, we
construct a clause involving $X_{j,i}\cup X_{j,i'}$ that is falsified if
$X_{j,i}, X_{j,i'}$ encode the endpoints of the edge. Do the same for each pair
$(x,x)$ where $x\in V(G)$ (that is, perform this step as if $G$ contains a
self-loop in every vertex).

\item Add clauses ensuring that for all $i\in[k]$, $X_{1,i}$ encodes the $i$-th
element of $S$ (for some arbitrary ordering of $S$).

\item Add clauses ensuring that for all $i\in[k]$, $X_{\ell,i}$ encodes a
vertex of $T$ (that is, add a clause that is falsified if $X_{\ell,i}$ encodes
the index of a vertex outside of $T$).

\end{enumerate}

The correctness of the construction is now easy to verify. In one direction, if
there is a reconfiguration sequence, we give the variables $X_{j,i}$ values
that reflect the position of the $i$-th token in the $j$-th set of the
sequence. We assign $c_{j,i}$ to True if and only if the $i$-th token of the
$j$-th set jumps to obtain the next set. It is not hard to see that this
satisfies all clauses. For the converse direction, by interpreting the
assignments to $X_{j,i}$ as indices of vertices we infer a sequence of sets of
size at most $k$. These sets must be independent and have size exactly $k$
because of the clauses of step 5. Furthermore, if $c_{j,i}$ is False, then the
$j$-th and $(j+1)$-th set agree on their $i$-th vertex. However, at most one
$c_{j,i}$ is True for each $j$, because of clauses of step 3. Therefore, the
sets we have constructed can be obtained from each other with token jumping
moves.

Finally, let us argue about the pathwidth of $\phi$. We can construct a path
decomposition by making a bag for each $j\in[\ell]$ and placing into it all
variables of $\bigcup_{i\in[k]}X_{j,i}$ as well as the $k$ variables $c_{j,i}$
for $i\in[k]$. This covers clauses of steps 3,5,6, and 7. To cover the
remaining clauses (of step 4) it suffices to observe that the subgraph of the
primal graph induced by $X_{j,i}\cup X_{j+1,i}$ is a matching, so we can
cosntruct a sequence of bags that exchanges one by one the vertices of
$X_{j,i}$ with those of $X_{j+1,1}$, while keeping $c_{j,i}$ in the bags,
covering clauses of step 4.  \end{proof}

\begin{lemma}\label{lem:reconf21} If there exist $\eps>0, k\ge 4$ and an
algorithm that takes as input a graph $G$ on $n$ vertices, two independent sets
$S,T$ of $G$ of size $k$, and an integer $\ell$, and decides if $S$ can be
transformed to $T$ via at most $\ell$ token jumping moves in time
$n^{k-1-\eps}\cdot\ell^{O(1)}$, then the \ppseth\ is false.  \end{lemma}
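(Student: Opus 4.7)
The plan is to reduce a 3-CNF formula $\phi$ of primal pathwidth $p$ to an instance of \textsc{Short}-$k$-\textsc{Independent Set Reconfiguration} on a graph $G$ with $n = 2^{p/(k-1)} \cdot |\phi|^{O(1)}$ vertices and $\ell = |\phi|^{O(1)}$, preserving satisfiability. If such a reduction is available, the hypothesized $n^{k-1-\eps}\ell^{O(1)}$ algorithm decides $\phi$ in time $2^{p(k-1-\eps)/(k-1)} \cdot |\phi|^{O(1)} = (2-\delta)^{p} \cdot |\phi|^{O(1)}$ for $\delta = 2 - 2^{1-\eps/(k-1)} > 0$, which falsifies the \ppseth\ (taking $p$ sufficiently large so the polynomial factor is absorbed, as in the proofs of \cref{lem:lctosat} and \cref{lem:nfatosat}).

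First I would preprocess using \cref{lem:nice} to obtain a nice path decomposition $B_1,\ldots,B_t$ with $t = O(p|\phi|)$ and an injective map $b$ from clauses to bags containing their variables, and then use \cref{lem:pwcolor} to partition the variables into $k-1$ groups $V_1,\ldots,V_{k-1}$ with $|B_j \cap V_i| \le s := \lceil (p+1)/(k-1)\rceil$ for all $j,i$. Let $N = 2^s = O(2^{p/(k-1)})$. For each pair $(j,i)$ I introduce an independent set $W_{j,i}$ of $N$ \emph{assignment vertices}, one per truth assignment to $B_j \cap V_i$, so a token placed on $v^\sigma_{j,i} \in W_{j,i}$ encodes the choice ``at bag $j$, group $i$ is assigned $\sigma$.'' I also build a \emph{position} path $p_1, p_2, \ldots, p_t$ interleaved with small ``parking'' sub-gadgets used while the clock advances one bag at a time. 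The initial set $S$ consists of $p_1$ together with $k-1$ canonical assignment tokens in $W_{1,i}$, and $T$ is the analogous set at position $p_t$; the intended reconfiguration sweeps the position token from $p_1$ to $p_t$ while continuously updating the $k-1$ assignment tokens so they always sit in the $W_{j,\cdot}$ corresponding to the current clock bag. Counting gives $n = O(t \cdot (k-1) \cdot N) = 2^{p/(k-1)} \cdot |\phi|^{O(1)}$ and $\ell = O(kt) = |\phi|^{O(1)}$, matching the targets.

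The core of the construction is the edge set, which must coerce legal token-jumping sequences to simulate the bag-to-bag DP. I add three families of edges: (i) \emph{location} edges making every $v \in W_{j',i}$ with $j' \neq j$ adjacent to $p_j$, so that while the clock token sits on $p_j$ each assignment token is forced to lie in $W_{j,i}$; (ii) \emph{consistency} edges between $v^\sigma_{j,i}$ and $v^{\sigma'}_{j+1,i}$ whenever $\sigma$ and $\sigma'$ disagree on a variable of $V_i \cap B_j \cap B_{j+1}$, so the only legal group-$i$ transition between consecutive bags preserves the restriction to shared variables; and (iii) a \emph{satisfaction} gadget attached to each bag hosting a clause $c = c_j$ with $b(c) = j$, inserted into the clock's transition between $p_j$ and $p_{j+1}$ as a short sequence of parking vertices whose neighborhoods encode exactly the $(k-1)$-tuples of assignments that falsify $c$. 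The transition between bags is sequenced so that the clock token moves through the parking subpath one step at a time, the $k-1$ assignment tokens slide along consistency edges one group at a time (each group's old token being placed aside in a dedicated ``swap'' vertex before the new one is activated), and at every intermediate step the set remains independent.

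The main technical obstacle is the clause-check gadget: since only one token moves per step, one cannot directly forbid a joint configuration of all $k-1$ assignment tokens at a single instant. The design I propose is to lock $k-2$ assignment tokens in place during a constant-length ``check phase,'' forcing the remaining token (or equivalently a dedicated substitute played by the clock token) to traverse a length-$O(1)$ obstacle whose vertices are adjacent to every falsifying combination of the $k-2$ frozen tokens; traversal is possible iff the frozen tuple is consistent with some assignment to the free group that satisfies $c$. The care needed is twofold: the gadget must use only $O(N)$ vertices per clause so the total blowup stays $|\phi|^{O(1)} \cdot N$, and the independent-set invariant must hold at every single-token step. Once this is done, forward correctness is routine—given a satisfying assignment of $\phi$, the DP-style reconfiguration respects all three edge families—and backward correctness follows because location edges force monotone $p_1 \to p_t$ clock movement, consistency edges force coherent per-group assignments across consecutive bags, and the satisfaction gadgets force every clause to be satisfied at the moment the clock passes its host bag; reading the group tokens then yields a satisfying assignment of $\phi$.
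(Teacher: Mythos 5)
There is a genuine gap at the heart of your reduction: the clause-satisfaction mechanism. Token jumping only imposes \emph{pairwise} constraints (two tokens cannot occupy adjacent vertices), so you cannot forbid a specific joint configuration of three or more assignment tokens by making a single obstacle vertex ``adjacent to every falsifying combination'': a vertex adjacent to the individual members of a falsifying tuple is blocked when \emph{any one} of them carries a token (an OR), not when \emph{all} of them do (the AND you need). Your proposed fix --- freezing $k-2$ tokens and routing a free token through an obstacle --- is exactly the step you flag as ``the main technical obstacle,'' and it is left unproven; moreover, there is no mechanism in token jumping to ``lock'' tokens in place during a check phase. The paper sidesteps this entirely with a preprocessing step you are missing: it copies variables between groups so that every clause of $\phi$ involves variables from at most \emph{two} of the $k-1$ groups. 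A clause violation then corresponds to a single \emph{pair} of assignment vertices, which is forbidden by a single (non-)edge between the two relevant sets $V_{j,i_1}$ and $V_{j,i_2}$ --- no clause gadget, no extra tokens, no check phase.

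A second, related gap is in your backward direction. You assert that location edges force monotone clock movement and that consistency edges force coherent assignments, but nothing in your construction prevents an assignment token from jumping \emph{within} $W_{j,i}$ (an independent set) to change its value between the moment a clause is checked and the moment the bag transition happens, nor prevents the clock from oscillating. The paper's forcing argument is entirely budget-based: it sets $\ell=(k-1)(t-1)+1$ \emph{exactly}, arranges all $k$ tokens as a sliding window over the pairs $(j,i)$ in lexicographic order (there is no separate clock token), and shows each move shifts the window by at most one position, so a sequence of length $\ell$ has no slack for any deviation. Your $\ell=O(kt)$ with an unspecified constant, plus the extra ``swap'' and ``parking'' moves, leaves slack that an adversarial reconfiguration sequence could exploit. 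The parameter accounting ($n=2^{p/(k-1)}|\phi|^{O(1)}$, $k-1$ groups, translation of $n^{k-1-\eps}$ into $(2-\delta)^p$) is correct and matches the paper, but the construction itself does not yet yield a sound reduction.
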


\begin{proof}

Suppose that there exist $\eps>0, k\ge 3$ and an algorithm that solves the
reconfiguration problem in time $n^{k-1-\eps}\cdot\ell^{O(1)}$.  We start with
an instance of \tsat\ $\phi$ and suppose that we are given a path decomposition
of its primal graph where each bag contains at most $p$ variables. We will
construct from $\phi$ an instance of the reconfiguration problem, that is a
graph $G$, an integer $\ell$, and two independent sets $S,T$ of size $k$ such
that:

\begin{enumerate}

\item $\phi$ is satisfiable if and only if $S$ can be reconfigured to $T$ with
at most $\ell$ token jumping moves.

\item $G$ has $n$ vertices, where $n=2^{\frac{p}{k-1}}|\phi|^{O(1)}$. 

\item $\ell = |\phi|^{O(1)}$

\item Constructing $G$, $S,T,\ell$, can be done in time at most
$2^{2p/3}|\phi|^{O(1)}$.

\end{enumerate}

If we achieve the above, we falsify the \ppseth. Indeed, we can decide $\phi$
by performing the reduction and then using the supposed algorithm for the
reconfiguration problem. The running time is dominated by the execution of the
supposed algorithm, which takes time $n^{k-1-\eps}\ell^{O(1)} =
2^{\frac{p}{k-1}(k-1-\eps)}|\phi|^{O(1)} = (2-\delta)^p |\phi|^{O(1)}$, where
$2-\delta = 2^{\frac{k-1-\eps}{k-1}}$.

We now invoke \cref{lem:nice} and \cref{lem:pwcolor} to obtain a nice path
decomposition of $\phi$, with the bags numbered $B_1,\ldots, B_t$, with
$t=|\phi|^{O(1)}$, an injective function mapping clauses of $\phi$ to bags that
contain their variables, and a partition of the variables into $k-1$ groups
$V_1,\ldots,V_{k-1}$ so that for all bags $B_j$ we have $|B_j\cap V_i|\le
\frac{p}{k-1}+1$. We now perform some further preprocessing to ensure that each
clause of $\phi$ involves variables from at most $2$ distinct groups $V_i$. To
achieve this, as long as there is a clause $c$ involving three variables
$x_1,x_2,x_3$ which belong in distinct sets, let $B_j$ be the bag given by the
injective mapping for $c$. We insert in the decomposition $4$ copies of $B$
immediately after $B$ and insert in the first three copies a new variable
$x_1'$ which belongs in the same group as $x_2$. We replace in $c$ the variable
$x_1$ with $x_1'$ and construct the clauses $(\neg x_1\lor x_1')$ and $(\neg
x_1'\lor x_1)$, which ensure that $x_1,x_1'$ must have the same value. Using
the copies of $B$ where we added $x_1'$ we update the mapping function so that
$c$ and the two new clauses are mapped to distinct bags that contain their
literals. Doing this exhaustively we obtain the property, while we now have
that for each bag $B_j$ and $i\in[k-1]$ that $|B_j\cap V_i|\le
\frac{p}{k-1}+2$. We will further assume, without loss of generality, that
$B_1=B_t=\emptyset$.

In order to simplify the presentation of the rest of the construction, we will
formulate it in terms of clique reconfiguration (that is, $S,T$ are cliques of
size $k$ and we need to reconfigure them with token jumping moves while
maintaining a clique at all times). Clearly, this is equivalent to the original
problem if we take the complement of the graph we construct.

We now construct a graph as follows: for each $j\in[t]$ and $i\in[k-1]$ we
construct a set of $2^{|B_j\cap V_i|}=O(2^{\frac{p}{k-1}})$ vertices $V_{j,i}$
which form an independent set.  Each vertex represents an assignment to the
variables of $B_j\cap V_i$ (in particular, if $|B_j\cap V_i|=0$, then $V_{j,i}$
contains a unique vertex).  If some such assignment falsifies a clause (whose
variables are contained in $B_j\cap V_i$) we delete from the graph the
corresponding vertex.  We now add the following edges:

\begin{enumerate}

\item  For all $j\in[t]$ and $i,i'\in[k-1]$ with $i\neq i'$, we connect all
vertices of $V_{j,i}$ with all vertices of $V_{j,i'}$. 

\item We remove some of the edges added in the previous step as follows: for
each clause $c$ that involves variables from two groups $V_{i_1}, V_{i_2}$,
suppose that $c$ is mapped to $B_j$. For each pair $(u,v)\in V_{j,i_1}\times
V_{j,i_2}$ such that the assignment encoded by $u,v$ together falsifies $c$,
remove the edge $uv$ from the graph.

\item For all $j\in[t-1]$, $i\in[k-1]$, and $i'\in[i-1]$ we connect all
vertices of $V_{j,i}$ with all vertices of $V_{j+1,i'}$. 

\item For all $j\in[t-1]$ and $i\in[k-1]$ we connect each vertex of $V_{j,i}$
with each vertex of $V_{j+1,i}$ such that the corresponding assignments agree
on all variables of $B_j\cap B_{j+1}\cap V_i$.  

\end{enumerate}

Observe that as $B_1=B_t=\emptyset$, the sets $V_{1,i}$ and $V_{t,i}$ are
singletons which form two cliques of size $k-1$. We add a vertex $s_0$ and
connect it to $\bigcup_{i\in[k-1]}V_{1,i}$ and a vertex $t_0$ and connect it to
$\bigcup_{i\in[k-1]}V_{t,i}$. We set $S=\{s_0\}\cup N(s_0)$ and $T=\{t_0\}\cup
N(t_0)$. We also set $\ell=(k-1)(t-1)+1$.  

It is not hard to see that the size of the graph and the value of $\ell$
satisfy the promised properties and that, since $k-1\ge 3$ the graph (and its
complement) can be constructed in time at most $2^{2p/3}|\phi|^{O(1)}$, because
the number of possible edges is at most $2^{\frac{2p}{k-1}}|\phi|^{O(1)}$.
What remains is to show correctness.

For the forward direction, suppose that $\phi$ is satisfiable. We identify a
unique vertex of each $V_{j,i}$ which agrees with our satisfying assignment and
show that we can reconfigure $S$ to $T$ using these vertices. Notice that for
each $j$, this includes $k-1$ vertices, one from each $V_{j,i}$, $i\in[k-1]$,
which form a clique (since the assignment is satisfying).  Now, for each
$j\in[t-1]$, suppose that in the current state we have tokens in one vertex of
each $V_{j,i}$, for $i\in[k-1]$ and no tokens in $V_{j+1,i}$ for any
$i\in[k-1]$. We take the token that is not in any $V_{j,i}$ (initially this is
the token in $s_0$) and place it in the selected vertex of $V_{j+1,1}$; then
for each $i\in[k-2]$ we take the token from $V_{j,i}$ and place it on the
selected vertex of $V_{j+1,i+1}$. After these $k-1$ moves we move on to the
next value of $j$. The last move is to take the token that is not in any
$V_{t,i}$ and place it in $t_0$. We have performed exactly $\ell=(t-1)(k-1)+1$
moves. At any point we keep tokens, for some $j\in[t]$ and $i\in[k-1]$ in a
single vertex from each of the sets $V_{j,i}, V_{j,i+1},\ldots, V_{j,k-1},
V_{j+1,1},\ldots, V_{j+1,i}$, which form a clique if our assignment was
satisfying and consistent.

For the converse direction, Suppose there is a sequence of cliques
$S_0=S,S_1,\ldots, S_{\ell}=T$ which represent a token jumping reconfiguration.
Take a value $r\in[\ell]\setminus\{0,\ell\}$ and let us say that the set $S_r$
is described by the pair $(j,i)$, for $j\in[t], i\in[k-1]$, if $(j,i)$ is the
lexicographically smallest pair for which $V_{j,i}\cap S_r\neq\emptyset$. We
now observe that if $S_r$ is described by $(j,i)$ then it must contain exactly
one vertex from each of the sets $V_{j,i}, V_{j,i+1},\ldots, V_{j,k-1},
V_{j+1,1},\ldots, V_{j+1,i}$. This is because all sets $V_{j',i'}$ are
independent (so at most one vertex from each set is in $S_r$); for all
$(j',i')$ which come lexicographically after $(j+1,i)$, there is no edge from
$V_{j,i}$ to $V_{j',i'}$, so $S_r\cap V_{j',i'}=\emptyset$; $S_r$ has size
exactly $k$. Thanks to these observations we can now see that if $S_r$ is
described by $(j,i)$, then $S_{r+1}$ must be described by a pair $(j',i')$
which is lexicographically either immediately after $(j,i)$ (that is, $j=j'$
and $i'=i+1$, or $i=k-1, i'=1$, and $j'=j+1$), or immediately before, because
sets which are not consecutive in the lexicographic ordering have symmetric
difference of size at least $2$. Because there are $(t-1)(k-1)$ pairs $(j,i)$,
$S_1$ is described by $(1,1)$ and $S_{\ell-1}$ by $(t-1,k-1)$, taking into
account the two moves required to remove a token from $s_0$ and to place a
token into $t_0$, we conclude that the only way to obtain a feasible sequence
of sets is to have exactly one set described by each pair $(j,i)$. 

It is not hard to see that the above imply that our sequence uses exactly one
vertex from each $V_{j,i}$, from which we can infer an assignment to $\phi$.
Specifically, for each $V_{j,i}$, there is exactly one vertex that is used at
some point and we deduce from this an assignment to $B_j\cap V_i$. This
assignment must be consistent with the one deduced from $V_{j+1,i}$, due to the
edges of step 4; hence this assignment is consistent throughout. The assignment
must be satisfying, since the removal of edges from step 2 forbids us from
using vertices in the clique that would falsify clauses involving variables
from two groups.  \end{proof}



\newpage

\bibliography{ppseth}

\end{document}